\theoremstyle{plain}
\pgfplotsset{compat=1.14}
\newcommand*{\mailto}[1]{\href{mailto:#1}{\nolinkurl{#1}}}
\DeclareMathOperator{\arcosh}{arcosh}
\newcommand\restr[2]{{
  \left.\kern-\nulldelimiterspace 
  #1 
  \vphantom{\big|} 
  \right|_{#2} 
  }}
\DeclarePairedDelimiter{\floor}{\lfloor}{\rfloor}
\newtheorem{theorem}{Theorem}[section]
\newtheorem{definition}[theorem]{Definition}
\newtheorem{proposition}[theorem]{Proposition}
\newtheorem{corollary}[theorem]{Corollary}
\newtheorem{lemma}[theorem]{Lemma}
\newtheorem{remark}[theorem]{Remark}
\newtheorem{hypothesis}[theorem]{Hypothesis}
\newcommand{\R}{{\mathbb R}}
\newcommand{\N}{{\mathbb N}}
\newcommand{\C}{{\mathbb C}}
\newcommand{\nn}{\nonumber}
\newcommand{\be}{\begin{equation}}
\newcommand{\ee}{\end{equation}}
\newcommand{\bea}{\begin{eqnarray}}
\newcommand{\eea}{\end{eqnarray}}
\newcommand{\btheo}{\begin{theorem}}
\newcommand{\etheo}{\end{theorem}}
\newcommand{\ol}{\overline}
\newcommand{\ti}{\tilde}
\newcommand{\hb}{\hbar}
\newcommand{\asympt}{\mathcal{O}}
\newcommand{\var}{\mathcal{V}}
\newcommand{\Ai}{\operatorname{Ai}}
\newcommand{\Bi}{\operatorname{Bi}}
\newcommand{\W}{\mathcal{W}}
\newcommand{\esf}{\mathsf{E}}
\newcommand{\msf}{\mathsf{M}}
\newcommand{\nsf}{\mathsf{N}}
\newcommand{\xsf}{\mathsf{X}}
\newcommand{\eps}{\varepsilon}
\newcommand{\vphi}{\varphi}
\newcommand{\s}{\sigma}
\newcommand{\lam}{\lambda}
\newcommand{\g}{\gamma}
\newcommand{\G}{\Gamma}
\newcommand{\om}{\omega}
\newcommand{\Om}{\Omega}
\newcommand{\z}{\zeta}
\newcommand{\m}{\mu}
\newcommand{\al}{\alpha}
\newcommand{\ps}{\psi}
\newcommand{\8}{\theta}
\newcommand{\thv}{\vartheta}
\newcommand{\ro}{\rho}
\newcommand{\fisf}{\mathsf{\Phi}}
\numberwithin{equation}{section}
\newtheorem{assumption}[theorem]{Assumption}
\begin{document}

\author{}

\date{}

\title[WKB Problem for the Dirac Operator with a Multi-Humped Potential]
{Semiclassical WKB Problem for the Non-Self-Adjoint Dirac 
Operator with a Multi-Humped Decaying Potential}

\author[N. Hatzizisis]{Nicholas Hatzizisis $^{\dag}$}
\address{$^{\dag}$ Department of Mathematics and Applied Mathematics, 
University of Crete, Greece}
\email{\mailto{nhatzitz@gmail.com}}
\urladdr{\url{http://www.nikoshatzizisis.wordpress.com/home/}}

\author[S. Kamvissis]{Spyridon Kamvissis $^{\ddag}$}
\address{$^{\ddag}$ Department of Mathematics and Applied Mathematics, 
University of Crete, and Institute
of Applied and Computational Mathematics, FORTH, GR--711 10
Voutes Campus, Greece}
\email{\mailto{spyros@tem.uoc.gr}}
\urladdr{\url{http://www.tem.uoc.gr/~spyros/}}

\thanks{}

\keywords{}

\subjclass[2000]{}

\bigskip

\begin{abstract}
In this paper we  continue the study (initiated in \cite{h+k}) of the semiclassical behavior of 
the scattering data of a non-self-adjoint Dirac operator with a real, positive, fairly smooth but 
not necessarily analytic potential decaying at infinity; in this paper we  allow this potential 
to have several local maxima and minima. We provide the  rigorous  semiclassical analysis 
of the Bohr-Sommerfeld condition for the location of the eigenvalues, the norming constants,  
and the reflection coefficient.
\end{abstract}

\maketitle

\thispagestyle{empty} 

\section{Introduction}

Consider the initial value problem (IVP) for the 
\textit{one-dimensional focusing nonlinear Schr\"odinger equation with cubic nonlinearity} 
(focusing NLS) for the complex field $u(x,t)$, i.e.
\be\label{ivp-nls}
\begin{cases}
i\hb\partial_t u+\frac{\hb^2}{2}\partial_x^2 u+|u|^2u=0,
\quad (x,t)\in\R\times\R\\\
u(x,0)=A(x),
\quad x\in\R
\end{cases}
\ee
in which $A$ is a real valued function  and $\hb$ is a fixed (at first) positive number;  it 
is a measure of the ratio of the  effect of dispersion to the effect of non-linearity.

A problem like (\ref{ivp-nls}) has attracted much interest due to the 
wide applicability of the NLS equation. Indeed, the NLS equation has been 
derived in many diverse fields of study, governing a plethora of phenomena. 
Just to name a few, it has applications 
\begin{itemize}
\item
to the propagation of light in nonlinear optical fibers 
(cf. \cite{agrawal2001})
\item
to Bose–Einstein condensates (see \cite{pita+stringa2016})
\item
to Langmuir waves in hot plasma physics (cf. \cite{novo1984})
\item
to superconductivity (NLS arises from the Ginzburg-Landau equation 
as a simplified $(1+1)$-dimensional form, see \cite{chiao1964})
\item
to hydrodynamics, for example in small amplitude gravity waves on the 
surface of deep inviscid (zero-viscosity) water (cf. \cite{bt},\cite{bf}).
\end{itemize}  
One can claim that the focusing cubic NLS equation is one of the basic canonical 
non-linear partial differential equations.

\textit{Zakharov} and \textit{Shabat} in \cite{z+s_1972} have proved 
back in 1972 that (\ref{ivp-nls}) is integrable via the 
\textit{Inverse Scattering Method}.
A crucial step of the method  is the analysis 
of the following eigenvalue (EV) problem

\be\label{zs-scattering}
\mathfrak{D}_{\hbar}[\mathbf{u}]=\lambda\mathbf{u}
\ee
where 
\begin{itemize}
\item
$\mathfrak{D}_{\hbar}$ is the 
\textit{Dirac (or Zakharov-Shabat) operator}
\be
\label{dirac-intro}
\mathfrak{D}_{\hbar}=
\begin{bmatrix}
i\hbar\partial_{x} & -iA \\
-iA & -i\hbar\partial_{x}
\end{bmatrix}
\ee
\item
$\mathbf{u}=[u_{1}\hspace{2pt}u_{2}]^T$
is a function from $\mathbb{R}$ to $\mathbb{C}^2$ and
\item
$\lambda\in\C$ is a ``spectral" parameter. 
\end{itemize}

If the solutions $\mathbf{u}$ are in $L^2(\mathbb{R};\mathbb{C}^2)$ the 
corresponding $\lambda$'s are eigenvalues.  The EVs of this problem are related to  
\textit{coherent structures} (e.g. \textit{solitons} and \textit{breathers}) for the  IVP 
(\ref{ivp-nls}) (see \cite{novo1984}). The real part of such an EV represents the speed 
of the soliton while the  imaginary part is related to  its amplitude.  On the other hand 
the continuous spectrum corresponds to bounded (but not $L^2$) ``generalized" 
eigenfunctions $\mathbf{u}$; in our case it is the real line.

In fact the method of Zakharov-Shabat  solves \ref{ivp-nls} by first studying and 
characterizing appropriate ``scattering data" for the potential $A$,  then following the 
(trivial) evolution of such data with respect to time (when we let the potential of the 
Dirac operator evolve according to the NLS equation) and finally using an inverse 
scattering procedure to recover the actual solution of (\ref{ivp-nls}).  The 
scattering data for the Dirac operator consist of 
\begin{itemize}
\item
eigenvalues
\item
``norming constants" related to the $L^2$-norms of the corresponding 
eigenfunctions $\mathbf{u}$ and
\item
the so-called ``reflection" coefficient defined on the continuous spectrum. 
\end{itemize}

Now let us suppose that $\hb$ is small compared to the magnitudes of  $x,t$ that
we are interested in. We are led to the mathematical question: 
what is the behavior of solutions of the IVP (\ref{ivp-nls}) 
as $\hb\downarrow0$? 
Because of the work of Zakharov and Shabat, 
the first step in the study of this IVP in the 
\textit{semiclassical limit}  $\hb\downarrow0$ has to be the 
\textit{asymptotic spectral analysis} of the \textit{scattering problem} 
(\ref{zs-scattering}) as $\hb\downarrow0$, keeping the function $A$ fixed.
This is our main object here.

The rigorous analysis of this  direct scattering problem was 
initiated in \cite{fujii+kamvi} (in the case of real analytic data) 
and more generally in \cite{h+k} for data which is only required to be somewhat 
smooth. The rigorous analysis of the $inverse$ scattering  problem was 
initiated much earlier in \cite{kmm} by use of an ansatz which was justified 
later in \cite{kr}. The eigenvalue problem (\ref{zs-scattering}) is not and cannot 
be written as an EV problem for a self-adjoint operator.  What we study here is a  
\textit{semiclassical WKB problem} (or \textit{LG problem}) for the corresponding 
\textit{non-self-adjoint} Dirac operator with \textit{potential} $A$. 

This work complements our  previous work \cite{h+k}  where
the potential is considered to be a \textit{positive}, \textit{smooth} 
and \textit{even bell-shaped function},  in which we employed 
\textit{Olver's theory}. Working on the same lines, we now discard the 
evenness assumption and additionally let the potential have multiple 
\say{humps} (instead of just a single assumed in \cite{h+k}). 
We should point out that our methods are necessarily different from the 
ones found in \cite{fujii+kamvi} and \cite{h+w}. Those works use the 
\textit{exact WKB method} which requires analyticity. Our ideas here are 
rather influenced by the paper \cite{yafa2018} of \textit{D. R. Yafaev} 
where an analogous problem is treated for the 
\textit{self-adjoint Schr\"odinger operator}, which in turn relies on 
the work \cite{olver1997} of \textit{F. W. J. Olver}
\footnote{Olver's work draws upon the studies of 
\textit{N. D. Kazarinoff, R. E. Langer} and \textit{R.W. McKelvey}
(see the references in \cite{olver1975}).}. We rely heavily 
on \cite{olver1975} instead. 

Under the hypothesis  that  the EVs of $\mathfrak{D}_{\hbar}$ are purely 
imaginary (at least for small enough values of $\hb$, see Hypothesis 
\ref{main-hypo}), the EV problem (\ref{zs-scattering}) under consideration 
becomes a single linear differential equation of second order
\be
\label{diff-equation-equiv}
\frac{d^2y}{dx^2}=
\Big[
\hbar^{-2}f(x,\mu)+g(x,\mu)
\Big]
y
\ee
where $y$ is related to $\mathbf{u}$ while $\mu\in\R_+$ is a parameter 
that substitutes $\lam$. The functions $f$, $g$ are given by the 
following formulae
\be\nn
f(x,\mu)=\mu^2-A(x)^2
\ee
and
\be\nn
g(x,\mu)=
\frac{3}{4}\bigg[\frac{A'(x)}{A(x)+\mu}\bigg]^2-
\frac{1}{2}\frac{A''(x)}{A(x)+\mu}.
\ee
As the zeros of $f$ play a crucial role in the study  of the solutions of 
(\ref{diff-equation-equiv}), we give the following definition.
\begin{definition}
\label{turn-pt}
Consider a differential equation of the form (\ref{diff-equation-equiv})
in which $\mu>0$ is a parameter and $x\in\Delta\subseteq\R$ an interval. 
The zeros in $\Delta$ (with respect to $x$) of the function $f(x,\mu)$ are 
called the \textbf{turning points} (or \textbf{transition points}) of the above 
differential equation.
\end{definition} 

The presentation of our work in the forthcoming sections will be as follows.
In section \S\ref{passage-barrier} we shall deal with  
approximate solutions of (\ref{diff-equation-equiv}) when $A$ behaves as a 
single hump (or a single lobe facing upwards in Klaus-and-Shaw's terminology 
found in \cite{klaus+shaw2003}) in some open neighborhood in $\R$. 
In this section we generalize results obtained in \cite{h+k}. But
we dispense with the eveness assumption considered in \cite{h+k} 
and account for all possible cases for the open neighborhood in which $A$ 
behaves like a single hump. More precisely, in \S\ref{liouville-transform-barrier}, 
we apply the \textit{Liouville transform} to change equation 
(\ref{diff-equation-equiv}) to a new one of the form
\be
\label{weber-equiv-1}
\frac{d^2X}{d\z^2}=
\big[\hb^{-2}(\z^2-\alpha^2)+\ps(\z,\alpha)\big]X
\ee
for some new variables $\z$, $X$ and a function $\ps$,
along the lines first discussed in \cite{olver1975};  here 
the role of the spectral parameter is assumed by the new variable $\alpha$. In 
\S\ref{error-cont-barrier} we prove a useful lemma concerning the 
continuity of $\psi$ which we use in \S\ref{barrier-approximate-solutions} 
to prove Theorem \ref{main-thm-barrier} about approximate solutions to 
(\ref{weber-equiv-1}) for $\z\geq0$. In this case, the approximate 
solutions are expressed in terms of \textit{Parabolic Cylinder Functions} 
(PCFs). 

Next, in \S\ref{asympt-behave-barrier-sols}, we compute  asymptotics 
for the solutions constructed previously and in 
\S\ref{connection-formulas-barrier} we \say{connect} the approximants 
for $\z\geq0$ to approximants for $\z\leq0$ using the so-called 
\textit{connection coefficients}. Finally, in subsection 
\S\ref{applications-barrier}, we combine the tools assembled in this 
section so far, which results in some theorems concerning 
\textit{action integrals} and \textit{quantization conditions}.

The presentation of the material in section \S\ref{passage-well} 
follows the same manner of that in \S\ref{passage-barrier}. 
The main difference 
now is that $A$ behaves locally as a single basin (or bowl; a single 
lobe facing downwards using Klaus-Shaw terminology). 
If we apply now the Liouville transform to (\ref{diff-equation-equiv}) 
we end up having an equation of the form
\be
\label{weber-equiv-2}
\frac{d^2X}{d\z^2}=
\big[\hb^{-2}(\beta^2-\z^2)+\overline{\ps}(\z,\beta)\big]X
\ee
for the same variables $\z$, $X$ as in (\ref{weber-equiv-1}) and a 
function $\overline{\ps}$ (here the bar does \textit{not} denote complex  
conjugation); the spectral paremeter is played now by $\beta$. 
Again, $\overline{\ps}$ can be proven to be continuous; 
we do this in \S\ref{error-cont-well}. 
In paragraph \S\ref{well-approximate-solutions} (cf. Theorem 
\ref{main-thm-well}) we construct approximants to (\ref{weber-equiv-2}) 
for $\z\geq0$ expressed in terms of 
\textit{modified Parabolic Cylinder Functions} (mPCFs). After finding 
their asymptotic behavior in \S\ref{asympt-behave-well-sols}, we 
\say{bridge} them with the approximate solutions for $\z\leq0$ and obtain 
their relevant \textit{connection formulas}. The final subsection, 
namely \S\ref{applications-well}, is the place where a 
\say{fixing behavior} is observed giving rise to a definition 
of \textit{fixing conditions} (along the lines of Yafaev; see 
Definition 5.7 in \cite{yafa2018}). 

\begin{remark}
Let us denote by $\mathcal{R}_A\subset\R_+$  the range of the potential function 
$A:\R\rightarrow\R_+$ and take an $\mu\in\mathcal{R}_A$.  Assuming that equation 
$A(x)=\mu$ has a finite number of solutions,  these divide the domain $\R$ of 
$A$ to a finite number of intervals where $A(x)>\mu$ and to (finitely many) intervals 
where $A(x)<\mu$ (see Figure \ref{bar-and-well-example}).  We call the former  ``barriers" 
and the latter``wells".  When  an interval giving rise to a barrier (well) is bounded, we say 
that we have a barrier (well) of finite width or simply a finite barrier (well).  Correspondingly, 
when we have unbounded intervals, we are in the presence of infinite barriers (or wells), i.e. 
barriers (wells) of infinite width. 
\end{remark}

\begin{figure}[H]
\centering
\begin{tikzpicture}[scale=1.5]

\begin{axis}[
    legend pos = north west,
    axis lines = none,
    xlabel = {},
    ylabel = {},
    axis line style={draw=none},
    tick style={draw=none}
]

\addplot [
    domain=-12:12, 
    samples=500, 
    color=red,
]
{0.8/(1+(x+5)^2)+1/(1+(x+1)^2)+2/(1+(x-5)^2)};
\addlegendentry{$A(x)$} 

\end{axis}

\draw[scale=0.5,domain=0.6:3.6,dashed,variable=\y,blue]  
plot ({4.15},{\y});

\draw[scale=0.5,domain=0.6:3.6,dashed,variable=\y,blue]  
plot ({4.85},{\y});

\draw[scale=0.5,domain=0.6:3.6,dashed,variable=\y,blue]  
plot ({5.85},{\y});

\draw[scale=0.5,domain=0.6:3.6,dashed,variable=\y,blue]  
plot ({6.85},{\y});

\draw[scale=0.5,domain=0.6:3.6,dashed,variable=\y,blue]  
plot ({8.45},{\y});

\draw[scale=0.5,domain=0.6:3.6,dashed,variable=\y,blue]  
plot ({10},{\y});

\node[circle,inner sep=1.5pt,fill=black] at (2.06,1.8) {};

\node[circle,inner sep=1.5pt,fill=black] at (2.43,1.8) {};

\node[circle,inner sep=1.5pt,fill=black] at (2.93,1.8) {};

\node[circle,inner sep=1.5pt,fill=black] at (3.42,1.8) {};

\node[circle,inner sep=1.5pt,fill=black] at (4.22,1.8) {};

\node[circle,inner sep=1.5pt,fill=black] at (5,1.8) {};

\draw [line width=1mm] (2.06,0.3) -- (2.43,0.3);

\draw [line width=1mm] (2.93,0.3) -- (3.42,0.3);

\draw [line width=1mm] (4.22,0.3) -- (5,0.3);

\draw[dashed] (0.8,1.8) -- (6,1.8);

\draw (0,0.3) -- (7,0.3);

\draw (1.6,0.5) node {well};

\draw (2.25,0) node {barrier};

\draw (2.68,0.5) node {well};

\draw (3.2,0) node {barrier};

\draw (3.8,0.5) node {well};

\draw (4.6,0) node {barrier};

\draw (5.4,0.5) node {well};

\draw (6.5,1.8) node {$\mu$};

\end{tikzpicture}
\caption{Barriers and wells for a potential $A$ at a specific energy level $\mu$.}
\label{bar-and-well-example}
\end{figure}
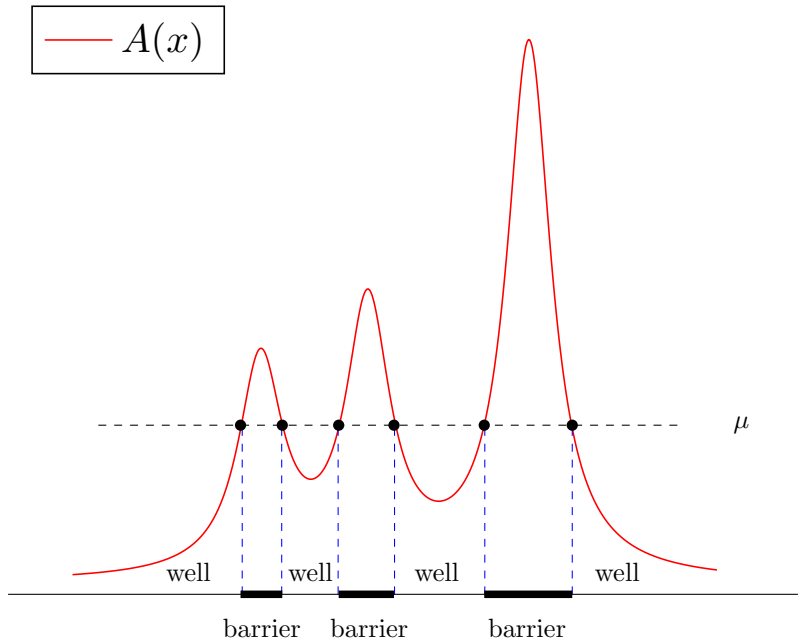

Next, in paragraph \S\ref{multi-hump-quanta}, we study the semiclassical 
spectrum of our operator with multiple potential humps. After 
the introduction of the necessary notation in \S\ref{notation}, we show 
in paragraphs \S\S \ref{problem-statement} - \ref{reformulation} how our 
problem can be transformed to one where Olver's theory (as adapted in 
sections \S\ref{passage-barrier} and \S\ref{passage-well}) can be applied. 
The results about the EVs and their corresponding quantization conditions 
are presented in \S\ref{quanta-spectrum-condo}. We show that 
for each EV there exists 
at least one barrier for which  an associated Bohr-Sommerfeld quantization condition 
can be obtained, essentially in the same way as for the one barrier problem. 
Also,  we establish a one-to-one correspondence between the EVs of the Dirac 
operator $\mathfrak{D}_\hb$ lying in $i\R$ (i.e. imaginary axis) and their 
\textit{WKB approximations}.

The  last component of the semiclassical scattering data is 
the  reflection coefficient. This has been studied semiclassically  
in \cite{h+k}; for completeness we  
present it briefly in paragraph \S\ref{refle} (the reflection coeeficient away from zero 
is presented in \S\ref{refle-away-0} while the behavior closer to zero is found in 
\S\ref{refle-close-0}).  
Since our motivation comes from the
application to semiclassical NLS,  we discuss the effect of our direct scattering
estimates to the inverse scattering problem in section \ref{nls}.  
It turns out that the asymptotic analysis of the inverse problem 
already conducted for the bell-shaped case in \cite{kmm} and \cite{kr} is still relevant.
The main change affects the new density of eigenvalues, which fortunately
still retains its  nice properties that enable the asymptotic analysis of the 
associated Riemann-Hilbert factorization problem. 

For the sake of the reader, 
as the approximate solutions to our problems involve \textit{Airy}, 
Parabolic Cylinder Functions and modified Parabolic Cylinder Functions, 
we present all the necessary results concerning these functions in 
sections  \ref{airy_functions} and \ref{parabolic-cylinder-functions} 
of the appendix.  Finally,  in section \ref{exist-proof} of the appendix we 
present a theorem concerning integral equations that is the backbone 
of the theory that we use in order to arrive at our results.

Before we start our main exposition, we specify  some notation used throughout our work.
\begin{itemize}
\item
Complex conjugation is denoted with a star superscript, \say{$*$}; 
i.e. $z^*$ is the complex conjugate of $z$ (we emphasize 
that a bar over a number, does not indicate its complex conjugate).
\item
The letters $c,C$ denote generically positive constants (unless specified 
otherwise), appearing mainly in estimates .
\item
For the \textit{Wronskian} of two functions $f$, $g$ we use the 
symbol $\W[f,g]$.
\item
The notation $f^2(x)$ denotes the square of the value of the function 
$f$ at $x$; hence, the symbols $f^2(x)$ and $f(x)^2$ are used 
interchangeably and are \textit{not} to be confused with the 
composition $f\circ f$.
\item
The \textit{transpose} of a matrix $M$ is denoted by $M^T$.
\item
For the complement of a set $B$ we write $\complement(B)$.
\item
For a set $\Sigma\subseteq\R$, when we write $i\Sigma$ we mean 
the set $\{i\kappa\mid\kappa\in\Sigma\}\subset\C$. Also, for 
$z_1, z_2\in\C$ the set $[z_1,z_2]\subset\C$ denotes the closed line segment 
starting at $z_1$ and ending at $z_2$.
\item
For the restriction of a function $f:\R\rightarrow\R$ to the interval 
$\Delta\subseteq\R$ we have $\restr{f}{\Delta}:\Delta\rightarrow\R$.
\item
The closure of a set $\Sigma\subseteq\R$
 is denoted by $\mathsf{clos}(\Sigma)$.
\item
Take a set $\Sigma\subseteq\R$ and consider a function 
$f:\Sigma\rightarrow\R$. The set 
$\mathcal{R}_f=\{f(x)\mid x\in\Sigma\}\subseteq\R$ 
represents its range.
\item
Unless otherwise specified, $f^{-1}$ shall always denote the inverse of 
an invertible function $f$.
\item
If $T$ is an operator, then $\sigma(T)$, $\sigma_{ess}(T)$ and 
$\sigma_p(T)$ denote its spectrum, essential (continuous) spectrum and 
point spectrum respectively.
\end{itemize}

\section{Passage through a Potential Barrier}
\label{passage-barrier}

We start the investigation for the behavior of solutions of equation
\be
\label{schrodi-barrier}
\frac{d^2y}{dx^2}=
\bigg\{
\hbar^{-2}[\mu^2-A^2(x)]+
\frac{3}{4}\bigg[\frac{A'(x)}{A(x)+\mu}\bigg]^2-
\frac{1}{2}\frac{A''(x)}{A(x)+\mu}
\bigg\}
y
\ee
where $\mu>0$ and the potential function 
$A:\mathbb{R}\rightarrow\mathbb{R}_+$ is
characterized by a barrier of finite width (also called finite barrier) 
in some bounded interval in $\mathbb{R}$. Precisely, we consider 
one of the following assumptions for $A$. 

For the case where the finite barrier lies between two wells of
finite width (called finite wells), we assume the following.
\begin{assumption}
\label{finite-barrier-finite-wells-assumption}
{\bf (finite barrier \& finite wells)}
The function $A$
is positive on some bounded interval 
$[x_1,x_2]$ in $\mathbb{R}$
and has a unique extremum in $(x_1,x_2)$; particularly, a maximum 
$A_{max}>0$ at $b_0\in(x_1,x_2)$. Also $A$ is assumed to be $C^4$ 
in $(x_1,x_2)$ and of class $C^5$ in a neighborhood of $b_0$. Additionally, 
$A'(x)>0$ for $x\in(x_1,b_0)$ and $A'(x)<0$ for $x\in(b_0,x_2)$. At $b_0$ we 
have $A'(b_0)=0$ and $A''(b_0)<0$. Furthermore, if we let 
$A_*=\max\{A(x_1),A(x_2)\}$ and take $\mu\in(A_*,A_{max})\subset\mathbb{R}_+$, 
the equation $A(x)=\mu$ has two solutions $b_-(\mu)$, $b_+(\mu)$ in $(x_1,x_2)$. 
These satisfy $b_-<b_+$, $A(x)>\mu$ for $x\in(b_-,b_+)$ and $A(x)<\mu$ for 
$x\in(x_1,b_-)\cup(b_+,x_2)$ (the above imply $\pm A'(b_\pm)<0$). Finally, when 
$\mu=A_{max}$ the two points $b_-$, $b_+$ coalesce into one double root at $b_0$.
\end{assumption}

\begin{figure}[H]
\centering
\begin{tikzpicture}

\begin{axis}[
    legend pos = north west,
    axis lines = none,
    xlabel = {},
    ylabel = {},
    axis line style={draw=none},
    tick style={draw=none}
]

\addplot [
    domain=0:3.1, 
    samples=100, 
    color=red,
]
{-x^3+3*x^2};
\addlegendentry{$A(x)$}   

\draw[scale=0.5,domain=-4.8:2.8,dashed,variable=\y,blue]  
plot ({1.31},{\y});

\draw[scale=0.5,domain=-4.8:9,dashed,variable=\y,blue]  
plot ({4},{\y});

\draw[scale=0.5,domain=-4.8:2.8,dashed,variable=\y,blue]  
plot ({5.76},{\y});

\draw[scale=0.5,domain=-4.8:1,dashed,variable=\y,blue]  
plot ({0},{\y});

\draw[scale=0.5,domain=-4.8:-1,dashed,variable=\y,blue]  
plot ({6.2},{\y});

\node[circle,inner sep=1.5pt,fill=black] at (0.65,1) {};

\node[circle,inner sep=1.5pt,fill=black] at (2.87,1) {};

\node[circle,inner sep=1.5pt,fill=black] at (2,4) {};

\draw[dashed] (1,4) -- (3.2,4);

\draw (-2,-1.45) -- (4,-1.45);

\draw[dashed] (0.2,1) -- (3.2,1);

\end{axis}

\draw (0.5,-0.2) node {$x_1$};

\draw (6.4,-0.2) node {$x_2$};

\draw (1.8,-0.2) node {$b_-$};

\draw (5.8,-0.2) node {$b_+$};

\draw (4.15,-0.2) node {$b_0$};

\draw (7,5.2) node {$A_{max}$};

\draw (7,2.35) node {$\mu$};

\end{tikzpicture}
\caption{An example of a finite potential barrier surrounded by
two finite wells.}
\label{pic-barrier}
\end{figure}
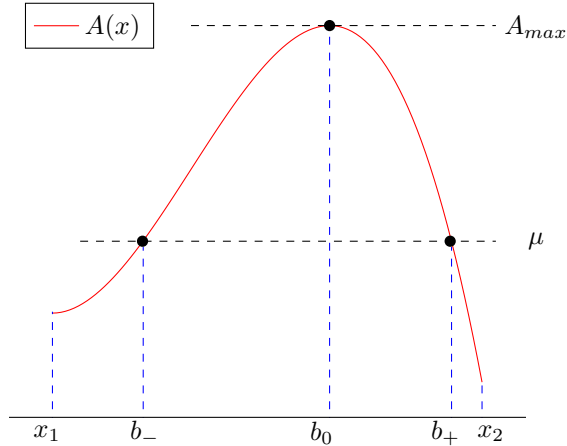

On the other hand, if the finite barrier is surrounded by one or
two infinite wells, we have the following variants of Assumption 
\ref{finite-barrier-finite-wells-assumption}.
In these cases, we need to put some additional decay assumptions 
on $A$, $A'$ and $A''$ at the infinite ends. Hence we have
one of the following.
   
\begin{assumption}
{\bf (finite barrier \& left infinite well)}
\label{finite-barrier-left-infinite-well-assumption}
The function $A$
is positive on $(-\infty,x_2]$ in $\mathbb{R}$ and
$\lim_{x\downarrow-\infty}A(x)=0$.
It has a unique extremum in $(-\infty,x_2)$; particularly, a maximum 
$A_{max}>0$ at $b_0\in(-\infty,x_2)$. Also $A$ is assumed to be $C^4$ 
in $(-\infty,x_2)$ and of class $C^5$ in a neighborhood of $b_0$. 
Additionally, $A'(x)>0$ for $x\in(-\infty,b_0)$ and $A'(x)<0$ for 
$x\in(b_0,x_2)$. At $b_0$ we have $A'(b_0)=0$ and $A''(b_0)<0$. 
Furthermore, if we take $\mu\in(A(x_2),A_{max})\subset\mathbb{R}_+$, 
the equation $A(x)=\mu$ has two solutions $b_-(\mu)$, $b_+(\mu)$ in 
$(-\infty,x_2)$. These satisfy $b_-<b_+$, $A(x)>\mu$ for $x\in(b_-,b_+)$ 
and $A(x)<\mu$ for $x\in(-\infty,b_-)\cup(b_+,x_2)$ (the above imply 
$\pm A'(b_\pm)<0$). When $\mu=A_{max}$ the two points $b_-$, $b_+$ 
coalesce into one double root  at $b_0$. Finally, there exists a number 
$\tau>0$ so that
\begin{center}
$A(x)=\mathcal{O}\Big(\tfrac{1}{|x|^{1+\tau}}\Big)$ as 
$x\downarrow-\infty$\\
$A'(x)=\mathcal{O}\Big(\tfrac{1}{|x|^{2+\tau}}\Big)$ as 
$x\downarrow-\infty$\\
$A''(x)=\mathcal{O}\Big(\tfrac{1}{|x|^{3+\tau}}\Big)$ as 
$x\downarrow-\infty$
\end{center}
\end{assumption}

\begin{figure}[H]
\centering
\begin{tikzpicture}

\begin{axis}[
    legend pos = north west,
    axis lines = none,
    xlabel = {},
    ylabel = {},
    axis line style={draw=none},
    tick style={draw=none}
]

\addplot [
    domain=-3:1.4, 
    samples=100, 
    color=red,
]
{1/(1+x^2)};
\addlegendentry{$A(x)$}   

\draw[scale=0.5,domain=0:0.85,dashed,variable=\y,blue]  
plot ({0.9},{\y});

\draw[scale=0.5,domain=0:1.9,dashed,variable=\y,blue]  
plot ({3},{\y});

\draw[scale=0.5,domain=0:0.85,dashed,variable=\y,blue]  
plot ({5.1},{\y});


\draw[scale=0.5,domain=0:0.58,dashed,variable=\y,blue]  
plot ({5.8},{\y});

\node[circle,inner sep=1.5pt,fill=black] at (0,1) {};

\node[circle,inner sep=1.5pt,fill=black] at (-1.05,0.48) {};

\node[circle,inner sep=1.5pt,fill=black] at (1.05,0.48) {};

\draw[dashed] (-1,1) -- (1.5,1);

\draw (-4,0.05) -- (2,0.05);

\draw[dashed] (-2,0.48) -- (1.5,0.48);

\end{axis}


\draw (6.4,-0.2) node {$x_2$};

\draw (3.1,-0.2) node {$b_-$};

\draw (5.9,-0.2) node {$b_+$};

\draw (4.5,-0.2) node {$b_0$};

\draw (7,5.2) node {$A_{max}$};

\draw (7,2.5) node {$\mu$};

\end{tikzpicture}
\caption{An example of a finite potential barrier accompanied
by an infinite well on the left and a finite well on the right.}
\label{pic-left-barrier}
\end{figure}
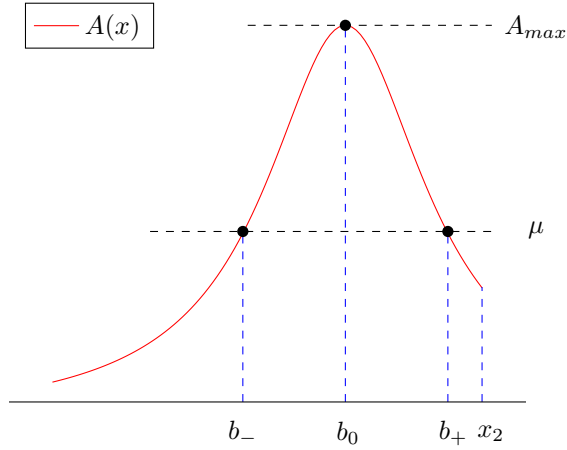

\begin{assumption}
{\bf (finite barrier \& right infinite well)}
\label{finite-barrier-right-infinite-well-assumption}
The function $A$
is positive on some interval 
$[x_1,+\infty)$ in $\mathbb{R}$ and
$\lim_{x\uparrow+\infty}A(x)=0$.
It has a unique extremum in $(x_1,+\infty)$; particularly, a maximum 
$A_{max}>0$ at $b_0\in(x_1,+\infty)$. Also $A$ is assumed to be $C^4$ 
in $(x_1,+\infty)$ and of class $C^5$ in a neighborhood of $b_0$. 
Additionally, $A'(x)>0$ for $x\in(x_1,b_0)$ and $A'(x)<0$ for 
$x\in(b_0,+\infty)$. At $b_0$ we have $A'(b_0)=0$ and $A''(b_0)<0$. 
Furthermore, if we take $\mu\in(A(x_1),A_{max})\subset\mathbb{R}_+$, 
the equation $A(x)=\mu$ has two solutions $b_-(\mu)$, $b_+(\mu)$ in 
$(x_1,+\infty)$. These satisfy $b_-<b_+$, $A(x)>\mu$ for $x\in(b_-,b_+)$ 
and $A(x)<\mu$ for $x\in(x_1,b_-)\cup(b_+,+\infty)$ (the above imply 
$\pm A'(b_\pm)<0$). When $\mu=A_{max}$ the two points $b_-$, $b_+$ 
coalesce into one double root at $b_0$. Finally, there exists a number 
$\tau>0$ so that
\begin{center}
$A(x)=\mathcal{O}\Big(\tfrac{1}{x^{1+\tau}}\Big)$ as 
$x\uparrow+\infty$\\
$A'(x)=\mathcal{O}\Big(\tfrac{1}{x^{2+\tau}}\Big)$ as 
$x\uparrow+\infty$\\
$A''(x)=\mathcal{O}\Big(\tfrac{1}{x^{3+\tau}}\Big)$ as 
$x\uparrow+\infty$
\end{center}
\end{assumption}

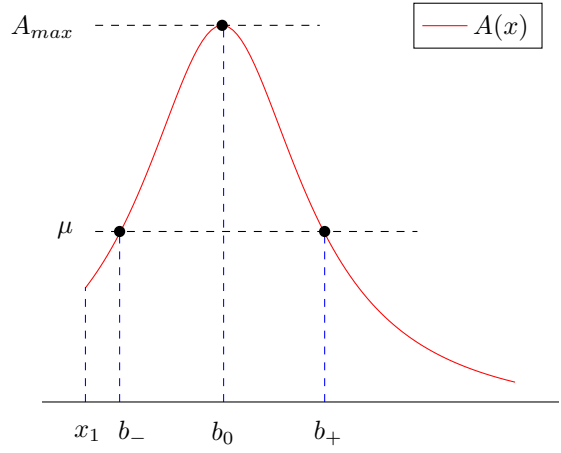
\begin{figure}[H]
\centering
\begin{tikzpicture}

\begin{axis}[
    legend pos = north east,
    axis lines = none,
    xlabel = {},
    ylabel = {},
    axis line style={draw=none},
    tick style={draw=none}
]

\addplot [
    domain=-1.4:3, 
    samples=100, 
    color=red,
]
{1/(1+x^2)};
\addlegendentry{$A(x)$}   

\draw[scale=0.5,domain=0:0.85,dashed,variable=\y,blue]  
plot ({-0.7},{\y});

\draw[scale=0.5,domain=0:1.9,dashed,variable=\y,blue]  
plot ({1.43},{\y});

\draw[scale=0.5,domain=0:0.85,dashed,variable=\y,blue]  
plot ({3.5},{\y});


\draw[scale=0.5,domain=0:0.58,dashed,variable=\y,blue]  
plot ({-1.4},{\y});

\node[circle,inner sep=1.5pt,fill=black] at (0,1) {};

\node[circle,inner sep=1.5pt,fill=black] at (-1.05,0.48) {};

\node[circle,inner sep=1.5pt,fill=black] at (1.05,0.48) {};

\draw[dashed] (-1.3,1) -- (1,1);

\draw (-4,0.05) -- (4,0.05);

\draw[dashed] (-1.3,0.48) -- (2,0.48);

\end{axis}

\draw (0.6,-0.2) node {$x_1$};


\draw (1.2,-0.2) node {$b_-$};

\draw (3.8,-0.2) node {$b_+$};

\draw (2.4,-0.2) node {$b_0$};

\draw (0,5.2) node {$A_{max}$};

\draw (0.3,2.5) node {$\mu$};

\end{tikzpicture}
\caption{An example of a finite potential barrier accompanied
by an infinite well on the right and a finite well on the left.}
\label{pic-right-barrier}
\end{figure}

\begin{assumption}
{\bf (finite barrier \& two infinite wells)}
\label{finite-barrier-left-right-infinite-wells-assumption}
The function $A$ is positive on $\mathbb{R}$ and 
$\lim_{x\to\pm\infty}A(x)=0$. It has a unique extremum in 
$(-\infty,+\infty)$; particularly, a maximum $A_{max}>0$ at 
$b_0\in(-\infty,+\infty)$. Also $A$ is assumed to be $C^4$ 
in $(-\infty,+\infty)$ and of class $C^5$ in a neighborhood of $b_0$. 
Additionally, $A'(x)>0$ for $x\in(-\infty,b_0)$ and $A'(x)<0$ for 
$x\in(b_0,+\infty)$. At $b_0$ we have $A'(b_0)=0$ and $A''(b_0)<0$. 
Furthermore, if we take $\mu\in(0,A_{max})\subset\mathbb{R}_+$, 
the equation $A(x)=\mu$ has two solutions $b_-(\mu)$, $b_+(\mu)$ in 
$(-\infty,+\infty)$. These satisfy $b_-<b_+$, $A(x)>\mu$ for 
$x\in(b_-,b_+)$ and $A(x)<\mu$ for $x\in(-\infty,b_-)\cup(b_+,+\infty)$ 
(the above imply $\pm A'(b_\pm)<0$). When $\mu=A_{max}$ the two points 
$b_-$, $b_+$ coalesce into one double root at $b_0$. Finally, there exists a 
number $\tau>0$ so that
\begin{center}
$A(x)=\mathcal{O}\Big(\tfrac{1}{|x|^{1+\tau}}\Big)$ as 
$x\to\pm\infty$\\
$A'(x)=\mathcal{O}\Big(\tfrac{1}{|x|^{2+\tau}}\Big)$ as 
$x\to\pm\infty$\\
$A''(x)=\mathcal{O}\Big(\tfrac{1}{|x|^{3+\tau}}\Big)$ as 
$x\to\pm\infty$
\end{center}
\end{assumption}

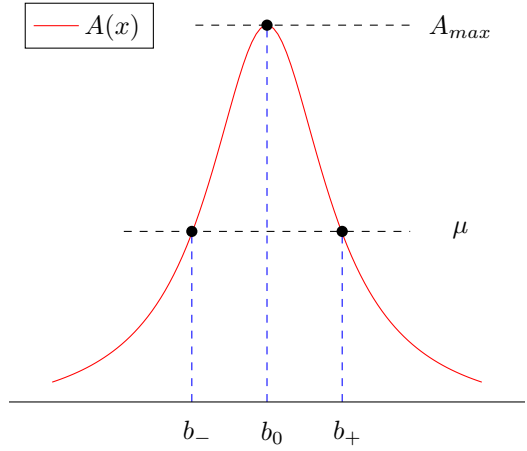
\begin{figure}[H]
\centering
\begin{tikzpicture}

\begin{axis}[
    legend pos = north west,
    axis lines = none,
    xlabel = {},
    ylabel = {},
    axis line style={draw=none},
    tick style={draw=none}
]

\addplot [
    domain=-3:3, 
    samples=100, 
    color=red,
]
{1/(1+x^2)};
\addlegendentry{$A(x)$}   

\draw[scale=0.5,domain=0:0.85,dashed,variable=\y,blue]  
plot ({0.9},{\y});

\draw[scale=0.5,domain=0:1.9,dashed,variable=\y,blue]  
plot ({3},{\y});

\draw[scale=0.5,domain=0:0.85,dashed,variable=\y,blue]  
plot ({5.1},{\y});



\node[circle,inner sep=1.5pt,fill=black] at (0,1) {};

\node[circle,inner sep=1.5pt,fill=black] at (-1.05,0.48) {};

\node[circle,inner sep=1.5pt,fill=black] at (1.05,0.48) {};

\draw[dashed] (-1,1) -- (2,1);

\draw (-4,0.05) -- (4,0.05);

\draw[dashed] (-2,0.48) -- (2,0.48);

\end{axis}



\draw (2.5,-0.2) node {$b_-$};

\draw (4.5,-0.2) node {$b_+$};

\draw (3.5,-0.2) node {$b_0$};

\draw (6,5.2) node {$A_{max}$};

\draw (6,2.5) node {$\mu$};

\end{tikzpicture}
\caption{An example of a finite potential barrier accompanied
by two infinite wells.}
\label{pic-double-barrier}
\end{figure}

\begin{remark}
The case where $A$ is an even function, satisfying
Assumption \ref{finite-barrier-left-right-infinite-wells-assumption}
is treated in complete detail in \cite{h+k}.
\end{remark}

\subsection{The Liouville transform for a barrier}
\label{liouville-transform-barrier}

We start by assuming the first one of the assumptions above. 
All of them can be treated in a similar manner. Assume
 \ref{finite-barrier-finite-wells-assumption} 
(see Figure \ref{pic-barrier}), with $\mu=A(b_-)=A(b_+)$. 
We temporarily  drop the subscript and set
\be\nn
b_+\equiv b
\ee
\be\nn
I^-=(x_1,b_0),\quad I^+=(b_0,x_2) 
\ee
and define
\be
\label{g-plus-minus}
G^\pm=\Big(\restr{A}{\mathsf{clos}(I^\pm)}\Big)^{-1}.
\ee
Take an arbitrary $\mu_1\in(A_*,A_{max})$
and consider the $b_1\in(b_0,G^+(A_*))$ so that $A(b_1)=\mu_1$ 
(cf. Assumption \ref{finite-barrier-finite-wells-assumption} and 
(\ref{g-plus-minus})); then $\mu\in[\mu_1,A_{max}]$ implies 
$b\in[b_0,b_1]$. For every $\hb>0$, equation (\ref{schrodi-barrier}) 
reads
\be
\label{final-schrodi-barrier}
\frac{d^2y}{dx^2}=[\hbar^{-2}f(x,b)+g(x,b)]y,
\quad (x,b)\in(x_1,x_2)\times[b_0,b_1]
\ee
in which the functions $f$ and $g$ satisfy
\be
\label{f-schrodi-barrier}
f(x,b)=A^2(b)-A^2(x)
\ee
and
\be
\label{g-schrodi-barrier}
g(x,b)=
\frac{3}{4}\bigg[\frac{A'(x)}{A(x)+A(b)}\bigg]^2-
\frac{1}{2}\frac{A''(x)}{A(x)+A(b)}.
\ee
We see that our equation (\ref{final-schrodi-barrier}) has two 
turning points (cf. Definition \ref{turn-pt}) at $x=b_\pm$ when 
$b\in(b_0,b_1]$ coalescing into one double at $x=b_0$; then $b$ 
becomes $b_0$.

Next, we introduce new variables $X$ and $\z$ according 
to the Liouville transform
\be\nn
X=\dot{x}^{-\frac{1}{2}}y
\ee
where the dot signifies differentiation with respect to $\z$. 
Equation (\ref{final-schrodi-barrier}) becomes
\be
\label{schrodi-barrier-liouville-initial-form}
\frac{d^2X}{d\z^2}=
\Big[\hbar^{-2}\dot{x}^2f(x,b)+\dot{x}^2g(x,b)+
\dot{x}^{\frac{1}{2}}\frac{d^2}{d\z^2}(\dot{x}^{-\frac{1}{2}})\Big]X.
\ee
Let us treat the noncritical case $\mu\in[\mu_1,A_{max})$ first; 
two turning points $b_\pm$ being present. In this case $f(\cdot,b)$ 
is negative in $(b_-,b_+)$ and positive in $(x_1,b_-)\cup(b_+,x_2)$. 
Hence we prescribe
\be\label{barrier-case}
\dot{x}^2f(x,b)=\z^2-\alpha^2
\ee
where $\alpha>0$ is chosen in such a way that $x=b_-$ corresponds 
to $\z=-\alpha$ and $x=b_+$ to $\z=\alpha$ accordingly. 

After integration, (\ref{barrier-case}) yields
\be
\label{barrier-integral-form}
\int_{b_-}^{x}[-f(t,b)]^{\frac{1}{2}}dt=
\int_{-\alpha}^{\z}(\alpha^2-\tau^2)^{\frac{1}{2}}d\tau
\ee
provided that $b_-\leq x\leq b_+$ (notice that by taking these 
integration limits, $b_-$ corresponds to $-\alpha$). 
For the remaining correspondence we require
\be\nn
\int_{b_-}^{b_+}[-f(t,b)]^{\frac{1}{2}}dt=
\int_{-\alpha}^{\alpha}(\alpha^2-\tau^2)^{\frac{1}{2}}d\tau
\ee
and hence
\be\label{alpha}
\alpha^2(\mu)=\frac{2}{\pi}\int_{b_-(\mu)}^{b_+(\mu)}
\sqrt{A^2(t)-\mu^2}dt.
\ee
For every fixed value of $\hb$, relation (\ref{alpha}) defines 
$\alpha$ as a continuous and decreasing function of $\mu$ which 
vanishes as $\mu\uparrow A_{max}$. Set 
\be
\label{alpha-1}
\alpha_1=\alpha(\mu_1)>0.
\ee 
Then $\mu\in[\mu_1,A_{max})$ implies $\alpha\in(0,\alpha_1]$.

Next, from (\ref{barrier-integral-form}) we find
\be
\label{zeta-barrier-center}
\int_{b_-}^{x}[-f(t,b)]^{\frac{1}{2}}dt=
\frac{1}{2}\alpha^2\arccos\Big(-\frac{\z}{\alpha}\Big)+
\frac{1}{2}\z\big(\alpha^2-\z^2\big)^{\frac{1}{2}}
\quad\text{for}\quad
b_-\leq x\leq b_+
\ee
with  the principal value choice for the inverse cosine taking 
values in $[0,\pi]$. For the remaining $x$-intervals, we 
integrate (\ref{barrier-case}) to obtain
\be
\label{zeta-barrier-left}
\int_{x}^{b_-}f(t,b)^{\frac{1}{2}}dt=-
\frac{1}{2}\alpha^2\arcosh\Big(-\frac{\z}{\alpha}\Big)-
\frac{1}{2}\z\big(\z^2-\alpha^2\big)^{\frac{1}{2}}
\quad\text{for}\quad
x_1<x\leq b_-
\ee
and
\be
\label{zeta-barrier-right}
\int_{b_+}^{x}f(t,b)^{\frac{1}{2}}dt=-
\frac{1}{2}\alpha^2\arcosh\Big(\frac{\z}{\alpha}\Big)+
\frac{1}{2}\z\big(\z^2-\alpha^2\big)^{\frac{1}{2}}
\quad\text{for}\quad
b_+\leq x<x_2
\ee
with $\arcosh(x)=\ln\big(x+\sqrt{x^2-1}\big)$ for $x\geq1$.

Equations (\ref{zeta-barrier-center}), (\ref{zeta-barrier-left}) 
and (\ref{zeta-barrier-right}) show that $\z$ is a continuous and 
increasing function of $x$ which shows that there is a one-to-one 
correspondence between these two variables. Thus, if we set 
\be
\label{zj-barrier}
\z_j=\lim_{x\to x_j}\z(x)\quad\text{for}\hspace{4pt}j=1,2
\ee
then $(x_1,x_2)$ is mapped by $\z$ to $(\z_1,\z_2)$. 
Notice that $-\infty<\z_1<0<\z_2<+\infty$ since both
$x_1$, $x_2$ are finite by Assumption
\ref{finite-barrier-finite-wells-assumption}
(if $x_1=-\infty$ then $\z_1=-\infty$ and 
 if $x_2=+\infty$ then  $\z_2=+\infty$).

\begin{remark}
\label{critical-barrier-liouville}
In the critical case in which the two (simple) 
turning points coalesce into one (double) point, we get a limit of 
the above transformation with $b=b_0$. In this case, the analogous 
relations to (\ref{zeta-barrier-center}), (\ref{zeta-barrier-left}), 
(\ref{zeta-barrier-right}) are
\be
\label{zeta-barrier-0-left}
\int_{x}^{b_0}f(t,b_0)^{\frac{1}{2}}dt=\frac{1}{2}\z^2
\quad\text{for}\quad
x_1<x\leq b_0
\ee
\be
\label{zeta-barrier-0-right}
\int_{b_0}^{x}f(t,b_0)^{\frac{1}{2}}dt=\frac{1}{2}\z^2
\quad\text{for}\quad
b_0\leq x<x_2
\ee
and $\al=0$.
\end{remark}

Finally, having in mind Remark \ref{critical-barrier-liouville}, 
we substitute (\ref{barrier-case}) in 
(\ref{schrodi-barrier-liouville-initial-form}) and obtain the 
following proposition.
\begin{proposition}
\label{propo-schrodi-to-liouvi}
For every $\hb>0$ equation
\be\nn
\frac{d^2y}{dx^2}=[\hbar^{-2}f(x,b)+g(x,b)]y,
\quad (x,b)\in(x_1,x_2)\times[b_0,b_1]
\ee
where $f$, $g$ as in (\ref{f-schrodi-barrier}), 
(\ref{g-schrodi-barrier}) respectively, is transformed to 
the equation
\be
\label{schrodi-barrier-liouville-final-form}
\frac{d^2X}{d\z^2}=
\big[\hb^{-2}(\z^2-\alpha^2)+\ps(\z,\alpha)\big]X,
\quad 
(\z,\al)\in(\z_1,\z_2)\times[0,\al_1]
\ee
in which $\z$ is given by the Liouville transform 
(\ref{barrier-case}), $\al$ is given by (\ref{alpha}), 
$\z_j$, $j=1,2$ are given by (\ref{zj-barrier}), $\al_1$ as 
in (\ref{alpha-1}) and the function $\ps(\z,\alpha)$ is given 
by the formula
\be
\label{psi-barrier}
\ps(\z,\alpha)=
\dot{x}^2g(x,b)+
\dot{x}^{\frac{1}{2}}\frac{d^2}{d\z^2}(\dot{x}^{-\frac{1}{2}}).
\ee
\end{proposition}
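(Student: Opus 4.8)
The plan is to read the proposition for what it is — a change-of-variables statement — and to verify the two ingredients that make it work: (i) that the map $x\mapsto\z$ implicitly determined by (\ref{barrier-case}), equivalently by the integral relations (\ref{zeta-barrier-center})--(\ref{zeta-barrier-right}), is a genuine orientation-preserving diffeomorphism of $(x_1,x_2)$ onto $(\z_1,\z_2)$ with enough regularity for all the formulas that follow; and (ii) that substituting the prescription (\ref{barrier-case}) into the Liouville-transformed equation (\ref{schrodi-barrier-liouville-initial-form}) — already derived in the text — reproduces (\ref{schrodi-barrier-liouville-final-form}) with $\ps$ given by (\ref{psi-barrier}). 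I would carry this out in the noncritical regime $\mu\in[\mu_1,A_{max})$ (two simple turning points) and then obtain the critical case $\mu=A_{max}$, where $\al=0$ and $b=b_0$, as a limiting version built on (\ref{zeta-barrier-0-left})--(\ref{zeta-barrier-0-right}).

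For (i), away from $x=b_\pm$ one has $\dot x=\big((\z^2-\al^2)/f(x,b)\big)^{1/2}$; since $f(\cdot,b)$ is negative exactly on $(b_-,b_+)$, $\z^2-\al^2$ is negative exactly on $(-\al,\al)$, and the two intervals correspond under (\ref{barrier-integral-form}), the radicand is positive and $\dot x>0$ there. The substantive step is the behavior at the turning points: Assumption \ref{finite-barrier-finite-wells-assumption} gives $A'(b_\pm)\neq0$, so $f(x,b)=A^2(b)-A^2(x)$ has a simple zero at each $b_\pm$, while $\z^2-\al^2$ has a simple zero at $\pm\al$; the integral relations then force $x-b_\pm$ and $\z\mp\al$ to vanish to the same order with a strictly positive ratio, so $\dot x$ extends across $\pm\al$ to a strictly positive function whose smoothness is that inherited from $A$ — this is the classical mechanism of \cite{olver1975}, and it is exactly why $A$ is required to be $C^4$ on $(x_1,x_2)$. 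Combined with the monotonicity and continuity of $\z$ in $x$ already recorded after (\ref{zj-barrier}), the fact that the three pieces (\ref{zeta-barrier-left})--(\ref{zeta-barrier-right}) all send $b_-\mapsto-\al$ and $b_+\mapsto\al$ (so they glue smoothly), and the finiteness of $x_1,x_2$ giving $-\infty<\z_1<0<\z_2<+\infty$, this makes $\z$ a diffeomorphism $(x_1,x_2)\to(\z_1,\z_2)$; moreover $\al\in(0,\al_1]$ as $\mu$ ranges over $[\mu_1,A_{max})$ by (\ref{alpha})--(\ref{alpha-1}).

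Given the change of variable, (ii) is immediate: $X=\dot x^{-1/2}y$ turns $y''=[\hb^{-2}f(x,b)+g(x,b)]y$ into (\ref{schrodi-barrier-liouville-initial-form}), and inserting $\dot x^2f(x,b)=\z^2-\al^2$ from (\ref{barrier-case}) collapses the leading bracket to $\hb^{-2}(\z^2-\al^2)$, leaving the remaining two terms as the definition (\ref{psi-barrier}) of $\ps(\z,\al)$; this is (\ref{schrodi-barrier-liouville-final-form}) on $(\z_1,\z_2)\times[0,\al_1]$. I would stress that at this stage (\ref{psi-barrier}) is merely an identity between functions that are a priori continuous only on $(\z_1,\z_2)\setminus\{\pm\al\}$: that the two individually singular-looking summands there combine into a function continuous (and, later, bounded uniformly in $\al$) across the turning points is proved separately in \S\ref{error-cont-barrier} and is not needed for the present statement.

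For the critical case, $\al=0$ and $b=b_0$, and now $f(x,b_0)=A^2(b_0)-A^2(x)$ has a \emph{double} zero at $b_0$ since $A'(b_0)=0$ and $A''(b_0)<0$, matched by the double zero of $\z^2-\al^2=\z^2$ at the origin, so $\dot x^2=\z^2/f(x,b_0)$ again extends to a strictly positive function — this time consuming one more derivative of $A$, which is why $A$ is assumed $C^5$ near $b_0$; relations (\ref{zeta-barrier-0-left})--(\ref{zeta-barrier-0-right}) are the $\al\downarrow0$ forms of (\ref{zeta-barrier-center})--(\ref{zeta-barrier-right}), and the substitution argument of (ii) applies verbatim with $\al=0$. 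The main obstacle throughout is the turning-point analysis of $\dot x$, and within it the double-turning-point case: this is the delicate point of Olver's scheme and the reason for the differentiability requirements in Assumption \ref{finite-barrier-finite-wells-assumption}.
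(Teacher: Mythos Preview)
Your proposal is correct and follows the same approach as the paper: the proposition is obtained by substituting the prescription (\ref{barrier-case}) into the Liouville-transformed equation (\ref{schrodi-barrier-liouville-initial-form}), with the critical case handled via Remark \ref{critical-barrier-liouville}. You supply considerably more detail than the paper does on the regularity of $\dot x$ across the turning points and the diffeomorphism property of $x\mapsto\z$, but this is elaboration of the same argument rather than a different route.
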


Since in the following paragraphs we shall be interested in 
approximate solutions of equation 
(\ref{schrodi-barrier-liouville-final-form}), we have the following.
\begin{definition}
The function  $\psi$ found in the differential equation 
(\ref{schrodi-barrier-liouville-final-form}) shall be called the 
\textbf{error term} of this equation.
\end{definition}

For the error term we have the following proposition.
\begin{proposition}
The error term $\psi$ can be written equivalently as
\begin{multline}
\label{psi-barrier-equivalent}
\ps(\z,\alpha)=
\frac{1}{4}\frac{3\z^2 +2\alpha^2}{(\z^2-\alpha^2)^2}
+\frac{1}{16}\frac{\z^2-\alpha^2}{f^3(x,b)}
\Big\{4f(x,b)f''(x,b)-5[f'(x,b)]^2\Big\}\\
+(\z^2-\alpha^2)\frac{g(x,b)}{f(x,b)}
\end{multline}
where prime denotes differentiation with respect to $x$. The 
same formula can be used in the critical case of one double 
turning point simply by setting $b=b_0$ and $\alpha=0$.
\end{proposition}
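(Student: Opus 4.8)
The plan is to substitute the Liouville relation (\ref{barrier-case}), namely $\dot{x}^{2}f(x,b)=\z^{2}-\al^{2}$, into the two summands of the expression (\ref{psi-barrier}) for $\ps$. The term $\dot{x}^{2}g(x,b)$ is immediate: $\dot{x}^{2}g(x,b)=\dot{x}^{2}f(x,b)\cdot\frac{g(x,b)}{f(x,b)}=(\z^{2}-\al^{2})\frac{g(x,b)}{f(x,b)}$, which is precisely the last term of (\ref{psi-barrier-equivalent}). Everything then reduces to establishing, on the open set where $\z^{2}\neq\al^{2}$ (equivalently $x\notin\{b_{-},b_{+}\}$, where $\z^{2}-\al^{2}$ and $f(x,b)$ are nonzero and of equal sign), the identity
\[
\dot{x}^{1/2}\frac{d^{2}}{d\z^{2}}\!\big(\dot{x}^{-1/2}\big)
=\frac14\frac{3\z^{2}+2\al^{2}}{(\z^{2}-\al^{2})^{2}}
+\frac{1}{16}\frac{\z^{2}-\al^{2}}{f^{3}(x,b)}\big\{4f(x,b)f''(x,b)-5[f'(x,b)]^{2}\big\};
\]
the behaviour at the turning points $\z=\pm\al$ is settled afterwards, when the continuity of $\ps$ is studied in \S\ref{error-cont-barrier}.

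On that set one writes, via (\ref{barrier-case}), $\dot{x}^{-1/2}=|\z^{2}-\al^{2}|^{-1/4}\,|f(x,b)|^{1/4}$ — a product of a function of $\z$ alone and (through the $x\leftrightarrow\z$ correspondence) a function of $x$. Set $\hat{f}(\z)=\z^{2}-\al^{2}$. Differentiating twice in $\z$ by the product and chain rules yields a pure-$\z$ term $\big(\tfrac{d^{2}}{d\z^{2}}|\hat{f}|^{-1/4}\big)|f|^{1/4}$, cross terms carrying the factor $\tfrac{d}{d\z}|\hat{f}|^{-1/4}$ (hence the factor $\hat{f}'=2\z$) together with $\dot{x}\,\tfrac{d}{dx}|f|^{1/4}$, and a term $|\hat{f}|^{-1/4}\tfrac{d^{2}}{d\z^{2}}|f|^{1/4}$. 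The last of these needs $\ddot{x}$, obtained by differentiating (\ref{barrier-case}) once more: $2\dot{x}\,\ddot{x}\,f(x,b)+\dot{x}^{3}f'(x,b)=\hat{f}'$, whence $\ddot{x}=\frac{\hat{f}'}{2\dot{x}f(x,b)}-\frac{\dot{x}^{2}f'(x,b)}{2f(x,b)}$.

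Now multiply the whole expression by $\dot{x}^{1/2}=|\hat{f}|^{1/4}|f(x,b)|^{-1/4}$ and use $\dot{x}^{2}=\hat{f}/f(x,b)$ repeatedly to clear the fractional powers. The key point is a cancellation: the cross terms and the $\tfrac{\hat{f}'}{2\dot{x}f}$-part of $\ddot{x}$ produce equal and opposite multiples of $|\hat{f}|^{-1/2}|f|^{-3/4}\hat{f}'\,\tfrac{d}{dx}|f|^{1/4}$, so all $\hat{f}'$-terms drop. What survives splits into (i) a pure-$\z$ part $|\hat{f}|^{1/4}\tfrac{d^{2}}{d\z^{2}}|\hat{f}|^{-1/4}$, which a two-line differentiation of $(\z^{2}-\al^{2})^{-1/4}$ (or of $(\al^{2}-\z^{2})^{-1/4}$ on the middle interval) evaluates to $\tfrac14\frac{3\z^{2}+2\al^{2}}{(\z^{2}-\al^{2})^{2}}$ in both cases; and (ii) a pure-$x$ part $\frac{\hat{f}}{f}|f|^{-1/4}\big(\tfrac{d^{2}}{dx^{2}}|f|^{1/4}-\tfrac{f'}{2f}\tfrac{d}{dx}|f|^{1/4}\big)=-\hat{f}\,|f|^{-3/4}\tfrac{d^{2}}{dx^{2}}|f|^{-1/4}$, which on expanding $|f|^{-1/4}$ becomes $\tfrac{1}{16}\frac{\z^{2}-\al^{2}}{f^{3}(x,b)}\{4ff''-5(f')^{2}\}$. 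Adding the three contributions gives (\ref{psi-barrier-equivalent}).

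The critical case $\mu=A_{max}$ is simply the specialisation $\al=0$, $b=b_{0}$: (\ref{barrier-case}) degenerates to $\dot{x}^{2}f(x,b_{0})=\z^{2}$, in agreement with (\ref{zeta-barrier-0-left})--(\ref{zeta-barrier-0-right}) of Remark \ref{critical-barrier-liouville}, $\dot{x}$ remains a positive smooth function of $\z$ across $\z=0$, and the same computation with $\al=0$ returns (\ref{psi-barrier-equivalent}) with $\al=0$ (the first term collapsing to $\tfrac{3}{4\z^{2}}=\z^{1/2}\tfrac{d^{2}}{d\z^{2}}\z^{-1/2}$). I expect the only real obstacle to be the bookkeeping of the fractional powers of $\hat{f}$ and $f(x,b)$ and checking the stated cancellation of the $\hat{f}'$-terms; this is mechanical. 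One could also bypass it by invoking the general error formula for a Liouville transformation between two equations with quadratic leading coefficients, of which (\ref{psi-barrier-equivalent}) is the instance $\hat{f}(\z)=\z^{2}-\al^{2}$ (cf. \cite{olver1975}).
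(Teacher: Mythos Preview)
Your proposal is correct and follows the same approach as the paper: the paper's proof is a one-liner (``Using (\ref{psi-barrier}), (\ref{g-schrodi-barrier}) and (\ref{barrier-case}), simple algebraic manipulations show that $\ps$ takes the desired form''), and what you have written is precisely a careful sketch of those algebraic manipulations. Your identification of the $\hat{f}'$-cancellation and the clean split into a pure-$\z$ piece and a pure-$x$ piece is exactly the mechanism behind the formula, and your remark that (\ref{psi-barrier-equivalent}) is just the specialisation $\hat{f}(\z)=\z^{2}-\al^{2}$ of Olver's general error formula for a Liouville transform is also apt.
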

\begin{proof}
Using (\ref{psi-barrier}), (\ref{g-schrodi-barrier}) and 
(\ref{barrier-case}), simple algebraic manipulations shown that 
$\ps$ takes the desired form. 
\end{proof}

\subsection{Continuity of the error term}
\label{error-cont-barrier}

In this subsection we prove a lemma concerning the continuity 
of the function $\psi(\z,\alpha)$ defined in (\ref{psi-barrier}) 
or (\ref{psi-barrier-equivalent}). This fact will be used 
subsequently in \S\ref{barrier-approximate-solutions} to prove 
the existence of approximate solutions of equation 
(\ref{schrodi-barrier-liouville-final-form}). We state it
explicitly.

\begin{lemma}
\label{lemma-error-cont-barrier}
The function $\psi(\z,\alpha)$ defined in (\ref{psi-barrier}),
is continuous in $\z$ and $\alpha$ in the region
$(\z_1,\z_2)\times[0,\alpha_1]$ of the $(\z,\alpha)$-plane.
\end{lemma}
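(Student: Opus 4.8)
The strategy is to read off the potential singularities of $\psi$ from the two expressions \eqref{psi-barrier} and \eqref{psi-barrier-equivalent} and show that each apparent singularity is in fact removable. The only places where trouble can arise are (a) the turning points $x=b_\pm$, i.e.\ $\z=\pm\alpha$, where $f(x,b)$ vanishes and the denominators in \eqref{psi-barrier-equivalent} blow up, and (b) the coalescence limit $\alpha\downarrow 0$ (equivalently $\mu\uparrow A_{max}$, $b\to b_0$), where $\z=\alpha=0$ and one must check that the formula passes continuously to the critical expressions \eqref{zeta-barrier-0-left}--\eqref{zeta-barrier-0-right} of Remark \ref{critical-barrier-liouville}. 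Away from these loci, $\dot x$ is smooth and nonvanishing (the Liouville transform is a $C^\infty$, in fact real-analytic, diffeomorphism there because $f(\cdot,b)$ has only simple zeros), $A+A(b)$ is bounded away from $0$ by positivity of $A$, and $g$ is continuous by the $C^4$ hypothesis on $A$; hence $\psi$ is manifestly jointly continuous on the complement of $\{\z=\pm\alpha\}\cup\{\alpha=0\}$, and the whole issue is local analysis near those sets.

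First I would handle a fixed simple turning point, say $x=b_+$ with $\alpha>0$ fixed. Here I expand $f(x,b)$ near $b_+$: since $f(b_+,b)=0$ and $f'(b_+,b)=-2A(b_+)A'(b_+)=-2\mu A'(b_+)\neq 0$, one has $f(x,b)=f'(b_+,b)(x-b_+)+O((x-b_+)^2)$ with a $C^3$ remainder. On the $\z$ side, \eqref{zeta-barrier-right} gives $\tfrac13(\z-\alpha)^{3/2}\cdot(\text{const}) \sim \int_{b_+}^x f^{1/2}$, so $\z-\alpha \asymp (x-b_+)$ with a nonzero smooth derivative, i.e.\ $\dot x$ stays finite and nonzero at $\z=\alpha$. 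The cleanest route is to use the Olver-type identity: define $E(x,b):=\tfrac14\tfrac{3\z^2+2\alpha^2}{(\z^2-\alpha^2)^2}+\tfrac{1}{16}\tfrac{\z^2-\alpha^2}{f^3}\{4ff''-5(f')^2\}$ and show, by Taylor-expanding both $\z^2-\alpha^2$ and $f$ to sufficient order about $x=b_+$, that the double pole coming from $(\z^2-\alpha^2)^{-2}$ in the first term is exactly cancelled by the $f^{-3}$ pole in the second, leaving a finite limit; this is the standard Kazarinoff/Olver computation and uses precisely the $C^4$ regularity (four derivatives of $A$, hence of $f$, are needed to control $\{4ff''-5(f')^2\}/f^3$ to $O(1)$). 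The term $(\z^2-\alpha^2)g/f$ is easier: $g$ is $C^0$, $f$ has a simple zero, and $\z^2-\alpha^2=(\z-\alpha)(\z+\alpha)\asymp (x-b_+)$ kills the simple pole. The same argument applies verbatim at $x=b_-$, $\z=-\alpha$. Joint continuity in $(\z,\alpha)$ near such a point (for $\alpha$ in a compact subinterval of $(0,\alpha_1]$) then follows because all the Taylor coefficients involved — $b_\pm(\mu)$, $f$ and its $x$-derivatives, and the map $x\mapsto\z$ — depend continuously on $\mu$, hence on $\alpha$, by \eqref{alpha} and the implicit function theorem applied to $A(x)=\mu$ (legitimate since $A'(b_\pm)\neq 0$).

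The main obstacle is the simultaneous degeneration at $\z=\alpha=0$: there one must show $\psi(\z,\alpha)$ extends continuously to the critical value obtained from $b=b_0$, $\alpha=0$, where now $f(\cdot,b_0)$ has a \emph{double} zero at $b_0$ ($f(b_0,b_0)=f'(b_0,b_0)=0$, $f''(b_0,b_0)=-2A_{max}A''(b_0)>0$) and $\z^2=2\int_{b_0}^x f(t,b_0)^{1/2}dt \asymp (x-b_0)^2$. I would rescale: write $x-b_\pm$ and $\z\mp\alpha$ in terms of a common small parameter and track the limit $\alpha\to 0$ uniformly, showing the blow-up orders still match. The robust way to organize this is to return to \eqref{psi-barrier} directly — $\psi=\dot x^2 g + \dot x^{1/2}(\dot x^{-1/2})''$ — and prove that $\dot x$ and its first two $\z$-derivatives extend continuously across $\alpha=0$, using that $\z^2-\alpha^2$ is, after the substitution, a $C^{4}$ function of $x$ with a genuine double zero structure parametrized smoothly by $\mu\in[\mu_1,A_{max}]$; this is exactly the uniform-reduction lemma underpinning Olver's connection of the Weber-equation comparison with the Airy regime, and invoking the machinery of \cite{olver1975} (or its adaptation in the preceding subsection) is the honest way to close it. Finally I would assemble a compactness argument: cover $\mathsf{clos}((\z_1,\z_2))\times[0,\alpha_1]$ — or rather, since $\z_{1,2}$ may be unbounded under the infinite-well assumptions, any compact rectangle — by the three kinds of neighborhoods (interior, simple turning point, double turning point at the origin), conclude continuity on each, and patch. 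A brief remark should note that under Assumptions \ref{finite-barrier-left-infinite-well-assumption}--\ref{finite-barrier-left-right-infinite-wells-assumption} the decay hypotheses on $A,A',A''$ guarantee $g\to 0$ and $\dot x$ behaves well at the infinite ends, so nothing new happens there; this is stated here only for completeness since the lemma as phrased concerns the open interval $(\z_1,\z_2)$.
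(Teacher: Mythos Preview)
Your approach is correct in outline and would succeed, but it is more laborious than the paper's. The paper does not carry out the pole-cancellation Taylor expansions at $\z=\pm\alpha$ by hand, nor does it treat the coalescence limit $\alpha\downarrow 0$ as a separate rescaling argument. Instead it introduces the single factorization
\[
f(x,b)=(x-b_-)(x-b)\,p(x,b),
\]
checks that $p$ is positive and $C^2$ on $(x_1,x_2)\times[b_0,b_1]$ with $|\partial_x^3 p|$ bounded near the critical point $(b_0,b_0)$, that $g$ is continuous, and that $f(\cdot,b)$ is monotone in $b$ on $[b_-,b]$ --- all of which follow immediately from Assumption~\ref{finite-barrier-finite-wells-assumption} and the $C^4/C^5$ hypotheses on $A$ --- and then invokes Lemma~I of Olver~\cite{olver1975}, which packages \emph{both} the simple-turning-point cancellations and the uniform passage through coalescence in one statement. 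The factorization is the key device: because $p$ stays bounded away from zero and smooth, the explicit factor $(x-b_-)(x-b)$ matches the $\z^2-\alpha^2$ structure on the Liouville side uniformly in $\alpha\in[0,\alpha_1]$, so no separate case analysis is needed. Your direct computation essentially reproves the content of Olver's lemma for this particular $f$; the paper's route is shorter and avoids the delicate uniform-rescaling step you flag as ``the main obstacle,'' since that step is exactly what Lemma~I was designed to absorb.
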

\begin{proof}
For $x\in(x_1,x_2)$, $\mu\in[\mu_1,A_{max}]$ and 
$b\in[b_0,b_1]$ we introduce an auxiliary function $p$ 
by setting
\be
\label{p-barrier}
f(x,b)=(x-b_-)(x-b)p(x,b).
\ee
Having in mind that $A(b_-)=A(b)=\mu$, we see that for 
$\mu\in[\mu_1,A_{max})$
\be\nn
p(b_\pm,b)=\mp\frac{2\mu}{b-b_-} A'(b_\pm)>0 
\ee
while for $\mu=A_{max}$
\be\nn
p(b_0,b_0)=-A_{max}A''(b_0)>0.
\ee 

Our functions $f$, $g$ and $p$ defined by (\ref{f-schrodi-barrier}), 
(\ref{g-schrodi-barrier}) and (\ref{p-barrier}) respectively satisfy 
the following properties
\begin{itemize}
\item[(i)]
$p$, $\frac{\partial p}{\partial x}$, 
$\frac{\partial^2 p}{\partial x^2}$ and $g$ are continuous 
functions of $x$ and $b$ (this means in $x$ and $b$ 
simultaneously and not separately) in the region 
$(x_1,x_2)\times[b_0,b_1]$ 
\item[(ii)]
$p$ is positive throughout the same region
\item[(iii)]
$|\frac{\partial^3 p}{\partial x^3}|$ is bounded in a neighborhood 
of the point $(x,b)=(b_0,b_0)$ in the same region and
\item[(iv)]
$f$ is a non-increasing function of $b\in[b_0,b_1]$ when 
$x\in[b_-,b]$.
\end{itemize}
Indeed, (i) and (iii) follow from (\ref{f-schrodi-barrier}), 
(\ref{g-schrodi-barrier}), (\ref{p-barrier}) and the fact that 
$A$ is in $C^4$ and of class $C^5$ in some neighborhood of $b_0$ 
(see Assumption \ref{finite-barrier-finite-wells-assumption}). 
For (ii), use the definition (\ref{p-barrier}) of $p$ and recall 
the sign of $f$ using (\ref{f-schrodi-barrier}). Finally (iv) is a 
consequence of (\ref{f-schrodi-barrier}) and the monotonicity of 
$A$ in $[b_0,x_2)$ (again cf. Assumption 
\ref{finite-barrier-finite-wells-assumption}). 
By Lemma I in Olver's paper
\cite{olver1975}, the function $\ps$ defined by (\ref{psi-barrier}) 
is continuous in the corresponding region of the $(\z,\alpha)$-plane.
\end{proof}

\subsection{Approximate solutions in the barrier case}
\label{barrier-approximate-solutions}

We return to equation (\ref{schrodi-barrier-liouville-final-form}) 
and state an existence theorem concerning its approximate solutions. 
To this goal, we need a way to assess the error. We do this by 
introducing an \textit{error-control function} $H$ along with a 
\textit{balancing function} $\Om$.
\begin{definition}
Define the \textbf{balancing function} $\Om$ by
\be\label{omega-barrier}
\Om(x)=1+|x|^{\frac{1}{3}}.
\ee 
As an \textbf{error-control function} $H(\z,\al,\hb)$ of equation 
(\ref{schrodi-barrier-liouville-final-form}) we consider any 
primitive of the function
\be\nn
\frac{\ps(\z,\al)}{\Om(\z\sqrt{2\hb^{-1}})}.
\ee
\end{definition}

Furthermore, we need the notion of the \textit{variation} of 
the error-control function $H$ in a given interval. We have
the following.
\begin{definition}
Take $(\gamma,\delta)\subseteq(\z_1,\z_2)\subseteq\R$ 
(cf. (\ref{zj-barrier})). 
The \textbf{variation} $\var_{\gamma,\delta}[H]$ 
in the interval $(\gamma,\delta)$ of the error-control function $H$ 
of equation (\ref{schrodi-barrier-liouville-final-form}) is defined 
by
\be\nn
\var_{\gamma,\delta}[H](\alpha,\hb)=
\int_{\gamma}^{\delta}\frac{|\ps(t,\al)|}{\Om(t\sqrt{2\hb^{-1}})}dt.
\ee
\end{definition}
 
Finally, for any $c\leq0$ set
\be\label{lambda-1-function}
l_1(c)=\sup_{x\in(0,+\infty)}
\bigg\{\Om(x)\frac{\msf(x,c)^2}{\G(\tfrac{1}{2}-c)}\bigg\}
\ee
where $\msf$ is a function defined in terms of Parabolic Cylinder 
Functions in section \ref{pcfs} of the 
appendix and $\Gamma$ denotes the \textit{Gamma function}.
We note that the above supremum is finite for each value of $c$. 
This fact is a consequence of (\ref{omega-barrier}) and the 
first relation in (\ref{M,N-asymptotics}). Furthermore, because 
the relations (\ref{M,N-asymptotics}) hold uniformly in compact 
intervals of $(-\infty,0]$, the function $l_1$ is continuous.

We are now ready for the  main theorem of this paragraph.
\btheo\label{main-thm-barrier}
For each value of $\hb>0$ the equation
\be\nn
\frac{d^2X}{d\z^2}=
\big[\hb^{-2}(\z^2-\alpha^2)+\ps(\z,\alpha)\big]X
\ee 
has in the region 
$[0,\z_2)\times[0,\al_1]$ of the $(\z,\al)$-plane, two 
solutions $Y_+$ and $Z_+$ satisfying
\bea
\label{y1-approx}
Y_+(\z,\al,\hb)=U(\z\sqrt{2\hb^{-1}},-\tfrac{1}{2}\hb^{-1}\al^2)+
\eps_1 (\z,\al,\hb)\\
\label{y2-approx}
Z_+(\z,\al,\hb)=\ol U(\z\sqrt{2\hb^{-1}},-\tfrac{1}{2}\hb^{-1}\al^2)+
\eps_2 (\z,\al,\hb)
\eea
where $U$, $\ol U$ are the PCFs defined in appendix 
\ref{pcfs}.
These two solutions $Y_+$, $Z_+$ are continuous and have continuous 
first and second partial $\z$-derivatives. The errors $\eps_1$, 
$\eps_2$ in the relations above satisfy the estimates
\footnote{
The functions $\esf$, $\msf$ and $\nsf$ are related with the PCF theory 
found in appendix \ref{pcfs}.
}
\begin{multline}\label{barrier-junk1}
\frac{|\eps_1 (\z,\al,\hb)|}
{\msf(\z\sqrt{2\hb^{-1}},-\tfrac{1}{2}\hb^{-1}\al^2)},
\frac{\Big|\frac{\partial \eps_1 }{\partial\z}(\z,\al,\hb)\Big|}
{\sqrt{2\hb^{-1}}
\nsf(\z\sqrt{2\hb^{-1}},-\tfrac{1}{2}\hb^{-1}\al^2)}\\
\leq
\frac{1}{\esf(\z\sqrt{2\hb^{-1}},-\tfrac{1}{2}\hb^{-1}\al^2)}
\Big(\exp\big\{\tfrac{1}{2}(\pi\hb)^{\frac{1}{2}}l_1(-\tfrac{1}{2}
\hb^{-1}\al^2)
\mathcal{V}_{\z,\z_2}[H](\alpha,\hb)\big\}-1\Big)
\end{multline}
and
\begin{multline}\label{barrier-junk2}
\frac{|\eps_2 (\z,\al,\hb)|}
{\msf(\z\sqrt{2\hb^{-1}},-\tfrac{1}{2}\hb^{-1}\al^2)},
\frac{\Big|\frac{\partial \eps_2 }{\partial\z}(\z,\al,\hb)\Big|}
{\sqrt{2\hb^{-1}}
\nsf(\z\sqrt{2\hb^{-1}},-\tfrac{1}{2}\hb^{-1}\al^2)}\\
\leq
\esf(\z\sqrt{2\hb^{-1}},-\tfrac{1}{2}\hb^{-1}\al^2)
\Big(\exp\big\{\tfrac{1}{2}(\pi\hb)^{\frac{1}{2}}l_1(-\tfrac{1}{2}
\hb^{-1}\al^2)
\mathcal{V}_{0,\z}[H](\alpha,\hb)\big\}-1\Big).
\end{multline}
\etheo
\begin{proof}
In order to prove this theorem, we rely on Theorem I in 
\cite{olver1975}. There, it is stated that it suffices to prove 
two things. First that the function $\ps$ is continuous in the region 
$[0,\z_2)\times[0,\al_1]$, a fact that has already been 
proven in \S \ref{error-cont-barrier} and second that the integral
\be\label{variation-total}
\var_{0,\z_2}[H](\alpha,\hb)=
\int_{0}^{\z_2}\frac{|\ps(t,\al)|}{\Om(t\sqrt{2\hb^{-1}})}dt
\ee 
converges  uniformly in $\al$. But this is obvious 
since $\z_2<+\infty$. 
\end{proof}

\begin{remark}
If we were assuming either Assumption
\ref{finite-barrier-right-infinite-well-assumption} or Assumption
\ref{finite-barrier-left-right-infinite-wells-assumption} we 
would have $\z_2=+\infty$. In such a case, Theorem \ref{main-thm-barrier}
would still be true. To obtain it, we have to argue as in the proof of 
Theorem 6.1 in \cite{h+k}.
\end{remark}

\subsection{Asymptotics of the approximate solutions for the barrier}
\label{asympt-behave-barrier-sols}

In order to extract  the asymptotic behavior of the solutions 
$Y_+(\z,\al,\hb)$, $Z_+(\z,\al,\hb)$ when $\hb\downarrow0$, we need 
to determine the asymptotic form of the error bounds
(\ref{barrier-junk1}), (\ref{barrier-junk2}) examining closely
$l_1(-\tfrac{1}{2}\hb^{-1}\al^2)$ and $\var_{0,\z_2}[H](\alpha,\hb)$
as $\hb\downarrow0$. 

Let us deal with the noncritical case $\al\in(0,\al_1]$ first. By 
applying the same analysis found in \S8 of \cite{h+k} we obtain
\be\label{lambda-asympt}
l_1(-\tfrac{1}{2}\hb^{-1}\al^2)=\asympt(1)
\quad\text{as}\quad
\hb\downarrow0.
\ee
Next, we examine $\var_{0,\z_2}[H](\alpha,\hb)$. Again in \S 8 of 
\cite{h+k} it is shown that
\begin{align}\label{v-operator-asympt-barrier}
\var_{0,\z_2}[H](\alpha,\hb)=
\int_{0}^{\z_2}
\frac{|\ps(t,\al)|}{1+(t\sqrt{2\hb^{-1}})^{\frac{1}{3}}}dt=
\asympt(\hb^{1/6})
\quad\text{as}\quad
\hb\downarrow0
\end{align}
when $\z_2=+\infty$. Clearly the same asymptotics hold in
the case when $\z_2<+\infty$ too.

The last two relations applied to (\ref{barrier-junk1}) and 
(\ref{barrier-junk2}) supply us with the desired results 
as $\hb\downarrow0$
\begin{align}
\label{e1-asympt}
\eps_1(\z,\al,\hb) & =
\frac{\msf(\z\sqrt{2\hb^{-1}},-\tfrac{1}{2}\hb^{-1}\al^2)}
{\esf(\z\sqrt{2\hb^{-1}},-\tfrac{1}{2}\hb^{-1}\al^2)}
\asympt(\hb^{\frac{2}{3}})
\\
\nn
\eps_2(\z,\al,\hb) & =
\esf(\z\sqrt{2\hb^{-1}},-\tfrac{1}{2}\hb^{-1}\al^2)
\msf(\z\sqrt{2\hb^{-1}},-\tfrac{1}{2}\hb^{-1}\al^2)
\asympt(\hb^{\frac{2}{3}})
\\
\nn
\frac{\partial \eps_1 }{\partial\z}(\z,\al,\hb) & =
\frac{\nsf(\z\sqrt{2\hb^{-1}},-\tfrac{1}{2}\hb^{-1}\al^2)}
{\esf(\z\sqrt{2\hb^{-1}},-\tfrac{1}{2}\hb^{-1}\al^2)}
\asympt(\hb^{\frac{1}{6}})
\\
\nn
\frac{\partial \eps_2 }{\partial\z}(\z,\al,\hb) & =
\esf(\z\sqrt{2\hb^{-1}},-\tfrac{1}{2}\hb^{-1}\al^2)
\nsf(\z\sqrt{2\hb^{-1}},-\tfrac{1}{2}\hb^{-1}\al^2)
\asympt(\hb^{\frac{1}{6}})
\end{align}
uniformly for $\z\in[0,\z_2)$ and $\al\in(0,\al_1]$.

\begin{remark}
In the special case $\al=0$ (i.e. when equation 
(\ref{schrodi-barrier-liouville-final-form}) has a double 
turning point at $\z=0$), $l_1(0)$ is independent of $\hb$. 
Using the definition (\ref{omega-barrier}) of $\Om$,
we see that we have a similar estimate to 
(\ref{v-operator-asympt-barrier}); namely
$\var_{0,\z_2}[H](0,\hb)=\asympt(\hb^{\frac{1}{6}})$ 
as $\hb\downarrow0$. Hence the results above about the errors, 
hold for the case $\alpha=0$ too.
\end{remark}

\subsection{Connection formulae for a barrier}
\label{connection-formulas-barrier}

We can determine the 
asymptotic behavior of $Y_+$, $Z_+$ for small $\hb>0$ and $\z<0$ by 
establishing appropriate \textit{connection formulae}. We can 
replace $\z$ by $-\z$ in Theorem
\ref{main-thm-barrier} to ensure two more solutions $Y_-$, $Z_-$ of 
equation (\ref{schrodi-barrier-liouville-final-form}) satisfying
as $\hb\downarrow0$
\be
\label{y_minus-z_minus-asympt}
\begin{split}
Y_-(\z,\al,\hb)=
U(-\z\sqrt{2\hb^{-1}},-\tfrac{1}{2}\hb^{-1}\al^2)+
\frac{\msf(-\z\sqrt{2\hb^{-1}},-\tfrac{1}{2}\hb^{-1}\al^2)}
{\esf(-\z\sqrt{2\hb^{-1}},-\tfrac{1}{2}\hb^{-1}\al^2)}
\asympt(\hb^{\frac{2}{3}})\\
Z_-(\z,\al,\hb)=
\ol U(-\z\sqrt{2\hb^{-1}},-\tfrac{1}{2}\hb^{-1}\al^2)+\hspace{4.9cm}\\
\hspace{2cm}\esf(-\z\sqrt{2\hb^{-1}},-\tfrac{1}{2}\hb^{-1}\al^2)
\msf(-\z\sqrt{2\hb^{-1}},-\tfrac{1}{2}\hb^{-1}\al^2)
\asympt(\hb^{\frac{2}{3}})
\end{split}
\ee
uniformly for $\z\in(\z_1,0]$ and $\al\in[0,\al_1]$.

\begin{remark}
\label{linear-independent-barrier}
The two sets $\{Y_+,Z_+\}$ and $\{Y_-,Z_-\}$ consist of two
linearly independent functions. This can be seen by their Wronskians.
For example, using \ref{wronskian-pcf} we have
\be\nn
\W
[U(\cdot,-\tfrac{1}{2}\hb^{-1}\al^2),
\ol U(\cdot,-\tfrac{1}{2}\hb^{-1}\al^2)]=
\sqrt{\frac{2}{\pi}}
\G\Big(\tfrac{1}{2}+\tfrac{1}{2}\hb^{-1}\al^2\Big)
\ee
Using this and (\ref{y1-approx}), (\ref{y2-approx}) we see that
$\W[Y_+,Z_+]\neq0$. Similarly, we have $\W[Y_-,Z_-]\neq0$ as well.
\end{remark}

We express $Y_+$, $Z_+$ in terms of $Y_-$, $Z_-$. So for
$(\z,\alpha)\in(\z_1,0]\times[0,\alpha_1]$ we may write
\begin{align}\label{sigma1}
Y_+(\z,\al,\hb) & =
\s_{11}(\al,\hb)Y_-(\z,\al,\hb)+\s_{12}(\al,\hb)Z_-(\z,\al,\hb)\\
\label{sigma2}
Z_+(\z,\al,\hb) & =
\s_{21}(\al,\hb)Y_-(\z,\al,\hb)+\s_{22}(\al,\hb)Z_-(\z,\al,\hb).
\end{align}
The connection will become clear once  we  find approximations
for the coefficients $\s_{ij}$, $i,j=1,2$ in the linear relations
(\ref{sigma1}) and (\ref{sigma2}).
We evaluate at $\z=0$ equations (\ref{sigma1}), (\ref{sigma2})
and their derivatives. After algebraic manipulations we obtain
\begin{align*}
\s_{11}(\al,\hb)&=
\frac{\W[Y_+(\cdot,\al,\hb),Z_-(\cdot,\al,\hb)](0)}
{\W[Y_-(\cdot,\al,\hb),Z_-(\cdot,\al,\hb)](0)}\\
\s_{12}(\al,\hb)&=
-
\frac
{\W[Y_+(\cdot,\al,\hb),Y_-(\cdot,\al,\hb)](0)}
{\W[Y_-(\cdot,\al,\hb),Z_-(\cdot,\al,\hb)](0)}\\
\s_{21}(\al,\hb)&=
\frac
{\W[Z_+(\cdot,\al,\hb),Z_-(\cdot,\al,\hb)](0)}
{\W[Y_-(\cdot,\al,\hb),Z_-(\cdot,\al,\hb)](0)}\\
\s_{22}(\al,\hb)&=
-
\frac
{\W[Z_+(\cdot,\al,\hb),Y_-(\cdot,\al,\hb)](0)}
{\W[Y_-(\cdot,\al,\hb),Z_-(\cdot,\al,\hb)](0)}.
\end{align*}

Now set 
\be\nn
\vphi(\alpha,\hb)=(1+\hb^{-1}\al^2)\frac{\pi}{4}.
\ee
By using the results and properties of Parabolic Cylinder Functions 
and their auxiliary functions from section \ref{pcf} in the 
appendix, we find that as $\hb\downarrow0$
\begin{align*}
Y_1(0,\al,\hb)&=
\msf(0)[\sin\vphi(\alpha,\hb)+\asympt(\hb^{\frac{2}{3}})]\\
Y_2(0,\al,\hb)&=
\msf(0)[\cos\vphi(\alpha,\hb)+\asympt(\hb^{\frac{2}{3}})]\\
Y_3(0,\al,\hb)&=
\msf(0)[\sin\vphi(\alpha,\hb)+\asympt(\hb^{\frac{2}{3}})]\\
Y_4(0,\al,\hb)&=
\msf(0)[\cos\vphi(\alpha,\hb)+\asympt(\hb^{\frac{2}{3}})]\\
\dot{Y}_1(0,\al,\hb)&=-
\sqrt{2\hb^{-1}}\nsf(0)[\cos\vphi(\alpha,\hb)+
\asympt(\hb^{\frac{2}{3}})]\\
\dot{Y}_2(0,\al,\hb)&=
\sqrt{2\hb^{-1}}\nsf(0)[\sin\vphi(\alpha,\hb)+
\asympt(\hb^{\frac{2}{3}})]\\
\dot{Y}_3(0,\al,\hb)&=
\sqrt{2\hb^{-1}}\nsf(0)[\cos\vphi(\alpha,\hb)+
\asympt(\hb^{\frac{2}{3}})]\\
\dot{Y}_4(0,\al,\hb)&=-
\sqrt{2\hb^{-1}}\nsf(0)[\sin\vphi(\alpha,\hb)+
\asympt(\hb^{\frac{2}{3}})].
\end{align*} 
Recall that the dot denotes differentiation with respect to $\z$.
Finally, using these estimates we obtain
as $\hb\downarrow0$
\be
\label{barrier-connection}
\begin{split}
\sigma_{11}(\al,\hb)=
\sin(\tfrac{1}{2}\pi\hb^{-1}\al^2)+\asympt(\hb^{\frac{2}{3}})\\
\sigma_{12}(\al,\hb)=
\cos(\tfrac{1}{2}\pi\hb^{-1}\al^2)+\asympt(\hb^{\frac{2}{3}})\\
\sigma_{21}(\al,\hb)=
\cos(\tfrac{1}{2}\pi\hb^{-1}\al^2)+\asympt(\hb^{\frac{2}{3}})\\
\sigma_{22}(\al,\hb)=
-\sin(\tfrac{1}{2}\pi\hb^{-1}\al^2)+\asympt(\hb^{\frac{2}{3}})
\end{split}
\ee
uniformly for $\al\in[0,\al_1]$.

\subsection{Applications in the barrier case}
\label{applications-barrier}

We are assumping 
\ref{finite-barrier-finite-wells-assumption} (similar arguments hold
for the other cases as well). Recalling (\ref{alpha}), we define the 
following.
\begin{definition}
If we assume that equation (\ref{final-schrodi-barrier}) describes the 
motion of a system (e.g. a particle), then the function
\be\label{bold-phi}
\fisf(\mu)=\frac{\pi}{2}\alpha^2(\mu)=
\int_{b_{-}(\mu)}^{b_{+}(\mu)}\sqrt{A(x)^2-\mu^2}dx.
\ee
is called the abbreviated \textbf{action} of this motion.
\end{definition}
It is easily checked that $\fisf$ is of class $C^1$. Differentiating
relation (\ref{bold-phi}) while using $A(b_\pm)=\mu$, we obtain
\be
\label{derivative-bold-phi}
\frac{d\fisf(\mu)}{d\mu}=
-2\mu
\int_{b_{-}(\mu)}^{b_{+}(\mu)}[A(x)^2-\mu^2]^{-\frac{1}{2}}dx<0.
\ee

The asymptotic behavior as $\hb\downarrow0$
of an arbitrary non-trivial \textit{real} solution $X$ of equation 
(\ref{schrodi-barrier-liouville-final-form}) on the $\z$-interval
corresponding to the finite $x$-barrier $(b_-,b_+)$ of $A$, can
be examined through the functions $Y_+$, $Z_+$ and $Y_-$, $Z_-$.
Since $\{Y_+,Z_+\}$ and $\{Y_-,Z_-\}$ are two sets of linearly 
independent functions (cf. Remark \ref{linear-independent-barrier}), 
for $X$ we can write
\be
\label{X-linear-combo}
\begin{split}
X(\z,\al,\hb) 
=
\gamma_+(\al,\hb)Y_+(\z,\al,\hb)+\delta_+(\al,\hb)Z_+(\z,\al,\hb)\\
=
\gamma_-(\al,\hb)Y_-(\z,\al,\hb)+\delta_-(\al,\hb)Z_-(\z,\al,\hb)
\end{split}
\ee
for some $\gamma_\pm(\al,\hb),\delta_\pm(\al,\hb)\in\mathbb{R}$.
We put
\be
\label{X-aux}
\begin{split}
v_\pm(\al,\hb)
=
\sqrt{\gamma_\pm(\al,\hb)^2+\delta_\pm(\al,\hb)^2}\\
\gamma_\pm(\al,\hb)
=
v_\pm(\al,\hb)\cos\xi_\pm(\al,\hb)\\
\delta_\pm(\al,\hb)
=
v_\pm(\al,\hb)\sin\xi_\pm(\al,\hb)\\
\xi_\pm(\al,\hb)
\in
\mathbb{R}/(2\pi\mathbb{Z})
\end{split}
\ee
and define
\be\label{xi-sum}
\xi(\mu,\hb)=\xi_+(\al(\mu),\hb)+\xi_-(\al(\mu),\hb).
\ee
Recall that $\alpha$ is function of $\mu$. Whence we can see that 
$\xi_\pm$, $\xi$ depend on $\mu$. Sometimes we shall simply write 
$\xi_\pm(\mu,\hb)$ meaning $\xi_\pm(\al(\mu),\hb)$.

The ideas that follow are essentially the same as those used in the
derivation of the \textit{Bohr-Sommerfeld quantization condition}
found in \S 10 of \cite{h+k}. We start with a theorem.
\begin{theorem}
Under Assumption \ref{finite-barrier-finite-wells-assumption}, 
there is a non-negative integer $n=n(\mu,\hb)$ 
(i.e. that depends on $\mu$ and $\hb$) such that the functions 
$\fisf$ and $\xi$ in (\ref{bold-phi}) and (\ref{xi-sum}) respectively, 
satisfy the formula
\be
\label{B-R-similar}
\fisf(\mu)=\Big[(2n+1)\frac{\pi}{2}-\xi(\mu,\hb)\Big]\hb+
\asympt(\hb^{\frac{5}{3}})
\quad
\text{as}
\quad
\hb\downarrow0.
\ee
\end{theorem}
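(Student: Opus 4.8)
The plan is to derive the quantization condition (\ref{B-R-similar}) directly from the connection formulae of \S\ref{connection-formulas-barrier} applied to an arbitrary non-trivial \emph{real} solution $X$ of (\ref{schrodi-barrier-liouville-final-form}). First I would combine the two representations of $X$ in (\ref{X-linear-combo}): substituting $Y_+=\s_{11}Y_-+\s_{12}Z_-$ and $Z_+=\s_{21}Y_-+\s_{22}Z_-$ from (\ref{sigma1})--(\ref{sigma2}) into the first line, equating with the second line, and using the linear independence of $Y_-,Z_-$ (Remark \ref{linear-independent-barrier}), one obtains the linear system
\[
\gamma_-=\s_{11}\gamma_++\s_{21}\delta_+,\qquad \delta_-=\s_{12}\gamma_++\s_{22}\delta_+ .
\]
I would then insert the polar coordinates $\gamma_\pm=v_\pm\cos\xi_\pm$, $\delta_\pm=v_\pm\sin\xi_\pm$ of (\ref{X-aux}) together with the asymptotics (\ref{barrier-connection}). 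Abbreviating $\thv:=\tfrac12\pi\hb^{-1}\al^2$, which by (\ref{bold-phi}) equals $\hb^{-1}\fisf(\mu)$, and using $\s_{11}\approx\sin\thv$, $\s_{12}=\s_{21}\approx\cos\thv$, $\s_{22}\approx-\sin\thv$ with the sum-angle identities, the system becomes
\[
v_-\cos\xi_-=v_+\sin(\thv+\xi_+)+\asympt(v_+\hb^{2/3}),\qquad v_-\sin\xi_-=v_+\cos(\thv+\xi_+)+\asympt(v_+\hb^{2/3}),
\]
with error uniform in $\al\in[0,\al_1]$ by the uniform asymptotics of \S\ref{asympt-behave-barrier-sols}.

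Since $X\not\equiv0$ we have $v_\pm>0$; dividing by $v_+$ and summing the squares of the two equations gives $v_-/v_+=1+\asympt(\hb^{2/3})$, and substituting this back shows that the unit vectors $(\cos\xi_-,\sin\xi_-)$ and $\bigl(\sin(\thv+\xi_+),\cos(\thv+\xi_+)\bigr)=\bigl(\cos(\tfrac\pi2-\thv-\xi_+),\sin(\tfrac\pi2-\thv-\xi_+)\bigr)$ differ by $\asympt(\hb^{2/3})$. Because the argument of a point on the unit circle is locally Lipschitz, this forces $\xi_-\equiv\tfrac\pi2-\thv-\xi_+\pmod{2\pi}$ up to an $\asympt(\hb^{2/3})$ error, so that
\[
\thv+\xi_++\xi_-=(2n+1)\tfrac{\pi}{2}+\asympt(\hb^{2/3})
\]
for some integer $n=n(\mu,\hb)$. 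With the representatives $\xi_\pm\in[0,2\pi)$ this $n$ is automatically non-negative, since $\thv=\hb^{-1}\fisf(\mu)\ge0$ and $\xi_\pm\ge0$ force $(2n+1)\tfrac\pi2\ge-\asympt(\hb^{2/3})$, hence $2n+1\ge1$ and $n\ge0$ once $\hb$ is small enough. Finally, recalling $\xi(\mu,\hb)=\xi_+(\al(\mu),\hb)+\xi_-(\al(\mu),\hb)$ from (\ref{xi-sum}) and multiplying the last display by $\hb$, with $\hb\thv=\fisf(\mu)$, produces exactly (\ref{B-R-similar}) with remainder $\hb\cdot\asympt(\hb^{2/3})=\asympt(\hb^{5/3})$; the three other assumptions of \S\ref{passage-barrier} are treated by the same argument, using that (\ref{barrier-connection}) remains valid when $\z_2=+\infty$.

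I expect the main obstacle to be the careful propagation of the $\asympt(\hb^{2/3})$ errors through these trigonometric steps, and especially avoiding division by any quantity (such as $\sin\xi_-$ or $\cos(\thv+\xi_+)$) that may be small or vanish: this is precisely why I would keep the full two-component vector identity above and extract the phase through the uniformly Lipschitz map $S^1\to\R/2\pi\Z$ instead of through a single ratio. A secondary point needing attention is the degenerate regime $\al\to0$, in which the two turning points of (\ref{schrodi-barrier-liouville-final-form}) merge; this is harmless because Theorem \ref{main-thm-barrier}, the estimates of \S\ref{asympt-behave-barrier-sols}, and (\ref{barrier-connection}) were all established uniformly on the closed interval $[0,\al_1]$, so the value $\al=0$ is included. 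It is worth noting, finally, that the vector identity actually determines $\thv+\xi_++\xi_-$ modulo $2\pi$, a slightly sharper statement fully consistent with the congruence modulo $\pi$ asserted in (\ref{B-R-similar}).
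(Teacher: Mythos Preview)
Your argument is correct and reaches the desired conclusion, but it is not the route taken in the paper. The paper exploits the single scalar identity $\W[X,X]=0$: writing $X$ once via $\{Y_+,Z_+\}$ and once via $\{Y_-,Z_-\}$ and expanding the Wronskian bilinearly with (\ref{sigma1})--(\ref{sigma2}) gives, after inserting (\ref{barrier-connection}) and (\ref{X-aux}), directly
\[
\cos\bigl[\hb^{-1}\fisf(\mu)+\xi(\mu,\hb)\bigr]=\asympt(\hb^{2/3}),
\]
from which (\ref{B-R-similar}) follows. In contrast, you match the full two--component coefficient vector $(\gamma_-,\delta_-)$ against the transformed $(\gamma_+,\delta_+)$ and read off the phase via the Lipschitz inverse on $S^1$. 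Your method is slightly longer but more elementary (no Wronskian algebra), and it actually yields two pieces of extra information that the Wronskian shortcut discards: the amplitude relation $v_-/v_+=1+\asympt(\hb^{2/3})$, and the determination of $\thv+\xi_++\xi_-$ modulo $2\pi$ rather than merely modulo $\pi$ (as you note at the end). The paper's approach, on the other hand, is a one--line computation once the connection coefficients (\ref{barrier-connection}) are in hand. Your careful avoidance of division by possibly small trigonometric quantities, and your remark on the uniform validity at $\al=0$, are both well taken; the paper's proof is terse on these points.
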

\begin{proof}
Using (\ref{X-linear-combo}) and (\ref{X-aux}) we have
\begin{align*}
0
&=
\W[X,X]=\W[\gamma_+Z_++\delta_+Y_+,\gamma_-Z_-+\delta_-Y_-]\\
&=
v_+v_-\W[Y_-,Z_-]
(
-\sigma_{12}\cos\xi_+\cos\xi_-
+\sigma_{11}\cos\xi_+\sin\xi_-\\
&
\hspace{4cm}
-\sigma_{22}\sin\xi_+\cos\xi_-
+\sigma_{21}\sin\xi_+\sin\xi_-
)
\end{align*}
(we have suppressed the dependence on $\al$ and $\hb$ 
for notational simplicity). 
From (\ref{X-aux}),  (\ref{barrier-connection}) and (\ref{xi-sum})  
\be\nn
\cos\Big[\hb^{-1}\fisf(\mu)+\xi(\mu,\hb)\Big]=
\asympt(\hb^{\frac{2}{3}})
\quad
\text{as}
\quad
\hb\downarrow0
\ee
from which the result follows.
\end{proof}

If $\xi(\mu,\hb)=0\pmod{\pi}$, then relation (\ref{B-R-similar}) 
reduces to the Bohr-Sommerfeld quantization condition. In particular, 
this is true if $\xi_\pm(\mu,\hb)=0\pmod{\pi}$ at both 
turning points $b_\pm(\mu)$. We state this explicitly.
\begin{theorem}
\label{B-R-existence}
Under Assumption \ref{finite-barrier-finite-wells-assumption}, 
suppose that a non-trivial real solution of 
(\ref{schrodi-barrier-liouville-final-form}) fulfills 
(\ref{X-linear-combo}) and (\ref{X-aux}) with 
$\xi_\pm(\mu,\hb)=0\pmod{\pi}$. Then function $\fisf$ in 
(\ref{bold-phi}) satisfies the condition
\be
\label{B-R-1}
\cos\Big[\hb^{-1}\fisf(\mu)\Big]=
\asympt(\hb^{\frac{2}{3}})
\quad
\text{as}
\quad
\hb\downarrow0
\ee
whence 
\be
\label{B-R-2}
\fisf(\mu)=\pi\Big(n+\tfrac{1}{2}\Big)\hb+
\asympt(\hb^{\frac{5}{3}})
\quad
\text{as}
\quad
\hb\downarrow0
\ee
for some non-negative integer $n=n(\mu,\hb)$.
\end{theorem}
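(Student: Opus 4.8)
The plan is to obtain Theorem \ref{B-R-existence} as an immediate consequence of the preceding theorem, the extra hypothesis $\xi_\pm(\mu,\hb)=0\pmod\pi$ being precisely what annihilates the phase correction $\xi$ in (\ref{B-R-similar}). First I would note that if $\xi_+(\mu,\hb)$ and $\xi_-(\mu,\hb)$ are both integer multiples of $\pi$, then by (\ref{xi-sum}) so is $\xi(\mu,\hb)=\xi_+(\al(\mu),\hb)+\xi_-(\al(\mu),\hb)$; hence $\cos\big[\hb^{-1}\fisf(\mu)+\xi(\mu,\hb)\big]=\pm\cos\big[\hb^{-1}\fisf(\mu)\big]$, the sign being the parity of $\xi(\mu,\hb)/\pi$.

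Next I would recast (\ref{B-R-similar}) as $\hb^{-1}\fisf(\mu)+\xi(\mu,\hb)=(2n+1)\tfrac{\pi}{2}+\asympt(\hb^{2/3})$ as $\hb\downarrow0$. Since the cosine is $1$-Lipschitz and vanishes at the odd multiples of $\tfrac{\pi}{2}$, applying $\cos$ to both sides gives $\cos\big[\hb^{-1}\fisf(\mu)+\xi(\mu,\hb)\big]=\asympt(\hb^{2/3})$, and by the previous observation this is exactly (\ref{B-R-1}). Equivalently, one can reread the proof of the previous theorem, where this intermediate estimate is produced from (\ref{X-aux}), (\ref{barrier-connection}) and (\ref{xi-sum}) before (\ref{B-R-similar}) is extracted; either route works.

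It remains to pass from (\ref{B-R-1}) to the quantized statement (\ref{B-R-2}). Fix $\hb$ small enough that the bound in (\ref{B-R-1}) gives $\big|\cos[\hb^{-1}\fisf(\mu)]\big|\le\tfrac12$; then $\hb^{-1}\fisf(\mu)$ lies in one of the pairwise disjoint neighborhoods of the zeros $\pi(k+\tfrac12)$, $k\in\Z$, of the cosine, on each of which $\cos$ admits a smooth inverse with uniformly bounded derivative. Hence there is a unique integer $n=n(\mu,\hb)$ with $\big|\hb^{-1}\fisf(\mu)-\pi(n+\tfrac12)\big|=\asympt(\hb^{2/3})$, and multiplying by $\hb$ yields (\ref{B-R-2}). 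Non-negativity of $n$ for all small $\hb$ follows since $\fisf(\mu)=\int_{b_-(\mu)}^{b_+(\mu)}\sqrt{A(x)^2-\mu^2}\,dx>0$ under Assumption \ref{finite-barrier-finite-wells-assumption}, so $\hb^{-1}\fisf(\mu)>0$ forces $n+\tfrac12>0$. The only step that is not pure bookkeeping is this last cosine inversion: one must rule out $\hb^{-1}\fisf(\mu)$ straddling two consecutive zeros, which is automatic once $\hb$ is small enough that the $\asympt(\hb^{2/3})$ error falls below the separation of those neighborhoods, after which the Lipschitz bound on $\arccos$ near its zeros transfers the $\asympt(\hb^{2/3})$ estimate cleanly.
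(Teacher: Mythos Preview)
Your proposal is correct and follows essentially the same route as the paper: the paper states this theorem as an immediate specialization of the preceding one, noting that $\xi_\pm(\mu,\hb)=0\pmod\pi$ forces $\xi(\mu,\hb)=0\pmod\pi$, whereupon (\ref{B-R-similar}) reduces directly to (\ref{B-R-2}) and its cosine reformulation (\ref{B-R-1}). Your treatment is simply more explicit about the cosine inversion step, which the paper leaves implicit.
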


\begin{remark}
It is possible that $\xi_\pm(\mu,\hb)=0\pmod{\pi}$ only 
for $\hb$ in some set $\Sigma\subset\R_+$ such that 
$0\in\mathsf{clos}(\Sigma)$. Then conditions (\ref{B-R-1}), (\ref{B-R-2}) 
are also satisfied for $\hb\in\Sigma$.
\end{remark}


What follows is a result converse to Theorem \ref{B-R-existence}.
\begin{theorem}
\label{exist-ev-barrier}
Under Assumption \ref{finite-barrier-finite-wells-assumption}, 
suppose that for some non-negative integer $n$, the point 
$\pi\Big(n+\tfrac{1}{2}\Big)\hb$ lies in $(0,\frac{\pi}{2}\al_1^2)$. 
Then there exists a value $\widetilde{\mu}=\widetilde{\mu}(n,\hb)$ such 
that 
\be\nn
\fisf(\widetilde{\mu})=
\pi\Big(n+\tfrac{1}{2}\Big)\hb+
\asympt(\hb^{\frac{5}{3}})
\quad
\text{as}
\quad
\hb\downarrow0
\ee
and
\be\nn
Y_+(\z,\al(\widetilde{\mu}),\hb)=
\sigma_{11}(\al(\widetilde{\mu}),\hb)
Y_-(\z,\al(\widetilde{\mu}),\hb)
\ee
where
\be\nn
\sigma_{11}(\al(\widetilde{\mu}),\hb)=(-1)^n+\asympt(\hb^\frac{2}{3})
\quad
\text{as}
\quad
\hb\downarrow0.
\ee
\end{theorem}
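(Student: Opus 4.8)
The plan is \emph{not} to take $\widetilde{\mu}$ to be the exact root of $\fisf(\mu)=\pi(n+\tfrac12)\hb$ (at such a point $\sigma_{12}$ would only be $\asympt(\hb^{2/3})$, not $0$), but rather a nearby value at which the off-diagonal connection coefficient $\sigma_{12}$ vanishes \emph{identically}. Once this is arranged, the asserted identity $Y_+=\sigma_{11}Y_-$ is simply (\ref{sigma1}) with its last term deleted, and the two asymptotic statements fall out of the uniform formulae (\ref{barrier-connection}). So the whole proof reduces to an intermediate value argument producing an honest zero of $\mu\mapsto\sigma_{12}(\alpha(\mu),\hb)$ in the right place.

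First I would record the monotonicity that makes such an argument possible. By (\ref{bold-phi}), $\fisf(\mu)=\tfrac{\pi}{2}\alpha^{2}(\mu)$, and the remarks following (\ref{alpha}) say that $\alpha$, hence $\Lambda(\mu):=\hb^{-1}\alpha^{2}(\mu)=\tfrac{2}{\pi}\hb^{-1}\fisf(\mu)$, is continuous and strictly decreasing on $[\mu_{1},A_{max}]$, running from $\Lambda(\mu_{1})=\hb^{-1}\alpha_{1}^{2}$ down to $\Lambda(A_{max})=0$. The hypothesis $\pi(n+\tfrac12)\hb\in(0,\tfrac{\pi}{2}\alpha_{1}^{2})$ is exactly $2n+1\in(0,\hb^{-1}\alpha_{1}^{2})$, and for fixed $n$ and $\hb$ small enough one also has $2n+2\le\hb^{-1}\alpha_{1}^{2}$; hence $\Lambda$ attains every value of $[2n,2n+2]$, and I would pick the unique $\mu_{a}>\mu_{b}$ in $[\mu_{1},A_{max}]$ with $\Lambda(\mu_{a})=2n$ and $\Lambda(\mu_{b})=2n+2$.

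Then I would run the intermediate value theorem on $h(\mu):=\sigma_{12}(\alpha(\mu),\hb)$, which is continuous in $\mu$ because $\sigma_{12}(\cdot,\hb)$ is a ratio of Wronskians (evaluated at $\z=0$) of the continuous solutions $Y_{\pm},Z_{\pm}$ of Theorem \ref{main-thm-barrier} and $\alpha(\cdot)$ is continuous. By (\ref{barrier-connection}), $h(\mu)=\cos\!\big(\tfrac{\pi}{2}\Lambda(\mu)\big)+\asympt(\hb^{2/3})$ uniformly in $\mu$, so $h(\mu_{a})=(-1)^{n}+\asympt(\hb^{2/3})$ and $h(\mu_{b})=(-1)^{n+1}+\asympt(\hb^{2/3})$ have opposite signs once $\hb$ is small; thus there is $\widetilde{\mu}\in(\mu_{b},\mu_{a})$ with $\sigma_{12}(\alpha(\widetilde{\mu}),\hb)=0$, and, $\Lambda$ being strictly decreasing, $\Lambda(\widetilde{\mu})\in(2n,2n+2)$. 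From $0=\cos\!\big(\tfrac{\pi}{2}\Lambda(\widetilde{\mu})\big)+\asympt(\hb^{2/3})$ and the fact that the only odd integer in $(2n,2n+2)$ is $2n+1$, writing $\Lambda(\widetilde{\mu})=2n+1+s$ with $|s|<1$ and using $|\cos(\tfrac{\pi}{2}(2n+1+s))|=|\sin(\tfrac{\pi}{2}s)|\ge|s|$, I get $\Lambda(\widetilde{\mu})=2n+1+\asympt(\hb^{2/3})$. Hence $\alpha^{2}(\widetilde{\mu})=\hb\,\Lambda(\widetilde{\mu})=(2n+1)\hb+\asympt(\hb^{5/3})$, so by (\ref{bold-phi}) $\fisf(\widetilde{\mu})=\pi(n+\tfrac12)\hb+\asympt(\hb^{5/3})$; feeding $\Lambda(\widetilde{\mu})=2n+1+\asympt(\hb^{2/3})$ into the $\sigma_{11}$-line of (\ref{barrier-connection}) gives $\sigma_{11}(\alpha(\widetilde{\mu}),\hb)=\sin(\tfrac{\pi}{2}\Lambda(\widetilde{\mu}))+\asympt(\hb^{2/3})=(-1)^{n}+\asympt(\hb^{2/3})$; and since $\sigma_{12}(\alpha(\widetilde{\mu}),\hb)=0$, relation (\ref{sigma1}) collapses to $Y_{+}(\z,\alpha(\widetilde{\mu}),\hb)=\sigma_{11}(\alpha(\widetilde{\mu}),\hb)\,Y_{-}(\z,\alpha(\widetilde{\mu}),\hb)$.

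The step I expect to be the genuine obstacle is the correct choice of bracketing interval for the intermediate value theorem: one must arrange that the \emph{unique} odd integer trapped strictly between $\Lambda(\mu_{a})$ and $\Lambda(\mu_{b})$ is $2n+1$ and not $2n-1$, since otherwise the extracted action would be off by $\pi\hb$; and one must keep $\Lambda(\mu_{b})\le\hb^{-1}\alpha_{1}^{2}$, i.e. $\pi(n+\tfrac12)\hb$ a shade below $\tfrac{\pi}{2}\alpha_{1}^{2}$, which is automatic for fixed $n$ as $\hb\downarrow0$. Everything else is the routine bookkeeping with the auxiliary parabolic cylinder functions already carried out in \S\ref{connection-formulas-barrier}, together with the standard continuity and monotonicity of $\alpha(\cdot)$ used throughout \S\ref{passage-barrier}.
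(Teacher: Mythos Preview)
Your proposal is correct and follows essentially the same strategy as the paper: both arguments show that $\sigma_{12}(\alpha(\cdot),\hb)$, being a uniform $\asympt(\hb^{2/3})$-perturbation of $\cos(\hb^{-1}\fisf(\cdot))$, must have an honest zero $\widetilde\mu$ within $\asympt(\hb^{5/3})$ of the exact Bohr--Sommerfeld point, after which (\ref{sigma1}) collapses and (\ref{barrier-connection}) gives the $\sigma_{11}$ asymptotic. The only cosmetic difference is that the paper first changes variables via $\xsf=\fisf(\mu)$ and $s=\hb^{-1}\xsf$ so that the equation reads $\cos s+\chi(\hb s,\hb)=0$ with $\chi=\asympt(\hb^{2/3})$, and then invokes the obvious perturbation fact, whereas you carry out the intermediate value argument explicitly with bracketing points $\Lambda=2n$ and $\Lambda=2n+2$; your version is a bit more transparent about where the IVT is applied, at the small price of needing the extra room $2n+2\le\hb^{-1}\alpha_1^2$ that you correctly flag.
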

\begin{proof}
Recall the connection coefficients $\sigma_{ij}$, $i,j=1,2$ from 
\S\ref{connection-formulas-barrier} and define the function
\be\label{definition-sigma}
\sigma(\mu,\hb)=\sigma_{12}(\al(\mu),\hb).
\ee
From (\ref{sigma1}), it is enough to show that $\sigma$ vanishes for 
some $\widetilde{\mu}=\widetilde{\mu}(n,\hb)$ satisfying
\be\nn
\Big|
\fisf(\widetilde{\mu})
-\pi\Big(n+\tfrac{1}{2}\Big)\hbar\Big|\leq C\hbar^{\frac{5}{3}}
\ee
where $C$ does not depend neither on $n$ nor on $\hb$. Then, the 
rest follow from the first asymptotic relation in 
(\ref{barrier-connection}).

From (\ref{derivative-bold-phi}) we know that $\fisf$ maps a 
neighborhood of $\widetilde{\mu}$ in a one-to-one 
way onto a neighborhood of $\fisf(\widetilde{\mu})$. 
Let $\xsf=\fisf(\mu)$ and set
\begin{equation*}
\chi(\xsf,\hbar)=
\sigma\big(\fisf^{-1}(\xsf),\hb\big)-\cos(\hb^{-1}\xsf).
\end{equation*} 
By definition (\ref{definition-sigma}) of $\sigma$ and 
the second relation in (\ref{barrier-connection}) we have 
\be\nn
|\chi(\xsf,\hb)|\leq C\hb^{\frac{2}{3}}
\ee 
for a constant $C$ independent of $\hbar$ and $\xsf$. 
With the above definitions, our equation now reads
\begin{align*}
0 & = \sigma(\mu,\hb)\\
  & = \chi(\xsf,\hb)+\cos(\hb^{-1}\xsf).
\end{align*}
So this equation has to have a solution 
$\widetilde{\xsf}=\widetilde{\xsf}(n,\hb)=
\fisf(\widetilde{\mu}(n,\hb))$ 
satisfying the estimate
\begin{equation*}
\Big|\widetilde{\xsf}-\pi\Big(n+\tfrac{1}{2}\Big)\hbar\Big|\leq 
C\hbar^{\frac{5}{3}}.
\end{equation*} 

A change of variables $s=\hbar^{-1}\xsf$ transforms our 
problem to the equivalent assertion that equation
\begin{equation}\label{s-eq}
\chi(\hbar s,\hbar)+\cos s=0
\end{equation}
has to have a solution with respect to $s$, namely 
$\widetilde{s}=\widetilde{s}(n,\hb)=\hb^{-1}\widetilde{\xsf}$,
such that
\begin{equation}\label{s-estim}
\Big|\widetilde{s}-\pi\Big(n+\tfrac{1}{2}\Big)\Big|\leq 
C\hbar^{\frac{2}{3}}.
\end{equation}
But this is true because 
\begin{equation*}
\chi(\hbar s,\hbar)=
\mathcal{O}(\hbar^{\frac{2}{3}})
\quad\text{as}\quad\hbar\downarrow0.
\end{equation*}
\end{proof}

\section{The Case of One Potential Well}
\label{passage-well}

In this section we are interested in the solutions of equation
\be
\label{schrodi-well}
\frac{d^2y}{dx^2}=
\bigg\{
\hbar^{-2}[\mu^2-A^2(x)]+
\frac{3}{4}\bigg[\frac{A'(x)}{A(x)+\mu}\bigg]^2-
\frac{1}{2}\frac{A''(x)}{A(x)+\mu}
\bigg\}
y
\ee
where $\mu>0$ and the potential function 
$A:\mathbb{R}\rightarrow\mathbb{R}_+$ behaves as a finite well 
(a well of finite width) in some bounded interval in $\mathbb{R}$. We assume the 
following (see Figure \ref{pic-well}).  

\begin{assumption}
\label{well-assumption}
The function $A$ is positive on some bounded interval $[x_1,x_2]$ in 
$\mathbb{R}$ and has a unique extremum in $(x_1,x_2)$; particularly, a 
minimum $A_{min}>0$ at $w_0\in(x_1,x_2)$. Also $A$ is assumed to be $C^4$ 
in $(x_1,x_2)$ and of class $C^5$ in a neighborhood of $w_0$. 
Additionally, $A'(x)<0$ for $x\in(x_1,w_0)$ and $A'(x)>0$ for 
$x\in(w_0,x_2)$. At $w_0$, we have $A'(w_0)=0$ and $A''(w_0)>0$. 
Furthermore, if we let $A_{**}=\min\{A(x_1),A(x_2)\}$ and take
$\mu\in(A_{min},A_{**})\subset\mathbb{R}_+$, the equation $A(x)=\mu$ 
has two solutions $w_-(\mu)$, $w_+(\mu)$ in $(x_1,x_2)$.
These satisfy $w_-<w_+$, $A(x)<\mu$ for $x\in(w_-,w_+)$ and $A(x)>\mu$ 
for $x\in(x_1,w_-)\cup(w_+,x_2)$ (the above imply $\pm A'(w_\pm)>0$). 
Finally, when $\mu=A_{min}$ the two points $w_-$, $w_+$ coalesce into
one double root at $w_0$.
\end{assumption}

\begin{figure}[H]
\centering
\begin{tikzpicture}

\begin{axis}[
    legend pos = north west,
    axis lines = none,
    xlabel = {},
    ylabel = {},
    axis line style={draw=none},
    tick style={draw=none}
]

\addplot [
    domain=0:3.2, 
    samples=100, 
    color=red,
]
{x^3-3*x^2};
\addlegendentry{$A(x)$}

    

\draw[scale=0.5,domain=-4.8:2,dashed,variable=\y,blue]  
plot ({1.31},{\y});

\draw[scale=0.5,domain=-4.8:-4,dashed,variable=\y,blue]  
plot ({4},{\y});

\draw[scale=0.5,domain=-4.8:2,dashed,variable=\y,blue]  
plot ({5.76},{\y});

\draw[scale=0.5,domain=-4.8:4,dashed,variable=\y,blue]  
plot ({0},{\y});

\draw[scale=0.5,domain=-4.8:8.15,dashed,variable=\y,blue]  
plot ({6.4},{\y});

\node[circle,inner sep=1.5pt,fill=black] at (0.65,-1) {};

\node[circle,inner sep=1.5pt,fill=black] at (2.87,-1) {};

\node[circle,inner sep=1.5pt,fill=black] at (2,-4) {};

\draw[dashed] (-0.2,-1) -- (3.4,-1);

\draw (-2,-4.4) -- (4,-4.4);

\draw[dashed] (-0.2,-4) -- (3.4,-4);


\end{axis}

\draw (0.5,-0.2) node {$x_1$};

\draw (6.4,-0.2) node {$x_2$};

\draw (1.8,-0.2) node {$w_-$};

\draw (5.8,-0.2) node {$w_+$};

\draw (4.15,-0.2) node {$w_0$};

\draw (7,0.5) node {$A_{min}$};

\draw (6.8,2.8) node {$\mu$};

\end{tikzpicture}
\caption{An example of a potential well between $w_-$ and $w_+$.}
\label{pic-well}
\end{figure}
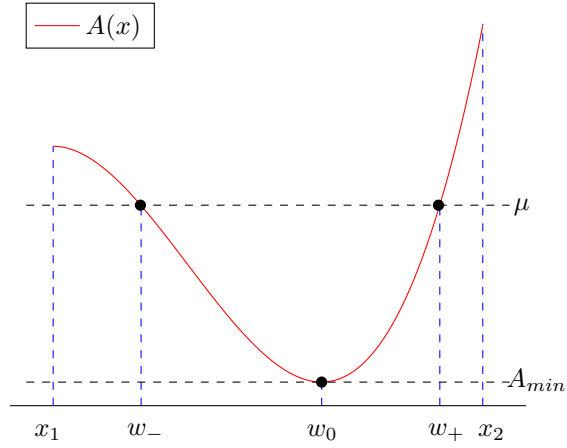

\subsection{The Liouville transform for the case of a well}
\label{liouville-transform-well}

Let us first fix some notation. We set
\be\nn
w_-\equiv w
\ee
\be\nn
J^-=(x_1,w_0),\quad J^+=(w_0,x_2)
\ee
and define
\be
\label{gi-plus-minus}
G^\pm=\Big(\restr{A}{\mathsf{clos}(J^\pm)}\Big)^{-1}.
\ee
We take an arbitrary $\mu_1\in(A_{min},A_{**})$ and consider the 
$w_{-1}\in(G^-(A_{**}),w_0)$ such that $A(w_{-1})=\mu_1$; then 
$\mu\in[A_{min},\mu_1]$ implies $w\in[w_{-1},w_0]$. For every 
$\hb>0$ our equation (\ref{schrodi-well}) reads
\be
\label{final-schrodi-well}
\frac{d^2y}{dx^2}=[\hbar^{-2}f(x,w)+g(x,w)]y,
\quad (x,w)\in(x_1,x_2)\times[w_{-1},w_0]
\ee
in which the functions $f$ and $g$ satisfy
\be
\label{f-schrodi-well}
f(x,w)=A^2(w)-A^2(x)
\ee
and
\be
\label{g-schrodi-well}
g(x,w)=
\frac{3}{4}\bigg[\frac{A'(x)}{A(x)+A(w)}\bigg]^2-
\frac{1}{2}\frac{A''(x)}{A(x)+A(w)}.
\ee
Observe that our equation possesses two simple turning points at 
$x=w_\pm$ when $w\in[w_{-1},w_0)$ which combine into one double 
at $x=w_0$ when $w$ equals $w_0$.

We introduce new variables $X$ and $\z$ according 
to the Liouville transform
\be\nn
X=\dot{x}^{-\frac{1}{2}}y
\ee
where the dot denotes differentiation with respect to $\z$. 
Equation (\ref{final-schrodi-well}) becomes
\be
\label{schrodi-well-liouville-initial-form}
\frac{d^2X}{d\z^2}=
\Big[\hbar^{-2}\dot{x}^2f(x,w)+\dot{x}^2g(x,w)+
\dot{x}^{\frac{1}{2}}\frac{d^2}{d\z^2}(\dot{x}^{-\frac{1}{2}})\Big]X.
\ee
We begin with the noncritical case $\mu\in(A_{min},\mu_1]$ with 
two turning points $w_\pm$. In this case $f(\cdot,w)$ is positive in 
$(w_-,w_+)$ and negative in $(x_1,w_-)\cup(w_+,x_2)$. Hence we 
prescribe
\be\label{well-case}
\dot{x}^2f(x,w)=\beta^2-\z^2
\ee
where $\beta>0$ is chosen in such a way that $x=w_-$ corresponds 
to $\z=-\beta$ and $x=w_+$ to $\z=\beta$ accordingly. 

The integration of (\ref{well-case}) yields
\be
\label{well-integral-form}
\int_{w}^{x}f(t,w)^{\frac{1}{2}}dt=
\int_{-\beta}^{\z}(\beta^2-\tau^2)^{\frac{1}{2}}d\tau
\ee
provided that $w_-\leq x\leq w_+$ (notice that by taking these 
integration limits, $w_-$ corresponds to $-\beta$). For the 
remaining correspondence we require
\be\nn
\int_{w_-}^{w_+}f(t,w)^{\frac{1}{2}}dt=
\int_{-\beta}^{\beta}(\beta^2-\tau^2)^{\frac{1}{2}}d\tau
\ee
yielding
\be\label{beta}
\beta^2(\mu)=
\frac{2}{\pi}\int_{w_-(\mu)}^{w_+(\mu)}
\sqrt{\mu^2-A^2(t)}dt.
\ee

For every fixed value of $\hb$, relation (\ref{beta}) defines 
$\beta$ as a continuous and increasing function of $\mu$ which 
vanishes as $w\downarrow A_{min}$. Set 
\be
\label{beta-1}
\beta_1=\beta(w_{-1})>0. 
\ee
Then $\mu\in(A_{min},\mu_1]$ implies $\beta\in(0,\beta_1]$.

Next, from (\ref{well-integral-form}) we find
\be
\label{zeta-well-center}
\int_{w_-}^{x}f(t,w)^{\frac{1}{2}}dt=
\frac{1}{2}\alpha^2\arccos\Big(-\frac{\z}{\alpha}\Big)+
\frac{1}{2}\z\big(\alpha^2-\z^2\big)^{\frac{1}{2}}
\quad\text{for}\quad
w_-\leq x\leq w_+
\ee
with  the principal value choice for the inverse cosine taking 
values in $[0,\pi]$. For the remaining $x$-intervals, we 
integrate (\ref{well-case}) to obtain
\be
\label{zeta-well-left}
\int_{x}^{w_-}[-f(t,w)]^{\frac{1}{2}}dt=-
\frac{1}{2}\alpha^2\arcosh\Big(-\frac{\z}{\alpha}\Big)-
\frac{1}{2}\z\big(\z^2-\alpha^2\big)^{\frac{1}{2}}
\quad\text{for}\quad
x_1<x\leq w_-
\ee
and
\be
\label{zeta-well-right}
\int_{w_+}^{x}[-f(t,w)]^{\frac{1}{2}}dt=-
\frac{1}{2}\alpha^2\arcosh\Big(\frac{\z}{\alpha}\Big)+
\frac{1}{2}\z\big(\z^2-\alpha^2\big)^{\frac{1}{2}}
\quad\text{for}\quad
w_+\leq x<x_2
\ee
with $\arcosh(x)=\ln\big(x+\sqrt{x^2-1}\big)$ for $x\geq1$.

Equations (\ref{zeta-well-center}), (\ref{zeta-well-left}) 
and (\ref{zeta-well-right}) show that $\z$ is a continuous and 
increasing function of $x$ which shows that there is a one-to-one 
correspondence between these two variables. Thus, if we set 
\be
\label{zj-well}
\z_j=\lim_{x\to x_j}\z(x)\quad\text{for}\hspace{4pt}j=1,2
\ee
then $(x_1,x_2)$ is mapped by $\z$ to $(\z_1,\z_2)$. 
Notice that $-\infty<\z_1<0<\z_2<+\infty$.

\begin{remark}
\label{critical-well-liouville}
In the critical case in which the two (simple) turning points 
coalesce into one (double) point, we get a limit of 
the above transformation with $w=w_0$. In this case, the relevant 
relations to (\ref{zeta-well-center}), (\ref{zeta-well-left}), 
(\ref{zeta-well-right}) are
\be
\label{zeta-well-0-left}
\int_{x}^{w_0}[-f(t,w_0)]^{\frac{1}{2}}dt=
\frac{1}{2}\z^2
\quad\text{for}\quad
x_1<x\leq w_0
\ee
\be
\label{zeta-well-0-right}
\int_{w_0}^{x}[-f(t,w_0)]^{\frac{1}{2}}dt=
\frac{1}{2}\z^2
\quad\text{for}\quad
w_0\leq x<x_2
\ee
and $\beta=0$.
\end{remark}

Consequently, noticing Remark \ref{critical-barrier-liouville}, 
we substitute (\ref{well-case}) in 
(\ref{schrodi-well-liouville-initial-form}) and obtain the 
following proposition.
\begin{proposition}
For every $\hb>0$ equation
\be\nn
\frac{d^2y}{dx^2}=[\hbar^{-2}f(x,w)+g(x,w)]y,
\quad (x,w)\in(x_1,x_2)\times[w_{-1},w_0]
\ee
where $f$, $g$ as in (\ref{f-schrodi-well}), 
(\ref{g-schrodi-well}) respectively, is transformed to the 
equation
\be
\label{schrodi-well-liouville-final-form}
\frac{d^2X}{d\z^2}=
\big[\hb^{-2}(\beta^2-\z^2)+\overline{\ps}(\z,\beta)\big]X,
\quad 
(\z,\beta)\in(\z_1,\z_2)\times[0,\beta_1]
\ee
in which $\z$ is given by the Liouville transform 
(\ref{well-case}), $\beta$ is given by (\ref{beta}), 
$\z_j$, $j=1,2$ are given by 
(\ref{zj-well}), $\beta_1$ as in (\ref{beta-1}) and 
the function $\overline{\ps}(\z,\beta)$ is given by the 
formula
\be
\label{psi-well}
\overline{\ps}(\z,\beta)=
\dot{x}^2g(x,w)+
\dot{x}^{\frac{1}{2}}\frac{d^2}{d\z^2}(\dot{x}^{-\frac{1}{2}}).
\ee
\end{proposition}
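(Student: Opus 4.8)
The plan is to mimic the proof of Proposition~\ref{propo-schrodi-to-liouvi} almost verbatim, changing only what the sign reversal of $f$ inside a well forces. First I would record that the Liouville substitution $X=\dot{x}^{-1/2}y$ turns (\ref{final-schrodi-well}) into (\ref{schrodi-well-liouville-initial-form}) by the standard Liouville--Green computation; this step is purely formal and insensitive to the sign of $f$, so it can be quoted from \S\ref{liouville-transform-barrier}.

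Next I would justify that the prescription (\ref{well-case}), $\dot{x}^2 f(x,w)=\beta^2-\z^2$, actually defines a change of variable. The difference from the barrier is only that now $f(\cdot,w)>0$ on $(w_-,w_+)$ and $f(\cdot,w)<0$ on $(x_1,w_-)\cup(w_+,x_2)$, so the trigonometric branch is used on the central interval and the hyperbolic branch on the two outer ones: integrating (\ref{well-case}) gives (\ref{well-integral-form}) and then the explicit formulas (\ref{zeta-well-center})--(\ref{zeta-well-right}). These already show that $\z$ is continuous and strictly increasing in $x$, mapping $(x_1,x_2)$ onto $(\z_1,\z_2)$ with $-\infty<\z_1<0<\z_2<+\infty$. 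What remains --- and this is the only point requiring care --- is that $\dot{x}$ is $C^2$ and strictly positive across the turning points $x=w_\pm$ (and across the double point $x=w_0$ in the limiting case $\beta=0$). Here I would introduce the auxiliary function $p$ by $f(x,w)=(x-w_-)(x-w)p(x,w)$, exactly as in (\ref{p-barrier}), check using Assumption~\ref{well-assumption} ($A\in C^4$ on $(x_1,x_2)$, $C^5$ near $w_0$, $A'(w_0)=0$, $A''(w_0)>0$) that $p$ is positive and sufficiently regular, and deduce that $\dot{x}$, being obtained from $\int f$ through a smooth map, inherits these properties. This is precisely the $p$-function technique underlying Lemma~I of \cite{olver1975}, already invoked in Lemma~\ref{lemma-error-cont-barrier}.

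Once the change of variable is in place, the conclusion is immediate: substituting (\ref{well-case}) into the bracket of (\ref{schrodi-well-liouville-initial-form}) replaces $\hb^{-2}\dot{x}^2 f(x,w)$ by $\hb^{-2}(\beta^2-\z^2)$ and leaves $\dot{x}^2 g(x,w)+\dot{x}^{1/2}\tfrac{d^2}{d\z^2}(\dot{x}^{-1/2})$, which by definition is $\overline{\ps}(\z,\beta)$ as in (\ref{psi-well}); the regularity just established is what makes $\overline{\ps}$ a bona fide function of $(\z,\beta)$ rather than merely of $x$. For the critical case $w=w_0$, $\beta=0$ I would argue as in Remark~\ref{critical-well-liouville}: formulas (\ref{zeta-well-0-left})--(\ref{zeta-well-0-right}) replace the noncritical ones, $p(w_0,w_0)>0$ keeps the map a diffeomorphism near the coalesced point, and the same substitution yields (\ref{schrodi-well-liouville-final-form}) with $\beta=0$. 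The main obstacle throughout is exactly the one already met in \S\ref{liouville-transform-barrier} and \S\ref{error-cont-barrier}, namely the smoothness of $\dot{x}$ (hence of $\overline{\ps}$) at coalescing turning points; everything else is routine bookkeeping and sign-tracking.
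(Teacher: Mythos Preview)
Your approach is essentially the paper's: the proposition there is obtained simply by substituting the prescription (\ref{well-case}) into (\ref{schrodi-well-liouville-initial-form}), exactly as you propose, with the regularity of $\dot{x}$ and $\overline{\ps}$ deferred to the subsequent continuity lemma in \S\ref{error-cont-well} rather than folded into the proposition itself. One small slip: in the well case the convention is $w\equiv w_-$, so your factorization $f(x,w)=(x-w_-)(x-w)p(x,w)$ collapses to $(x-w_-)^2 p$ and misses the second turning point; the paper instead writes $f(x,w)=(w-x)(w_+-x)q(x,w)$ (see (\ref{p-well})), and you should do likewise.
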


In the following paragraphs we shall be interested in 
approximate solutions of equation 
(\ref{schrodi-well-liouville-final-form}), so we introduce the 
following terminology.
\begin{definition}
The function $\overline{\psi}$ found in the differential equation 
(\ref{schrodi-well-liouville-final-form}) shall be called the 
\textbf{error term} of this equation.
\end{definition}
For the error term we have the following proposition.
\begin{proposition}
The error term $\overline{\psi}$ can be written equivalently as
\begin{multline}
\label{psi-well-equivalent}
\overline{\ps}(\z,\beta)=
\frac{1}{4}\frac{3\z^2 +2\beta^2}{(\beta^2-\z^2)^2}
+\frac{1}{16}\frac{\beta^2-\z^2}{f^3(x,w)}
\Big\{4f(x,w)f''(x,w)-5[f'(x,w)]^2\Big\}\\
+(\beta^2-\z^2)\frac{g(x,w)}{f(x,w)}
\end{multline}
where prime denotes differentiation with respect to $x$. The 
same formula can be used in the critical case of one double 
turning point simply by setting $w=w_0$ and $\beta=0$.
\end{proposition}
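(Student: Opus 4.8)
The plan is to carry out exactly the algebraic reduction underlying the barrier proposition, now with the model coefficient $\z^2-\al^2$ replaced by $\beta^2-\z^2$. Starting from the definition (\ref{psi-well}), the only non-elementary piece is the Schwarzian-type term $\dot x^{\frac12}\frac{d^2}{d\z^2}(\dot x^{-\frac12})$. Writing $v=\dot x^{-\frac12}$ and expanding $v^{-1}\ddot v$ in terms of the first three $\z$-derivatives of $x$, one gets $\dot x^{\frac12}\frac{d^2}{d\z^2}(\dot x^{-\frac12})=\frac34(\ddot x/\dot x)^2-\frac12(\dddot x/\dot x)=-\frac12\{x;\z\}$, the Schwarzian derivative of $x$ as a function of $\z$; and directly from (\ref{well-case}) one has $\dot x^2 g(x,w)=\frac{\beta^2-\z^2}{f(x,w)}\,g(x,w)$, which is already the last term of (\ref{psi-well-equivalent}). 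So it remains to compute $\{x;\z\}$ in closed form.

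The key step is to read the Liouville relation (\ref{well-case}), $\dot x^2 f(x,w)=\beta^2-\z^2$, as $\sqrt{f}\,dx=\sqrt{\beta^2-\z^2}\,d\z$ on $[w_-,w_+]$ (with the analogous branches of $\sqrt{-f}$ and $\sqrt{\z^2-\beta^2}$ outside, consistent with (\ref{zeta-well-center})--(\ref{zeta-well-right})). Thus $x=\phi^{-1}\circ\Phi$, where $\phi(x)=\int\sqrt{f}\,dx$ and $\Phi(\z)=\int\sqrt{\beta^2-\z^2}\,d\z$ are the two ``phase'' antiderivatives. Applying the composition and inversion laws for the Schwarzian, together with $\{\phi;x\}=\frac{4ff''-5(f')^2}{8f^2}$ and the analogous identity for $\Phi$, gives
\[
\{x;\z\}=-(\beta^2-\z^2)\,\frac{4ff''-5(f')^2}{8f^3}+\frac{4\hat f\hat f''-5(\hat f')^2}{8\hat f^{2}},\qquad \hat f(\z):=\beta^2-\z^2 .
\]
Substituting $\hat f'=-2\z$, $\hat f''=-2$ into the last fraction and simplifying turns $-\frac12$ of it into $\frac14(3\z^2+2\beta^2)/(\beta^2-\z^2)^2$; assembling $-\frac12\{x;\z\}$ with $\dot x^2 g$ then reproduces precisely (\ref{psi-well-equivalent}). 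For the critical case one lets $\beta\downarrow0$, $w\to w_0$: by (\ref{zeta-well-0-left})--(\ref{zeta-well-0-right}) the apparent singularity at the double turning point is absorbed into the variable $\z$, so every term on the right of (\ref{psi-well-equivalent}) stays finite and the formula persists with $\beta=0$, $w=w_0$.

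The only thing requiring care --- the ``hard part'', such as it is --- is the sign bookkeeping: $f(\cdot,w)$ is positive on $(w_-,w_+)$ but negative outside, and here the model $\beta^2-\z^2$ has the sign pattern opposite to the barrier's $\z^2-\al^2$, so one must keep the branches of $\sqrt{f}$, $\sqrt{-f}$, $\arccos$ and $\arcosh$ mutually consistent across the three $x$-subintervals. This is painless once one notes that the right-hand side of (\ref{psi-well-equivalent}) is manifestly a meromorphic function of $\z$ (through $x(\z)$), so verifying the identity on one subinterval propagates it to all of $(\z_1,\z_2)$. In fact the most economical route is simply to observe that the whole computation is obtained from the barrier case by the formal substitutions $\al\mapsto\beta$, $b\mapsto w$, $\z^2-\al^2\mapsto\beta^2-\z^2$, under which the structure of (\ref{psi-barrier-equivalent}) is preserved and (\ref{psi-well-equivalent}) drops out.
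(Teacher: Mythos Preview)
Your proof is correct and, at bottom, follows the same approach as the paper: compute directly from the definition (\ref{psi-well}) using the Liouville relation (\ref{well-case}) and the formula (\ref{g-schrodi-well}) for $g$. The paper's own proof is a single sentence (``simple algebraic manipulations''), so your version --- organizing the computation via the Schwarzian derivative $\{x;\z\}$ and its composition/inversion laws --- supplies exactly the details the paper leaves to the reader, and does so in a clean, conceptually transparent way.
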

\begin{proof}
Using (\ref{psi-well}), (\ref{g-schrodi-well}) and 
(\ref{well-case}), simple algebraic manipulations shown that 
$\overline{\psi}$ takes the desired form. 
\end{proof}

\subsection{Continuity of the error term in the case of a well}
\label{error-cont-well}

In this subsection we prove that the function 
$\overline{\ps}(\z,\beta)$ resulting from the Liouville 
transformation defined above, is continuous in $\z$ and $\beta$. 
This will be used subsequently to prove the existence of 
approximate solutions of equation 
(\ref{schrodi-well-liouville-final-form}). We have the following.

\begin{lemma}
The function $\overline{\ps}(\z,\beta)$ defined in 
(\ref{psi-well}), is continuous in $\z$ and $\beta$ in the region
$(\z_1,\z_2)\times[0,\beta_1]$ of the $(\z,\beta)$-plane.
\end{lemma}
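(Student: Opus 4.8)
The plan is to imitate the proof of Lemma~\ref{lemma-error-cont-barrier} essentially verbatim, the one substantive difference being that here $f$ is \emph{positive} on the interval between the two turning points rather than negative. Thus, for $x\in(x_1,x_2)$, $\mu\in[A_{min},\mu_1]$ and $w\in[w_{-1},w_0]$, I would introduce an auxiliary function $p$ through
\be\nn
-f(x,w)=(x-w_-)(x-w_+)\,p(x,w),
\ee
the analogue of (\ref{p-barrier}) with the overall sign chosen so as to keep $p$ positive. Using $A(w_-)=A(w_+)=\mu$ one computes, for $\mu\in(A_{min},A_{**})$,
\be\nn
p(w_\pm,w)=\pm\frac{2\mu}{w_+-w_-}\,A'(w_\pm)>0
\ee
(the positivity being a consequence of $A'(w_-)<0$ and $A'(w_+)>0$ from Assumption~\ref{well-assumption}), while in the confluent case $\mu=A_{min}$,
\be\nn
p(w_0,w_0)=A_{min}A''(w_0)>0
\ee
(using $A''(w_0)>0$). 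Hence $p$ extends continuously across the turning points and stays bounded away from zero.

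With $p$ in hand, I would verify — exactly as in \S\ref{error-cont-barrier} — the four structural properties that feed Olver's continuity criterion: (i) $p$, $\partial p/\partial x$, $\partial^2 p/\partial x^2$ and $g$ are jointly continuous on $(x_1,x_2)\times[w_{-1},w_0]$; (ii) $p$ is positive there; (iii) $|\partial^3 p/\partial x^3|$ is bounded in a neighbourhood of $(x,w)=(w_0,w_0)$; and (iv) $f$ is non-increasing in $w\in[w_{-1},w_0]$ when $x$ ranges over $[w_-,w_+]$. Items (i) and (iii) follow from (\ref{f-schrodi-well}), (\ref{g-schrodi-well}) and the regularity imposed on $A$ (of class $C^4$ on $(x_1,x_2)$ and $C^5$ near $w_0$), after Taylor-expanding $A^2$ about $w_\pm$ (respectively about $w_0$ in the confluent case) and dividing out the factor $(x-w_-)(x-w_+)$; item (ii) is the computation of the previous paragraph together with the sign of $f$ recorded in Assumption~\ref{well-assumption}; and item (iv) holds because, since $w=w_-$ varies in $[w_{-1},w_0]\subset(x_1,w_0]$ where $A$ is strictly monotone, $\partial f/\partial w=2A(w)A'(w)<0$ uniformly in $x$.

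Granting (i)--(iv), I would quote Lemma~I of \cite{olver1975}, in the form adapted to the comparison equation (\ref{schrodi-well-liouville-final-form}) (whose solutions are modified parabolic cylinder functions): the apparent singularities of the representation (\ref{psi-well-equivalent}) at $\z=\pm\beta$ then cancel and $\overline{\ps}$ is continuous on $(\z_1,\z_2)\times[0,\beta_1]$, the confluent value $\beta=0$ being covered through Remark~\ref{critical-well-liouville}. I expect the only genuinely delicate point to be bookkeeping the reversed sign configuration relative to \S\ref{liouville-transform-barrier}: because $f$ is now positive between the turning points, one must make sure that Olver's hypotheses are invoked in the matching orientation — equivalently, applied to $-f$, whose auxiliary coefficient $p$ is positive — and that the monotonicity requirement (iv) is imposed on the correct $x$-subinterval $[w_-,w_+]$ (the analogue of $[b_-,b]$ there); once these conventions are pinned down, the argument is identical to that of Lemma~\ref{lemma-error-cont-barrier}.
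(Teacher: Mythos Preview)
Your proposal is correct and follows essentially the same approach as the paper. The only difference is cosmetic: where you factor $-f=(x-w_-)(x-w_+)\,p$ with $p>0$ and invoke Olver's Lemma~I for the sign-reversed data, the paper instead writes $f=(w-x)(w_+-x)\,q$ with $q<0$ (so $q=-p$) and appeals to the case~III variant of that lemma; the four structural conditions and the monotonicity check are otherwise identical.
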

\begin{proof}
For $x\in(x_1,x_2)$, $\mu\in[A_{min},\mu_1]$ and 
$w\in[w_{-1},w_0]$ we introduce an auxiliary function $q$ 
by setting
\be
\label{p-well}
f(x,w)=(w-x)(w_+-x)q(x,w).
\ee
Having in mind that $A(w)=A(w_+)=\mu$, we see that for 
$\mu\in(A_{min},\mu_1]$
\be\nn
q(w_\pm,w)=\pm\frac{2\mu}{w-w_+} A'(w_\pm)<0 
\ee
while for $\mu=A_{min}$
\be\nn
q(w_0,w_0)=-A_{min}A''(w_0)<0.
\ee 

Our functions $f$, $g$ and $q$ defined by (\ref{f-schrodi-well}), 
(\ref{g-schrodi-well}) and (\ref{p-well}) respectively satisfy the 
following properties
\begin{itemize}
\item[(i)]
$q$, $\frac{\partial q}{\partial x}$, $\frac{\partial^2 q}{\partial x^2}$ 
and $g$ are continuous functions of $x$ and $w$ in the region 
$(x_1,x_2)\times[w_{-1},w_0]$ 
\item[(ii)]
$q$ is negative throughout the same region
\item[(iii)]
$|\frac{\partial^3 q}{\partial x^3}|$ is bounded in a neighborhood 
of the point $(x,w)=(w_0,w_0)$ in the same region and
\item[(iv)]
$f$ is a non-increasing function of $w\in[w_{-1},w_0]$ when 
$x\in[w,w_+]$.
\end{itemize}
As in \S\ref{error-cont-barrier} these relations follow directly
from (\ref{f-schrodi-well}), (\ref{g-schrodi-well}), (\ref{p-well})
and Assumption \ref{well-assumption}. By referring again to Lemma I 
in \cite{olver1975} (actually a slight variant of it properly defined for 
case III treated in Olver's \cite{olver1975}), the function $\overline{\ps}$  
defined by (\ref{psi-well}) (or (\ref{psi-well-equivalent})) is continuous in 
the corresponding region of the $(\z,\beta)$-plane.
\end{proof}

\subsection{Approximate solutions in the case of a well}
\label{well-approximate-solutions}

Here we state a theorem concerning approximate solutions of 
equation (\ref{schrodi-well-liouville-final-form}).
First we define a balancing function $\Om$ as in the barrier 
case using (\ref{omega-barrier}). Now we define an 
\textit{error-control function} which will provide
us with a way to assess the error.  
\begin{definition}
As an \textbf{error-control function} $\overline{H}(\z,\beta,\hb)$ of 
equation (\ref{schrodi-well-liouville-final-form}) we consider 
any primitive of the function
\be\nn
\frac{\overline{\ps}(\z,\beta)}{\Om(\z\sqrt{2\hb^{-1}})}.
\ee
\end{definition}

As in \S\ref{barrier-approximate-solutions}, we define the 
\textit{variation} of $\overline{H}$ in an interval 
$(\gamma,\delta)\subseteq(\z_1,\z_2)\subset\R$ 
(cf. (\ref{zj-well})).
\begin{definition}
The variation $\var_{\gamma,\delta}\Big[\overline{H}\Big]$ 
in the interval $(\gamma,\delta)$ of the error-control function 
$\overline{H}$ of equation (\ref{schrodi-well-liouville-final-form}) 
is defined by
\be\nn
\var_{\gamma,\delta}\Big[\overline{H}\Big](\beta,\hb)=
\int_{\gamma}^{\delta}
\frac{|\overline{\ps}(t,\beta)|}{\Om(t\sqrt{2\hb^{-1}})}dt.
\ee
\end{definition}

Finally, for any $c\geq0$ set
\be\label{lambda-2-function}
l_2(c)=\sup_{x\in(0,+\infty)}
\Big\{\Om(x)\overline{\msf}(x,c)^2\Big\}
\ee
where $\overline{\msf}$ is a function defined in terms of modified
Parabolic Cylinder Functions in section 
\ref{modified-pcfs} of the appendix.
We note that the above supremum is finite for each value of $c$. 
This fact is a consequence of (\ref{omega-barrier}) and the 
first relation in (\ref{M,N-bar-asymptotics}). Furthermore, because 
the relations (\ref{M,N-bar-asymptotics}) hold uniformly in compact 
intervals of the parameter $c$, the function $l_2$ is continuous.

Now the existence of approximate solutions is quaranted by the following.
\btheo\label{main-thm-well}
For each value of $\hb>0$, equation 
\be\nn
\frac{d^2X}{d\z^2}=
\big[\hb^{-2}(\beta^2-\z^2)+\overline{\ps}(\z,\beta)\big]X
\ee
has in the region $[0,\z_2)\times[0,\beta_1]$ of the 
$(\z,\beta)$-plane, two solutions $Y_+$ and $Z_+$. They satisfy
\bea
\label{z1-approx}
Y_+(\z,\beta,\hb)=
k(\tfrac{1}{2}\hb^{-1}\beta^2)^{\frac{1}{2}}
W(-\z\sqrt{2\hb^{-1}},\tfrac{1}{2}\hb^{-1}\beta^2)+
\overline{\eps}_1 (\z,\beta,\hb)\\
\label{z2-approx}
Z_+(\z,\beta,\hb)=
k(\tfrac{1}{2}\hb^{-1}\beta^2)^{-\frac{1}{2}}
W(\z\sqrt{2\hb^{-1}},\tfrac{1}{2}\hb^{-1}\beta^2)+
\overline{\eps}_2 (\z,\beta,\hb)
\eea
where $k$, $W$ are functions found in appendix 
\ref{modified-pcfs} about modified PCFs. These $Y_+$, $Z_+$
are continuous and have continuous first and second partial 
$\z$-derivatives. The errors $\overline{\eps}_1$, $\overline{\eps}_2$ 
satisfy
\begin{multline}
\label{well-junk1}
\frac{|\overline{\eps}_1(\z,\beta,\hb)|}
{\overline{\msf}(\z\sqrt{2\hb^{-1}},\tfrac{1}{2}\hb^{-1}\beta^2)},
\frac{\Big|
\frac{\partial\overline{\eps}_1}{\partial\z}(\z,\beta,\hb)\Big|}
{\sqrt{2\hb^{-1}}\hspace{4pt}
\overline{\nsf}(\z\sqrt{2\hb^{-1}},\tfrac{1}{2}\hb^{-1}\beta^2)}\\
\leq
\overline{\esf}(\z\sqrt{2\hb^{-1}},\tfrac{1}{2}\hb^{-1}\beta^2)
\Big(\exp\bigg\{\frac{l_2(\tfrac{1}{2}\hb^{-1}\beta^2)}{\sqrt{2\hb^{-1}}}
\mathcal{V}_{0,\z}\Big[\overline{H}\Big](\beta,\hb)\bigg\}-1\Big)
\end{multline}
and
\begin{multline}
\label{well-junk2}
\frac{|\overline{\eps}_2(\z,\beta,\hb)|}
{\overline{\msf}(\z\sqrt{2\hb^{-1}},\tfrac{1}{2}\hb^{-1}\beta^2)},
\frac{\Big|
\frac{\partial\overline{\eps}_2}{\partial\z}(\z,\beta,\hb)\Big|}
{\sqrt{2\hb^{-1}}\hspace{4pt}
\overline{\nsf}(\z\sqrt{2\hb^{-1}},\tfrac{1}{2}\hb^{-1}\beta^2)}\\
\leq
\frac{1}{\overline{\esf}(\z\sqrt{2\hb^{-1}},\tfrac{1}{2}\hb^{-1}\beta^2)}
\Big(\exp\bigg\{\frac{l_2(\tfrac{1}{2}\hb^{-1}\beta^2)}{\sqrt{2\hb^{-1}}}
\mathcal{V}_{\z,\z_2}\Big[\overline{H}\Big](\beta,\hb)\bigg\}-1\Big).
\end{multline}
\etheo
\begin{proof}
The proof is similar to that of Theorem \ref{main-thm-barrier} in
\S\ref{barrier-approximate-solutions} and details need not be recorded.  
\end{proof}

\subsection{Asymptotics of the approximate solutions for the well}
\label{asympt-behave-well-sols}

As in the case of the function $l_1$ in 
\S\ref{asympt-behave-barrier-sols}, we find that $l_2$ is continuous 
in $[0,+\infty)$. Using (\ref{omega-barrier}) and an analysis similar 
to that mentioned in \S\ref{asympt-behave-barrier-sols} we find
\be\label{lambda2-asympt}
l_2(\tfrac{1}{2}\hb^{-1}\beta^2)=\asympt(1)
\quad\text{as}\quad
\hb\downarrow0.
\ee
Next, $\var_{0,\z_2}\Big[\overline{H}\Big](\beta,\hb)$ can be examined 
as in \S 8 of \cite{h+k}. We find that
\begin{align}\label{v-operator-asympt-well}
\var_{0,\z_2}\Big[\overline{H}\Big](\beta,\hb)=
\int_{0}^{\z_2}\frac{|\overline{\ps}(t,\beta)|}
{1+(t\sqrt{2\hb^{-1}})^{\frac{1}{3}}}dt=
\asympt(\hb^{1/6})
\quad\text{as}\quad
\hb\downarrow0.
\end{align}

The last two relations applied to (\ref{well-junk1}) and 
(\ref{well-junk2}) return as $\hb\downarrow0$
\begin{align}
\label{e1-asympt-well}
\overline{\eps}_1(\z,\beta,\hb) & =
\overline{\esf}(\z\sqrt{2\hb^{-1}},\tfrac{1}{2}\hb^{-1}\beta^2)
\overline{\msf}(\z\sqrt{2\hb^{-1}},\tfrac{1}{2}\hb^{-1}\beta^2)
\asympt(\hb^{\frac{2}{3}})
\\
\nn
\overline{\eps}_2(\z,\beta,\hb) & =
\frac{\overline{\msf}(\z\sqrt{2\hb^{-1}},\tfrac{1}{2}\hb^{-1}\beta^2)}
{\overline{\esf}(\z\sqrt{2\hb^{-1}},\tfrac{1}{2}\hb^{-1}\beta^2)}
\asympt(\hb^{\frac{2}{3}})
\\
\nn
\frac{\partial\overline{\eps}_1}{\partial\z}(\z,\beta,\hb) & =
\overline{\esf}(\z\sqrt{2\hb^{-1}},\tfrac{1}{2}\hb^{-1}\beta^2)
\overline{\nsf}(\z\sqrt{2\hb^{-1}},\tfrac{1}{2}\hb^{-1}\beta^2)
\asympt(\hb^{\frac{1}{6}})
\\
\nn
\frac{\partial\overline{\eps}_2}{\partial\z}(\z,\beta,\hb) & =
\frac{\overline{\nsf}(\z\sqrt{2\hb^{-1}},\tfrac{1}{2}\hb^{-1}\beta^2)}
{\overline{\esf}(\z\sqrt{2\hb^{-1}},\tfrac{1}{2}\hb^{-1}\beta^2)}
\asympt(\hb^{\frac{1}{6}})
\end{align}
uniformly for $\z\in[0,\z_2)$ and $\beta\in(0,\beta_1]$.

\begin{remark}
In the special case $\beta=0$ (i.e. when equation 
(\ref{schrodi-well-liouville-final-form}) has a double 
turning point at $\z=0$), $l_2(0)$ is independent of $\hb$. 
Using the definition (\ref{omega-barrier}) of $\Om$,
we see that we have a similar estimate to 
(\ref{v-operator-asympt-barrier}); namely
$\var_{0,\z_2}\Big[\overline{H}\Big](0,\hb)=\asympt(\hb^{\frac{1}{6}})$ 
as $\hb\downarrow0$. Hence the results about the errors above  
hold for the case $\beta=0$ too.
\end{remark}

\subsection{Connection formulae for a well}
\label{connection-formulas-well}

Here, we determine the 
asymptotic behavior of $Y_+$, $Z_+$ for small $\hb>0$ and $\z<0$ by 
establishing appropriate \textit{connection formulae}. We can 
replace $\z$ by $-\z$ in Theorem
\ref{main-thm-well} to ensure two more solutions $Y_-$, $Z_-$ of 
equation (\ref{schrodi-well-liouville-final-form}) satisfying
as $\hb\downarrow0$
\begin{align*}
Y_-(\z,\al,\hb) & =
k(\tfrac{1}{2}\hb^{-1}\beta^2)^{\frac{1}{2}}
W(\z\sqrt{2\hb^{-1}},\tfrac{1}{2}\hb^{-1}\beta^2)+\\
&\hspace{2cm}
\overline{\esf}(\z\sqrt{2\hb^{-1}},\tfrac{1}{2}\hb^{-1}\beta^2)
\overline{\msf}(\z\sqrt{2\hb^{-1}},\tfrac{1}{2}\hb^{-1}\beta^2)
\asympt(\hb^{\frac{2}{3}})
\\
\nn
Z_-(\z,\al,\hb) & =
k(\tfrac{1}{2}\hb^{-1}\beta^2)^{-\frac{1}{2}}
W(-\z\sqrt{2\hb^{-1}},\tfrac{1}{2}\hb^{-1}\beta^2)+
\frac{\overline{\msf}(\z\sqrt{2\hb^{-1}},\tfrac{1}{2}\hb^{-1}\beta^2)}
{\overline{\esf}(\z\sqrt{2\hb^{-1}},\tfrac{1}{2}\hb^{-1}\beta^2)}
\asympt(\hb^{\frac{2}{3}})
\end{align*}
uniformly for $\z\in(\z_1,0]$ and $\beta\in[0,\beta_1]$.

\begin{remark}
\label{linear-independent-well}
The two sets $\{Y_+,Z_+\}$ and $\{Y_-,Z_-\}$ consist of two
linearly independent functions. This can be seen by their Wronskians.
For example, using \ref{wronskian-mpcf} we have
\be\nn
\W
\Big[
W(\cdot,\tfrac{1}{2}\hb^{-1}\beta^2),
W(-\cdot,\tfrac{1}{2}\hb^{-1}\beta^2)
\Big]
=1
\ee
Using this and (\ref{z1-approx}), (\ref{z2-approx}) we see that
$\W[Y_+,Z_+]\neq0$. Similarly, we have $\W[Y_-,Z_-]\neq0$ as well.
\end{remark}

We express $Y_+$, $Z_+$ in terms of $Y_-$, $Z_-$. So for
$(\z,\beta)\in(\z_1,0]\times[0,\beta_1]$ we write
\begin{align}\label{tau1}
Y_+(\z,\beta,\hb) & =
\tau_{11}(\beta,\hb)Y_-(\z,\beta,\hb)+
\tau_{12}(\beta,\hb)Z_-(\z,\beta,\hb)\\
\label{tau2}
Z_+(\z,\beta,\hb) & =
\tau_{21}(\beta,\hb)Y_-(\z,\beta,\hb)+
\tau_{22}(\beta,\hb)Z_-(\z,\beta,\hb).
\end{align}
As in \S \ref{connection-formulas-barrier}, we  find approximations
for the coefficients $\tau_{ij}$, $i,j=1,2$ in the linear relations
(\ref{tau1}) and (\ref{tau2}). We take equations (\ref{tau1}), 
(\ref{tau2}) along with their derivatives and evaluate them at $\z=0$. 
We obtain
\be
\label{tau-wronski}
\begin{split}
\tau_{11}(\beta,\hb)=
\frac{\W[Y_+(\cdot,\beta,\hb),Z_-(\cdot,\beta,\hb)](0)}
{\W[Y_-(\cdot,\beta,\hb),Z_-(\cdot,\beta,\hb)](0)}\\
\tau_{12}(\beta,\hb)=
-
\frac
{\W[Y_+(\cdot,\beta,\hb),Y_-(\cdot,\beta,\hb)](0)}
{\W[Y_-(\cdot,\beta,\hb),Z_-(\cdot,\beta,\hb)](0)}\\
\tau_{21}(\beta,\hb)=
\frac
{\W[Z_+(\cdot,\beta,\hb),Z_-(\cdot,\beta,\hb)](0)}
{\W[Y_-(\cdot,\beta,\hb),Z_-(\cdot,\beta,\hb)](0)}\\
\tau_{22}(\beta,\hb)=
-
\frac
{\W[Z_+(\cdot,\beta,\hb),Y_-(\cdot,\beta,\hb)](0)}
{\W[Y_-(\cdot,\beta,\hb),Z_-(\cdot,\beta,\hb)](0)}.
\end{split}
\ee

By using the results and properties of modified Parabolic Cylinder 
Functions and their auxiliary functions from section 
\ref{modified-pcfs} in the appendix, we find that as $\hb\downarrow0$
\be
\label{well-connection}
\begin{split}
\tau_{11}(\beta,\hb)=
\asympt(\hb^{\frac{2}{3}})\\
\tau_{12}(\beta,\hb)=
k(\tfrac{1}{2}\hb^{-1}\beta^2)
[1+\asympt(\hb^{\frac{2}{3}})]
\\
\tau_{21}(\beta,\hb)=
k(\tfrac{1}{2}\hb^{-1}\beta^2)^{-1}
[1+\asympt(\hb^{\frac{2}{3}})]
\\
\tau_{22}(\beta,\hb)=
\asympt(\hb^{\frac{2}{3}})
\end{split}
\ee
uniformly for $\beta\in[0,\beta_1]$.

We close this section with a useful lemma that shall be used in 
next paragraph's main theorem.
\begin{lemma}
The matrix $\tau$ formed by the connection coefficients in 
(\ref{tau1}), (\ref{tau2}) satisfies
\be
\label{tau-determinant}
\det\tau=
\det
\begin{bmatrix}
\tau_{11} & \tau_{12}\\
\tau_{21} & \tau_{22}
\end{bmatrix}
=-1+\asympt(\hb^\frac{2}{3})
\quad\text{as}\quad
\hb\downarrow0.
\ee
\end{lemma}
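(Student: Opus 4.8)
The plan is to read the identity (\ref{tau-determinant}) off directly from the connection-coefficient asymptotics (\ref{well-connection}); the one thing that must be observed is that the factor $k(\tfrac12\hb^{-1}\beta^2)$, which need not be $\asympt(1)$ as $\hb\downarrow0$, cancels exactly in the product $\tau_{12}\tau_{21}$. Expanding $\det\tau=\tau_{11}\tau_{22}-\tau_{12}\tau_{21}$ and inserting (\ref{well-connection}), the first product is $\asympt(\hb^{2/3})\cdot\asympt(\hb^{2/3})=\asympt(\hb^{4/3})$, hence absorbed into $\asympt(\hb^{2/3})$, while the second is
\[
k(\tfrac12\hb^{-1}\beta^2)\,[1+\asympt(\hb^{2/3})]\cdot k(\tfrac12\hb^{-1}\beta^2)^{-1}\,[1+\asympt(\hb^{2/3})]=[1+\asympt(\hb^{2/3})]^2=1+\asympt(\hb^{2/3}),
\]
the scalars $k^{\pm1}$ cancelling identically. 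Therefore $\det\tau=\asympt(\hb^{2/3})-1-\asympt(\hb^{2/3})=-1+\asympt(\hb^{2/3})$, uniformly for $\beta\in[0,\beta_1]$ exactly as in (\ref{well-connection}).

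A more intrinsic derivation, which I would include as a confirmation, goes through the Wronskian. From (\ref{tau1})--(\ref{tau2}), bilinearity and antisymmetry of $\W[\cdot,\cdot]$ give $\W[Y_+,Z_+]=\det\tau\cdot\W[Y_-,Z_-]$, and since equation (\ref{schrodi-well-liouville-final-form}) has no first-derivative term both Wronskians are constant in $\z$. Using (\ref{z1-approx})--(\ref{z2-approx}) (respectively the analogous formulas for $Y_-$, $Z_-$ recorded just before Remark \ref{linear-independent-well}), the leading modified-PCF parts of $Y_+,Z_+$ are $k^{1/2}W(-\z\sqrt{2\hb^{-1}},c)$ and $k^{-1/2}W(\z\sqrt{2\hb^{-1}},c)$ with $c=\tfrac12\hb^{-1}\beta^2$, whereas those of $Y_-,Z_-$ carry the two arguments in the opposite order; the scalars $k^{\pm1/2}$ cancel, the substitution $s=\z\sqrt{2\hb^{-1}}$ produces a factor $\sqrt{2\hb^{-1}}$, and the normalization $\W[W(\cdot,c),W(-\cdot,c)]=1$ from Remark \ref{linear-independent-well} yields $-\sqrt{2\hb^{-1}}$ for the leading Wronskian of $\{Y_+,Z_+\}$ and $+\sqrt{2\hb^{-1}}$ for that of $\{Y_-,Z_-\}$. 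The contributions of the error terms $\overline{\eps}_i$ are the difference of two $\z$-constants, hence may be evaluated at $\z=0$, where (\ref{well-junk1})--(\ref{well-junk2}) together with the bounds on $\overline{\esf},\overline{\msf},\overline{\nsf}$ at the origin from appendix \ref{modified-pcfs} show them to be of size $\asympt(\hb^{2/3})\sqrt{2\hb^{-1}}$. Thus each Wronskian equals $\pm\sqrt{2\hb^{-1}}\,[1+\asympt(\hb^{2/3})]$, and the quotient is once more $-1+\asympt(\hb^{2/3})$.

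The proof has essentially no hard part: the only point requiring care is that one must not estimate $\tau_{12}$ and $\tau_{21}$ separately, since for fixed $\beta$ the factor $k(\tfrac12\hb^{-1}\beta^2)$ is exponentially small as $\hb\downarrow0$ while its reciprocal is exponentially large. What makes the cancellation in $\tau_{12}\tau_{21}$ exact is that the $\asympt(\hb^{2/3})$ relative errors in (\ref{well-connection}) sit \emph{inside} the brackets multiplying $k^{\pm1}$; this is automatic from the way those asymptotics were produced in \S\ref{connection-formulas-well}. Beyond that the argument is the two-line computation of the first paragraph.
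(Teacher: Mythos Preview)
Your proof is correct and follows exactly the paper's approach: the paper's proof is the single line ``Simply use formulae (\ref{well-connection}),'' which is precisely your first-paragraph computation, including the key observation that the factors $k^{\pm1}$ cancel exactly in $\tau_{12}\tau_{21}$. Your Wronskian confirmation is also in the paper, but appears there as the subsequent Remark (equation (\ref{wronski-connection})) rather than as part of the proof.
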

\begin{proof}
Simply use formulae (\ref{well-connection}).
\end{proof}

\begin{remark}
Using (\ref{tau-wronski}), a straightforward calculation yields
$$
\W[Y_+,Z_+]=\W[\tau_{11}Y_-+\tau_{12}Z_-,Z_+]=(\det\tau)\W[Y_-,Z_-]
$$
whence
\be
\label{wronski-connection}
\det\tau=\frac{\W[Y_+,Z_+]}{\W[Y_-,Z_-]}.
\ee
\end{remark}

\subsection{Applications in the case of a well}
\label{applications-well}

The asymptotic behavior as $\hb\downarrow0$
of an arbitrary non-trivial \textit{real} solution $X$ of equation 
(\ref{schrodi-well-liouville-final-form}) on the $\z$-interval
corresponding to the finite $x$-well $(w_-,w_+)$ of $A$, can
be examined through the functions $Y_+$, $Z_+$ and $Y_-$, $Z_-$.
Since $\{Y_+,Z_+\}$ and $\{Y_-,Z_-\}$ are two sets of linearly 
independent functions (cf. Remark \ref{linear-independent-barrier}), 
for $X$ we can write
\be
\label{X-beta-linear-combo}
\begin{split}
X(\z,\beta,\hb) 
=
\gamma_1(\beta,\hb)Y_+(\z,\beta,\hb)+
\delta_1(\beta,\hb)Z_+(\z,\beta,\hb)\\
=
\gamma_2(\beta,\hb)Y_-(\z,\beta,\hb)+
\delta_2(\beta,\hb)Z_-(\z,\beta,\hb)
\end{split}
\ee
for some $\gamma_j(\beta,\hb),\delta_j(\beta,\hb)\in\mathbb{R}$, 
$j=1,2$. For $j=1,2$ we put
\be
\label{X-beta-aux}
\begin{split}
v_j(\beta,\hb)
=
\sqrt{\gamma_j(\beta,\hb)^2+\delta_j(\beta,\hb)^2}\\
\gamma_j(\beta,\hb)
=
v_j(\beta,\hb)\cos\xi_j(\beta,\hb)\\
\delta_j(\beta,\hb)
=
v_j(\beta,\hb)\sin\xi_j(\beta,\hb)\\
\xi_j(\beta,\hb)
\in
\mathbb{R}/(2\pi\mathbb{Z}).
\end{split}
\ee

We start with a theorem.
\begin{theorem}
Under Assumption \ref{well-assumption}, 
an arbitrary real solution $X$ of equation 
(\ref{schrodi-well-liouville-final-form}) is given by the 
formulae (\ref{X-beta-linear-combo}), where the phases 
$\xi_j$, $j=1,2$ satisfy the estimate
\be
\label{well-phases}
\sin\xi_1(\beta,\hb)\sin\xi_2(\beta,\hb)=
\asympt(\hb^\frac{4}{3})
\quad\text{as}\quad
\hb\downarrow0. 
\ee
\end{theorem}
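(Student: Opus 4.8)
The plan is to exploit the trivial fact that, since the two right--hand sides of (\ref{X-beta-linear-combo}) represent the same function $X$, their Wronskian vanishes identically in $\z$. Because $X$ is non--trivial, the coefficient pairs $(\gamma_1,\delta_1)$ and $(\gamma_2,\delta_2)$ are nonzero, so in (\ref{X-beta-aux}) one has $v_1>0$ and $v_2>0$. Writing the first copy of $X$ in the basis $\{Y_+,Z_+\}$ and the second in $\{Y_-,Z_-\}$ and using that $\W$ is bilinear in its two (constant--coefficient) arguments, I would first obtain
\[
\gamma_1\gamma_2\,\W[Y_+,Y_-]+\gamma_1\delta_2\,\W[Y_+,Z_-]+\delta_1\gamma_2\,\W[Z_+,Y_-]+\delta_1\delta_2\,\W[Z_+,Z_-]=0 .
\]

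Next I would feed in the connection formulae (\ref{tau1}), (\ref{tau2}) and use the bilinearity and antisymmetry of $\W$ together with $\W[Y_-,Y_-]=\W[Z_-,Z_-]=0$ to reduce each of the four Wronskians above to a scalar multiple of $\W[Y_-,Z_-]$, namely $\W[Y_+,Y_-]=-\tau_{12}\W[Y_-,Z_-]$, $\W[Y_+,Z_-]=\tau_{11}\W[Y_-,Z_-]$, $\W[Z_+,Y_-]=-\tau_{22}\W[Y_-,Z_-]$ and $\W[Z_+,Z_-]=\tau_{21}\W[Y_-,Z_-]$. Since $\W[Y_-,Z_-]\neq0$ by Remark \ref{linear-independent-well}, I may divide it out, then substitute the polar representation (\ref{X-beta-aux}) and cancel the nonzero factor $v_1v_2$, arriving at the bilinear identity
\[
-\cos\xi_1\cos\xi_2\,\tau_{12}+\cos\xi_1\sin\xi_2\,\tau_{11}-\sin\xi_1\cos\xi_2\,\tau_{22}+\sin\xi_1\sin\xi_2\,\tau_{21}=0 ,
\]
where throughout $\xi_j=\xi_j(\beta,\hb)$ and $\tau_{ij}=\tau_{ij}(\beta,\hb)$.

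Finally I would insert the asymptotics (\ref{well-connection}). The subtlety is that the connection matrix $\tau$ degenerates as $\hb\downarrow0$: with $k=k(\tfrac12\hb^{-1}\beta^2)$ one has $\tau_{12}=k[1+\asympt(\hb^{2/3})]$, $\tau_{21}=k^{-1}[1+\asympt(\hb^{2/3})]$ and $\tau_{11},\tau_{22}=\asympt(\hb^{2/3})$, so $\tau_{21}$ is the dominant entry and the correct move is to divide the identity by $\tau_{21}$ rather than to substitute naively. This gives
\[
\sin\xi_1\sin\xi_2=\frac{\tau_{12}}{\tau_{21}}\cos\xi_1\cos\xi_2-\frac{\tau_{11}}{\tau_{21}}\cos\xi_1\sin\xi_2+\frac{\tau_{22}}{\tau_{21}}\sin\xi_1\cos\xi_2 ,
\]
and from (\ref{well-connection}) one reads off $\tau_{12}/\tau_{21}=k^2[1+\asympt(\hb^{2/3})]$ while $\tau_{11}/\tau_{21}$ and $\tau_{22}/\tau_{21}$ are $k\,\asympt(\hb^{2/3})$. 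For fixed $\beta>0$ the argument $\tfrac12\hb^{-1}\beta^2$ tends to $+\infty$, and the decay of $k$ recorded in Appendix \ref{modified-pcfs} (of the type $k(c)=\asympt(e^{-\pi c})$ as $c\to+\infty$) forces $k(\tfrac12\hb^{-1}\beta^2)=\asympt(\hb^{N})$ for every $N$; since the trigonometric factors are bounded, the right--hand side is $\asympt(\hb^{4/3})$, which is the assertion.

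The main obstacle is exactly this degeneration of $\tau$ as $\hb\downarrow0$: one entry collapses while its reciprocal blows up, so the estimate only emerges after identifying the dominant entry $\tau_{21}$, dividing by it, and checking that the relative $\asympt(\hb^{2/3})$ errors in (\ref{well-connection}) pass cleanly through the quotients $\tau_{12}/\tau_{21}$, $\tau_{11}/\tau_{21}$, $\tau_{22}/\tau_{21}$; the rest is the routine bookkeeping that $k$ evaluated at the large argument $\tfrac12\hb^{-1}\beta^2$ is small enough to be absorbed into $\asympt(\hb^{4/3})$ (here $\beta>0$ is fixed; the degenerate value $\beta=0$ is the critical double--turning--point case and is handled as the corresponding limit).
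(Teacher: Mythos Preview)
Your argument is correct and rests on the same Wronskian mechanism as the paper, but the organization differs in a way worth noting. You take a \emph{single} Wronskian identity $\W[X,X]=0$ with $X$ written in the two bases, reduce everything to a multiple of $\W[Y_-,Z_-]$, and then divide by the dominant entry $\tau_{21}$ to isolate $\sin\xi_1\sin\xi_2$ directly. The paper instead takes the Wronskian of both sides of $\gamma_1Y_++\delta_1Z_+=\gamma_2Y_-+\delta_2Z_-$ with $Y_+$ to obtain
\[
\det\tau\cdot v_1\sin\xi_1=v_2(\tau_{11}\sin\xi_2-\tau_{12}\cos\xi_2),
\]
deduces $v_1\sin\xi_1=v_2\,\asympt(\hb^{2/3})$ from (\ref{tau-determinant}), (\ref{well-connection}), (\ref{kappa-asymptotics}), derives the symmetric relation $v_2\sin\xi_2=v_1\,\asympt(\hb^{2/3})$ the same way, and multiplies the two to cancel $v_1v_2$. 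Your route is more compact; the paper's two-step route buys something extra, namely the individual estimates $v_j\sin\xi_j=v_{3-j}\,\asympt(\hb^{2/3})$, from which the subsequent corollary (that at least one phase satisfies $\sin\xi_j=\asympt(\hb^{2/3})$) follows immediately without any knowledge of the ratio $v_1/v_2$. Your caveat about $\beta=0$ is apt and applies equally to the paper's argument, since both rely on the smallness of $k(\tfrac12\hb^{-1}\beta^2)$.
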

\begin{proof}
We start with (\ref{X-beta-linear-combo}), i.e.
\be\nn
\gamma_1Y_++\delta_1Z_+=\gamma_2Y_-+\delta_2Z_-
\ee
where we do not mention the dependence on $\beta$, $\hb$ for simplicity 
and take the Wronskian of both sides with $Y_+$. Using 
(\ref{X-beta-aux}), (\ref{tau-wronski}) and (\ref{wronski-connection}) 
we see that
\be\nn
\det\tau\cdot v_1\cdot\sin\xi_1=
v_2\cdot(\tau_{11}\sin\xi_2-\tau_{12}\cos\xi_2).
\ee
Finally, relying on (\ref{tau-determinant}), (\ref{well-connection}) 
and (\ref{kappa-asymptotics}) we obtain
\be\nn
v_1(\beta,\hb)\sin\xi_1(\beta,\hb)=
v_2(\beta,\hb)\asympt(\hb^\frac{2}{3})
\quad\text{as}\quad
\hb\downarrow0.
\ee
Similarly, one has
\be\nn
v_2(\beta,\hb)\sin\xi_2(\beta,\hb)=
v_1(\beta,\hb)\asympt(\hb^\frac{2}{3})
\quad\text{as}\quad
\hb\downarrow0.
\ee
Multiplying the last two equations and neglecting the 
common factor $v_1(\beta,\hb)v_2(\beta,\hb)$ we arrive at the 
desired result.
\end{proof}

The theorem above gives rise to the next corollary, the proof of 
which is staightforward.
\begin{corollary}
For every $\hb>0$, at least one of the phases $\xi_j$, $j=1,2$
satisfies the condition
\be
\label{well-one-phase}
\sin\xi_j(\beta,\hb)=
\asympt(\hb^\frac{2}{3})
\quad\text{as}\quad
\hb\downarrow0. 
\ee
\end{corollary}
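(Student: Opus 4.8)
The plan is to read off the corollary directly from the product estimate \eqref{well-phases} just proved. By that theorem there are a constant $C>0$ and a threshold $\hb_0>0$ such that
\be\nn
|\sin\xi_1(\beta,\hb)|\,|\sin\xi_2(\beta,\hb)|\leq C\hb^{\frac{4}{3}}
\qquad\text{for all }\hb\in(0,\hb_0).
\ee
The only analytic input needed is the elementary fact that if $a,b\geq0$ satisfy $ab\leq M$, then $\min\{a,b\}\leq\sqrt{M}$ (otherwise both $a>\sqrt{M}$ and $b>\sqrt{M}$, forcing $ab>M$). Applying this with $a=|\sin\xi_1(\beta,\hb)|$, $b=|\sin\xi_2(\beta,\hb)|$ and $M=C\hb^{4/3}$ yields
\be\nn
\min\{|\sin\xi_1(\beta,\hb)|,\,|\sin\xi_2(\beta,\hb)|\}\leq\sqrt{C}\,\hb^{\frac{2}{3}}
\qquad\text{for all }\hb\in(0,\hb_0).
\ee
Hence for each such $\hb$, letting $j=j(\beta,\hb)\in\{1,2\}$ denote an index realizing this minimum, we get $|\sin\xi_j(\beta,\hb)|\leq\sqrt{C}\,\hb^{2/3}$, which is precisely the assertion \eqref{well-one-phase}.

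The one point that calls for a word of care — and the closest thing to an obstacle here — is that the index $j$ attaining the minimum may a priori vary with $\hb$, whereas \eqref{well-one-phase} is phrased as a statement about a single phase $\xi_j$. This is harmless: if one wishes a fixed index valid for all small $\hb$, one argues by contradiction using sequences. Were \eqref{well-one-phase} to fail for $j=1$ and for $j=2$, there would be sequences $\hb_n\downarrow0$ and $\hb_n'\downarrow0$ along which $|\sin\xi_1|$, respectively $|\sin\xi_2|$, is not $\asympt(\hb^{2/3})$; restricting to a sequence $\hb_n$ along which $|\sin\xi_1(\beta,\hb_n)|>\sqrt{C}\,\hb_n^{2/3}$ forces $|\sin\xi_2(\beta,\hb_n)|\leq\sqrt{C}\,\hb_n^{2/3}$ by the displayed minimum bound, contradicting (after a further subsequence) the failure for $j=2$. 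For the uses made of this corollary in \S\ref{applications-well} the per-$\hb$ version above is all that is required, so this refinement need not be spelled out.

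I expect no genuine difficulty: the entire content sits in the preceding theorem, and the corollary is the one-line consequence $\min\{a,b\}\leq\sqrt{ab}$ of it. The only bookkeeping is to carry the constant from \eqref{well-phases} through and to note the (inconsequential) $\hb$-dependence of the chosen index $j$.
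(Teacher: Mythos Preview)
Your main argument---deducing $\min\{|\sin\xi_1|,|\sin\xi_2|\}\leq\sqrt{C}\,\hb^{2/3}$ directly from the product bound \eqref{well-phases}---is correct and is precisely the straightforward step the paper intends (the paper offers no proof beyond that word). One remark on your aside: the fixed-index refinement does not follow from the sequence argument you sketch, since the two ``bad'' sequences $\hb_n$ and $\hb_n'$ need not interact, and in fact a single fixed $j$ need not work in general; but as you rightly note, only the per-$\hb$ choice of $j$ is required downstream, so this does not affect the corollary.
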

The results above can be reformulated in the following theorem
\begin{theorem}
\label{fix-condo-at-least-on-one-end}
Under Assumption \ref{well-assumption}, an arbitrary real solution 
$X$ of equation (\ref{schrodi-well-liouville-final-form}) admits
\be
\label{left-outside-well}
X(\z,\beta,\hb)=
\gamma_2(\beta,\hb)Y_-(\z,\beta,\hb)+
\delta_2(\beta,\hb)Z_-(\z,\beta,\hb)
\quad\text{for}\quad
\z\in(\z_1,\z(w_-))
\ee
and
\be
\label{right-outside-well}
X(\z,\beta,\hb)=
\gamma_1(\beta,\hb)Y_+(\z,\beta,\hb)+
\delta_1(\beta,\hb)Z_+(\z,\beta,\hb)
\quad\text{for}\quad
\z\in(\z(w_+),\z_2)
\ee
where for the phases in (\ref{X-beta-aux}) we have 
\be\label{fix-cond-def}
\sin\xi_j(\beta,\hb)=\asympt(\hb^{2/3}), \quad\text{as}\quad\hb\downarrow0
\ee  
at least for one $j=1,2$.  We call (\ref{fix-cond-def}) a \textbf{fixing condition}. 
\end{theorem}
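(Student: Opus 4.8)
The plan is to obtain this statement essentially as a repackaging of the preceding Corollary together with the two--basis representation (\ref{X-beta-linear-combo}). First I would record the two global expansions: by Remark \ref{linear-independent-well} each of $\{Y_+,Z_+\}$ and $\{Y_-,Z_-\}$ is a basis of the two--dimensional space of solutions of (\ref{schrodi-well-liouville-final-form}) on $(\z_1,\z_2)$, so an arbitrary real solution $X$ may be written simultaneously in both forms of (\ref{X-beta-linear-combo}) with real, $\z$--independent coefficients $\gamma_j,\delta_j$. Restricting the first identity to $\z\in(\z(w_+),\z_2)$ and the second to $\z\in(\z_1,\z(w_-))$ then gives (\ref{right-outside-well}) and (\ref{left-outside-well}) verbatim; there is nothing to prove here beyond the observation that a global linear identity restricts to any subinterval. (The reason for singling out these particular subintervals is that on them the corresponding pair $Y_\pm,Z_\pm$ carries the relevant asymptotic information, but that plays no role in the present argument.)

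Next I would introduce the polar coordinates (\ref{X-beta-aux}) for the coefficient pairs $(\gamma_j,\delta_j)$, $j=1,2$ — legitimate precisely because $X$ is real — and then invoke the Corollary, i.e.\ relation (\ref{well-one-phase}): at least one of the phases satisfies $\sin\xi_j(\beta,\hb)=\asympt(\hb^{2/3})$ as $\hb\downarrow0$. That relation is itself immediate from (\ref{well-phases}), since $\sin\xi_1\sin\xi_2=\asympt(\hb^{4/3})$ forces one of the two factors to be $\asympt(\hb^{2/3})$; no new estimate is needed. This is exactly (\ref{fix-cond-def}), and the word ``fixing condition'' is simply a name attached to it.

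The one point worth a word of care is that the conclusion ``for at least one $j$'' cannot in general be sharpened to a single choice of $j$ independent of $\hb$: which end ``fixes'' may depend on $\hb$ (and on $\mu$ through $\beta$), so I would state it exactly as in the Corollary, with the index allowed to vary. Since all the genuine analytic content has already been assembled — the connection coefficients (\ref{well-connection}), the determinant estimate (\ref{tau-determinant}) and the Wronskian identity (\ref{wronski-connection}), on which the Corollary rests — I expect no real obstacle here; the proof is a short deduction and the only ``work'' is organizational, namely lining up the notation of (\ref{X-beta-linear-combo})--(\ref{X-beta-aux}) with the displayed conclusions.
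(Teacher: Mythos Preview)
Your proposal is correct and matches the paper's approach exactly: the paper itself gives no separate proof of this theorem, introducing it with the sentence ``The results above can be reformulated in the following theorem,'' so the content is precisely the combination of the two-basis representation (\ref{X-beta-linear-combo}) with the Corollary (\ref{well-one-phase}) that you describe.
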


\section{Using the Liouville Transform for our Problem}

In this section we show how our initial problem can be shaped to  one 
for which the Liouville transform (i.e.  Olver's theory) can be applied.  After some 
preparatory notational comments, we state the problem explicitly and transform it 
to a new one relevant with that ones on paragraphs \S\S 
\ref{passage-barrier}, \ref{passage-well}.   The main assumption that shall be used 
for the potential of our Dirac operator is the following.

\begin{assumption}
\label{multi-hump-potential-assumption}
The function $A:\mathbb{R}\rightarrow\mathbb{R}$ is positive,  
of class $C^4(\mathbb{R})$ and such that $\lim_{x\to\pm\infty}A(x)=0$. 
It has finitely many local extrema and a maximum denoted by $A_{max}$
\footnote{
This maximum can be realized in more than one points simultaneously.
}. 
Furthermore, in some neighborhoods of these extrema it is of class $C^5$. 
Additionally, at these extreme points $A'$ vanishes, while $A''$ is either 
positive (leading to local minima) or negative (for local maxima and maximum). 
Also, for $\mu>0$, equation $A(x)=\mu$ has only finitely many solutions. 
Finally, there exists a number $\tau>0$ such that as $|x|\to+\infty$ we have
\begin{center}
$A(x)=\mathcal{O}\Big(\tfrac{1}{|x|^{1+\tau}}\Big)$\\
$A'(x)=\mathcal{O}\Big(\tfrac{1}{|x|^{2+\tau}}\Big)$\\
$A''(x)=\mathcal{O}\Big(\tfrac{1}{|x|^{3+\tau}}\Big)$.
\end{center}
\end{assumption}

\begin{figure}[H]
\centering
\begin{tikzpicture}[scale=1.5]

\begin{axis}[
    legend pos = north west,
    axis lines = none,
    xlabel = {},
    ylabel = {},
    axis line style={draw=none},
    tick style={draw=none}
]

\addplot [
    domain=-12:12, 
    samples=500, 
    color=red,
]
{1/(1+(x+5)^2)+2/(1+x^2)+3/(1+(x-5)^2)};
\addlegendentry{$A(x)$} 

\end{axis}

\draw[scale=0.5,domain=0.6:3.6,dashed,variable=\y,blue]  
plot ({4.25},{\y});

\draw[scale=0.5,domain=0.6:3.6,dashed,variable=\y,blue]  
plot ({4.75},{\y});

\draw[scale=0.5,domain=0.6:3.6,dashed,variable=\y,blue]  
plot ({6.25},{\y});

\draw[scale=0.5,domain=0.6:3.6,dashed,variable=\y,blue]  
plot ({7.55},{\y});

\draw[scale=0.5,domain=0.6:3.6,dashed,variable=\y,blue]  
plot ({8.4},{\y});

\draw[scale=0.5,domain=0.6:3.6,dashed,variable=\y,blue]  
plot ({10},{\y});

\node[circle,inner sep=1.5pt,fill=black] at (2.12,1.8) {};

\node[circle,inner sep=1.5pt,fill=black] at (2.37,1.8) {};

\node[circle,inner sep=1.5pt,fill=black] at (3.12,1.8) {};

\node[circle,inner sep=1.5pt,fill=black] at (3.77,1.8) {};

\node[circle,inner sep=1.5pt,fill=black] at (4.2,1.8) {};

\node[circle,inner sep=1.5pt,fill=black] at (5,1.8) {};

\draw [line width=1mm] (2.12,0.3) -- (2.37,0.3);

\draw [line width=1mm] (3.12,0.3) -- (3.77,0.3);

\draw [line width=1mm] (4.2,0.3) -- (5,0.3);

\draw[dashed] (0.8,1.8) -- (6,1.8);

\draw (0,0.3) -- (7,0.3);

\draw (2,0) node {$x_1^-$};

\draw (2.4,0) node {$x_1^+$};

\draw (3.12,0) node {$x_2^-$};

\draw (3.77,0) node {$x_2^+$};

\draw (4.2,0) node {$x_3^-$};

\draw (5,0) node {$x_3^+$};

\draw (6.5,1.8) node {$\mu$};

\end{tikzpicture}
\caption{An example of a multi-humped potential.}
\label{pic-3-barrier}
\end{figure}
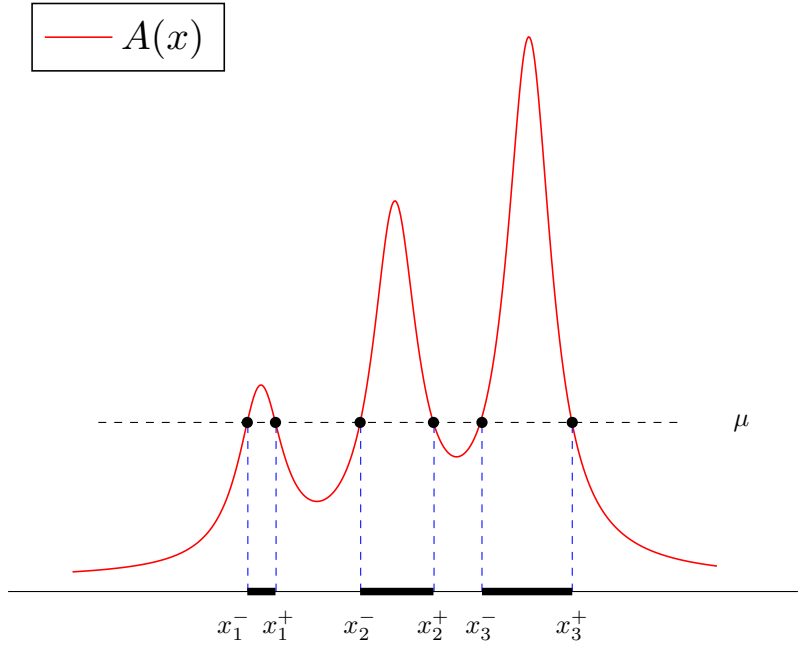

\subsection{Notation}
\label{notation}

Let us begin by fixing some notation so that we can use it for our purposes. 
The zeros of equation $A(x)=\mu$ for $\mu>0$ can either be simple or double 
(when they hit an extreme point). Let us first deal with the (non-critical) 
case where \textit{all} the zeros of this equation are \textit{simple}. 
In such a case, there is a number $L\in\mathbb{N}$ so that we can set 
$x_\ell^{\pm}=x_\ell^{\pm}(\mu)$, $\ell=1,\dots,L$, for these solutions. 
We enumerate them as follows (see Figure \ref{pic-3-barrier})
\be\nn
x_1^{-}<x_1^{+}<x_2^{-}<x_2^{+}<\cdots<x_L^{-}<x_L^{+}.
\ee
Obviously, the number $L$ counts the number of finite barriers that are 
present. Hence, this yields $L$ barriers 
$\mathfrak{B}_\ell(\mu)=(x_\ell^{-}(\mu),x_\ell^{+}(\mu))$, 
$\ell=1,\dots,L$ of finite width (finite barriers) separated 
by $L-1$ wells 
$\mathfrak{W}_\ell(\mu)=(x_\ell^{+}(\mu),x_{\ell+1}^{-}(\mu))$,
$\ell=1,\dots,L-1$ of finite width (finite wells). 
We also have two infinite wells (i.e. wells of infinite width) 
$\mathfrak{W}_0(\mu)=(-\infty,x_{1}^{-}(\mu))$ 
and
$\mathfrak{W}_L(\mu)=(x_L^{+}(\mu),+\infty)$. Observe that
$\pm A'(x_\ell^{\pm}(\mu))<0$ for all $\ell=1,\dots,L$. 
Also, let $b_\ell^{0}(\mu)\in\mathfrak{B}_\ell(\mu)$, $\ell=1,\dots,L$ and 
$w_\ell^{0}(\mu)\in\mathfrak{W}_\ell(\mu)$, $\ell=1,\dots,L-1$ denote
the points where $A$ has its extremes. 

Using this notation, we define 
for $\ell=1,\dots,L$ the intervals 
\be\nn
I_\ell^{-}=(x_\ell^{-}(\mu),b_\ell^{0}(\mu))
\quad\text{and}\quad
I_\ell^{+}=(b_\ell^{0}(\mu),x_\ell^{+}(\mu))
\ee
and for $\ell=1,\dots,L-1$ the intervals
\be\nn
J_\ell^{-}=(x_\ell^{+}(\mu),w_\ell^{0}(\mu))
\quad\text{and}\quad
J_\ell^{+}=(w_\ell^{0}(\mu),x_{\ell+1}^{-}(\mu))
\ee
Having done this, we define for $\ell=1,\dots,L$ the
functions
\be\nn
F_\ell^\pm=\Big(\restr{A}{\mathsf{clos}(I_\ell^\pm)}\Big)^{-1}
\ee 
and for $\ell=1,\dots,L-1$ the functions
\be\nn
G_\ell^\pm=\Big(\restr{A}{\mathsf{clos}(J_\ell^\pm)}\Big)^{-1}.
\ee

Lastly, for each such barrier, we introduce the function
\be
\label{phi-lambda}
\fisf_\ell(\mu)=
\int_{x_\ell^{-}(\mu)}^{x_\ell^{+}(\mu)}\sqrt{A(t)^2-\mu^2}dt.
\ee
It is easy to check that $\fisf_\ell$ is $C^1$. Moreover, 
differentiating (\ref{phi-lambda}) and using the relations 
$A(x_\ell^\pm)=\mu$, we obtain
\be
\label{derivative-phi-lambda}
\frac{d\fisf_\ell}{d\mu}(\mu)=
-2\mu
\int_{x_\ell^{-}(\mu)}^{x_\ell^{+}(\mu)}\big[A(t)^2-\mu^2\big]^{-1/2}dt
<0.
\ee
Thus, $\fisf_\ell$ is a one-to-one mapping.

Let us now pass to the case of double zeros. In such a case, we hit local
minima and/or local maxima. Without any loss of generality and for clarity 
and simplicity of notation, we shall deal with the case of a potential 
function with two humps presented in Figure \ref{different-cases-two-humps}. 
In this situation we have a potential $A$ that attains a single local 
minimum $m_1$ and two local maxima $M_1<M_2$, the biggest of which is the 
total maximum. Let us examine in detail the two (critical) situations of 
hitting either a local minimum or a local maximum.
\begin{itemize}
\item
\underline{Hitting a local minimum}\\
When $0<\mu<m_1$ (cf. subfigure \ref{fig:subfig1}) we have only one finite 
barrier $(x_1^-(\mu),x_1^+(\mu))$. When $\mu$ grows to reach $m_1$, equation 
$A(x)=m_1$ has now three zeros; two simple at $x_1^\pm(m_1)$ and one double at  
$x_1^0(m_1)$(see subfigure \ref{fig:subfig2}). Observe that in such a case, 
there emerges a new point $x_1^0$ between $x_1^-<x_1^+$ that previously 
(i.e. when $\mu<m_1$) defined the barrier. 
\item
\underline{Hitting a local maximum}\\
When $M_1<\mu<M_2$ (cf. subfigure \ref{fig:subfig5}) we see that we again have 
only one finite barrier $(x_1^-(\mu),x_1^+(\mu))$. When $\mu$ grows to reach $M_2$, 
equation $A(x)=M_2$ has now only one zero; a double one at $x_1$. 
(see subfigure \ref{fig:subfig6}). Observe that in such a case, the two points 
that previously (i.e. when $M_1<\mu<M_2$) defined a barrier coalesce to a single 
point $x_1$. The same behavior is observed in subfigures \ref{fig:subfig3} and 
\ref{fig:subfig4} when $\mu=M_1$. In this latter case we are left with 
a double zero $x_1(M_1)$ and a finite barrier having as endpoints the simple zeros 
$x_2^-(M_1)$ and $x_2^+(M_1)$.
\end{itemize}

The general case follows exactly by arguing along the same lines of the 
observations just made. In short, when we hit a local minimum, a new point 
is being created in a barrier, while when we hit a local maximum, a barrier 
is supressed to a point.

\begin{figure}[H]
\centering
\subfloat[Subfigure 1 list of figures text][$0<\mu<m_1$]{
\begin{tikzpicture}[scale=0.8]
\begin{axis}[
    legend pos = north west,
    axis lines = none,
    xlabel = {},
    ylabel = {},
    axis line style={draw=none},
    tick style={draw=none}]
\addplot [
    domain=-14:14, 
    samples=500, 
    color=red,]
{1/(1+(x+5)^2)+2/(1+(x-5)^2)};
\addlegendentry{$A(x)$} 
\end{axis}
\draw[scale=0.5,domain=0.6:1.25,dashed,variable=\y,blue]  
plot ({2.7},{\y});
\draw[scale=0.5,domain=0.6:1.25,dashed,variable=\y,blue]  
plot ({11.5},{\y});
\node[circle,inner sep=1.5pt,fill=black] at (1.35,0.55) {};
\node[circle,inner sep=1.5pt,fill=black] at (5.75,0.55) {};
\draw [line width=1mm] (1.34,0.3) -- (5.75,0.3);
\draw[dashed] (0.5,0.55) -- (6.5,0.55);
\draw (0,0.3) -- (7,0.3);
\draw (1.3,0) node {$x_1^-$};
\draw (5.9,0) node {$x_1^+$};
\draw (6.5,1) node {$\mu$};
\end{tikzpicture}
\label{fig:subfig1}}
\subfloat[Subfigure 2 list of figures text][$\mu=m_1$]{
\begin{tikzpicture}[scale=0.8]
\begin{axis}[
    legend pos = north west,
    axis lines = none,
    xlabel = {},
    ylabel = {},
    axis line style={draw=none},
    tick style={draw=none}]
\addplot [
    domain=-14:14, 
    samples=500, 
    color=red,]
{1/(1+(x+5)^2)+2/(1+(x-5)^2)};
\addlegendentry{$A(x)$} 
\end{axis}
\draw[scale=0.5,domain=0.6:1.25,dashed,variable=\y,blue]  
plot ({3.6},{\y});
\draw[scale=0.5,domain=0.6:1.25,dashed,variable=\y,blue]  
plot ({6.7},{\y});
\draw[scale=0.5,domain=0.6:1.25,dashed,variable=\y,blue]  
plot ({10.6},{\y});
\node[circle,inner sep=1.5pt,fill=black] at (1.8,0.7) {};
\node[circle,inner sep=1.5pt,fill=black] at (3.35,0.7) {};
\node[circle,inner sep=1.5pt,fill=black] at (5.3,0.7) {};
\draw [line width=1mm] (1.8,0.3) -- (5.3,0.3);
\draw[dashed] (0.5,0.7) -- (6.5,0.7);
\draw (0,0.3) -- (7,0.3);
\draw (1.5,0) node {$x_1^-$};
\draw (3.2,-0.1) node {$x_1^0$};
\draw (5.7,0) node {$x_1^+$};
\draw (6.5,1) node {$\mu$};
\end{tikzpicture}
\label{fig:subfig2}}
\qquad
\subfloat[Subfigure 3 list of figures text][$m_1<\mu<M_1$]{
\begin{tikzpicture}[scale=0.8]
\begin{axis}[
    legend pos = north west,
    axis lines = none,
    xlabel = {},
    ylabel = {},
    axis line style={draw=none},
    tick style={draw=none}]
\addplot [
    domain=-14:14, 
    samples=500, 
    color=red,]
{1/(1+(x+5)^2)+2/(1+(x-5)^2)};
\addlegendentry{$A(x)$} 
\end{axis}
\draw[scale=0.5,domain=0.6:3.4,dashed,variable=\y,blue]  
plot ({4.4},{\y});
\draw[scale=0.5,domain=0.6:3.4,dashed,variable=\y,blue]  
plot ({5.15},{\y});
\draw[scale=0.5,domain=0.6:3.4,dashed,variable=\y,blue]  
plot ({8.2},{\y});
\draw[scale=0.5,domain=0.6:3.4,dashed,variable=\y,blue]  
plot ({9.6},{\y});
\node[circle,inner sep=1.5pt,fill=black] at (2.2,1.7) {};
\node[circle,inner sep=1.5pt,fill=black] at (2.6,1.7) {};
\node[circle,inner sep=1.5pt,fill=black] at (4.1,1.7) {};
\node[circle,inner sep=1.5pt,fill=black] at (4.8,1.7) {};
\draw [line width=1mm] (2.18,0.3) -- (2.6,0.3);
\draw [line width=1mm] (4.1,0.3) -- (4.82,0.3);
\draw[dashed] (0.5,1.7) -- (6.5,1.7);
\draw (0,0.3) -- (7,0.3);
\draw (2,0) node {$x_1^-$};
\draw (2.7,0) node {$x_1^+$};
\draw (4,0) node {$x_2^-$};
\draw (5,0) node {$x_2^+$};
\draw (6.5,2) node {$\mu$};
\end{tikzpicture}
\label{fig:subfig3}}
\subfloat[Subfigure 4 list of figures text][$\mu=M_1$]{
\begin{tikzpicture}[scale=0.8]
\begin{axis}[
    legend pos = north west,
    axis lines = none,
    xlabel = {},
    ylabel = {},
    axis line style={draw=none},
    tick style={draw=none}]
\addplot [
    domain=-12:12, 
    samples=500, 
    color=red,]
{1/(1+(x+5)^2)+2/(1+(x-5)^2)};
\addlegendentry{$A(x)$}
\end{axis}
\draw[scale=0.5,domain=0.6:5.7,dashed,variable=\y,blue]  
plot ({4.47},{\y});
\draw[scale=0.5,domain=0.6:5.7,dashed,variable=\y,blue]  
plot ({8.8},{\y});
\draw[scale=0.5,domain=0.6:5.7,dashed,variable=\y,blue]  
plot ({9.7},{\y});
\node[circle,inner sep=1.5pt,fill=black] at (2.225,2.86) {};
\node[circle,inner sep=1.5pt,fill=black] at (4.38,2.86) {};
\node[circle,inner sep=1.5pt,fill=black] at (4.85,2.86) {};
\node[circle,inner sep=1.5pt,fill=black] at (2.23,0.3) {};
\draw [line width=1mm] (4.39,0.3) -- (4.86,0.3);
\draw[dashed] (0.8,2.85) -- (6,2.85);
\draw (0,0.3) -- (7,0.3);
\draw (2.25,0) node {$x_1$};
\draw (4.37,0) node {$x_2^-$};
\draw (4.88,0) node {$x_2^+$};
\draw (6.5,2.85) node {$\mu$};
\end{tikzpicture}
\label{fig:subfig4}}
\qquad
\subfloat[Subfigure 5 list of figures text][$M_1<\mu<M_2$]{
\begin{tikzpicture}[scale=0.8]
\begin{axis}[
    legend pos = north west,
    axis lines = none,
    xlabel = {},
    ylabel = {},
    axis line style={draw=none},
    tick style={draw=none}]
\addplot [
    domain=-12:12, 
    samples=500, 
    color=red,]
{1/(1+(x+5)^2)+2/(1+(x-5)^2)};
\addlegendentry{$A(x)$} 
\end{axis}
\draw[scale=0.5,domain=0.6:6.5,dashed,variable=\y,blue]  
plot ({8.9},{\y});
\draw[scale=0.5,domain=0.6:6.5,dashed,variable=\y,blue]  
plot ({9.6},{\y});
\node[circle,inner sep=1.5pt,fill=black] at (4.42,3.3) {};
\node[circle,inner sep=1.5pt,fill=black] at (4.82,3.3) {};
\draw [line width=1mm] (4.42,0.3) -- (4.82,0.3);
\draw[dashed] (0.8,3.3) -- (6,3.3);
\draw (0,0.3) -- (7,0.3);
\draw (4.2,0) node {$x_1^-$};
\draw (5.1,0) node {$x_1^+$};
\draw (6.5,2.85) node {$\mu$};
\end{tikzpicture}
\label{fig:subfig5}}
\subfloat[Subfigure 6 list of figures text][$\mu=M_2$]{
\begin{tikzpicture}[scale=0.8]
\begin{axis}[
    legend pos = north west,
    axis lines = none,
    xlabel = {},
    ylabel = {},
    axis line style={draw=none},
    tick style={draw=none}]
\addplot [
    domain=-12:12, 
    samples=500, 
    color=red,]
{1/(1+(x+5)^2)+2/(1+(x-5)^2)};
\addlegendentry{$A(x)$} 
\end{axis}
\draw[scale=0.5,domain=0.6:10.5,dashed,variable=\y,blue]  
plot ({9.25},{\y});
\node[circle,inner sep=1.5pt,fill=black] at (4.62,5.2) {};
\node[circle,inner sep=1.5pt,fill=black] at (4.62,0.3) {};
\draw[dashed] (2.8,5.2) -- (6,5.2);
\draw (0,0.3) -- (7,0.3);
\draw (4.5,0) node {$x_1$};
\draw (6.5,5) node {$\mu$};
\end{tikzpicture}
\label{fig:subfig6}}
\caption{The case of a potential function $A$ with two humps. In each subfigure, 
the ``energy" (spectral parameter) $\mu$ takes on different values in 
$\mathcal{R}_A$.}
\label{different-cases-two-humps}
\end{figure}
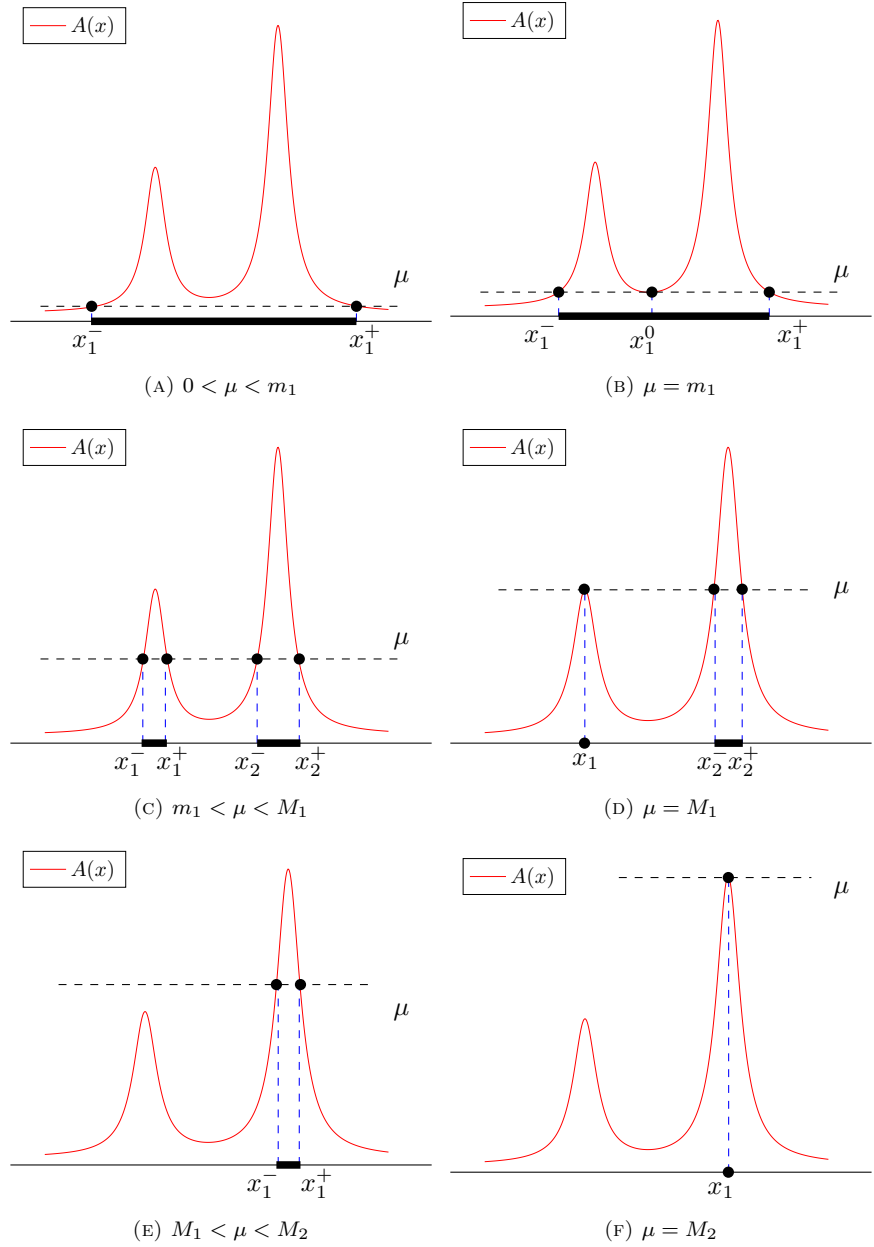

\subsection{Statement of the problem}
\label{problem-statement}

We study the problem
\be\label{ev-problem}
\mathfrak{D}_{\hbar}[\mathbf{u}]=\lambda\mathbf{u}
\ee
where $\mathfrak{D}_{\hbar}$ is the following Dirac (or Zakharov-Shabat) 
operator
\be\label{dirac}
\mathfrak{D}_{\hbar}=
\begin{bmatrix}
i\hbar\partial_{x} & -iA \\
-iA & -i\hbar\partial_{x}
\end{bmatrix}
\ee
with $\hb$ a positive parameter, $A$ a function satisfying 
Assumption \ref{multi-hump-potential-assumption} and
$
\mathbf{u}=[u_{1}\hspace{2pt}u_{2}]^T 
$
a function from $\mathbb{R}$ to $\mathbb{C}^2$. As usual,
$\lambda\in\C$ plays the role of the spectral parameter. 

To be more precise, we treat $\mathfrak{D}_{\hbar}$ as a densely 
defined operator on $L^2(\R;\C^2)$ and want to investigate the EVs 
of problem (\ref{ev-problem}) as $\hb\downarrow0$. So, first we 
explain what we mean when we talk about eigenvalues of this equation.

\begin{definition}
\label{lambda-ev}
For a fixed $\hb>0$, we say that $\lambda\in\mathbb{C}$ is an 
\textbf{eigenvalue (EV)} of the operator $\mathfrak{D}_{\hbar}$ in (\ref{dirac}),
if equation (\ref{ev-problem}) -with this value of $\lambda$- 
has a non-trivial solution 
$\mathbf{u}=[u_{1}\hspace{2pt}u_{2}]^T
\in L^{2}(\mathbb{R};\mathbb{C}^{2})$; that is
\begin{equation*}
0<\int_{-\infty}^{+\infty}\Big[|u_{1}(x)|^{2}+|u_{2}(x)|^{2}\Big]dx
<+\infty.
\end{equation*}
\end{definition}

In general, a non-self-adjoint operator like $\mathfrak{D}_{\hbar}$ 
has complex EVs. For such an operator (with a potential $A$ 
satisfying Assumption \ref{multi-hump-potential-assumption}), 
we know the following about its spectrum 
(see article \cite{klaus+shaw2003} by Klaus and Shaw and 
\cite{h+w} by Hirota and Wittsten).
\begin{itemize}
\item
If $\mathfrak{D}_{\hbar}$ has EVs, then there is a purely 
imaginary EV whose imaginary part is strictly larger than 
the imaginary part of any other EV.
\item
The EV formation threshold is
$$\hb^{-1}\|A\|_{L^1(\R)}>\frac{\pi}{2}$$
and is hence always achieved for sufficiently small $\hb$.
\item
Let $N$ be the largest nonnegative integer such that
$$
\hb^{-1}\|A\|_{L^1(\R)}>(2N-1)\frac{\pi}{2}.
$$
Then there are at least $N$ purely imaginary EVs.
\item
The spectrum of $\mathfrak{D}_{\hbar}$ is symmetric with respect 
to reflection in $\R$.
\item
The continuous (essential) spectrum consists of the entire real 
line $\R$, i.e.
$$\sigma_{ess}(\mathfrak{D}_{\hbar})=\R.$$
\item
Apart from the origin, there are no real EVs.
\end{itemize}  
We procceed by supposing the following for our Dirac operator.
\begin{hypothesis}
\label{main-hypo}
There exists a positive number $\hb_0$, such that for every $0<\hb<\hb_0$, 
$\mathfrak{D}_{\hbar}$ has only purely imaginary EVs. 
Equivalently stated, for the point spectrum of $\mathfrak{D}_{\hbar}$
we suppose that
$$
\exists\hspace{2pt}\hb_0>0
\quad\text{such that}\quad
\forall\hspace{2pt}0<\hb<\hb_0,\quad
\sigma_{p}(\mathfrak{D}_{\hbar})\subset i[-A_{max},A_{max}].$$
\end{hypothesis}
We conjecture that the forementioned hypothesis is always true for 
all potentials $A$ satisfying Assumption 
\ref{multi-hump-potential-assumption}. It has been proved in the case of
two lobes \cite{h+w} and it looks probable that the argument can be extended 
to the multi-hump case.  Hence,  from now on we always assume that $0<\hb<\hb_0$. 

\subsection{Trasforming spectral parameter \& changing variables}

From now on, we assume that the Hypothesis \ref{main-hypo} is satisfied. 
Recall that the spectrum of $\mathfrak{D}_{\hbar}$ is symmetric with 
respect to reflection in $\R$. Hence, we start by changing the spectral 
parameter $\lambda\in i(0,A_{max}]$ to $\mu\in\mathbb{R}_+$ by setting
\be
\label{lambda-to-mu}
\lambda=i\mu.
\ee
Hence, 
(\ref{ev-problem}) is written as
\be\label{z+sh}
\hbar
\begin{bmatrix}
u_{1}'(x,\mu,\hb)\\
u_{2}'(x,\mu,\hb)
\end{bmatrix}=
\begin{bmatrix}
\mu & A(x) \\
-A(x) & -\mu
\end{bmatrix}
\begin{bmatrix}
u_{1}(x,\mu,\hb)\\
u_{2}(x,\mu,\hb)
\end{bmatrix},
\quad
x\in\mathbb{R}.
\ee 

Under the change of variables (cf. equation (4) in \cite{pdmiller2001})
\be\label{change-of-var-dirac-to-schrodi}
y_{\pm}=
\frac{u_{2}\pm u_{1}}{\sqrt{A\mp\mu}}
\ee
system (\ref{z+sh}) is equivalent to the following two independent 
eigenvalue equations
\be\label{schrodi-initial-version}
y_{\pm}''(x,\mu,\hb)=
\bigg\{
\hbar^{-2}[\mu^2-A^2(x)]+
\frac{3}{4}\Big[\frac{A'(x)}{A(x)\mp\mu}\Big]^2-
\frac{1}{2}\frac{A''(x)}{A(x)\mp\mu}
\bigg\}
y_{\pm}(x,\mu,\hb),
\quad
x\in\mathbb{R}.
\ee

Since $A(x)\in\mathbb{R}_+$, $x\in\mathbb{R}$, we will only consider 
the “minus” case for the lower index in (\ref{schrodi-initial-version})
and thus work with the equation
\be\label{schrodi-intermediate-version}
\frac{d^2y}{dx^2}=
\bigg\{\hbar^{-2}[\mu^2-A^2(x)]+
\frac{3}{4}\Big[\frac{A'(x)}{A(x)+\mu}\Big]^2-
\frac{1}{2}\frac{A''(x)}{A(x)+\mu}\bigg\}y,
\quad
x\in\mathbb{R}.
\ee

Observe that the change of variables 
(\ref{change-of-var-dirac-to-schrodi}) does not alter the discrete 
spectrum. Hence we are led to the following important fact.
\begin{proposition}
\label{mu-spectrum}
Under Assumption \ref{multi-hump-potential-assumption} and 
Hypothesis \ref{main-hypo}, finding the discrete spectrum of 
$\mathfrak{D}_{\hbar}$ in (\ref{dirac}), is equivalent to finding 
the values $\mu\in(0,A_{max}]$ for which (\ref{schrodi-intermediate-version}) 
has an $L^{2}(\mathbb{R};\C)$ solution.
\end{proposition}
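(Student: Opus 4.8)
The plan is to prove Proposition~\ref{mu-spectrum} by reducing it to a statement about a single scalar equation. First, by Hypothesis~\ref{main-hypo}, the reflection symmetry of $\sigma(\mathfrak{D}_\hb)$ in $\R$, and the absence of nonzero real eigenvalues, every eigenvalue of $\mathfrak{D}_\hb$ (for $0<\hb<\hb_0$) has the form $\lambda=i\mu$ with $\mu\in(0,A_{max}]$, while $\lambda=0$ is not an eigenvalue; hence by \eqref{lambda-to-mu} determining the discrete spectrum is the same as determining those $\mu\in(0,A_{max}]$ for which \eqref{ev-problem} with $\lambda=i\mu$ --- equivalently the first-order system \eqref{z+sh} --- has a nontrivial $L^2(\R;\C^2)$ solution. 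It therefore suffices to set up, for each fixed $\mu\in(0,A_{max}]$, a bijection between nontrivial $L^2(\R;\C^2)$ solutions of \eqref{z+sh} and nontrivial $L^2(\R;\C)$ solutions of \eqref{schrodi-intermediate-version}.

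For the forward implication I would take $\mathbf{u}=[u_1\ u_2]^T\in L^2(\R;\C^2)$ solving \eqref{z+sh} and put, following \eqref{change-of-var-dirac-to-schrodi},
\be\nn
y:=\frac{u_2-u_1}{\sqrt{A+\mu}}.
\ee
Differentiating and substituting from \eqref{z+sh} shows that $y$ solves the ``minus'' instance of \eqref{schrodi-initial-version}, that is, equation \eqref{schrodi-intermediate-version}. Since $A>0$ and $\mu>0$, we have $\sqrt{\mu}\le\sqrt{A(x)+\mu}\le\sqrt{A_{max}+\mu}$ for all $x\in\R$, so multiplication by $1/\sqrt{A+\mu}$ is a bounded and boundedly invertible operator on $L^2$; hence $y\in L^2(\R;\C)$. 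Finally, $y\equiv0$ would force $u_1\equiv u_2$, and then \eqref{z+sh} gives $(A+\mu)u_1\equiv0$, hence $\mathbf{u}\equiv0$; so a nontrivial eigenfunction produces a nontrivial $L^2$ solution of \eqref{schrodi-intermediate-version}.

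For the converse I would invert these formulas. Setting $p=u_1+u_2$, $q=u_2-u_1$ turns \eqref{z+sh} into $\hb p'=(A-\mu)q$, $\hb q'=-(A+\mu)p$. Given an $L^2$ solution $y$ of \eqref{schrodi-intermediate-version}, put $q:=\sqrt{A+\mu}\,y$, $p:=-\hb q'/(A+\mu)$, and $u_1:=\tfrac12(p-q)$, $u_2:=\tfrac12(p+q)$; the relation $\hb q'=-(A+\mu)p$ holds by construction, and a direct computation using that $y$ solves \eqref{schrodi-intermediate-version} yields the remaining relation $\hb p'=(A-\mu)q$, so $\mathbf{u}$ solves \eqref{z+sh}. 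It then remains to check $\mathbf{u}\in L^2$: boundedness of $\sqrt{A+\mu}$ gives $q\in L^2$ at once (and $q\not\equiv0$ whenever $y\not\equiv0$), so the only genuine point is $p\in L^2$. I would get this from the asymptotics of \eqref{schrodi-intermediate-version}: by Assumption~\ref{multi-hump-potential-assumption} the coefficient in \eqref{schrodi-intermediate-version} tends to $\hb^{-2}\mu^2>0$ as $x\to\pm\infty$, equivalently the coefficient matrix of the $(p,q)$-system is asymptotically constant with eigenvalues $\pm\mu/\hb$; since $q\in L^2$ the solution $(p,q)$ must be recessive at each end, so $p$ (and $q$, hence $y'$) decays exponentially, giving $p\in L^2$.

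I expect the only non-routine ingredient to be this last step --- the $L^2$-integrability of $p$, i.e.\ the exponential decay of $y'$ --- which is a short Levinson-type argument for an asymptotically constant linear system rather than a real difficulty; everything else is the algebra of the substitution \eqref{change-of-var-dirac-to-schrodi} together with the observation that $1/\sqrt{A+\mu}$ is a bounded isomorphism of $L^2$, so that the discrete spectrum is unchanged. Assembling the two implications with the reduction of the spectral parameter to $(0,A_{max}]$ completes the proof of Proposition~\ref{mu-spectrum}.
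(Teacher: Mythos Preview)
Your argument is correct and follows the same route as the paper, namely the substitution \eqref{change-of-var-dirac-to-schrodi}; the paper itself offers no proof beyond the one-line remark that this change of variables ``does not alter the discrete spectrum,'' so you have in fact supplied considerably more detail than the authors do. In particular, your identification of the only nontrivial point---that $p=-\hb q'/(A+\mu)$ lies in $L^2$, handled via the asymptotically constant first-order system and a Levinson-type argument---is exactly the content the paper leaves implicit.
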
 

\subsection{Reformulating the equation}
\label{reformulation}

We introduce another symbol for the spectral parameter in order 
to rely on the results from sections 
\S\ref{passage-barrier} and \S\ref{passage-well}. Recall that we 
started with $\lambda$ and then changed to $\mu$ using 
(\ref{lambda-to-mu}). 

In a neighborhood of a finite (noncritical) barrier 
$\mathfrak{B}_\ell=(x_\ell^-,x_\ell^+)$, $\ell=1,\dots,L$, equation 
(\ref{schrodi-intermediate-version}) can be written as
\be\nn
\frac{d^2y}{dx^2}=[\hbar^{-2}f(x,x_\ell^+)+g(x,x_\ell^+)]y
\ee
where $f$ and $g$ satisfy
\be\nn
f(x,x_\ell^+)=A^2(x_\ell^+)-A^2(x)
\ee
and
\be\nn
g(x,x_\ell^+)=
\frac{3}{4}\bigg[\frac{A'(x)}{A(x)+A(x_\ell^+)}\bigg]^2-
\frac{1}{2}\frac{A''(x)}{A(x)+A(x_\ell^+)}.
\ee
This simply says that in a neighborhood of a barrier, we
can use the results obtained in \S\ref{passage-barrier}.

Similarly, in a neighborhood of a finite (noncritical) well 
$\mathfrak{W}_\ell=(x_\ell^+,x_{\ell+1}^-)$, $\ell=1,\dots,L-1$, equation 
(\ref{schrodi-intermediate-version}) can be put in the form
\be\nn
\frac{d^2y}{dx^2}=[\hbar^{-2}f(x,x_\ell^+)+g(x,x_\ell^+)]y
\ee
where $f$ and $g$ satisfy
\be\nn
f(x,x_\ell^+)=A^2(x_\ell^+)-A^2(x)
\ee
and
\be\nn
g(x,x_\ell^+)=
\frac{3}{4}\bigg[\frac{A'(x)}{A(x)+A(x_\ell^+)}\bigg]^2-
\frac{1}{2}\frac{A''(x)}{A(x)+A(x_\ell^+)}.
\ee
But as before, this quarantees us that in a neighborhood of
a well, all the results of \S\ref{passage-well} can be used
freely.

From paragraphs \S\ref{passage-barrier} and \S\ref{passage-well} we know 
that after applying the Liouville transform, the above differential 
equations are transformed correspondingly to anothers of the form 
\be
\label{equation-equiv}
\frac{d^2X}{d\z^2}=
\big[\pm\hb^{-2}(\z^2-\gamma^2)+\varphi(\z,\gamma)\big]X
\ee
where for the \say{$+$} sign, $\gamma=\alpha$ and $\varphi=\ps$ 
(cf. \S \ref{liouville-transform-barrier})
while for the \say{$-$} case, $\gamma=\beta$ and 
$\varphi=\overline{\ps}$ (see \S \ref{liouville-transform-well}). 
Hence having in mind Proposition \ref{mu-spectrum} we are led to 
the following.
\begin{proposition}
\label{alpha-spectrum}
Under Assumption \ref{multi-hump-potential-assumption} and Hypothesis 
\ref{main-hypo}, finding the discrete spectrum of $\mathfrak{D}_{\hbar}$ in 
(\ref{dirac}) is equivalent to referring to equation (\ref{equation-equiv}) 
for the `` $+$" sign with $\gamma=\alpha$, $\varphi=\ps$ and finding the 
values $\alpha\geq0$ for which it possesses an $L^{2}(\mathbb{R};\C)$ solution.
\end{proposition}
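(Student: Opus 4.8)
This statement is a bookkeeping corollary: the plan is to assemble, in order, the reductions already obtained in this section with the local Liouville transforms of \S\ref{passage-barrier} and \S\ref{passage-well}, and then to record that passing from $\mu$ to $\alpha$ is merely a change of spectral parameter by a bijection. First I would invoke Proposition \ref{mu-spectrum}, which identifies the discrete spectrum of $\mathfrak{D}_{\hb}$ with the set of $\mu\in(0,A_{max}]$ for which (\ref{schrodi-intermediate-version}) has a nontrivial $L^2(\R;\C)$ solution; so it is enough to show that this $\mu$-problem coincides, after relabelling, with the $\alpha$-problem for (\ref{equation-equiv}) in its ``$+$'' form. Fix such a $\mu$, at first noncritical. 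On a neighborhood of each finite barrier $\mathfrak{B}_\ell=(x_\ell^-,x_\ell^+)$ that misses the adjacent barriers, (\ref{schrodi-intermediate-version}) is exactly (\ref{final-schrodi-barrier}) with $b=x_\ell^+$, and Proposition \ref{propo-schrodi-to-liouvi} converts it into (\ref{schrodi-barrier-liouville-final-form}), i.e. (\ref{equation-equiv}) with $\gamma=\alpha=\alpha_\ell(\mu)$ and $\varphi=\ps$; on a neighborhood of each intervening well $\mathfrak{W}_\ell$ the companion proposition of \S\ref{liouville-transform-well} yields the ``$-$'' branch; near $\pm\infty$, where $A\to0$, the equation is in the classically forbidden regime $\mu^2-A^2>0$.

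The analytic heart is that the Liouville substitution $X=\dot{x}^{-\frac{1}{2}}y$ is $L^2$-neutral. On each barrier or well neighborhood, $\z$ is a $C^2$ diffeomorphism onto its $\z$-range and $\dot{x}$ is strictly positive, bounded above and below on compacta and, in particular, across the turning points, where $\dot{x}^2$ has a finite positive limit because $f$ and $\z^2-\alpha^2$ (resp.\ $\beta^2-\z^2$) vanish to the same order --- this regularity is exactly what is built into \S\ref{liouville-transform-barrier}, \S\ref{liouville-transform-well} and underlies Lemma \ref{lemma-error-cont-barrier}. At the two infinite ends one has $\z\to\pm\infty$ while $\dot{x}$ grows only algebraically in $|x|$, so in the identity $\int|y|^2\,dx=\int\dot{x}^2|X|^2\,d\z$ the algebraic weight is harmless: the sign of $f$ for $|x|$ large forces any solution that is $L^2$ on either side to be the recessive one, which decays faster than every polynomial, and then both integrals converge or diverge together. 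Patching over a cover of $\R$ by barrier, well and two infinite-end neighborhoods, and noting that $A\to0$ makes the $\z$-line all of $\R$, we obtain that $y$ is a nontrivial $L^2(\R;\C)$ solution of (\ref{schrodi-intermediate-version}) precisely when $X$ is a nontrivial $L^2(\R;\C)$ solution of (\ref{equation-equiv}) with the ``$+$'' sign. Finally, by (\ref{alpha}) and (\ref{derivative-bold-phi}) the map $\mu\mapsto\alpha_\ell(\mu)$ is $C^1$, strictly decreasing and onto the relevant interval of nonnegative values, hence a bijection on each barrier; the critical cases $\mu=A(b_\ell^0)$ are absorbed by the limiting $\alpha=0$ (resp.\ $\beta=0$) transforms of Remarks \ref{critical-barrier-liouville} and \ref{critical-well-liouville}.

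I expect the $L^2\leftrightarrow L^2$ step to be the main obstacle. The Liouville transforms are only locally defined, and the $\z$-coordinate attached to a barrier is not the one attached to an adjacent well on the overlap of their neighborhoods, so one must argue with some care that $L^2$-membership is a local property on a good cover of $\R$ respected piece by piece, and in particular control $\dot{x}$ at the infinite ends, where the argument leans on the exponential-type dichotomy available because $\mu^2-A^2(x)>0$ for $|x|$ large. Everything else is a direct citation of the propositions and lemmas already in place.
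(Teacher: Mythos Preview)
Your proposal is correct and is, in fact, considerably more careful than the paper, which offers no proof at all: the proposition is stated immediately after the sentence ``Hence having in mind Proposition \ref{mu-spectrum} we are led to the following'' and is left as a direct consequence of that proposition together with the local Liouville reformulation in \S\ref{reformulation}. Your argument supplies precisely the details the paper elides --- the $L^2$-neutrality of $X=\dot{x}^{-1/2}y$ via the change-of-variables identity and the positivity and regularity of $\dot{x}$, the handling of the infinite ends through the exponential dichotomy in the classically forbidden region, and the bijectivity of $\mu\mapsto\alpha$ from (\ref{alpha}) and (\ref{derivative-bold-phi}) --- so the approach is the same, only made explicit.

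Your caveat about patching is well taken and worth keeping: the paper's phrasing (``equation (\ref{equation-equiv}) for the `$+$' sign'') tacitly suggests a single global $\z$-equation on $\R$, whereas in the genuinely multi-barrier regime the Liouville charts and the values $\alpha_\ell(\mu)$ differ from barrier to barrier; the proposition is really used in the paper only in the single-barrier setting (e.g.\ in \S\ref{near-zero-evs} and in the proof of Theorem \ref{exist-ev-theorem}, where one works around a fixed $\mathfrak{B}_\ell$), which is the context in which your argument goes through cleanly.
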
 

\section{Semiclassical Spectral Results for Multiple Barriers}
\label{multi-hump-quanta}

In this section, we use the results from paragraphs \S\ref{passage-barrier} 
and \S\ref{passage-well} to study the EVs and their corresponding norming 
constants of a Dirac operator with potential $A$. Here, we let this potential 
have multiple humps (see Figure \ref{pic-3-barrier}). To be precise, 
we assume the following.

\subsection{Quantization conditions for the EVs}
\label{quanta-spectrum-condo}

In this subsection, using Assumption \ref{multi-hump-potential-assumption},  
Hypothesis \ref{main-hypo} and what we have gathered so far, we present the 
results for the EVs of the Dirac operator $\mathfrak{D}_\hb$.
\begin{theorem}
\label{theorem-b-r-multi}
Consider a potential $A$ of the Dirac operator $\mathfrak{D}_\hb$ 
in (\ref{dirac}) that satisfies Assumption \ref{multi-hump-potential-assumption}. 
Also assume Hypothesis \ref{main-hypo} and take $0<\mu_1<\mu_2\leq A_{max}$. 
Suppose that $\lambda=i\mu\in i[\mu_1,\mu_2]$ 
(where $\lambda=\lambda(\hb)$ and $\mu=\mu(\hb)$) is an EV of 
$\mathfrak{D}_\hb$. Then using the notation from \S\ref{notation}, 
at least for one $\ell=\ell(\hb)\in\{1,2,\dots,L\}$, there is a 
non-negative integer $n=n(\mu,\ell,\hb)$ such that
\be\label{B-R-multi}
\fisf_\ell(\mu)=
\pi\Big(n+\tfrac{1}{2}\Big)\hb+\asympt(\hb^{\frac{5}{3}})
\quad
\text{as}
\quad
\hb\downarrow0.
\ee
\end{theorem}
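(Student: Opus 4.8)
The plan is to combine the local analyses of \S\ref{passage-barrier} and \S\ref{passage-well} along the whole line and then run a short pigeonhole argument on the ``fixing conditions'' that the eigenfunction is forced to satisfy in every well, infinite or finite. By Propositions \ref{mu-spectrum} and \ref{alpha-spectrum}, an eigenvalue $\lambda=i\mu\in i[\mu_1,\mu_2]$ corresponds to a nontrivial $y\in L^2(\R;\C)$ solving (\ref{schrodi-intermediate-version}); since $\lambda$ is an eigenvalue the space of $L^2$ solutions of (\ref{schrodi-intermediate-version}) is one--dimensional (decay in each of the two infinite wells $\mathfrak W_0,\mathfrak W_L$ is one scalar condition on the two--dimensional solution space), and being stable under complex conjugation it is spanned by a real solution; thus we may take $y$, and its Liouville transform $X$ on each chart, to be real. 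Restricted to a neighbourhood of a finite barrier $\mathfrak B_\ell=(x_\ell^-,x_\ell^+)$ equation (\ref{schrodi-intermediate-version}) has the form treated in \S\ref{passage-barrier}, with action $\fisf_\ell$ from (\ref{phi-lambda}) and $\alpha_\ell^2=\tfrac{2}{\pi}\fisf_\ell(\mu)$; restricted to a neighbourhood of a finite well it has the form treated in \S\ref{passage-well}. Choosing $[\mu_1,\mu_2]$ to avoid the finitely many critical values of $A$ keeps $L$ constant and makes all the $\asympt(\cdot)$--estimates of \S\ref{passage-barrier}--\S\ref{passage-well} uniform in $\mu\in[\mu_1,\mu_2]$ and in $\ell\in\{1,\dots,L\}$, so Theorems \ref{main-thm-barrier}, \ref{main-thm-well}, \ref{fix-condo-at-least-on-one-end}, \ref{B-R-existence} and the phase identity (\ref{B-R-similar}) all apply to every barrier and well.

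First I would install the boundary data coming from $\pm\infty$. For $\mathfrak B_1$ the barrier Liouville chart extends through $\mathfrak W_0$ to $\z_1=-\infty$; by the $\z_1=-\infty$ version of Theorem \ref{main-thm-barrier} (argue as for Theorem~6.1 in \cite{h+k}, per the remark after Theorem \ref{main-thm-barrier}) together with the asymptotics of $U,\ol U$ and the error bounds (\ref{y_minus-z_minus-asympt}), the solutions $Y_-,Z_-$ of (\ref{schrodi-barrier-liouville-final-form}) form a recessive/dominant pair as $\z\downarrow-\infty$. Membership of $X$ in $L^2$ forces $X$ to be a real multiple of the recessive one, so in (\ref{X-linear-combo})--(\ref{X-aux}) the dominant coefficient $\delta_-$ vanishes, i.e.\ the left phase $\xi_-$ of $\mathfrak B_1$ is $\equiv0\pmod\pi$; symmetrically the right phase $\xi_+$ of $\mathfrak B_L$ is $\equiv0\pmod\pi$. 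For each finite well $\mathfrak W_\ell$ ($\ell=1,\dots,L-1$), Theorem \ref{fix-condo-at-least-on-one-end} applied to the real solution $X$ gives a fixing condition (\ref{fix-cond-def}) at (at least) one of the two ends of $\mathfrak W_\ell$, i.e.\ that phase equals $0\pmod\pi$ up to $\asympt(\hb^{2/3})$. Since the left end of $\mathfrak W_\ell$ is the simple turning point $x_\ell^+$ shared with $\mathfrak B_\ell$, and its right end is the simple turning point $x_{\ell+1}^-$ shared with $\mathfrak B_{\ell+1}$, and since near a \emph{simple} turning point the $\mathfrak B$-- and $\mathfrak W$--Liouville descriptions reduce to the same recessive/dominant WKB pair up to the relative errors already recorded, such a fixing translates (up to $\asympt(\hb^{2/3})$) into: ``the right phase of $\mathfrak B_\ell$ is $0\pmod\pi$'' or ``the left phase of $\mathfrak B_{\ell+1}$ is $0\pmod\pi$''.

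Making that last translation precise is the main obstacle I anticipate: one must patch the finitely many local Liouville charts along $\R$ so that the ``recessive'' solution singled out in one chart is carried to the ``recessive'' solution of the adjacent chart within the stated errors, and check that the phase conventions of (\ref{X-aux}) and of Theorem \ref{fix-condo-at-least-on-one-end} are consistent, so that \emph{all} the fixings above are of type $0\pmod\pi$ (rather than $\tfrac{\pi}{2}\pmod\pi$) --- otherwise the final congruence in $\hb$ would be off by a half--integer. Granting this, call $\mathfrak B_\ell$ \emph{left--fixed} (resp.\ \emph{right--fixed}) if its left (resp.\ right) turning--point phase is $0\pmod\pi$ up to $\asympt(\hb^{2/3})$. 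We have shown: $\mathfrak B_1$ is left--fixed, $\mathfrak B_L$ is right--fixed, and for each $\ell=1,\dots,L-1$, $\mathfrak B_\ell$ is right--fixed or $\mathfrak B_{\ell+1}$ is left--fixed. Let $\ell^*=\ell^*(\hb)$ be the largest index with $\mathfrak B_{\ell^*}$ left--fixed (it exists, since $\mathfrak B_1$ is). If $\ell^*=L$, then $\mathfrak B_{\ell^*}$ is also right--fixed; if $\ell^*<L$, then $\mathfrak B_{\ell^*+1}$ is not left--fixed by maximality, so $\mathfrak B_{\ell^*}$ is right--fixed. Either way $\mathfrak B_{\ell^*}$ is fixed at both turning points.

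Finally, apply the phase identity (\ref{B-R-similar}) --- whose derivation from (\ref{X-linear-combo}), (\ref{X-aux}) and the connection formulae (\ref{barrier-connection}) is local to a single barrier --- to $\mathfrak B_{\ell^*}$, with $\fisf$ replaced by $\fisf_{\ell^*}$ and $\xi$ by the sum of the two turning--point phases of $\mathfrak B_{\ell^*}$. By the previous paragraph that sum equals $m\pi+\asympt(\hb^{2/3})$ for some integer $m$, so the term $\xi(\mu,\hb)\hb$ in (\ref{B-R-similar}) is $m\pi\hb+\asympt(\hb^{5/3})$ and is absorbed into the error after an integer shift; hence there is a non--negative integer $n=n(\mu,\ell^*,\hb)$ with
\be\nn
\fisf_{\ell^*}(\mu)=\pi\Big(n+\tfrac12\Big)\hb+\asympt\big(\hb^{5/3}\big)
\qquad\text{as}\quad\hb\downarrow0,
\ee
uniformly over $\mu\in[\mu_1,\mu_2]$ and $\ell^*\in\{1,\dots,L\}$ (finiteness of $L$ is what makes this uniform). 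This is (\ref{B-R-multi}) with $\ell=\ell^*(\hb)$. The critical values of $\mu$ excluded above are handled as the limiting cases in Remarks \ref{critical-barrier-liouville} and \ref{critical-well-liouville}, or simply left out of $[\mu_1,\mu_2]$, which completes the proof.
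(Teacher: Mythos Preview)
Your proposal is correct and follows essentially the same strategy as the paper: exploit the $L^2$ decay at $\pm\infty$ to obtain fixing conditions at the outer turning points $x_1^-$ and $x_L^+$, invoke Theorem~\ref{fix-condo-at-least-on-one-end} in each finite well to get one more fixing per well, and then run a pigeonhole argument to find a barrier fixed at both ends, where Theorem~\ref{B-R-existence} (equivalently, the phase identity (\ref{B-R-similar}) with $\xi\equiv0\pmod\pi$) gives (\ref{B-R-multi}).

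Where you differ from the paper is only in the level of care. The paper compresses the combinatorics into the sentence ``at least $L+1$ fixing conditions for $L$ barriers,'' whereas you spell out the maximal-index argument; both are the same pigeonhole. You also make explicit two points the paper leaves implicit: the reality of the eigenfunction (needed for (\ref{X-linear-combo})--(\ref{X-aux}) and Theorem~\ref{fix-condo-at-least-on-one-end}) and the translation of a well's fixing condition at a shared simple turning point into the corresponding barrier phase being $0\pmod\pi$ up to $\asympt(\hb^{2/3})$. Your flagging of this chart-patching step as ``the main obstacle'' is appropriate; the paper simply asserts it, relying on the fact that both Liouville charts near a simple turning point identify the recessive/dominant pair in the same way within the recorded errors.
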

\begin{proof}
From Theorem \ref{fix-condo-at-least-on-one-end} we see that each 
well $\mathfrak{W}_\ell(\mu)$, $\ell=1,\dots,L-1$ yields at least one 
fixing condition (cf.  (\ref{fix-cond-def})).  Moreover, the asymptotic form of 
$Y_+(\z,\al(\mu),\hb)$ as $\z\to+\infty$ and the asymptotics for $Y_-(\z,\al(\mu),\hb)$ 
and $Z_-(\z,\al(\mu),\hb)$ as $\z\to-\infty$ (see (\ref{y1-approx}), 
(\ref{e1-asympt}) and (\ref{y_minus-z_minus-asympt})) show that in the 
presense of an EV, the coefficient $\s_{12}$ in equation (\ref{sigma1}) 
has to be zero. But this is translated to the fact that the fixing conditions 
are fulfilled at the right point $x_1^-(\mu)$ of the well $\mathfrak{W}_0(\mu)$ 
and at the left point $x_L^+(\mu)$ of the well $\mathfrak{W}_L(\mu)$. 
Thus for every $0<\hb<\hb_0$, we have at least $L+1$ fixing conditions for 
$L$ barriers. Hence, there exists a barrier $\mathfrak{B}_\ell(\mu,\hb)$ 
(depending on $\hb$ as well) for which a fixing condition is 
satisfied on each of its two ends. Finally, we refer to Theorem 
\ref{B-R-existence} which gives us the desired results.
\end{proof}

Earlier, from formula (\ref{derivative-phi-lambda}) we saw that 
$\fisf_\ell$ is a one-to-one mapping. Whence there exists the inverse 
$\fisf_\ell^{-1}$. Using this remark, we can write (\ref{B-R-multi}) 
equivalently as
\be\label{B-R-multi-equiv}
\mu(\hb)=\fisf_\ell^{-1}\bigg[\pi\Big(n+\tfrac{1}{2}\Big)\hb\bigg]+
\asympt(\hb^{\frac{5}{3}})
\quad
\text{as}
\quad
\hb\downarrow0.
\ee 
This formula \ref{B-R-multi-equiv} leads to the following definition about 
\textit{WKB EVs}. In a sense, a WKB EV approximates one actual EV of our 
operator.
\begin{definition}
\label{definition-wkb-evs}
If $\lambda(\hb)=i\mu(\hb)$ is an EV of $\mathfrak{D}_\hb$, then from 
Theorem \ref{theorem-b-r-multi} there exists at least one 
$\ell\in\{1,2,\dots,L\}$ so that formula (\ref{B-R-multi-equiv}) 
is true for some $n\in\N$. Then the number
\be\label{wkb-ev}
\lambda_{\ell,n}^{WKB}(\hb)=i\m_{\ell,n}^{WKB}(\hb)=
i\fisf_\ell^{-1}\bigg[\pi\Big(n+\tfrac{1}{2}\Big)\hb\bigg]
\ee
shall be defined to be a \textbf{WKB eigenvalue} related to the actual EV 
$\lambda(\hb)=i\mu(\hb)$. 
\end{definition}

Consider now the intervals
\be
\label{ev-vals}
\Delta_{\ell,n}(\hb)=
\bigg(
\m_{\ell,n}^{WKB}(\hb)-c\hb^{\frac{5}{3}},
\m_{\ell,n}^{WKB}(\hb)+C\hb^{\frac{5}{3}}
\bigg)
\ee
for some arbitrary $\hb$-independent constants $c,C>0$.
The lengths of these intervals are of order $\asympt(\hb^{\frac{5}{3}})$ 
while for different $m,n\in\N$ the distances between the points 
$\m_{\ell,m}^{WKB}(\hb)$ and $\m_{\ell,n}^{WKB}(\hb)$ are 
of order $\asympt(\hb)$. This says that for sufficiently small $\hb$
\be\nn
\Delta_{\ell,m}(\hb)\cap\Delta_{\ell,n}(\hb)=\emptyset
\quad\text{for}\quad
m\neq n.
\ee
However if we consider different $k,\ell\in\{1,2,\dots,L\}$, 
\text{it may occur} that
\be\nn
\Delta_{k,n}(\hb)\cap\Delta_{\ell,n}(\hb)\neq\emptyset
\quad\text{for}\quad
k\neq\ell.
\ee

Theorem (\ref{theorem-b-r-multi}) says that each EV of $\mathfrak{D}_\hb$ 
belongs to one of the intervals $i\Delta_{\ell,n}(\hb)$. Hence for 
sufficiently small $\hb$, equivalently stated this can be written as
\be\nn
\sigma_p(\mathfrak{D}_\hb)\cap
i(\mu_1,\mu_2)\subset
\bigcup_{\ell=1}^L\bigcup_n
i
\Delta_{\ell,n}(\hb)
\ee
or
\be\nn
\text{dist}
\bigg\{
\sigma_p(\mathfrak{D}_\hb)\cap i(\mu_1,\mu_2),
\bigcup_{\ell=1}^L\bigcup_{n\in\N}
\{\lambda_{\ell,n}^{WKB}(\hb)|n\in\N\}
\bigg\}
=
\asympt(\hb^{\frac{5}{3}})
\quad
\text{as}
\quad
\hb\downarrow0.
\ee

Now, we shall be concerned with the converse; namely the existence 
of an actual EV for our operator in (\ref{dirac}). We have the following
theorem.
\begin{theorem}
\label{exist-ev-theorem}
Let Assumption \ref{multi-hump-potential-assumption} be satisfied 
by $A$ and assume Hypothesis \ref{main-hypo} for $\mathfrak{D}_\hb$. 
Then for every $\ell\in\{1,2,\dots,L\}$ and every non-negative 
integer $n$ such that 
\be
\label{assu-theo}
\fisf_\ell^{-1}
\Big[\pi\Big(n+\tfrac{1}{2}\Big)\hb\Big]
\in
(\mu_1,A_{max})
\ee
there exists an EV of $\mathfrak{D}_\hb$,  namely $\lam=i\mu$ 
(where $\lam=\lam(\ell,n,\hb)$ and $\mu=\mu(\ell,n,\hb)$),  that satisfies
\be\nn
\lam=\lambda_{\ell,n}^{WKB}(\hb)+\asympt(\hb^{\frac{5}{3}})
\quad
\text{as}
\quad
\hb\downarrow0.
\ee 
\end{theorem}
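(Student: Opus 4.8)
The plan is to mirror the converse argument already carried out for a single barrier in Theorem \ref{exist-ev-barrier}, but now to organize the bookkeeping so that a zero of the appropriate connection coefficient corresponds to an honest $L^2(\R;\C)$ solution of the full problem. First I would fix $\ell$ and $n$ as in the hypothesis, so that $\m_{\ell,n}^{WKB}(\hb)=\fisf_\ell^{-1}[\pi(n+\tfrac12)\hb]$ lies in $(\mu_1,A_{max})$; by (\ref{derivative-phi-lambda}) the map $\fisf_\ell$ is a $C^1$ decreasing bijection, so this pins a candidate value $\mu$ and a candidate $\al_\ell(\mu)>0$ via (\ref{alpha}) applied to the $\ell$-th barrier. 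The key point is to build the global solution $y$ of (\ref{schrodi-intermediate-version}) on $\R$ that decays at both $\pm\infty$: one starts from the recessive solution in the left infinite well $\mathfrak W_0$, propagates it through the chain of finite barriers and finite wells using the connection formulae (\ref{barrier-connection}) and (\ref{well-connection}), and demands that what emerges into the right infinite well $\mathfrak W_L$ be the recessive (decaying) solution there. Decay at the infinite ends is guaranteed by the decay assumptions in Assumption \ref{multi-hump-potential-assumption} together with the asymptotics of the PCF/mPCF approximants; this is exactly the mechanism used in \cite{h+k} and invoked in the proof of Theorem \ref{theorem-b-r-multi}.

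Next I would reduce the global matching condition to a single scalar equation. Tracking the recessive solution from $\mathfrak W_0$ through $\mathfrak B_1,\mathfrak W_1,\dots,\mathfrak W_{L-1},\mathfrak B_L$ to $\mathfrak W_L$ amounts to composing the $2\times2$ transfer matrices: each finite barrier contributes the matrix $\s=(\s_{ij})$ with entries (\ref{barrier-connection}) (with its own $\al_\ell$), and each finite well contributes $\tau=(\tau_{ij})$ with entries (\ref{well-connection}) (with its own $\beta_\ell$). The requirement that the left-recessive solution map to the right-recessive solution is the vanishing of one scalar entry of the total transfer matrix — call it $\Theta(\mu,\hb)$ — which plays the role that $\sigma_{12}$ played in the one-barrier case. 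Using (\ref{barrier-connection}), (\ref{well-connection}), the determinant estimate (\ref{tau-determinant}) and the bound $\tau_{11},\tau_{22}=\asympt(\hb^{2/3})$, the dominant contribution to $\Theta$ factors through the barriers, and one finds
\be\nn
\Theta(\mu,\hb)=\Big[\textstyle\prod_{k\neq\ell}\big(\text{bounded, nonvanishing factor}\big)\Big]
\cos\!\big(\hb^{-1}\fisf_\ell(\mu)\big)+\asympt(\hb^{2/3})
\quad\text{as}\quad\hb\downarrow0,
\ee
so that near $\mu=\m_{\ell,n}^{WKB}(\hb)$ the equation $\Theta=0$ is, after dividing by the nonvanishing prefactor, of the form $\cos(\hb^{-1}\fisf_\ell(\mu))+\asympt(\hb^{2/3})=0$.

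From here the endgame is identical to that of Theorem \ref{exist-ev-barrier}: set $\xsf=\fisf_\ell(\mu)$, use that $\fisf_\ell$ is a local $C^1$ diffeomorphism to write the equation as $\chi(\xsf,\hb)+\cos(\hb^{-1}\xsf)=0$ with $|\chi|\le C\hb^{2/3}$ independent of $n$ and $\hb$, pass to $s=\hb^{-1}\xsf$, and invoke the intermediate value theorem on a short interval around $\pi(n+\tfrac12)$ to produce a root $\widetilde s$ with $|\widetilde s-\pi(n+\tfrac12)|\le C\hb^{2/3}$; translating back gives $\fisf_\ell(\widetilde\mu)=\pi(n+\tfrac12)\hb+\asympt(\hb^{5/3})$, hence $\lam=i\widetilde\mu=\lambda_{\ell,n}^{WKB}(\hb)+\asympt(\hb^{5/3})$. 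One must check that $\widetilde\mu$ stays inside $(\mu_1,A_{max})$ for small $\hb$, which follows from (\ref{assu-theo}) and continuity of $\fisf_\ell^{-1}$, so that all the connection machinery of \S\ref{passage-barrier}--\S\ref{passage-well} genuinely applies at $\mu=\widetilde\mu$. The main obstacle I anticipate is not the fixed-point step but the transfer-matrix bookkeeping: one has to verify carefully that the product of the well-matrices $\tau$ and the non-$\ell$ barrier-matrices $\s$ contributes only a bounded, bounded-away-from-zero scalar prefactor and does not conspire to cancel the $\cos(\hb^{-1}\fisf_\ell(\mu))$ term or to create spurious zeros — in other words, that the oscillatory dependence on the "wrong" barriers' phases $\hb^{-1}\al_k^2$ and on the wells' $k$-factors can be absorbed into harmless error, uniformly in $\hb$ and in $n$ on the relevant range. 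This is where the estimates (\ref{kappa-asymptotics}) for $k$ and the uniformity built into (\ref{barrier-connection}), (\ref{well-connection}) do the real work.
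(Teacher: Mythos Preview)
Your approach is genuinely different from the paper's, and the gap you yourself flag at the end is real and, as stated, not closed.

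The paper does \emph{not} compute a global transfer matrix through the full chain $\mathfrak W_0,\mathfrak B_1,\mathfrak W_1,\dots,\mathfrak B_L,\mathfrak W_L$. Instead it argues locally by a quasi-mode construction: it first invokes Theorem~\ref{exist-ev-barrier} for the $\ell$-th barrier \emph{in isolation} to produce a value $\mu$ with $\fisf_\ell(\mu)=\pi(n+\tfrac12)\hb+\asympt(\hb^{5/3})$ and with $Y_-=\sigma_{11}Y_+$ (so that the recessive solutions on the two sides of $\mathfrak B_\ell$ match up to a nonzero scalar). It then takes a smooth cutoff $\chi_\ell$ equal to $1$ on a neighborhood of $\z(\mathfrak B_\ell(\mu))$ and vanishing on all other barriers, sets $f_{\ell,n}=Y_-\chi_\ell$, and computes the residual of the differential equation applied to $f_{\ell,n}$. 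Because $Y_\pm$ are exponentially small on compact subsets of the adjacent wells $\z(\mathfrak W_{\ell-1})$, $\z(\mathfrak W_\ell)$ (this is where the derivatives of $\chi_\ell$ live), the residual tends to zero as $\hb\downarrow0$, and the paper concludes that $\lambda=i\mu$ is an eigenvalue. The point is that the other barriers and wells never enter: localization by $\chi_\ell$ kills them, so there is no ``prefactor from $k\neq\ell$'' to worry about.

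By contrast, your scalar $\Theta(\mu,\hb)$ is the $(1,2)$-type entry of a product of $2L-1$ matrices, and the formula you write---a bounded nonvanishing prefactor times $\cos(\hb^{-1}\fisf_\ell(\mu))$ plus $\asympt(\hb^{2/3})$---does not follow from (\ref{barrier-connection}) and (\ref{well-connection}) alone. Each barrier matrix $\sigma^{(k)}$ has entries $\sin(\hb^{-1}\fisf_k),\cos(\hb^{-1}\fisf_k)$ which oscillate in $\mu$ at the same scale as the $\ell$-th one, and each well matrix $\tau^{(k)}$ is approximately anti-diagonal with entries $k(\tfrac12\hb^{-1}\beta_k^2)$ and its reciprocal, one exponentially small and one exponentially large by (\ref{kappa-asymptotics}). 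The product therefore contains many terms of comparable (or wildly different) sizes, and there is no reason the dependence on the ``wrong'' phases $\hb^{-1}\fisf_k(\mu)$, $k\neq\ell$, collapses to a bounded, bounded-away-from-zero scalar; in particular those cosines can vanish arbitrarily close to $\mu_{\ell,n}^{WKB}(\hb)$. Your last paragraph correctly identifies this as the obstacle, but it is not a bookkeeping issue---it is the heart of the matter, and your displayed factorization is exactly what would need a proof. The paper's cutoff argument is designed precisely to avoid it.
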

\begin{proof}
Fix some $\ell\in\{1,2,\dots,L\}$ and some non-negative integer $n$ so that 
(\ref{assu-theo}) is true. By Theorem \ref{exist-ev-barrier}, there exists 
$\mu=\mu(\ell,n,\hb)$ obeying 
\be
\label{con-proof-1}
\mu=
\fisf_\ell^{-1}\bigg[\pi\Big(n+\tfrac{1}{2}\Big)\hb\bigg]+
\asympt(\hb^{\frac{5}{3}})
\quad
\text{as}
\quad
\hb\downarrow0
\ee
and such that 
\be\nn
Y_-(\z,\al_\ell(\mu),\hb)=
\sigma(\al_\ell(\mu),\hb)
Y_+(\z,\al_\ell(\mu),\hb)
\ee
where 
\begin{align*}
\alpha_\ell(\mu) &=
\sqrt{\frac{2}{\pi}\fisf_\ell(\mu)}\quad\text{and}\\
\sigma(\al_\ell(\mu),\hb) &=
(-1)^n+\asympt(\hb^\frac{2}{3})
\quad
\text{as}
\quad
\hb\downarrow0.
\end{align*}
Fix this value of $\mu$.

Let a cut-off function $\chi_\ell\in C_0^\infty(\R)$ be such that 
$\chi_\ell(\z)=1$ in some neighborhood of the interval 
$\z(\mathfrak{B}_\ell(\mu))$ (recall that $\zeta$ is a continuous and increasing 
function of $x$) and $\chi_\ell(\z)=0$ outside of some larger neighborhood of 
that interval. Particularly, then we have $\chi_\ell(\z)=0$ on all other 
intervals $\z(\mathfrak{B}_k(\mu))$ with $k\neq\ell$. We set 
\be\nn
f_{\ell,n}(\z,\hb)=
Y_-(\z,\al_\ell(\mu),\hb)\chi_\ell(\z).
\ee
Observe that $f_{\ell,n}(\cdot,\hb)\in C_0^2(\R)$. Since the function 
$Y_-(\z,\al_\ell(\mu),\hb)$ satisfies 
\be\nn
\frac{d^2Y_-}{d\z^2}=
\big[\hb^{-2}(\z^2-\alpha_\ell^2(\mu))+
\ps(\z,\alpha_\ell(\mu))\big]Y_-
\ee 
for $f_{\ell,n}$ we have
\be\nn
\begin{split}
\frac{d^2f_{\ell,n}}{d\z^2}-
\big[\hb^{-2}(\z^2-\alpha_\ell^2(\widetilde{\mu}))+
\ps(\z,\alpha_\ell(\mu))\big]f_{\ell,n}=\\
2\frac{dY_-}{d\z}\frac{d\chi_\ell}{d\z}+
Y_-\frac{d^2\chi_\ell}{d\z^2}=\\
\sigma(\al_\ell(\mu),\hb)
\bigg(
2\frac{dY_+}{d\z}\frac{d\chi_\ell}{d\z}+
Y_+\frac{d^2\chi_\ell}{d\z^2}
\bigg).
\end{split}
\ee

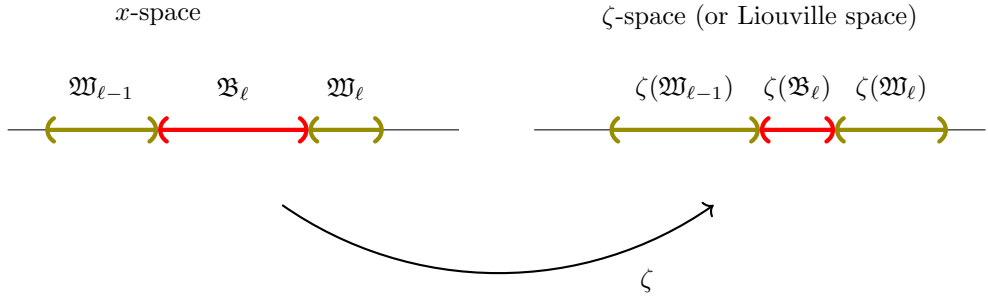
\begin{figure}[H]
\centering
\begin{tikzpicture}

%

\draw (0,0) -- (6,0);
\draw [(-), ultra thick, olive] (0.5,0) -- (2,0)
node[anchor=center] {};
\draw [(-), ultra thick, red] (2,0) -- (4,0)
node[anchor=south west] {};
\draw [(-), ultra thick, olive] (4,0) -- (5,0)
node[anchor=south west] {};



\draw (2,1.5) node {$x$-space};

\draw (10,1.5) node {$\zeta$-space (or Liouville space)};

\draw (1.25,0.55) node {$\mathfrak{W}_{\ell-1}$};
\draw (3,0.55) node {$\mathfrak{B}_{\ell}$};
\draw (4.5,0.55) node {$\mathfrak{W}_{\ell}$};

\draw (9,0.55) node {$\z(\mathfrak{W}_{\ell-1})$};
\draw (10.5,0.55) node {$\z(\mathfrak{B}_{\ell})$};
\draw (11.75,0.55) node {$\z(\mathfrak{W}_{\ell})$};

\draw (7,0) -- (13,0);
\draw [(-), ultra thick, olive] (8,0) -- (10,0)
node[anchor=south west] {};
\draw [(-), ultra thick, red] (10,0) -- (11,0)
node[anchor=north east] {};
\draw [(-), ultra thick, olive] (11,0) -- (12.5,0)
node[anchor=north east] {};

\draw[thick, ->] (3.65,-1) arc (235:305:5);

\draw (8.5,-2) node {$\zeta$};
\end{tikzpicture}
\caption{Barriers and wells in $x$-space and Liouville space.}
\label{bar-in-liouville}
\end{figure}

Due to the derivatives of $\chi_\ell$, the expression above differs 
from zero only on compact subsets of the intervals 
$\z(\mathfrak{W}_{\ell-1}(\mu))$ and $\z(\mathfrak{W}_\ell(\mu))$ 
(see Figure \ref{bar-in-liouville}). But the definitions of $Y_\pm$ 
and $\sigma$ along with (\ref{U-asympt}) show that this expression 
tends to zero as $\hb\downarrow0$ on both 
$\z(\mathfrak{W}_{\ell-1}(\mu))$ and 
$\z(\mathfrak{W}_\ell(\mu))$. This shows (cf. Proposition \ref{alpha-spectrum}) 
that $\lambda=i\mu$ is the desired EV. Finally, using the definition 
(\ref{wkb-ev}) and (\ref{con-proof-1}) we find that $\lam$ satisfies the 
specified asymptotics as $\hb\downarrow0$.
\end{proof}

\begin{remark}
We cannot exclude the possibility that EVs coming from different barriers (i.e. different 
values of $\ell$) get too close (closer than $\asympt(\hb^{5/3})$) or even coincide.  
So we could even have double EVs.  But this does not affect the applications to NLS.  
For the semiclassical analysis of the inverse scattering, the important fact is that we 
have different sets of EVs from different barriers,  each set with a different density.
\end{remark}

\subsection{Norming constants}

A straightforward application of Theorem \ref{theorem-b-r-multi} above, 
allows us to express the norming constants of the Dirac operator 
$\mathfrak{D}_{\hbar}$. In particular we see that the asymptotics  
obtained, agree with the the ones for a bell-shaped even potential 
(see Chapter 3 of \cite{kmm} or Corrollary 10.5 in \cite{h+k}). 
We have the following corollary.
\begin{corollary}
\label{norm-const}
Consider the Dirac operator $\mathfrak{D}_{\hbar}$ as in (\ref{dirac}) satisfying 
Assumption \ref{multi-hump-potential-assumption} and Hypothesis \ref{main-hypo}.
Suppose that $\lam(\hb)$ is an EV of this operator. Then there is a 
non-negative integer $n$ (depending both on $\lam$ and $\hb$) such that the 
corresponding norming constant has asymptotics 
\be\nn
(-1)^n + \asympt(\hb^\frac{2}{3})
\quad\text{as}\quad
\hbar\downarrow0.
\ee
\end{corollary}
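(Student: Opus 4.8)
The plan is to reduce the claim to the single-barrier computation already carried out, for a bell-shaped potential, in Chapter~3 of \cite{kmm} and in Corollary~10.5 of \cite{h+k}, by isolating the \say{active} barrier produced in the proof of Theorem~\ref{theorem-b-r-multi}. First I would fix the convention for the norming constant exactly as in \cite{kmm} (equivalently \cite{h+k}): after the explicit WKB prefactor has been divided out, the norming constant attached to an eigenvalue $\lambda=i\mu$ of $\mathfrak{D}_\hb$ is the real proportionality constant $\nu=\nu(\mu,\hb)$ between the solution of (\ref{schrodi-intermediate-version}) recessive as $x\downarrow-\infty$ and the one recessive as $x\uparrow+\infty$ (these are linearly dependent precisely because $\lambda$ is an eigenvalue). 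Through the substitution (\ref{change-of-var-dirac-to-schrodi}) and the Liouville transforms of \S\ref{reformulation}, I would then express $\nu$ in terms of the coefficients of the parabolic- and modified-parabolic-cylinder approximants $Y_\pm,Z_\pm$ of \S\ref{passage-barrier}--\S\ref{passage-well}.

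Next I would use the proof of Theorem~\ref{theorem-b-r-multi}: at an eigenvalue there is a barrier $\mathfrak{B}_\ell=\mathfrak{B}_\ell(\mu,\hb)$ carrying a fixing condition (\ref{fix-cond-def}) at \emph{each} of its two endpoints, and for it the Bohr--Sommerfeld relation (\ref{B-R-multi}) holds. Since $\al_\ell^2(\mu)=\tfrac2\pi\fisf_\ell(\mu)$, the first line of (\ref{barrier-connection}) then gives, on the $\z$-image of $\mathfrak{B}_\ell$,
\be\nn
Y_+(\z,\al_\ell(\mu),\hb)=\s_{11}(\al_\ell(\mu),\hb)\,Y_-(\z,\al_\ell(\mu),\hb),\qquad
\s_{11}(\al_\ell(\mu),\hb)=(-1)^{n}+\asympt(\hb^{\frac23}),
\ee
with $n$ the integer in (\ref{B-R-multi}); this is exactly the mechanism already used in Theorems~\ref{exist-ev-barrier} and \ref{exist-ev-theorem}. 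I would then propagate the recessive-at-$-\infty$ solution rightward across $\mathfrak{W}_0,\mathfrak{B}_1,\mathfrak{W}_1,\dots$ up to $\mathfrak{B}_\ell$, and the recessive-at-$+\infty$ solution leftward across $\mathfrak{W}_L,\mathfrak{B}_L,\dots$ down to $\mathfrak{B}_\ell$, using: on the two infinite wells, the $\z_j=\pm\infty$ version of Theorem~\ref{main-thm-barrier} (cf. the Remark following it and Theorem~6.1 of \cite{h+k}) together with the large-$|\z|$ asymptotics of the parabolic cylinder functions, in order to identify the $L^2$ solution at $\pm\infty$ with a definite multiple of $Y_\mp$ on the adjacent finite barrier; on the finite wells, the connection formulae (\ref{well-connection}); and on the intermediate barriers, the connection formulae (\ref{barrier-connection}) together with the fixing conditions provided by Theorem~\ref{fix-condo-at-least-on-one-end}. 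Thanks to the two fixing conditions at the endpoints of $\mathfrak{B}_\ell$, the two propagated solutions are, on $\z(\mathfrak{B}_\ell)$, multiples of $Y_-$ and of $Y_+$ respectively, so that $\nu$ reduces to a product of transport factors times $\s_{11}(\al_\ell(\mu),\hb)^{\pm1}$.

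The decisive step is then purely bookkeeping. On each finite well the factors $k(\tfrac12\hb^{-1}\beta^2)^{\pm1}$ in (\ref{well-connection}) carry the exponentially small tunnelling amplitude, and these are precisely what the WKB renormalization of the norming constant removes, so they do not enter the leading term. On every barrier \emph{other} than $\mathfrak{B}_\ell$ the connection matrix of (\ref{barrier-connection}) is, to leading order, an orthogonal reflection (its rows are $(\sin\vphi,\cos\vphi)$ and $(\cos\vphi,-\sin\vphi)$, with $\vphi=\tfrac12\pi\hb^{-1}\al^2$); hence it preserves the amplitude $\sqrt{\gamma_\pm^2+\delta_\pm^2}$ up to a factor $1+\asympt(\hb^{\frac23})$ and alters the overall sign only by $\pm1$. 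Finally $\mathfrak{B}_\ell$ contributes $\s_{11}(\al_\ell(\mu),\hb)^{\pm1}=(-1)^{n}+\asympt(\hb^{\frac23})$, using $\{(-1)^m+\asympt(\hb^{\frac23})\}^{-1}=(-1)^m+\asympt(\hb^{\frac23})$. Multiplying these contributions, the renormalized $\nu$ equals $(-1)^{n}+\asympt(\hb^{\frac23})$ for the integer $n$ -- depending on $\lambda$ and $\hb$ -- that collects all the accumulated signs; for $L=1$ this is verbatim the bell-shaped computation of Corollary~10.5 of \cite{h+k} and of Chapter~3 of \cite{kmm}.

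The hardest part, and the only place where the $L=1$ argument must be followed with care, is the matching at the two infinite ends: showing that the $L^2$ solution at $\pm\infty$ is, on the neighbouring finite barrier, a specific ($\hb$-dependent) multiple of $Y_\mp$, which requires the $\z_j=\pm\infty$ refinement of Theorem~\ref{main-thm-barrier} and the passage between the parabolic-cylinder approximants and the purely exponential normalization of the Jost solutions; tied to this is the need to pin down, in accordance with \cite{kmm}, precisely which WKB prefactor is divided out so that the leading order of the norming constant is $\asympt(1)$ rather than exponentially large or small. Once that normalization is fixed, the remaining propagation is routine, and the $(-1)^{\,n}$ structure is forced by the orthogonality of the intermediate barrier connection matrices together with the resonant value of $\s_{11}$ on the active barrier.
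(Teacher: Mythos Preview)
The paper's proof is far shorter than yours. It simply invokes Theorem~\ref{theorem-b-r-multi} to obtain the quantization (\ref{B-R-multi-equiv}) at some active barrier $\mathfrak{B}_\ell$, notes that at an eigenvalue $\sigma_{12}=0$ so $Y_-=\sigma_{11}\,Y_+$ there, and then reads off
\[
\sigma_{11}(\al_\ell(\mu),\hb)=\sin\big(\tfrac12\pi\hb^{-1}\al_\ell(\mu)^2\big)+\asympt(\hb^{2/3})=(-1)^n+\asympt(\hb^{2/3})
\]
from (\ref{barrier-connection}) together with the Bohr--Sommerfeld relation. No propagation across intermediate barriers or wells is carried out: the paper identifies the norming constant directly with $\sigma_{11}$ at the active barrier, by analogy with the single-lobe computation of Corollary~10.5 in \cite{h+k}.

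Your route --- tracing both Jost-type solutions through every well and barrier and collecting the transport factors --- is the more scrupulous one if one insists that the norming constant is the global proportionality between the two recessive solutions. But one step in your bookkeeping is loose. For a \emph{non-active} intermediate barrier, the claim that the orthogonal reflection in (\ref{barrier-connection}) ``alters the overall sign only by $\pm1$'' is not correct: that matrix rotates/reflects the coefficient pair $(\gamma,\delta)$ by an angle $\tfrac12\pi\hb^{-1}\al^2$ which is \emph{not} quantized on the non-active barriers, so the output is not simply $\pm$ the input. Amplitude is preserved, yes, but the scalar proportionality you ultimately need is not merely a sign unless the coefficient vector is already aligned with the $Y_\pm$ axis --- and that alignment is precisely what the fixing conditions supply at the active barrier but not, in general, at the others. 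To make your propagation argument rigorous you would have to track the full $(\gamma,\delta)$ through every step and show that the non-sign contributions are exactly those absorbed by the WKB prefactor you divide out, or else argue (as the paper tacitly does) that the normalized norming constant is already determined locally at $\mathfrak{B}_\ell$.
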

\begin{proof}
Since $\lam(\hb)=i\mu(\hb)$ is an EV, by Theorem \ref{theorem-b-r-multi} 
we arrive at (\ref{B-R-multi-equiv}) for some $\ell\in\{1,\dots,L\}$ and $n\in\N$, 
i.e.
\be\nn
\mu(\hb)=\fisf_\ell^{-1}\bigg[\pi\Big(n+\tfrac{1}{2}\Big)\hb\bigg]+
\asympt(\hb^{\frac{5}{3}})
\quad
\text{as}
\quad
\hb\downarrow0.
\ee
But since $\lam(\hb)=i\mu(\hb)$ is an EV, from (\ref{sigma1}) and 
(\ref{barrier-connection}) we obtain
\be\nn
Y_-(\z,\al_\ell(\mu(\hb)),\hb)=
\sigma_{11}(\al_\ell(\mu(\hb)),\hb)
Y_+(\z,\al_\ell(\mu(\hb)),\hb)
\ee
where 
\begin{align*}
\alpha_\ell(\mu(\hb)) &=
\sqrt{\frac{2}{\pi}\fisf_\ell(\mu(\hb))}\quad\text{and}\\
\sigma_{11}(\al_\ell(\mu(\hb)),\hb) &=
\sin\Big(\tfrac{1}{2}\pi\hb^{-1}\al_\ell(\mu(\hb))^2\Big)+
\asympt(\hb^{\frac{2}{3}})
\quad
\text{as}
\quad
\hb\downarrow0.
\end{align*}
But from the asymptotics for $\mu{\hb}$ above (or equivalently 
(\ref{B-R-multi-equiv})) we find
\be\nn
\sigma_{11}(\al_\ell(\mu(\hb)),\hb)=
(-1)^n+\asympt(\hb^\frac{2}{3})
\quad
\text{as}
\quad
\hb\downarrow0.
\ee
And this completes the proof.
\end{proof}

\subsection{Eigenvalues near zero}
\label{near-zero-evs}

For the applications to the semiclassical theory of the focusing NLS equation,  
it is important to understand the behavior of the EVs near 0.  The whole investigation 
is essentially the same to the one we exploited in \cite{h+k}.  We repeat once more the 
steps here since we wish to generalize the results we presented in that work of ours.
 
We begin with a potential function $A$ that satisfies 
Assumption \ref{multi-hump-potential-assumption}.  Such a function has finitely many 
local minima,  say $N\in\N_0$, accounting for the case of a function having none ($N=0$); 
if there are $N\in\N$ local minima,  we denote them by $m_j$,  $j=1,\dots, N$.  
We set $\tilde{m}$ to be 
\be
\label{mu-tilde}
\tilde{m}=
\begin{cases}
A_{max},\quad\text{if}\quad N=0\\
\min_{j\in\{1,\dots,N\}}m_j,\quad\text{if}\quad N\in\N.
\end{cases}
\ee
So,  in this paragraph,  we would like to investigate the 
(semiclassical) behavior of the EVs of $\mathfrak{D}_\hb$ (with potential $A$) 
that lie in $i(0,\tilde{m})$.   

We start by considering $\mu(\hb)\in(0,\tilde{m})$ so that $\m(\hb)\downarrow0$ 
as $\hb\downarrow0$.  
We emphasize that we are in the presence of only one (finite) barrier 
(see Figure \ref{pic-near-zero-barrier}).  In this setting,  using (\ref{f-schrodi-barrier}),  
(\ref{g-schrodi-barrier}),  (\ref{p-barrier}) and having in mind that $A(b_\pm)=\mu$ 
(for the notation,  consult \S\ref{liouville-transform-barrier}),  we define
\be
\label{f-h}
\bar{f}(x,\hb)=f\big(x,b_+(\mu(\hb))\big)=\mu(\hb)^2-A^2(x)
\ee
\be
\label{g-h}
\bar{g}(x,\hb)=g\big(x,b_+(\mu(\hb))\big)=
\frac{3}{4}\Big[\frac{A'(x)}{A(x)+\mu(\hb)}\Big]^2-
\frac{1}{2}\frac{A''(x)}{A(x)+\mu(\hb)}
\ee 
and
\be\nn
\bar{f}(x,\hb)=
\big(x-b_-(\mu(\hb))\big)\big(x-b_+(\mu(\hb))\big)\bar{p}(x,\hb).
\ee
where
\be\nn
\bar{p}(x,\hb)=p\big(x,b_+(\mu(\hb))\big).
\ee

\begin{figure}[H]
\centering
\begin{tikzpicture}[scale=1]
\begin{axis}[
    legend pos = north west,
    axis lines = none,
    xlabel = {},
    ylabel = {},
    axis line style={draw=none},
    tick style={draw=none}]
\addplot [
    domain=-14:14, 
    samples=500, 
    color=red,]
{1/(1+(x+5)^2)+2/(1+(x-5)^2)};
\addlegendentry{$A(x)$} 
\end{axis}
\draw[scale=0.5,domain=0.6:1.25,dashed,variable=\y,blue]  
plot ({2.7},{\y});
\draw[scale=0.5,domain=0.6:1.25,dashed,variable=\y,blue]  
plot ({11.5},{\y});
\node[circle,inner sep=1.5pt,fill=black] at (1.35,0.55) {};
\node[circle,inner sep=1.5pt,fill=black] at (5.75,0.55) {};
\draw [line width=1mm] (1.34,0.3) -- (5.75,0.3);
\draw[dashed] (0.5,0.55) -- (6.5,0.55);
\draw (0,0.3) -- (7,0.3);
\draw (1.3,0) node {$b_-$};
\draw (5.9,0) node {$b_+$};
\draw (6.5,1) node {$0<\mu(\hb)<\tilde{m}$};
\end{tikzpicture}
\caption{The potential barrier in a case of near zero EVs.}
\label{pic-near-zero-barrier}
\end{figure}
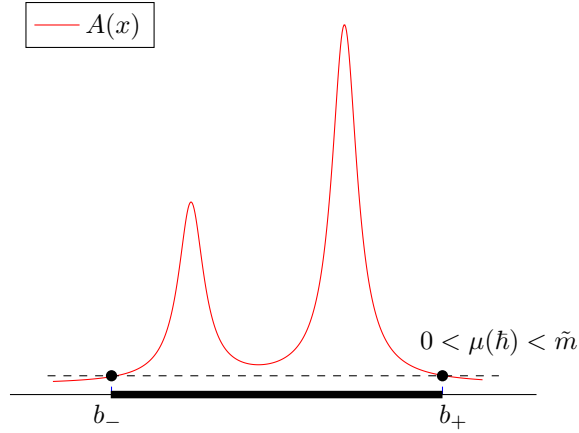

We apply the Liouville transform once again (as in \S \ref{liouville-transform-barrier}),  and 
arrive at the following proposition (cf.  Proposition \ref{propo-schrodi-to-liouvi}).  
\begin{proposition}
For every $\hb>0$,  equation
\be\nn
\frac{d^2y}{dx^2}=[\hbar^{-2}\bar{f}(x,\hb)+\bar{g}(x,\hb)]y,
\quad x\in\R
\ee
is transformed to equation
\be\nn
\frac{d^2X}{d\z^2}=
\big[\hb^{-2}(\z^2-\alpha(\mu(\hb))^2)+\ps(\z,\alpha(\mu(\hb)))\big]X,
\quad 
\z\in\R
\ee
in which $\z$ is given by the Liouville transform (\ref{barrier-case}), 
$\al$ is given by (\ref{alpha}) and the function $\ps(\z,\alpha(\mu(\hb)))$ is given 
by the formula
\begin{multline*}
\ps(\z,\al(\mu(\hb)))=
\frac{1}{4}\frac{3\z^2 +2\al(\mu(\hb))^2}{[\z^2 -\al(\mu(\hb))^2]^2}
+\frac{1}{16}\frac{\z^2 -\al(\mu(\hb))^2}{\bar{f}(x,\hb)^3}\\
\cdot\big[4\bar{f}(x,\hb)\bar{f}''(x,\hb)-5\bar{f}'(x,\hb)^2\big]+
[\z^2 -\al(\mu(\hb))^2]\frac{\bar{g}(x,\hb)}{\bar{f}(x,\hb)}
\end{multline*}
where prime denotes differentiation with respect to $x$.  
\end{proposition}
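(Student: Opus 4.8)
The plan is to reduce this statement to Proposition \ref{propo-schrodi-to-liouvi} together with the equivalent formula (\ref{psi-barrier-equivalent}) for the error term, applied to the single finite barrier $\big(b_-(\mu(\hb)),b_+(\mu(\hb))\big)$ present at energy level $\mu(\hb)\in(0,\tilde{m})$. First I would record the bookkeeping: writing $b:=b_+(\mu(\hb))$, the functions in (\ref{f-h}), (\ref{g-h}) are exactly $f(\cdot,b)$, $g(\cdot,b)$ from (\ref{f-schrodi-barrier}), (\ref{g-schrodi-barrier}), and the factorisation of $\bar f$ introduced just before the statement is precisely (\ref{p-barrier}) with $\bar p(x,\hb)=p(x,b)$. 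Since $A$ obeys Assumption \ref{multi-hump-potential-assumption} it is $C^4$ on $\R$, of class $C^5$ near each of its extrema, and $A,A',A''$ decay at the rates $\mathcal{O}(|x|^{-1-\tau})$, $\mathcal{O}(|x|^{-2-\tau})$, $\mathcal{O}(|x|^{-3-\tau})$ as $|x|\to\infty$; hence all the hypotheses underlying the barrier analysis of \S\ref{passage-barrier} are in force here --- in particular properties (i)--(iv) from the proof of Lemma \ref{lemma-error-cont-barrier} --- the only structural difference being that the two wells flanking the barrier are now both of infinite width. Note also that for each fixed $\hb>0$ one has $\mu(\hb)>0$ strictly, so $b_-<b_+$ and the barrier is never degenerate; no critical-point limit is needed.

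Next I would carry out the Liouville transform exactly as in \S\ref{liouville-transform-barrier}: set $X=\dot x^{-1/2}y$ and impose $\dot x^2\bar f(x,\hb)=\z^2-\alpha^2$, normalised so that $x=b_\pm(\mu(\hb))$ corresponds to $\z=\pm\alpha$. Integrating this relation between $b_-$ and $b_+$ as in (\ref{barrier-integral-form}) forces $\alpha^2=\tfrac{2}{\pi}\int_{b_-}^{b_+}\sqrt{A^2(t)-\mu^2}\,dt=\alpha(\mu(\hb))^2$, i.e. (\ref{alpha}). Because $A(x)\to0$ as $x\to\pm\infty$, the integrals appearing in the analogues of (\ref{zeta-barrier-left}) and (\ref{zeta-barrier-right}) diverge at the two infinite ends, so $\z$ maps $\R$ one-to-one onto all of $\R$ (this is the phenomenon noted in the Remark after Theorem \ref{main-thm-barrier} and underlying the infinite-well variants, Assumptions \ref{finite-barrier-left-infinite-well-assumption}--\ref{finite-barrier-left-right-infinite-wells-assumption}); in particular the transformed equation is posed on $\z\in\R$, as claimed. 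Substituting (\ref{barrier-case}) into (\ref{schrodi-barrier-liouville-initial-form}) then yields the asserted equation with error term $\psi\big(\z,\alpha(\mu(\hb))\big)$ given by (\ref{psi-barrier}), and the explicit displayed formula is obtained by the same algebraic rearrangement that produced (\ref{psi-barrier-equivalent}), now with $b=b_+(\mu(\hb))$ and $f,g$ replaced by $\bar f,\bar g$.

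Continuity of $\psi$ in $\z$ (and the smoothness of the eventual solutions $X$ in $\z$) then follows from Olver's Lemma I exactly as in Lemma \ref{lemma-error-cont-barrier}, with $\hb$ --- hence $\mu(\hb)$ and $b_+(\mu(\hb))$ --- held fixed throughout. The only point requiring genuine care is the behaviour at the infinite ends: one must verify that $\psi(\z,\alpha)$ and the Liouville map $\z\mapsto x$ remain well controlled as $|x|\to\infty$, and this is precisely where the decay hypotheses of Assumption \ref{multi-hump-potential-assumption} enter, exactly as in the infinite-well cases already handled in \S\ref{passage-barrier}. Since nothing else changes relative to Proposition \ref{propo-schrodi-to-liouvi}, the argument is complete.
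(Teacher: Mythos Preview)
Your proposal is correct and follows essentially the same route as the paper: the paper states this proposition without an explicit proof, introducing it simply as ``We apply the Liouville transform once again (as in \S\ref{liouville-transform-barrier}), and arrive at the following proposition (cf.\ Proposition \ref{propo-schrodi-to-liouvi}),'' i.e.\ a direct specialisation of Proposition \ref{propo-schrodi-to-liouvi} together with the equivalent formula (\ref{psi-barrier-equivalent}) to the single barrier at level $\mu(\hb)$. Your third paragraph (continuity of $\psi$ via Olver's Lemma I and the behaviour at the infinite ends) goes beyond what this particular proposition asserts --- the paper treats those points separately in the text immediately following --- but the core argument is the same.
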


\begin{remark}
By recalling the definition of $\al$ in (\ref{alpha}),  and the fact that 
$b_\pm(\mu(\hb))\rightarrow\pm\infty$,  as $\hb\downarrow0$,  we obtain
\be
\label{alpha-h-limit}
\al(\mu(\hb))
\uparrow\sqrt{\tfrac{2}{\pi}\|A\|_{L^1(\R)}}
\quad\text{as}\quad\hb\downarrow0.
\ee
 \end{remark}
 
It is easy to see that for each value of $\hb$,  the functions $\bar{f}$, 
$\bar{g}$ and $\bar{p}$ satisfy properties $(i)$ through $(iv)$ in the proof of 
Lemma \ref{lemma-error-cont-barrier} in \S\ref{error-cont-barrier}.  
This in turn implies -again with the help of Lemma I in \cite{olver1975}- that for each 
$\hb$ the function
\begin{multline}
\label{psi-h}
\ps(\z,\al(\mu(\hb)))=
\frac{1}{4}\frac{3\z^2 +2\al(\mu(\hb))^2}{[\z^2 -\al(\mu(\hb))^2]^2}
+\frac{1}{16}\frac{\z^2 -\al(\mu(\hb))^2}{\bar{f}(x,\hb)^3}\\
\cdot\big[4\bar{f}(x,\hb)\bar{f}''(x,\hb)-5\bar{f}'(x,\hb)^2\big]+
[\z^2 -\al(\mu(\hb))^2]\frac{\bar{g}(x,\hb)}{\bar{f}(x,\hb)}
\end{multline}
is continuous in the corresponding region of the $(\z,\al)$-plane. 

So in order to have a conclusion such as Theorem \ref{main-thm-barrier} 
and eventually results like Theorem \ref{B-R-existence} and Theorem \ref{exist-ev-barrier}, 
we need to investigate the convergence of the integral in (\ref{variation-total}) 
(cf.  proof of Theorem \ref{main-thm-barrier} or proof of Theorem $6.3$ in \S $6$ of 
\cite{h+k}),  i.e.  
\be
\label{hbar-total-var}
\int_{0}^{+\infty}\frac{|\ps(t,\al(\mu(\hb)))|}{\Om(t\sqrt{2\hb^{-1}})}dt.
\ee

Here we need to place an additional assumption on the behavior of the potential 
$A$ at $\pm\infty$.  
\begin{assumption}
\label{near-0-evs-potential-assume}
Suppose  there are real positive numbers $1<r^+ \leq s^+$,  so that
$$
\frac{C_1^+(x)}{ |x|^{s^+}} \leq A(x) \leq\frac{C_2^+(x)}{ |x|^{r^+}}\quad\text{for}\quad x>0
$$
where $C_1^+, C_2^+$ are bounded functions and $2 r^+ - s^+ > \frac{1}{3}$; and 
there are real positive numbers $1<r^- \leq s^-$,  so that
$$
\frac{C_1^-(x)}{ |x|^{s^-}} \leq A(x) \leq\frac{C_2^-(x)}{ |x|^{r^-}}\quad\text{for}\quad x<0
$$
where $C_1^-, C_2^-$ are bounded functions and $2 r^- - s^- > \frac{1}{3}$.  
Alternatively,  suppose there are real positive numbers $0<r\leq s$ so that
$$
C_1(x) e^{-|x|^s} \leq A(x) \leq C_2(x) e^{-|x|^r},\quad x\in\R
$$
where $C_1, C_2$ are bounded functions.
\end{assumption}

Finally,  recall (\ref{zeta-barrier-right}) where now $x_2=+\infty$.  It shows that 
$x\uparrow+\infty$ as $\z\uparrow+\infty$.  The lemma below deals with the asymptotic 
behavior of $x$ as $\z\uparrow+\infty$.  It shall be used to allow us understand the 
nature of $\ps$ for \say{big} $\z$. 
\begin{lemma}
\label{x-at-big-zeta}
Considering $x$ as a function of $\z$ we see that
\be
\label{x-asymptotics}
x=\frac{\z^2}{2\mu}\Big[1+\asympt\Big(\tfrac{\log\z}{\z^2}\Big)\Big]
\quad
\text{as}
\hspace{7pt}
\z\uparrow+\infty
\ee
uniformly with respect to $\mu=A(b_\pm)\in(0,\tilde{m})$.
\end{lemma}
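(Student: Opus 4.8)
The plan is to read off the asymptotics directly from the Liouville relation (\ref{zeta-barrier-right}), which in the present setting (where $x_2=+\infty$, $b\equiv b_+$ and $\mu=A(b_\pm)$) reads
\be\nn
\int_{b_+}^{x}\sqrt{\mu^2-A^2(t)}\,dt=\tfrac{1}{2}\z\sqrt{\z^2-\al^2}-\tfrac{1}{2}\al^2\arcosh\!\Big(\tfrac{\z}{\al}\Big),
\ee
with $\al=\al(\mu)$ as in (\ref{alpha}). First I would expand the right-hand side as $\z\uparrow+\infty$: from $\sqrt{\z^2-\al^2}=\z-\tfrac{\al^2}{2\z}+\asympt(\al^4\z^{-3})$ and $\arcosh(\z/\al)=\log(2\z/\al)+\asympt(\al^2\z^{-2})$ the right-hand side equals $\tfrac{1}{2}\z^2+\asympt(\log\z)$. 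This bound is uniform in $\mu$ because $\al(\mu)$ stays in a fixed compact set: by (\ref{alpha}), $\al^2(\mu)=\tfrac{2}{\pi}\int_{b_-}^{b_+}\sqrt{A^2-\mu^2}\,dt\le\tfrac{2}{\pi}\|A\|_{L^1(\R)}$ (in accordance with (\ref{alpha-h-limit})), so $\al^2$ is bounded and $\al^2\log(1/\al)$ is bounded, whence $\al^2\arcosh(\z/\al)=\asympt(\log\z)$ uniformly in $\mu$.

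Next I would rewrite the left-hand side. Using $\sqrt{\mu^2-A^2(t)}=\mu-\frac{A^2(t)}{\mu+\sqrt{\mu^2-A^2(t)}}$, the integral equals $\mu(x-b_+)-R(x,\mu)$, where $R(x,\mu)=\int_{b_+}^{x}\frac{A^2(t)}{\mu+\sqrt{\mu^2-A^2(t)}}\,dt\ge0$. Since $A(t)<\mu$ for every $t>b_+$ (as $b_+$ is the right endpoint of the barrier and $A$ is decreasing to its right), there $A^2(t)/\mu\le A(t)$, so
\be\nn
0\le R(x,\mu)\le\int_{b_+}^{+\infty}\frac{A^2(t)}{\mu}\,dt\le\int_{b_+}^{+\infty}A(t)\,dt\le\|A\|_{L^1(\R)}
\ee
uniformly in $x$ and $\mu$; moreover the map $y\mapsto y\,A(y)$ is continuous on $[b_0,+\infty)$ and tends to $0$ at $+\infty$ by the decay hypothesis (Assumption \ref{multi-hump-potential-assumption}, respectively Assumption \ref{near-0-evs-potential-assume} in the present subsection), hence is bounded, so $\mu b_+=b_+A(b_+)=\asympt(1)$ uniformly in $\mu$. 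Equating the two expressions and inserting the two displays, I obtain $\mu x=\tfrac{1}{2}\z^2+\asympt(\log\z)$ as $\z\uparrow+\infty$, uniformly in $\mu\in(0,\tilde{m})$; dividing by $\mu$ and pulling out $\z^2/(2\mu)$ gives exactly (\ref{x-asymptotics}).

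The one genuinely delicate point — and the main obstacle — is the uniformity in $\mu$ as $\mu\downarrow0$, equivalently as $b_+\to+\infty$: one must check that neither $\mu b_+$ nor the correction $R(x,\mu)$ grows, and that the logarithmic term $\al^2\arcosh(\z/\al)$ remains $\asympt(\log\z)$ even as $\al$ degenerates; this is precisely where the decay of $A$ and the boundedness of $\al(\mu)$ are used. Once those uniform $\asympt(1)$- and $\asympt(\log\z)$-bounds are in place the remainder is an elementary Taylor expansion; in particular no integrability problem arises at the turning point $t=b_+$, since $\sqrt{\mu^2-A^2(t)}$ is continuous there and vanishes, while $\frac{A^2(t)}{\mu+\sqrt{\mu^2-A^2(t)}}$ stays bounded near $b_+$.
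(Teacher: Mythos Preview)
Your argument is correct and complete: you extract the asymptotics directly from the Liouville relation (\ref{zeta-barrier-right}) by expanding both sides, and you handle the uniformity in $\mu$ carefully---bounding $\al^2\arcosh(\z/\al)$ via the uniform boundedness of $\al$, bounding the remainder $R(x,\mu)$ by $\|A\|_{L^1}$ through the observation $A^2(t)/\mu\le A(t)$ on $(b_+,+\infty)$, and bounding $\mu b_+=b_+A(b_+)$ via the decay of $A$. The paper itself does not give a proof here but simply refers to Lemma~5.2 of \cite{h+k}; your self-contained derivation is exactly the natural route and is in the same spirit as that reference.
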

\begin{proof}
See Lemma 5.2 in \S 5 of \cite{h+k}.
\end{proof}

It is now straightforward to check that  Olver's theory is uniformly applicable
all the way to $\mu=0$. 
For example, consider first the case where $A$ is rational:
\be\label{a-rational}
A(x)=\frac{1}{|x|^r}\quad\text{for}\quad|x|\geq1
\ee
where $r>1$ (clearly satisfying 
Assumption \ref{finite-barrier-left-right-infinite-wells-assumption} and 
Assumption \ref{near-0-evs-potential-assume}).  In this case, using (\ref{x-asymptotics}) 
we get 
\be
\label{x-vs-z-rational}
x=\frac{\z^2}{2\mu(\hb)}\Big[1+\asympt\Big(\tfrac{\log\z}{\z^2}\Big)\Big]
\quad
\text{as}
\hspace{7pt}
\z\uparrow+\infty
\ee
while using (\ref{f-h}), (\ref{g-h}), (\ref{psi-h}), (\ref{a-rational}) and 
(\ref{x-vs-z-rational}) we arrive at
\be
\ps(\z,\al(\mu(\hb)))=
\ps_{1}(\z,\al(\mu(\hb)))\Big[1+\asympt\Big(\tfrac{\log\z}{\z^2}\Big)\Big]
\quad
\text{as}
\hspace{7pt}
\z\uparrow+\infty
\ee
uniformly in $\al$ and consequently in $\hb$, where
\begin{align*}
\ps_{1}(\z,\al(\mu(\hb)))&=
\frac{1}{4}\frac{3\z^2 +2\al(\mu(\hb))^2}{[\z^2 -\al(\mu(\hb))^2]^2}\\
& -
r(2r+1)2^{2r+1}\mu(\hb)^{2r-2}\z^{4r-4}[\z^2 -\al(\mu(\hb))^2]\cdot\\
&
\hspace{5cm}
\frac{\z^{4r}+\frac{r-2}{2r+1}2^{2r-1}\mu(\hb)^{2r-2}}
{[\z^{4r}-2^{2r}\mu(\hb)^{2r-2}]^3}\\
& -
r(r+1)2^{r+1}\mu(\hb)^{r-1}\z^{4r-4}[\z^2 -\al(\mu(\hb))^2]\cdot\\
&
\hspace{4cm}
\frac{\z^{2r}-\frac{r-2}{r+1}2^{r-1}\mu(\hb)^{r-1}}
{[\z^{2r}-2^r\mu(\hb)^{r-1}][\z^{2r}+2^r\mu(\hb)^{r-1}]^3}.
\end{align*}

Consider now the case where $A$ is exponentially decreasing:
\be\label{a-exponentional}
A(x)=e^{-|x|^r}\quad\text{for}\quad|x|\geq1
\ee
where $r>0$; it clearly satisfies 
Assumption \ref{finite-barrier-left-right-infinite-wells-assumption} and 
Assumption \ref{near-0-evs-potential-assume}).  Using (\ref{f-h}),  (\ref{g-h}),  
(\ref{psi-h}),  (\ref{x-asymptotics}),  (\ref{a-rational}) and (\ref{x-vs-z-rational}) 
we arrive at
\be
\ps(\z,\al(\mu(\hb)))=
\ps_{2}(\z,\al(\mu(\hb)))\Big[1+\asympt\Big(\tfrac{\log\z}{\z^2}\Big)\Big]
\quad
\text{as}
\hspace{7pt}
\z\uparrow+\infty
\ee
uniformly in $\al$ and consequently in $\hb$, where
\begin{align*}
\ps_{2}(\z,\al(\mu(\hb)))&=
\frac{1}{4}\frac{3\z^2 +2\al(\mu(\hb))^2}{[\z^2 -\al(\mu(\hb))^2]^2}\\
& +
\frac{r}{2^r}\frac{\z^2 -\al(\mu(\hb))^2}{\mu(\hb)^{r-2}}
\frac{\z^{2r-4}\exp\big\{-\frac{\z^{2r}}{2^{r-1}\mu(\hb)^r}\big\}}
{\Big[\exp\big\{-\frac{\z^{2r}}{2^{r-1}\mu(\hb)^r}\big\}-\mu(\hb)^2\Big]^3}\cdot\\
&
\hspace{1cm}
\bigg[\frac{r}{2^r}\frac{\z^{2r}}{\mu(\hb)^r}\exp\Big\{-\frac{\z^{2r}}{2^{r-1}\mu(\hb)^r}\Big\}
+2(r-1)\exp\Big\{-\frac{\z^{2r}}{2^{r-1}\mu(\hb)^r}\Big\}\\
& 
\hspace{5cm}
+
\frac{r}{2^{r-2}}\frac{\z^{2r}}{\mu(\hb)^{r-2}}-2(r-1)\mu(\hb)^2\bigg]\\
& -
\frac{r}{2^{r-1}}\frac{\z^2 -\al(\mu(\hb))^2}{\mu(\hb)^{r-2}}
\frac{\z^{2r-4}\exp\big\{-\frac{\z^{2r}}{2^r\mu(\hb)^r}\big\}}
{\Big[\exp\big\{-\frac{\z^{2r}}{2^r\mu(\hb)^r}\big\}-\mu(\hb)\Big]
\Big[\exp\big\{-\frac{\z^{2r}}{2^r\mu(\hb)^r}\big\}+\mu(\hb)\Big]^3}\cdot\\
&
\hspace{1cm}
\bigg[\frac{r}{2^{r+1}}\frac{\z^{2r}}{\mu(\hb)^r}\exp\Big\{-\frac{\z^{2r}}{2^r\mu(\hb)^r}\Big\}
+(r-1)\exp\Big\{-\frac{\z^{2r}}{2^r\mu(\hb)^r}\Big\}\\
& 
\hspace{6cm}
-
\frac{r}{2^r}\frac{\z^{2r}}{\mu(\hb)^{r-1}}+(r-1)\mu(\hb)\bigg].
\end{align*}

These asymptotics imply that for each $\hb>0$, the integral in 
(\ref{hbar-total-var}) converges; furthermore, this convergence is 
uniform in $\alpha$.  Now similar computations 
can be easily performed for any $A$ satisfying 
Assumption \ref{finite-barrier-left-right-infinite-wells-assumption} and 
Assumption \ref{near-0-evs-potential-assume}.  The result still remains the same.  
The integral in (\ref{hbar-total-var}) converges uniformly in $\alpha$.  
A variation of Theorem \ref{main-thm-barrier} can be applied to guarantee the 
existence of approximate solutions in these cases too. 
Hence,  we arrive at the following theorem.  
\begin{theorem}
For every $\hb>0$,  equation 
\be
\label{h-dependent-equation}
\frac{d^2Y}{d\z^2}=
\big[\hb^{-2}\big(\z^2-\alpha(\mu(\hb))^2\big)+\ps(\z,\al(\mu(\hb)))\big]Y
\ee
has in the region $[0,+\infty)\times[0,\al(\mu(\hb))]$ of the 
$(\z,\al)$-plane solutions $Y_+$ and $Z_+$ 
which are continuous, 
have continuous first and second partial $\z$-derivatives, and are 
given by 
\bea\nn
Y_+(\z,\al(\mu(\hb)),\hb)=
U(\z\sqrt{2\hb^{-1}},-\tfrac{1}{2}\hb^{-1}\al(\mu(\hb))^2)+
\eps(\z,\al(\mu(\hb)),\hb)\\
\nn
Z_+(\z,\al(\mu(\hb)),\hb)=
\ol U(\z\sqrt{2\hb^{-1}},-\tfrac{1}{2}\hb^{-1}\al(\mu(\hb))^2)+
\ol{\eps}(\z,\al(\mu(\hb)),\hb)
\eea
(cf. (\ref{y1-approx}), (\ref{y2-approx})) where for the remainders
we have the relations
\begin{multline}\nn
\frac{|\eps(\z,\al(\mu(\hb)),\hb)|}
{\msf(\z\sqrt{2\hb^{-1}},-\tfrac{1}{2}\hb^{-1}\al(\mu(\hb))^2)},
\frac{\Big|\frac{\partial \eps}
{\partial\z}(\z,\al(\mu(\hb)),\hb)\Big|}{\sqrt{2\hb^{-1}}
\nsf(\z\sqrt{2\hb^{-1}},-\tfrac{1}{2}\hb^{-1}\al(\mu(\hb))^2)}\\
\leq
\frac{1}{\esf(\z\sqrt{2\hb^{-1}},-\tfrac{1}{2}\hb^{-1}\al(\mu(\hb))^2)}
\Big(
\exp\big\{
\tfrac{1}{2}(\pi\hb)^{\frac{1}{2}}l_1(-\tfrac{1}{2}\hb^{-1}\al(\mu(\hb))^2)
\mathcal{V}_{\z,+\infty}[H](\al(\mu(\hb)),\hb)
\big\}
-1
\Big)
\end{multline}
and
\begin{multline}\nn
\frac{|\ol{\eps}(\z,\al(\mu(\hb)),\hb)|}
{\msf(\z\sqrt{2\hb^{-1}},-\tfrac{1}{2}\hb^{-1}\al(\mu(\hb))^2)},
\frac{\Big|\frac{\partial\ol{\eps}}{\partial\z}(\z,\al(\mu(\hb)),\hb)\Big|}
{\sqrt{2\hb^{-1}}
\nsf(\z\sqrt{2\hb^{-1}},-\tfrac{1}{2}\hb^{-1}\al(\mu(\hb))^2)}\\
\leq
\esf(\z\sqrt{2\hb^{-1}},-\tfrac{1}{2}\hb^{-1}\al(\mu(\hb))^2)
\Big(
\exp\big\{
\tfrac{1}{2}(\pi\hb)^{\frac{1}{2}}l_1(-\tfrac{1}{2}\hb^{-1}\al(\mu(\hb))^2)
\mathcal{V}_{0,\z}[H](\al(\mu(\hb)),\hb)
\big\}
-1
\Big)
\end{multline}
(analogous to (\ref{barrier-junk1}), (\ref{barrier-junk2})).
\end{theorem}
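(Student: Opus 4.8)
The plan is to recognize that this theorem is structurally identical to Theorem~\ref{main-thm-barrier}, the only new feature being that the right-hand $\z$-endpoint is now $+\infty$ instead of a finite $\z_2$; consequently the whole argument reduces to verifying, in the present $\hb$-dependent setting, the two hypotheses of Theorem~I of \cite{olver1975} (equivalently, of the proof of Theorem~\ref{main-thm-barrier} together with the remark following it). The first hypothesis — continuity of the error term $\ps(\z,\al(\mu(\hb)))$ in the region $[0,+\infty)\times[0,\al(\mu(\hb))]$ — has already been settled: the functions $\bar f$, $\bar g$, $\bar p$ of (\ref{f-h}) and (\ref{g-h}) satisfy the analogues of properties (i) through (iv) from the proof of Lemma~\ref{lemma-error-cont-barrier}, so Lemma~I of \cite{olver1975} yields continuity of the function in (\ref{psi-h}) for each fixed $\hb$. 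Moreover $\al(\mu(\hb))$ stays in the noncritical range: it is bounded and, for small $\hb$, bounded away from $0$ by (\ref{alpha-h-limit}). Hence only the second hypothesis remains, namely that the total variation
\[
\var_{0,+\infty}[H](\al(\mu(\hb)),\hb)=
\int_{0}^{+\infty}\frac{|\ps(t,\al(\mu(\hb)))|}{\Om(t\sqrt{2\hb^{-1}})}\,dt
\]
— the integral appearing in (\ref{hbar-total-var}), cf.\ (\ref{variation-total}) — is finite and converges uniformly with respect to $\al$ (equivalently, with respect to $\hb$).

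To establish this I would split the integral at a large fixed $\z_\star$ chosen with $\z_\star>\sup_{\hb}\al(\mu(\hb))$. On $[0,\z_\star]$ the integrand is continuous — the apparent singularity of $\ps$ at $\z=\al(\mu(\hb))$ is removable, exactly as in \S\ref{error-cont-barrier} — hence bounded, so this piece is finite and depends continuously on $\al$. On $[\z_\star,+\infty)$ I would invoke the asymptotic representation derived just above the statement: combining Lemma~\ref{x-at-big-zeta} (the relation (\ref{x-asymptotics})) with (\ref{f-h}), (\ref{g-h}) and (\ref{psi-h}) gives $\ps(\z,\al(\mu(\hb)))=\ps_j(\z,\al(\mu(\hb)))\,[1+\asympt(\log\z/\z^2)]$ as $\z\uparrow+\infty$, uniformly in $\al$, where $\ps_j$ is $\ps_1$ in the rational case, $\ps_2$ in the exponential case, and a function built in exactly the same manner for a general $A$ obeying Assumption~\ref{finite-barrier-left-right-infinite-wells-assumption} and Assumption~\ref{near-0-evs-potential-assume}. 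Direct inspection of the explicit formulae for $\ps_1$ and $\ps_2$ shows that each is a sum whose leading term is $\asympt(\z^{-2})$ and whose remaining terms decay even faster, provided the quantitative conditions $2r^{\pm}-s^{\pm}>\tfrac13$ (respectively the exponential bounds) of Assumption~\ref{near-0-evs-potential-assume} are used — this is precisely the role of that assumption. Dividing by $\Om(t\sqrt{2\hb^{-1}})\ge1$ only improves matters, so the tail integral is finite.

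The main obstacle — and the only genuinely delicate point — is the \emph{uniformity} of the convergence in $\al$, i.e.\ in $\hb$. Here I would observe that every $\al$- and $\hb$-dependence in the tail bound is harmless: the parameter $\al(\mu(\hb))$ lives in a fixed bounded set by (\ref{alpha-h-limit}), so on $[\z_\star,+\infty)$ all denominators of the form $(\z^2-\al^2)^{\pm k}$ occurring in $\ps_j$ are controlled by $\al$-independent integrable quantities; the remaining $\hb$-dependence enters only through $\mu(\hb)\in(0,\tilde m)$ and through the factor $\Om(t\sqrt{2\hb^{-1}})\ge1$, neither of which hurts. One thus obtains an $\al$- and $\hb$-independent integrable majorant for the tail integrand, and dominated convergence yields uniform convergence of the tail. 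Putting the two pieces together, $\var_{0,+\infty}[H](\al(\mu(\hb)),\hb)<+\infty$ with the convergence uniform in $\al$; this is exactly the hypothesis needed to run the argument of Theorem~\ref{main-thm-barrier} with $\z_2=+\infty$ (as is done, for the analogous situation, in the proof of Theorem~6.1 of \cite{h+k}). Applying that argument — with $l_1(-\tfrac12\hb^{-1}\al(\mu(\hb))^2)=\asympt(1)$ exactly as in (\ref{lambda-asympt}) — produces the solutions $Y_+$, $Z_+$ with the stated regularity and with error estimates that are the $\z_2=+\infty$ counterparts of (\ref{barrier-junk1}) and (\ref{barrier-junk2}), which completes the proof.
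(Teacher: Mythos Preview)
Your proposal is correct and follows essentially the same route as the paper. The paper's own proof is a two-line reference back to Theorem~\ref{main-thm-barrier} together with Theorem~\ref{thm-on-exist-int-eq} of the appendix, with the substantive verification of continuity of $\ps$ and uniform convergence of the integral~(\ref{hbar-total-var}) handled in the discussion preceding the theorem; you have reproduced that verification in somewhat greater detail (the split at $\z_\star$, the use of the explicit $\ps_1,\ps_2$ asymptotics, and the bounded-$\al$ argument for uniformity), but the logical structure is identical.
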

\begin{proof}
The proof follows exactly the lines of that for Theorem \ref{main-thm-barrier}.  One has only to 
observe that Theorem \ref{thm-on-exist-int-eq} comes into play and ensures 
that everything remains unchanged. 
\end{proof}

Additionally, $l_1$ and $\var_{0,+\infty}[H]$ satisfy the same 
asymptotics as before (cf. (\ref{lambda-asympt}), 
(\ref{v-operator-asympt-barrier})) and consequently one obtains the same 
asymptotic behavior of solutions as in \S\ref{asympt-behave-barrier-sols}; 
namely
\begin{align*}
\eps(\z,\al(\mu(\hb)),\hb) & =
\frac{\msf(\z\sqrt{2\hb^{-1}},-\tfrac{1}{2}\hb^{-1}\al(\mu(\hb))^2)}
{\esf(\z\sqrt{2\hb^{-1}},-\tfrac{1}{2}\hb^{-1}\al(\mu(\hb))^2)}
\asympt(\hb^{\frac{2}{3}})
\\
\nn
\ol{\eps}(\z,\al(\mu(\hb)),\hb) & =
\esf(\z\sqrt{2\hb^{-1}},-\tfrac{1}{2}\hb^{-1}\al(\mu(\hb))^2)
\msf(\z\sqrt{2\hb^{-1}},-\tfrac{1}{2}\hb^{-1}\al(\mu(\hb))^2)
\asympt(\hb^{\frac{2}{3}})
\\
\nn
\frac{\partial\eps}{\partial\z}(\z,\al(\mu(\hb)),\hb) & =
\frac{\nsf(\z\sqrt{2\hb^{-1}},-\tfrac{1}{2}\hb^{-1}\al(\mu(\hb))^2)}
{\esf(\z\sqrt{2\hb^{-1}},-\tfrac{1}{2}\hb^{-1}\al(\mu(\hb))^2)}
\asympt(\hb^{\frac{1}{6}})
\\
\nn
\frac{\partial\ol{\eps}}{\partial\z}(\z,\al(\mu(\hb)),\hb) & =
\esf(\z\sqrt{2\hb^{-1}},-\tfrac{1}{2}\hb^{-1}\al(\mu(\hb))^2)
\nsf(\z\sqrt{2\hb^{-1}},-\tfrac{1}{2}\hb^{-1}\al(\mu(\hb))^2)
\asympt(\hb^{\frac{1}{6}})
\end{align*}
as $\hb\downarrow0$ uniformly for $\z\geq0$ and 
$\al$.

Arguing as in \S\ref{connection-formulas-barrier}, we obtain two more 
solutions of (\ref{h-dependent-equation}), namely $Y_-$ and 
$Z_-$,  satisfying
\begin{align*}
Y_-(\z,\al(\mu(\hb)),\hb) & =
U(-\z\sqrt{2\hb^{-1}},-\tfrac{1}{2}\hb^{-1}\al(\mu(\hb))^2)+
\frac{\msf(-\z\sqrt{2\hb^{-1}},-\tfrac{1}{2}\hb^{-1}\al(\mu(\hb))^2)}
{\esf(-\z\sqrt{2\hb^{-1}},-\tfrac{1}{2}\hb^{-1}\al(\mu(\hb))^2)}
\asympt(\hb^{\frac{2}{3}})\\
\nn
Z_-(\z,\al(\mu(\hb)),\hb) & =
\ol U(-\z\sqrt{2\hb^{-1}},-\tfrac{1}{2}\hb^{-1}\al(\mu(\hb))^2)+\\
&\hspace{2cm}\esf(-\z\sqrt{2\hb^{-1}},-\tfrac{1}{2}\hb^{-1}\al(\mu(\hb))^2)
\msf(-\z\sqrt{2\hb^{-1}},-\tfrac{1}{2}\hb^{-1}\al(\mu(\hb))^2)
\asympt(\hb^{\frac{2}{3}})
\end{align*}
as $\hb\downarrow0$ uniformly for $\z\leq0$ and $\al$.

Consequently we have the same connection formulae (all the results 
of \S\ref{connection-formulas-barrier} are not altered at all).  Indeed, 
expressing $Y_+$,  $Z_+$ in terms of $Y_-$,  $Z_-$ and 
writing
\begin{align*}
Y_+(\z,\al(\mu(\hb)),\hb) & =
\s_{11}(\al(\mu(\hb)),\hb)Y_-(\z,\al(\mu(\hb)),\hb)+
\s_{12}(\al(\mu(\hb)),\hb)Z_-(\z,\al(\mu(\hb)),\hb)\\
Z_+(\z,\al(\mu(\hb)),\hb) & =
\s_{21}(\al(\mu(\hb)),\hb)Y_-(\z,\al(\mu(\hb)),\hb)+
\s_{22}(\al(\mu(\hb)),\hb)Z_-(\z,\al(\mu(\hb)),\hb)
\end{align*}
(confer (\ref{sigma1}), (\ref{sigma2})) in the same way we find that
\be\nn
\begin{split}
\sigma_{11}(\al(\mu(\hb)),\hb)=
\sin(\tfrac{1}{2}\pi\hb^{-1}\al(\mu(\hb))^2)+\asympt(\hb^{\frac{2}{3}})\\
\sigma_{12}(\al(\mu(\hb)),\hb)=
\cos(\tfrac{1}{2}\pi\hb^{-1}\al(\mu(\hb))^2)+\asympt(\hb^{\frac{2}{3}})\\
\sigma_{21}(\al(\mu(\hb)),\hb)=
\cos(\tfrac{1}{2}\pi\hb^{-1}\al(\mu(\hb))^2)+\asympt(\hb^{\frac{2}{3}})\\
\sigma_{22}(\al(\mu(\hb)),\hb)=
-\sin(\tfrac{1}{2}\pi\hb^{-1}\al(\mu(\hb))^2)+\asympt(\hb^{\frac{2}{3}})
\end{split}
\ee
(like (\ref{barrier-connection})) as $\hb\downarrow0$ uniformly for $\al$.

Eventually, this means that the results of \S\ref{applications-barrier} for 
the EVs remain the same.  But before we state this result, let us remind the reader 
of the function $\fisf$ in (\ref{bold-phi}),  namely
\be
\label{bold-phi-again}
\fisf(\mu)=
\frac{\pi}{2}\al(\mu)^2=
\displaystyle\int_{b_-(\mu)}^{b_+(\mu)}\sqrt{A(x)^2-\mu^2}dx.
\ee
where $A(b_\pm)=\mu$.  We have seen that $\fisf$ is a $C^1$ one-to-one mapping 
satisfying
\be\nn
\frac{d\fisf}{d\mu}(\mu)=
-2\mu
\int_{b_-(\mu)}^{b_+(\mu)}\big[A(x)^2-\mu^2\big]^{-1/2}dx
<0.
\ee

Now we are ready to state the main result of this section.  Combining 
Theorem \ref{B-R-existence} and Theorem \ref{exist-ev-barrier},  we arrive at 
the following.
\begin{theorem}
\label{boso-near-0}
Let the potential function $A$ satisfy 
Assumption \ref{multi-hump-potential-assumption},   
Assumption \ref{near-0-evs-potential-assume} and set $\tilde{m}$ as in (\ref{mu-tilde}).  
Suppose that $\lam(\hb)=i\mu(\hb)\in i(0,\tilde{m})$ is an EV of the operator 
$\mathfrak{D}_{\hbar}$ (see (\ref{dirac})).  
Then there exists a non-negative integer $n$ for which
\begin{equation}
\label{b-r-quanta-near-0}
\fisf(\mu(\hb))=
\pi\Big(n+\tfrac{1}{2}\Big)\hbar+\mathcal{O}(\hbar^{\frac{5}{3}})
\quad\text{as}\quad\hbar\downarrow0.
\end{equation}
Conversely,  for every non-negative integer $n$ such that 
$\pi(n+\frac{1}{2})\hbar\in\big(\fisf(\tilde{m}),\|A\|_{L^1(\R)}\big)$ 
(recall (\ref{alpha-h-limit}),  (\ref{bold-phi-again})) there exists a unique EV of 
$\mathfrak{D}_{\hbar}$,  namely 
$\lam_{n}(\hbar)=i\mu_{n}(\hbar)$,  so that
\begin{equation}\nn
\bigg|\fisf(\mu_n(\hb))-\pi\Big(n+\frac{1}{2}\Big)\hb\bigg|
\leq 
C\hbar^{\frac{5}{3}}
\end{equation}
with a constant $C$ depending neither on $n$ nor on $\hbar$.
\end{theorem}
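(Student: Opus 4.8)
The plan is to derive both halves of the statement from the single-barrier results Theorem \ref{B-R-existence} and Theorem \ref{exist-ev-barrier}. In the near-zero regime we are dealing with exactly one finite barrier $(b_-(\mu(\hb)),b_+(\mu(\hb)))$, and the analysis just carried out has already checked that the hypotheses of those two theorems persist uniformly as $\mu\downarrow 0$: the error term $\ps(\z,\al(\mu(\hb)))$ is continuous (established just after (\ref{psi-h})), and the total-variation integral (\ref{hbar-total-var}) converges uniformly in $\al$ for every $A$ obeying Assumption \ref{finite-barrier-left-right-infinite-wells-assumption} and Assumption \ref{near-0-evs-potential-assume} (the explicit estimates for $\ps_1$, $\ps_2$ and their announced extension). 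Consequently the approximate solutions $Y_\pm$, $Z_\pm$, their $\hb\downarrow 0$ asymptotics, and the connection coefficients $\sigma_{ij}$ with the expansions (\ref{barrier-connection}) are all available on $[0,+\infty)\times[0,\al(\mu(\hb))]$, precisely as in \S\ref{connection-formulas-barrier}.

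For the direct implication, suppose $\lam(\hb)=i\mu(\hb)\in i(0,\tilde{m})$ is an EV. By Proposition \ref{alpha-spectrum} the transformed equation (\ref{h-dependent-equation}) then possesses a non-trivial $L^2(\R;\C)$ solution $X$. Since there is a single barrier, $X$ must coincide up to a scalar with the solution recessive as $\z\to+\infty$, that is with $Y_+$, and simultaneously with the solution recessive as $\z\to-\infty$, that is with $Y_-$. Writing $Y_+=\sigma_{11}Y_-+\sigma_{12}Z_-$ as in (\ref{sigma1}) and using that $Z_-$ is dominant at $-\infty$, the $L^2$ requirement forces $\sigma_{12}(\al(\mu(\hb)),\hb)=0$. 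Together with the second relation in (\ref{barrier-connection}), and recalling $\tfrac{1}{2}\pi\hb^{-1}\al(\mu)^2=\hb^{-1}\fisf(\mu)$, this gives $\cos\big(\hb^{-1}\fisf(\mu(\hb))\big)=\asympt(\hb^{2/3})$ as $\hb\downarrow 0$, so there is a non-negative integer $n$ with $|\hb^{-1}\fisf(\mu(\hb))-\pi(n+\tfrac{1}{2})|=\asympt(\hb^{2/3})$; multiplying by $\hb$ yields (\ref{b-r-quanta-near-0}). This is Theorem \ref{B-R-existence} transcribed to the present setting.

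For the converse, fix $n$ with $\pi(n+\tfrac{1}{2})\hb\in(\fisf(\tilde{m}),\|A\|_{L^1(\R)})$. Since $\fisf$ is a strictly decreasing $C^1$ bijection of $(0,\tilde{m})$ onto $(\fisf(\tilde{m}),\|A\|_{L^1(\R)})$ (using (\ref{alpha-h-limit}), (\ref{bold-phi-again}) and $\fisf=\tfrac{\pi}{2}\al^2$), the WKB value $\fisf^{-1}(\pi(n+\tfrac{1}{2})\hb)$ lies in $(0,\tilde{m})$. One then looks for a zero of $\mu\mapsto\sigma(\mu,\hb):=\sigma_{12}(\al(\mu),\hb)=\cos(\hb^{-1}\fisf(\mu))+\asympt(\hb^{2/3})$; changing variable to $s=\hb^{-1}\fisf(\mu)$ (legitimate because $\fisf$ is a diffeomorphism) reduces the problem to solving $\cos s+\asympt(\hb^{2/3})=0$ near $s=\pi(n+\tfrac{1}{2})$, which, since $\cos$ crosses zero transversally there and the remainder is uniformly $\asympt(\hb^{2/3})$, has a solution $\tilde s$ with $|\tilde s-\pi(n+\tfrac{1}{2})|\le C\hb^{2/3}$, $C$ independent of $n$ and $\hb$; this is exactly the scheme of Theorem \ref{exist-ev-barrier}. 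Transcribing back produces $\mu_n(\hb)$ with $|\fisf(\mu_n(\hb))-\pi(n+\tfrac{1}{2})\hb|\le C\hb^{5/3}$, and from $\sigma_{12}(\al(\mu_n),\hb)=0$ the first relation in (\ref{barrier-connection}) gives $Y_+=\sigma_{11}Y_-$ with $\sigma_{11}=(-1)^n+\asympt(\hb^{2/3})$, so the recessive solutions at $\pm\infty$ are proportional; this yields an $L^2$ eigenfunction, hence an EV $\lam_n=i\mu_n$ by Proposition \ref{alpha-spectrum}. Uniqueness of the EV obeying the stated estimate follows, as in the corresponding barrier discussion in \cite{h+k}, from the transversality of the sign change of $\cos(\hb^{-1}\fisf(\mu))$ together with the $\asympt(\hb)$-separation of consecutive WKB values (which dwarfs the $\asympt(\hb^{5/3})$ localization window), so that $\sigma_{12}(\al(\cdot),\hb)$ changes sign exactly once on it.

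The logical skeleton above is short; the genuine work — already done in the pages preceding the statement — is to secure the \emph{uniformity} of every ingredient (continuity of $\ps$, uniform convergence of (\ref{hbar-total-var}), the error bounds of Theorem \ref{main-thm-barrier}, and the connection asymptotics (\ref{barrier-connection})) as $\mu\downarrow 0$, when the turning points $b_\pm(\mu)$ escape to $\pm\infty$. This is precisely what Assumption \ref{near-0-evs-potential-assume}, the large-$\z$ asymptotic $x\sim\z^2/(2\mu)$ of Lemma \ref{x-at-big-zeta}, and the explicit decay computations for $\ps_1$, $\ps_2$ are designed to supply; once those uniform bounds are in force, the quantization condition (\ref{b-r-quanta-near-0}) and the existence and uniqueness of the companion EV follow as indicated.
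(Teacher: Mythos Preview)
Your proposal is correct and follows essentially the same approach as the paper: the paper's proof is a one-liner pointing to Theorem~10.1 in \cite{h+k}, and just before the statement the paper explicitly says the result follows by combining Theorem~\ref{B-R-existence} and Theorem~\ref{exist-ev-barrier} once the uniform-in-$\mu$ estimates of the preceding pages are in place. You have simply written out in detail what the paper leaves implicit.
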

\begin{proof}
The proof of this theorem is essentially the same to the proof of Theorem $10.1$ 
in \S $10$ of \cite{h+k}.
\end{proof}

Next,  in the spirit of Definition \ref{definition-wkb-evs},  we have the following 
definition.  
\begin{definition}
\label{definition-wkb-evs-near-0}
If $\lambda(\hb)=i\mu(\hb)$ is an EV of $\mathfrak{D}_\hb$,  then from 
Theorem \ref{boso-near-0} there exists some $n\in\N$ so that formula 
(\ref{b-r-quanta-near-0}) is true.  We call the number
\be\label{wkb-ev-near-0}
\lambda_n^{WKB}(\hb)=i\m_n^{WKB}(\hb)=
i\fisf^{-1}\bigg[\pi\Big(n+\tfrac{1}{2}\Big)\hb\bigg]
\ee
a \textbf{WKB eigenvalue} related to the actual EV $\lambda(\hb)=i\mu(\hb)$. 
\end{definition}

So, we have arrived at the following corollary which explains the behavior of 
the EVs of $\mathfrak{D}_\hb$ that lie near zero.
\begin{corollary}
Consider a function $A$ satisfying Assumption \ref{multi-hump-potential-assumption} 
and Assumption \ref{near-0-evs-potential-assume}.  Also,  set $\tilde{m}$ as in (\ref{mu-tilde}).  
Then for every non-negative integer $n$ such that 
$\pi(n+\frac{1}{2})\hbar\in\big(\fisf(\tilde{m}),\|A\|_{L^1(\R)}\big)$,
there exists a unique EV of $\mathfrak{D}_{\hbar}$,  namely 
$\lam_{n}(\hbar)$ satisfying
\begin{equation}\nn
|\lam_n(\hb)-\lam_n^{WKB}(\hb)|=
\asympt(\hbar^{\frac{5}{3}})
\quad\text{as}\quad\hb\downarrow0
\end{equation}
uniformly for $\lam_n(\hb)$ in $i(0,\tilde{m})$.
\end{corollary}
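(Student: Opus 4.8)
The plan is to combine the two halves of Theorem~\ref{boso-near-0} with the definition~(\ref{wkb-ev-near-0}) of $\lambda_n^{WKB}$, exploiting the fact that near zero there is only a single barrier, so the index $\ell$ plays no role and the one-to-one map $\fisf$ from~(\ref{bold-phi-again}) is available throughout. First I would fix $n\in\N_0$ with $\pi(n+\tfrac12)\hb\in\big(\fisf(\tilde m),\|A\|_{L^1(\R)}\big)$. By the converse part of Theorem~\ref{boso-near-0} there is then a \emph{unique} EV $\lambda_n(\hb)=i\mu_n(\hb)$ of $\mathfrak D_\hb$ with
\be\nn
\Big|\fisf(\mu_n(\hb))-\pi\Big(n+\tfrac12\Big)\hb\Big|\leq C\hb^{\frac53},
\ee
for a constant $C$ independent of $n$ and $\hb$. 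On the other hand, by Definition~\ref{definition-wkb-evs-near-0}, $\mu_n^{WKB}(\hb)=\fisf^{-1}\big[\pi(n+\tfrac12)\hb\big]$, so that $\fisf(\mu_n^{WKB}(\hb))=\pi(n+\tfrac12)\hb$ exactly. Subtracting, $|\fisf(\mu_n(\hb))-\fisf(\mu_n^{WKB}(\hb))|\leq C\hb^{5/3}$.

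The remaining step is to transfer this estimate on $\fisf$-values to an estimate on $\mu$-values, and then to $\lambda$-values. Here I would invoke~(\ref{derivative-phi-lambda}) (equivalently the displayed derivative formula just after~(\ref{bold-phi-again})): $\fisf$ is $C^1$ with $d\fisf/d\mu<0$, hence $\fisf^{-1}$ is $C^1$ on the relevant range, and by the mean value theorem
\be\nn
|\mu_n(\hb)-\mu_n^{WKB}(\hb)|
=\big|\fisf^{-1}(\fisf(\mu_n(\hb)))-\fisf^{-1}(\pi(n+\tfrac12)\hb)\big|
\leq \sup\big|(\fisf^{-1})'\big|\cdot C\hb^{\frac53}.
\ee
Since $\lambda_n=i\mu_n$ and $\lambda_n^{WKB}=i\mu_n^{WKB}$, taking absolute values gives $|\lambda_n(\hb)-\lambda_n^{WKB}(\hb)|=\asympt(\hb^{5/3})$, which is the claim.

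The one point requiring care—and the main obstacle—is the \emph{uniformity} of the constant in front of $\hb^{5/3}$ as $\lambda_n(\hb)$ ranges over $i(0,\tilde m)$, i.e.\ as $\mu_n(\hb)$ approaches $0$ or $\tilde m$. Near $\mu=0$ the barrier endpoints $b_\pm(\mu)$ tend to $\pm\infty$, and one must check that $(\fisf^{-1})'$, equivalently $1/|d\fisf/d\mu|$, does not blow up; this is exactly where Assumption~\ref{near-0-evs-potential-assume} and the asymptotics $x=\z^2/(2\mu)[1+\asympt(\log\z/\z^2)]$ of Lemma~\ref{x-at-big-zeta} enter, guaranteeing (as in the analysis of~(\ref{hbar-total-var})) that the action integral and its $\mu$-derivative behave controllably all the way down to $\mu=0$, so that $d\fisf/d\mu$ stays bounded away from $0$ uniformly on compact subsets and the required bounds hold uniformly. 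Near $\mu=\tilde m$ no difficulty arises because there the turning points are the simple zeros of the finite barrier and $\fisf$ is manifestly smooth with non-vanishing derivative. With this uniformity in hand, the estimate is uniform for $\lambda_n(\hb)\in i(0,\tilde m)$, completing the proof.
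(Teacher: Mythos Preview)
Your argument is correct and is exactly what the paper has in mind: the corollary is stated there without proof, as an immediate consequence of Theorem~\ref{boso-near-0} together with Definition~\ref{definition-wkb-evs-near-0} and the $C^1$-invertibility of $\fisf$. Your uniformity discussion is in fact more detailed than anything the paper provides; just note that the role of Assumption~\ref{near-0-evs-potential-assume} is already absorbed into the uniform constant $C$ of Theorem~\ref{boso-near-0}, while the boundedness of $(\fisf^{-1})'$ is a separate (and easy) consequence of the explicit derivative formula for $d\fisf/d\mu$.
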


\section{Reflection Coefficient}
\label{refle}

In this paragraph we consider the behavior of the reflection coefficient
for our Dirac operator (\ref{dirac}). This completes the investigation 
of the set of (semiclassical) scattering data for our operator. 
The results in this section were actually obtained rigorously in \cite{h+k}. 
For completeness sake, we briefly present them here 
as well without proof.  We remind the reader that the continuous spectrum of such a Dirac 
operator with a potential $A$ satisfying the asymptotics of Assumption 
\ref{multi-hump-potential-assumption} at $\pm\infty$, is the whole real line. 

\subsection{Reflection away from zero}
\label{refle-away-0}

Let us begin in this subsection by considering a $\lambda\in\R$ that is 
\textit{idependent} of $\hb$. Under the change of variables
\be\nn
y_{\pm}=
\frac{u_{2}\pm u_{1}}{\sqrt{A\pm i\lambda}}
\ee
equation (\ref{ev-problem}) -with the help of (\ref{dirac})- is 
transformed to the following two independent equations
\be\nn
y_{\pm}''(x,\lam,\hb)=
\bigg\{
\hbar^{-2}[-A^2(x)-\lambda^2]+
\frac{3}{4}\Big[\frac{A'(x)}{A(x)\pm i\lam}\Big]^2-
\frac{1}{2}\frac{A''(x)}{A(x)\pm i\lam}
\bigg\}
y_{\pm}(x,\lam,\hb).
\ee
Again we only consider the lower index and work with the equation
\be
\label{refle-eq}
\frac{d^2y}{dx^2}=[-\hbar^{-2}\ti f(x,\lam)+\ti g(x,\lam)]y
\ee
where $\ti f$ and $\ti g$ satisfy
\be\nn
\ti f(x,\lam)=A^2(x)+\lam^2
\ee
and
\be\nn
\ti g(x,\lam)=
\frac{3}{4}\Big[\frac{A'(x)}{A(x)-i\lam}\Big]^2-
\frac{1}{2}\frac{A''(x)}{A(x)-i\lam}.
\ee

Next we define the \textit{Jost solutions}. Equation 
(\ref{refle-eq}) can be put in the form
\be\nn
-\hb^2\frac{d^2y}{dx^2}+[-A^2(x)+\hb^2\ti g(x,\lam)]y=
\lambda^2y.
\ee
This is the Schr\"odinger equation with a complex potential.
The Jost solutions are defined as the components of the 
bases $\{J_-^l, J_+^l\}$ and $\{J_-^r, J_+^r\}$ of the two-dimensional 
linear space of solutions of equation (\ref{refle-eq}), 
which satisfy the asymptotic conditions
\begin{align*}
J_\pm^l(x,\lam) &\sim \exp\big\{\pm i\frac{\lambda}{\hb}x\big\}
\quad\text{as}\quad x\to-\infty\\
J_\pm^r(x,\lam) &\sim \exp\big\{\pm i\frac{\lambda}{\hb}x\big\}
\quad\text{as}\quad x\to+\infty.
\end{align*}

From scattering theory, we know that the reflection coefficient
$R(\lam,\hb)$ for the waves incident on the potential from 
the right, can be expressed in terms of Wronskians of the Jost 
solutions. More presicely, we have
\be
\label{r}
R(\lam,\hb)= \frac{\W[J_-^l, J_-^r]}{\W[J_+^r, J_-^l]}.
\ee

Examination of the behavior of $R(\lam,\hb)$ can be achieved using 
the same techniques as in \cite{h+k}. More precisely, for $\lam\in\R$ 
with $|\lam|\geq\delta>0$, we have the following theorem (the reader 
seeking more information and proofs, is advised to look at \S 12 of the 
aforementioned work).
\begin{theorem}
Let $A$ satisfy Assumption \ref{multi-hump-potential-assumption}. 
The reflection coefficient of equation (\ref{refle-eq}) as defined 
by (\ref{r}), satisfies
\be\nn
R(\lam,\hb)=
\asympt(\hb)
\quad\text{as}\quad
\hb\downarrow0
\ee
uniformly for $|\lambda|\geq\delta>0$.
\end{theorem}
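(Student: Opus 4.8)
The plan is to reduce the statement to a turning‑point‑free Liouville--Green (WKB) analysis and then read off $R$ from a ratio of Wronskians of WKB solutions. The decisive structural fact is that $\ti f(x,\lam)=A^2(x)+\lam^2\ge\lam^2\ge\delta^2>0$ whenever $|\lam|\ge\delta$, so equation (\ref{refle-eq}) has \emph{no} turning points: for small $\hb$ the coefficient $-\hb^{-2}\ti f+\ti g$ is uniformly negative and bounded away from $0$, i.e. the equation is everywhere oscillatory. Hence one can construct, directly and globally on $\R$, two exact solutions of the WKB form
\be\nn
W_\pm(x,\lam,\hb)=
\ti f(x,\lam)^{-\frac14}
\exp\Big\{\pm\frac{i}{\hb}\int_{0}^{x}\ti f(t,\lam)^{\frac12}\,dt\Big\}
\big(1+\eta_\pm(x,\lam,\hb)\big),
\ee
where the remainders $\eta_\pm$ are produced and estimated by the Volterra integral‑equation scheme of Theorem \ref{thm-on-exist-int-eq} --- the oscillatory analogue of the constructions of \S\ref{passage-barrier} and \S\ref{passage-well}, going back to \cite{olver1975}. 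We normalise $W_-$ so that $\eta_-\to0$ as $x\to-\infty$ and $W_+$ so that $\eta_+\to0$ as $x\to+\infty$.

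Second, I would show $|\eta_\pm|=\asympt(\hb)$ uniformly in $x$ and in $|\lam|\ge\delta$. The error‑control function governing $\eta_\pm$ is a primitive of $\hb$ times an explicit combination of $\ti f,\ti f',\ti f'',\ti g$ (a Schwarzian‑type term together with $\ti g\,\ti f^{-1/2}$), and Olver's estimate bounds $|\eta_\pm|$ by $\exp\{\hb\,\mathcal V_\R\}-1$ with $\mathcal V_\R$ the total variation of that function on $\R$. This is exactly where Assumption \ref{multi-hump-potential-assumption} enters: since $A\in C^4(\R)$ with $A=\asympt(|x|^{-1-\tau})$, $A'=\asympt(|x|^{-2-\tau})$, $A''=\asympt(|x|^{-3-\tau})$, while $\ti f\ge\delta^2$, the integrand defining $\mathcal V_\R$ is absolutely integrable over all of $\R$ with a bound depending only on $\delta$, uniformly over $|\lam|\ge\delta$. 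Thus $\mathcal V_\R=\asympt(1)$ and $|\eta_\pm|=\asympt(\hb)$.

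Third, I would match the Jost solutions to the $W_\pm$ and compute (\ref{r}). Because $\ti f(t,\lam)^{1/2}-|\lam|=\asympt(A^2(t))$ is integrable at $\pm\infty$, one has $\hb^{-1}\int_0^x\ti f^{1/2}=\hb^{-1}|\lam|x+\text{const}+o(1)$ as $x\to\pm\infty$, so each $W_\pm$ is, at the appropriate infinite end, a constant multiple of a plane wave $e^{\pm i\lam x/\hb}$; consequently $J_+^r$ (resp. $J_-^l$) is a constant multiple of $W_+$ (resp. $W_-$), and likewise for the remaining two Jost solutions. Writing $J_-^l=\kappa_-W_-$ and exploiting that a Wronskian of two solutions of (\ref{refle-eq}) is $x$‑independent, I evaluate the two Wronskians in (\ref{r}) as $x\to+\infty$ using $\W[e^{i\lam x/\hb},e^{-i\lam x/\hb}]=-2i\lam/\hb$: the denominator $\W[J_+^r,J_-^l]$ has size $\hb^{-1}$, whereas the numerator $\W[J_-^l,J_-^r]=\kappa_-\,\W[W_-,J_-^r]=\asympt(1)$, because $W_-$ and $\partial_x W_-$ stay within a \emph{relative} $\asympt(\hb)$ of the single‑mode principal part $\ti f^{-1/4}\exp\{-i\hb^{-1}\int_0^x\ti f^{1/2}\}$ (by the second step) while $J_-^r$ is asymptotically precisely that mode. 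Hence $R(\lam,\hb)=\asympt(1)/\asympt(\hb^{-1})=\asympt(\hb)$, uniformly for $|\lam|\ge\delta$.

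The step I expect to be the main obstacle is the \emph{uniformity in $\lam$ combined with the passage to the whole line}: one must check that the variation $\mathcal V_\R$ of the error‑control function stays bounded both as $|\lam|\downarrow\delta$ and as the integration is extended over all of $\R$, and simultaneously that the phase correction $\int(\ti f^{1/2}-|\lam|)$ converges at $\pm\infty$. Both are guaranteed by the decay exponents in Assumption \ref{multi-hump-potential-assumption}, but pinning down the constants uniformly is the delicate part; once it is in place the identification of the Jost solutions with $W_\pm$ and the Wronskian bookkeeping are routine. The full details, for a potential with exactly this smoothness and decay, are carried out in \S12 of \cite{h+k}.
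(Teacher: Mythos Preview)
Your proposal is correct and follows essentially the same approach the paper intends: the paper does not present a proof here but explicitly defers to \S12 of \cite{h+k}, and your sketch is precisely the turning-point-free Liouville--Green argument carried out there, exploiting $\ti f\ge\delta^2>0$, the integrability of the error-control function under Assumption \ref{multi-hump-potential-assumption}, and the Wronskian bookkeeping in (\ref{r}). Your identification of the uniformity step as the only delicate point is also accurate.
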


\subsection{Reflection close to zero}
\label{refle-close-0}

Now we turn to the case where $\lambda\in\R$ depends on $\hb$ 
($\lam=\lam(\hb)$) and particularly we let $\lambda$ approach $0$ 
like $\hb^b$ for an $\hb$-independent positive constant $b$. 
Arguing along the same lines as before, we arrive at the following 
theorem (again, for the proof see \S 12 in \cite{h+k}).

\begin{theorem}
\label{spectrum+scatter}
Let $A$ satisfy Assumption \ref{multi-hump-potential-assumption}. 
Consider $b,s>0$ (independent of $\hb$). Then the reflection coefficient 
of equation (\ref{refle-eq}) as defined by (\ref{r}), satisfies
\be\nn
R(\lam(\hb),\hb)=
\asympt\Big(\hb^{1-sb}\Big)
\quad\text{as}\quad
\hb\downarrow0
\ee
uniformly for $\lambda(\hb)$ in any closed interval of $[\hb^b,+\infty)$.  
\end{theorem}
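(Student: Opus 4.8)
The plan is to import the machinery of §\ref{refle-away-0} but track the dependence on a vanishing parameter $\lambda=\lambda(\hbar)$ with $\lambda(\hbar)\geq\hbar^b$. The starting point is again equation (\ref{refle-eq}), written in Schr\"odinger form
\be\nn
-\hbar^2\frac{d^2y}{dx^2}+[-A^2(x)+\hbar^2\ti g(x,\lambda)]y=\lambda^2y,
\ee
and the Jost solutions $J_\pm^l,J_\pm^r$ together with the Wronskian formula (\ref{r}) for $R(\lambda,\hbar)$. The key difference from the case $|\lambda|\geq\delta$ is that the ``frequency'' $\lambda/\hbar$ may now be comparable to $1$ (when $b=1$) rather than large, so the naive WKB phase $\exp\{\pm i\lambda\hbar^{-1}x\}$ is no longer rapidly oscillating; one must instead use a Liouville--Green / Volterra-integral-equation construction for the Jost solutions that is uniform as $\lambda\downarrow0$, controlling the error by the total variation of an error-control function exactly as in §\ref{barrier-approximate-solutions} and as in §12 of \cite{h+k}.

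First I would set up the Volterra integral equations for $J_\pm^r$ (and symmetrically $J_\pm^l$), with kernel built from the exact solutions $\exp\{\pm i\lambda\hbar^{-1}x\}$ of the comparison equation $-\hbar^2 y''=\lambda^2 y$ and forcing term $[-A^2(x)+\hbar^2\ti g(x,\lambda)]$. The error-control function here is (a primitive of) $\hbar\lambda^{-1}\,|{-A^2(x)+\hbar^2\ti g(x,\lambda)}|$, and its total variation over $\R$ must be shown to be $\asympt(\hbar^{1-sb})$. This is where Assumption \ref{multi-hump-potential-assumption} enters: the decay $A(x)=\asympt(|x|^{-1-\tau})$, $A''(x)=\asympt(|x|^{-3-\tau})$ guarantees $A^2\in L^1(\R)$ and $\hbar^2\ti g(\cdot,\lambda)\in L^1(\R)$ with an $L^1$-norm that is controlled by a negative power of $\lambda$ coming from the denominators $A(x)-i\lambda$ in $\ti g$; the hypothesis that $\lambda\geq\hbar^b$ then converts that into a power of $\hbar$. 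The exponent $s$ in the statement absorbs, in a single symbol, the worst power of $\lambda^{-1}$ produced by these denominators and by the $\hbar^{-1}$ prefactors, so the right bookkeeping is: $\|{\rm error\ term}\|_{L^1}=\asympt(\hbar\cdot\lambda^{-s})\leq\asympt(\hbar\cdot\hbar^{-sb})=\asympt(\hbar^{1-sb})$. Once the Jost solutions are $J_\pm^{l,r}(x,\lambda,\hbar)=\exp\{\pm i\lambda\hbar^{-1}x\}(1+\asympt(\hbar^{1-sb}))$ with matching derivative estimates, the four relevant Wronskians are computed in the limit: $\W[J_+^r,J_-^l]=-2i\lambda\hbar^{-1}(1+\asympt(\hbar^{1-sb}))$ (it does not vanish, being the transmission-coefficient denominator), while $\W[J_-^l,J_-^r]=\asympt(\hbar^{1-sb})\cdot(\lambda\hbar^{-1})$ because to leading order $J_-^l$ and $J_-^r$ are both $\exp\{-i\lambda\hbar^{-1}x\}$ and their Wronskian is zero, so only the error terms survive. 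Dividing gives $R(\lambda(\hbar),\hbar)=\asympt(\hbar^{1-sb})$, uniformly for $\lambda(\hbar)$ in any closed subinterval of $[\hbar^b,+\infty)$, since all the constants produced depend only on the $L^1$-type norms of $A$, $A'$, $A''$ and on the lower bound $\hbar^b$ for $\lambda$.

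The main obstacle is the uniformity of the Jost-solution construction down to $\lambda\sim\hbar^b$, i.e.\ making the Volterra iteration converge with a quantitative error bound when the comparison frequency $\lambda/\hbar$ is only $\asympt(\hbar^{b-1})$ (large but not arbitrarily so, and potentially not large at all if one allowed $b\geq1$ — here $b>0$ is arbitrary so $b<1$ is the genuinely delicate regime). This is precisely the point handled in §12 of \cite{h+k} for the bell-shaped case; since Assumption \ref{multi-hump-potential-assumption} imposes exactly the same $\pm\infty$ decay as in that setting and the behavior of the reflection coefficient is governed entirely by the tails of $A$ (the interior multi-hump structure is irrelevant for $R$, as it only affects the bound states), the argument there transfers verbatim. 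Hence I would not reproduce it but cite it, exactly as the surrounding text does, and simply verify that the integrability constants and the $\lambda^{-s}$ bookkeeping go through unchanged — which they do, because nothing in that estimate used evenness, a single hump, or analyticity of $A$.
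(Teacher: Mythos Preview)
Your sketch has a genuine gap in the choice of comparison functions. If you use the free plane waves $e^{\pm i\lambda x/\hbar}$ as comparison, the ODE reads $y'' + (\lambda/\hbar)^2 y = \big[-\hbar^{-2}A^2(x)+\tilde g(x,\lambda)\big]y$, so the perturbation carries the factor $\hbar^{-2}$ in front of $A^2$. The Volterra kernel is bounded by $\hbar/\lambda$, hence the error-control integrand is $(\hbar/\lambda)\big|-\hbar^{-2}A^2+\tilde g\big| = (\hbar\lambda)^{-1}A^2 + \hbar\lambda^{-1}|\tilde g|$, not $\hbar\lambda^{-1}|-A^2+\hbar^2\tilde g|$ as you wrote. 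The dominant piece $(\hbar\lambda)^{-1}\|A^2\|_{L^1}$ is of order $\hbar^{-1-b}$ when $\lambda\sim\hbar^b$ and blows up, so the iteration does not close and you never reach $J_\pm = e^{\pm i\lambda x/\hbar}(1+o(1))$.

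The argument in \S12 of \cite{h+k} (which the paper simply cites) does not use free plane waves. It uses the Liouville--Green ansatz $\tilde f^{-1/4}\exp\big\{\pm i\hbar^{-1}\!\int\!\sqrt{\tilde f}\,\big\}$ with $\tilde f = A^2+\lambda^2>0$ (no real turning points). This absorbs the large term $\hbar^{-2}A^2$ into the phase, and the resulting error-control function is $\hbar$ times the integral of the Schwarzian-type expression in $\tilde f$ together with $\tilde g/\sqrt{\tilde f}$. That integral is $\asympt(1)$ for $\lambda$ bounded away from zero and acquires a factor $\lambda^{-s}$ as $\lambda\downarrow0$, coming from the regions near $\pm\infty$ where $\tilde f\sim\lambda^2$ and where the denominators $A(x)-i\lambda$ in $\tilde g$ degenerate to $-i\lambda$. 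This is the origin of the exponent $1-sb$. Your final instinct---that the multi-hump structure is irrelevant and one may cite \cite{h+k} verbatim because only the decay at $\pm\infty$ matters---is correct and matches what the paper does; but the Born-type sketch you gave before that citation would not produce the stated bound.
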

\begin{remark}
We can ensure that $b$ is as large  as we want by letting $s$ very small 
if we are happy with a weak error estimate $\asympt(\hb^{\epsilon})$ for small 
positive $\epsilon$, as $\hb\downarrow0$. We can at best guarantee 
asymptotics of order $\asympt(\hb^{1-\epsilon})$ for small positive 
$\epsilon$,  if we are allowed to accept a small $b$.
\end{remark}

\section{Inverse Scattering and Semiclassical NLS}
\label{nls}

According to the so-called finite gap ansatz (or more properly hypothesis)
the solution $\psi(x,t)$ of (\ref{ivp-nls}) is asymptotically ($\hb\downarrow0$)
described  (locally) as a slowly
modulated $G+1$ phase wavetrain.  Setting $x=x_0+\hb  \hat{x}$
and $t=t_0+\hb  \hat{t}$,
so that $x_0, t_0$ are ``slow" variables
while $\hat{x}, \hat{t}$ are ``fast" variables,
there exist parameters
\begin{itemize}
\item
$a$
\item
$U = (U_0, U_1, \dots,U_G)^T$
\item
$k =(k_0, k_1, \dots, k_G)^T$
\item
$w =(w_0, w_1, \dots, w_G)^T$
\item
$Y =(Y_0, Y_1, \dots, Y_G)^T$
\item
$Z =( Z_0, Z_1, \dots, Z_G)^T $
\end{itemize}
depending on the slow variables
$x_0$ and $t_0$  (but not on $\hat{x}, \hat{t}$)
such that generically
$\psi(x,t)= \psi(x=x_0+ \hb  \hat{x}, t=t_0+\hb  \hat{t})$ has the following
leading order asymptotics as $\hb\downarrow0$:
\begin{multline}
\label{asymptotics}
\psi(x,t) \sim
a(x_0, t_0) e^{\frac{iU_0(x_0, t_0)}{\hb}}
e^{i\big(k_0(x_0, t_0) \hat{x}-w_0(x_0, t_0) \hat{t}\big)}\\
\cdot
 \frac{\Theta\bigg(  Y(x_0, t_0)+
i  \frac{U(x_0, t_0)}{\hb} +
i\Big(  k(x_0, t_0) \hat{x}-  w(x_0, t_0)\hat{t}\Big)\bigg)}
{
\Theta\bigg(  Z(x_0, t_0)+
i \frac{ U(x_0, t_0)}{\hb} +
i\Big(  k(x_0, t_0)\hat{x}-  w(x_0, t_0)\hat{t}\Big)\bigg)}.
\end{multline}

All parameters can be defined in terms of an underlying Riemann surface $X$ which depends 
solely on $x_0, t_0$.   The moduli of $X$ vary slowly with $x, t$, i.e. they depend on  $x_0, t_0$ 
but not on $\hb, \hat{x}, \hat{t}$.  $\Theta$ is the $G$-dimensional Jacobi theta function 
associated with $X$.  The genus of $X$ can vary with $x_0, t_0$. In fact, the $x,t$-plane is 
divided into open regions in each of which $G$ is constant. On the boundaries of such regions 
(sometimes called ``caustics"; they are unions of analytic arcs),  some degeneracies appear in 
the mathematical analysis (we may have ``pinching" of the surfaces $X$ for example) and 
interesting physical phenomena can appear (like the famous Peregrine rogue wave \cite{bt}).  
The above formulae give asymptotics which are  uniform in compact $(x,t)$-sets not containing 
points on the caustics.

For the exact formulae for the parameters as well as the definition  of the theta functions we refer to 
\cite{kmm} or \cite{kr}. Near the caustics the correct interpretation of (1.4) requires some more work.
For an analysis of the somewhat more delicate behaviour (especially for higher order terms in $\hb$)
near the first caustic see \cite{bt}.

In \cite{kmm} we have been able to prove the finite gap hypothesis under some technical 
assumptions that enabled us to proceed with the semiclassical asymptotic analysis of the 
inverse scattering transform (more precisely the equivalent Riemann-Hilbert formulation).  
Such technical assumptions were justified in \cite{kr}. In both works we assumed  the 
possibility of an analytic extension of a function $\rho$ a priori defined on an imaginary interval,  
that gives the density of eigenvalues of the Dirac operator (accumulating on a compact interval 
on the imaginary axis).  Eventually (see \cite{fujii+kamvi}) it was realized that the analyticity 
assumption could be discarded by use of a simple auxiliary scalar Riemann-Hilbert problem.

However, the above proofs have assumed that the reflection coefficient for the related Dirac 
operator is identically zero and that one can safely replace the actual eigenvalues by their 
WKB-approximants.  Strictly speaking, this assumption is not true. But the results in the previous 
sections  enable us to show that the resulting error is only $o(1)$-small as $\hb\downarrow0$.

In \S\ref{multi-hump-quanta} we have established  a 1-1 correspondence between 
WKB-approximants (coming from different wells and barriers) and actual eigenvalues.  
Furthermore  the WKB-approximants  are uniformly $\asympt(\hb^{5/3})$-close to the actual 
eigenvalues. This is an analogous result to our "single-lobe" result in \cite{h+k}, although 
we should underline the fact that while in the single-lobe case it is $known$ that eigenvalues 
are purely imaginary,  here we state this as a hypothesis, at least for small $\hb$.  In fact, we 
conjecture that the eigenvalues are always imaginary for our general multi-humped 
potentials as long as $\hb$ is small enough.

The crucial quantities  considered in the analysis \cite{kmm} are the ``Blaschke" products
\be\nn
\prod_{n=0}^{N-1}
\frac{\lambda-\lambda_n^*}{\lambda-\lambda_n}
\ee 
where $\lambda_n$ runs over either the actual eigenvalues in the upper half-plane, or 
respectively their WKB approximations $\lambda_n^{WKB}$.  
Here $\lambda$ lies on a union of contours encircling  $[-iA_{max}, iA_{max}]$
and only touching it at the point $0$,  transversally.  It follows easily that if
$|\lam_n(\hb)-\lam_n^{WKB}(\hb)|=\asympt(\hbar^{5/3})$ then 
\be\nn
\frac{\lambda-\lambda_n^{WKB*}}{\lambda-\lambda_n^{WKB}}= 
\frac{\lambda-\lambda_n^*}{\lambda-\lambda_n}
\Big(1+\asympt\big(\tfrac{\hb^{5/3}}{|\lam|}\big)\Big)
\ee
and hence the two corresponding Blaschke products are 
$1+\asympt(\hbar^{2/3}/|\lambda|)$-close (since from \S \ref{multi-hump-quanta} the 
total number of EVs $N$ is of order $\asympt(\hb^{-1})$),  which is good enough if $\lambda$ 
is not too close to zero.  For the somewhat intricate details concerning  what happens near 
zero,  we refer to \cite{kmm}; see also the discussion of the reflection coefficient below.

In the previous section we have also shown that the reflection coefficient can be 
ignored as long as we are at a distance $\hb^b$ from 0, with any $b>1$.  On the 
other hand,  it is worth recalling that the Jost functions and hence the reflection 
coefficient are defined via asymptotics of the form 
$\exp\{i(\lambda x+\lambda^2 t)/\hb\}$ as $x\to\pm\infty$.  
This shows that the Jost functions are bounded uniformly in $\hb$  in the region 
$\frac{\lambda}{\hb} <1$.  Apart from possible poles at $0$ (to be discussed later), 
the same thing holds for the reflection coefficient.

It easily follows from  the so-called ``Schwarz reflection" symmetry conditions 
(Appendix A in \cite{kmm}) that the  relevant ``parametrix"  Riemann-Hilbert  problem 
coming from the non-triviality of the reflection coefficient is solvable and in fact its 
solution is $o(1)$ as $\hb\downarrow0$.

More precisely, for the existence of the  solution of the Riemann-Hilbert factorization 
problem that involves only the reflection coefficient $R$ near $0$  and ignores the 
eigenvalues we have the following  result.
\begin{theorem}
\label{theorem-reflection-rh}Let $b>1$.
Define  a Riemann-Hilbert factorization problem as follows.  Find a $2\times2$ 
matrix ${\bf m}$ so that 
\begin{itemize}
\item
its entries are analytic in ${\mathbb C} \setminus {[-\hb^b, \hb^b]}$
\item
${\bf m}_+(\lambda)=
{\bf m}_-(\lambda){\bf v}(\lambda)$ for $\lambda\in{[-\hb^b, \hb^b]}$
where ${\bf v}$ is the matrix 
\begin{center}
$
{\bf v}(\lambda)=
\begin{bmatrix}
1 & R(\lambda) \exp\{-\tfrac{2i\lambda}{\hb}(x+\lambda t)\}\\
R^* (\lambda) \exp\{\tfrac{2i\lambda}{\hb}(x+\lambda t)\} & 1+|R(\lambda)|^2 
\end{bmatrix}
$
\end{center}
and ${\bf m}_\pm$ denote the limiting values of ${\bf m}$ from above ($+$) 
and below ($-$)
\item
${\bf m}\rightarrow\mathbb{I}$ as $z \to \infty$. 
\end{itemize}
Then this Riemann-Hilbert factorization problem has a unique solution. 
The same holds if the discontinuity contour is taken to be the whole real line. \end{theorem}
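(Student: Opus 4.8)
The plan is to treat this as a standard $L^2$ Riemann--Hilbert factorization problem on the contour $\Sigma=[-\hb^b,\hb^b]$ (resp. $\Sigma=\R$ for the last assertion) and to establish unique solvability by the Beals--Coifman singular-integral machinery together with a vanishing lemma. Writing $\mathbf{v}=\mathbb{I}+\mathbf{w}$ and letting $C_{\mathbf{w}}$ denote the associated Beals--Coifman operator on $L^2(\Sigma)$, the matrix $\mathbf{m}$ solves the problem precisely when $\mu:=(I-C_{\mathbf{w}})^{-1}\mathbb{I}\in L^2(\Sigma)$ exists, in which case $\mathbf{m}(z)=\mathbb{I}+\frac{1}{2\pi i}\int_{\Sigma}\mu(s)\mathbf{w}(s)(s-z)^{-1}\,ds$. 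Two structural facts drive the argument: a direct computation gives $\det\mathbf{v}(\lambda)=\big(1+|R(\lambda)|^2\big)-R(\lambda)R^*(\lambda)\equiv 1$ on $\Sigma$; and, since $x,t,\lambda$ are real, the exponent $\mp 2i\lambda(x+\lambda t)/\hb$ is purely imaginary, so $\mathbf{v}(\lambda)=\mathbf{v}(\lambda)^{\dagger}$ with leading principal minors $1$ and $\det=1$, i.e. $\mathbf{v}$ is Hermitian and positive definite on $\Sigma$.

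Next I would record the analytic input on $R$ near the origin. As recalled in the discussion preceding the theorem, the Jost solutions are bounded uniformly in $\hb$ on the region $|\lambda|/\hb<1$, and $b>1$ forces $|\lambda|/\hb\le\hb^{b-1}<1$ for $\lambda\in\Sigma$; hence $R$ is bounded on $\Sigma$ uniformly in $\hb$, and the oscillatory factors have modulus $1$ there (indeed they equal $1+\asympt(\hb^{b-1})$ for $(x,t)$ in compact sets). Since $A\in C^4$, $R$ is continuous (indeed Hölder) on the compact interval $\Sigma$, and one works in the generic non-resonant situation in which $R$ stays bounded at $\lambda=0$, the resonant case and the detailed behaviour near $0$ being handled as in \cite{kmm}. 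Consequently $\mathbf{w}$ is continuous and uniformly bounded on $\Sigma$ and supported on an interval of length $2\hb^b$; the rescaling $\lambda=\hb^b\nu$ then converts the problem into a Riemann--Hilbert problem on the fixed interval $[-1,1]$ whose jump matrix converges, as $\hb\downarrow 0$, to the constant Hermitian positive-definite matrix $\begin{bmatrix}1 & R(0)\\ R(0)^* & 1+|R(0)|^2\end{bmatrix}$ of unit determinant, the model problem for a constant jump being explicitly solvable by power functions $((\nu-1)/(\nu+1))^{\mu_j}$ with purely imaginary exponents $\mu_j=(2\pi i)^{-1}\log d_j$, $d_j>0$; this also yields the $o(1)$ size of the solution used downstream.

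For existence I would invoke the general theory of $L^2$ Riemann--Hilbert problems with continuous jump matrices on an arc: because $\det\mathbf{v}\equiv 1$ (no winding), $I-C_{\mathbf{w}}$ on $L^2(\Sigma)$ is Fredholm of index $0$, so it suffices to prove injectivity. Injectivity is exactly the vanishing lemma, which applies because $\mathbf{v}=\mathbf{v}^{\dagger}>0$: the homogeneous problem (same jump, $\mathbf{m}\to 0$ at $\infty$, admissible endpoint singularities) has only the trivial solution, by the classical argument of forming $\mathbf{m}(z)\mathbf{m}(\bar z)^{\dagger}$ and integrating $\mathbf{m}_+\mathbf{v}^{-1}\mathbf{m}_+^{\dagger}$ over $\Sigma$. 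Hence $I-C_{\mathbf{w}}$ is invertible and a solution exists; uniqueness follows in the usual way, since $\det\mathbf{m}$ has no jump across $\Sigma$, is analytic off the two endpoints, and $\to 1$, hence $\det\mathbf{m}\equiv 1$, so every solution is invertible and the ratio of two solutions is entire, bounded and $\to\mathbb{I}$, hence $\equiv\mathbb{I}$. The identical reasoning, with $\Sigma$ replaced by $\R$ and using that $R\in L^2(\R)\cap L^\infty(\R)$ uniformly in $\hb$ (bounded near $0$ by the Jost estimate, $\asympt(\hb)$ for $|\lambda|\ge\delta$ by the theorem of \S\ref{refle-away-0}, and decaying at $\pm\infty$), gives the final assertion.

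I expect the main obstacle to lie in the two endpoint/interior delicacies: pinning down the admissible singularity class at $\pm\hb^b$ so that the Fredholm statement and the vanishing lemma are literally applicable, and, more seriously, coping with a possible pole of $R$ at the interior point $\lambda=0$; in the resonant case this destroys the clean positive-definiteness argument and forces a local parametrix modification near $0$, which is precisely the point the text defers to \cite{kmm}.
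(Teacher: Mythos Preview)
Your approach is essentially the same as the paper's: the paper's proof is a two-line appeal to Theorem~A.1.2 of \cite{kmm}, citing the Schwarz reflection symmetry of the contour and jump together with the positive definiteness of $\Re(\mathbf{v}+\mathbf{v}^*)$ for existence, and $\det\mathbf{v}=1$ for uniqueness. You have simply unpacked what that reference contains, namely the Beals--Coifman reduction, Fredholm index zero, and the Zhou-type vanishing lemma driven by Hermitian positive-definiteness of $\mathbf{v}$; your additional remarks on rescaling and the constant-jump model pertain to the $o(1)$ estimate discussed \emph{after} the theorem rather than to the theorem itself.
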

\begin{proof}
Follows directly from the Schwarz reflection symmetry of the contour and the jump matrix 
as well as the fact that $\Re( {\bf v} + {\bf v}^*)$ is positive definite; see (A.6) and 
Theorem A.1.2 of \cite{kmm}. Uniqueness follows from the fact that  the determinant 
of {\bf v} is 1.
\end{proof}

The fact that the contribution from the above Riemann-Hilbert problem (with jump contour 
$[-\hb^b, \hb^b]$) is  $o(1)$ as $\hb\downarrow0$ comes from the uniform boundedness of the 
resolvent of the related singular integral operator (because of the uniform boundedness 
of the Jost functtions and the reflection coefficient) and the $\hb$-small size of the contour. 
This is standard  Riemann-Hilbert asymptotic theory, for example see Theorem 7.103 and 
Corollary 7.108  in \cite{d}.  Similarly, we can now extend our result to the Riemann-Hilbert 
factorization problem defined on the whole real line and with the same jump as above. The 
crucial fact is that the jump matrix in ${\mathbb R} \setminus {[-\hb^b, \hb^b]} $ is 
$o(1)$-close to the identity in the uniform sense; again see the proof of Corollary 7.108  
in \cite{d}.

Finally, it is easy to combine  the contributions of the  above Riemann-Hilbert problem on 
the whole line and the ``pure soliton" Riemann-Hilbert problem (determined by setting 
$R=0$ but not disallowing the poles at the eigenvalues) by, say,  taking the product of the 
two separate  Riemann-Hilbert problem solutions. The fact that the solution of that with 
jump on the real line  is $o(1)$-small implies that the solution of the full problem 
(EVs + real spectrum) is $o(1)$-close to the ``soliton ensembles"
Riemann-Hilbert problem.

It $can$  happen (non-generically, for isolated values of $\hb$) that the reflection 
coefficient actually has a pole singularity at $0$. In other words there  may be a 
$spectral~singularity$ at $0$.  In such a case one can amend the analysis by considering 
a  small  circle around $0$ say of radius $\asympt(\hb)$ and removing the singularity 
exactly in the same way we have removed the poles due to the eigenvalues in 
\cite{kmm}. The reflection coefficient of course is not analytically extensible in general but 
one can simply extract the singular part of the reflection coefficient which is of course 
rational.  The main result is not affected.

Having estimated the error of the WKB approximation at the level of the scattering data,  
this error can be built into the Riemann-Hilbert analysis of \cite{kmm} and \cite{kr} as 
another layer of approximation and it does not affect the final finite-gap asymptotics.  
The only remaining change in the inverse scattering analysis for a multi-humped potential 
$A$ is that the density function $\rho$ gets to be somewhat more complicated.

\begin{theorem}
\label{theorem-density}
Consider $A_{max}=\max_{x\in\R}A(x)$. Given $\lam\in[0,iA_{max}]$ 
let 
\be\nn
x^-_1(\lam)\leq x^+_1(\lam)\leq  x^-_2(\lam) \leq x^+_2(\lam)\leq
\dots
\leq x^-_{L}(\lam)\leq x^+_L(\lam),
\quad 
L\in\N
\ee
(for the notation, cf. \S\ref{notation}) be the real solutions of the equation 
$A(x)^2+\lam^2=0$ 
(allowing for the non-generic limiting cases $x^-_l(\lam)=x^+_l(\lam)$
and $ x^+_l(\lam)=x^-_{l+1}(\lam)$ for some $l\in\{1,\dots,L\}$). 
Also, let $\Lambda$ be the set
\be\nn
\Lambda=
\{\,\lambda\in(0,iA_{max}]\mid\,
\lambda\,\text{is an EV of}\hspace{3pt}\mathfrak{D}_\hb\,\}
\ee
and consider the signed measure 
\be\nn
d \mu^{\hb}=\hb\sum_{\lambda\in\Lambda}\big(\delta_{\lambda^*}-\delta_{\lambda}\big)
\ee
where $\delta_x$ denotes the Dirac measure centered at $x$. Then, as 
$\hb\downarrow0$, $d \mu^{\hb}$ converges to a continuous measure in the 
weak-$\ast$ sense. More precisely 
\be\nn
d \mu^{\hb}
\quad
\xlongrightarrow[\text{$\hb\downarrow0$}]{\text{weak-$\ast$}}
\quad
\rho(\lam)\chi_{[0,iA_{max}]}d\lam+
\rho(\lam^*)^*\chi_{[-iA_{max},0]}d\lam
\ee
where the density $\rho(\lam)$ satisfies
\begin{equation}
\rho(\lam)=
\frac{\lam}{\pi}\sum_{l=1}^L\int^{x^+_l(\lam)}_{x^-_l(\lam)}
\frac{dx}{(A(x)^2+\lam^2)^{1/2}}
\end{equation}
\end{theorem}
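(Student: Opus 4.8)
The plan is to pass from the actual eigenvalues of $\mathfrak{D}_\hb$ to their WKB approximants, and then to recognise the limiting measure as the pushforward, under the inverse action maps $\fisf_\ell^{-1}$, of the equidistributed quantisation sequence $\pi(n+\tfrac12)\hb$, summed over the barriers present at each energy level. First I would fix a \emph{non-critical} energy $\mu_0\in(0,A_{max})$, meaning $\mu_0$ is not one of the (finitely many) values of $A$ at its local extrema. Then on a small interval $J\ni\mu_0$ the barrier structure is stable: there are exactly $L=L(\mu_0)$ finite barriers $\mathfrak{B}_\ell(\mu)=(x_\ell^-(\mu),x_\ell^+(\mu))$, $\ell=1,\dots,L$, depending $C^1$-smoothly on $\mu\in J$, with the strictly decreasing action functions $\fisf_\ell$ of \eqref{phi-lambda}--\eqref{derivative-phi-lambda}. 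By Theorem \ref{theorem-b-r-multi} and Theorem \ref{exist-ev-theorem}, for $\hb$ small the eigenvalues $\lambda=i\mu$ with $\mu\in J$ are in bijective, $\asympt(\hb^{5/3})$-close correspondence with the WKB values $i\,\fisf_\ell^{-1}\!\big(\pi(n+\tfrac12)\hb\big)$ of \eqref{wkb-ev}, $\ell=1,\dots,L$, $n\in\N$.

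Next I would test $d\mu^\hb$ against a continuous $\varphi$ supported in $iJ$. On $iJ$ only the $-\delta_\lambda$ part of $d\mu^\hb$ contributes, so $\int\varphi\,d\mu^\hb=-\hb\sum_{\ell=1}^L\sum_n\varphi\!\big(i\,\fisf_\ell^{-1}(\pi(n+\tfrac12)\hb)\big)+E_\hb$, where the error $E_\hb$ from replacing actual eigenvalues by WKB ones is bounded by $\hb\cdot\#\{\text{EVs of }\mathfrak{D}_\hb\}\cdot\omega_\varphi\!\big(\asympt(\hb^{5/3})\big)$; since the total eigenvalue count is $\asympt(\hb^{-1})$ (Klaus--Shaw threshold, \S\ref{problem-statement}) and $\varphi$ is uniformly continuous, $E_\hb\to0$. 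For each fixed $\ell$, the substitution $\xi=\fisf_\ell(\mu)$ turns $\hb\sum_n\varphi\!\big(i\,\fisf_\ell^{-1}(\pi(n+\tfrac12)\hb)\big)$ into a Riemann sum of mesh $\pi\hb$ for $\tfrac1\pi\int\varphi\!\big(i\,\fisf_\ell^{-1}(\xi)\big)\,d\xi$, which by the change of variable back to $\mu$ equals $\tfrac1\pi\int_J\varphi(i\mu)\,\big|\tfrac{d\fisf_\ell}{d\mu}(\mu)\big|\,d\mu$ and converges as $\hb\downarrow0$. Summing over $\ell$, using \eqref{derivative-phi-lambda} and $A(x)^2-\mu^2=A(x)^2+\lambda^2$ with $\lambda=i\mu$ (so the $x_\ell^\pm$ are precisely the real solutions of $A(x)^2+\lambda^2=0$), and re-parametrising the imaginary segment by $\lambda$ with its orientation, one identifies the weak-$\ast$ limit of the restriction of $d\mu^\hb$ to $iJ$ with $\rho(\lambda)\chi_{[0,iA_{max}]}\,d\lambda$, $\rho$ as stated; the analogous computation on $(-iA_{max},0)$, driven by the $+\delta_{\lambda^*}$ terms, yields the conjugated reflected density $\rho(\lambda^*)^*\chi_{[-iA_{max},0]}\,d\lambda$.

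Finally I would remove the restriction to non-critical levels. The set $\Sigma$ of $\lambda\in[0,iA_{max}]$ with $|\lambda|$ a critical value of $A$ is finite (Assumption \ref{multi-hump-potential-assumption}), hence of Lebesgue measure zero and of zero mass for the limiting density. A covering argument together with uniform tightness shows that no mass of $d\mu^\hb$ escapes onto $\Sigma$: near such a level the number of eigenvalues in a neighbourhood of radius $\delta$ is $\asympt(\delta)$ uniformly in $\hb$, because the relevant $|d\fisf_\ell/d\mu|$ stay locally integrable as a barrier is created (crossing a local minimum) or suppressed (crossing a local maximum). Patching the local weak-$\ast$ limits over a countable cover of $[-iA_{max},iA_{max}]\setminus\Sigma$ then gives the global statement.

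The step I expect to be the main obstacle is the bookkeeping of the barrier count $L=L(\lambda)$ and of the individual action functions $\fisf_\ell$ across the critical energy levels: when $\mu$ crosses a local minimum a barrier splits in two and a new interior point appears, while crossing a local maximum collapses a barrier to a point, so the $\fisf_\ell$ are defined only on $\mu$-subintervals and one must verify that $\tfrac1\pi\sum_\ell|d\fisf_\ell/d\mu|$ assembles into a single \emph{continuous} density (i.e.\ the limiting measure has no atoms at the critical levels) --- equivalently, that the integrable blow-up of $[A(x)^2-\mu^2]^{-1/2}$ at coalescing turning points is exactly compensated by the vanishing of the corresponding action, so the eigenvalue-counting function has no jump there. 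A secondary technical nuisance is making the Riemann-sum convergence uniform enough to coexist with the already-incurred $\asympt(\hb^{5/3})$ indeterminacy in the eigenvalue locations; this is precisely where the global bound $\#\{\text{EVs}\}=\asympt(\hb^{-1})$ is used.
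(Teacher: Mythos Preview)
Your proposal is correct and follows the same route the paper intends: the paper's own proof is the single sentence ``The proof of the theorem follows directly from the results in section \S\ref{multi-hump-quanta},'' i.e.\ from the Bohr--Sommerfeld correspondence (Theorems \ref{theorem-b-r-multi} and \ref{exist-ev-theorem}) between actual eigenvalues and the WKB values $i\,\fisf_\ell^{-1}\!\big(\pi(n+\tfrac12)\hb\big)$. What you have written is exactly a fleshing-out of that sentence --- replacing eigenvalues by their $\asympt(\hb^{5/3})$-close WKB approximants, recognising the resulting weighted sum as a Riemann sum via the substitution $\xi=\fisf_\ell(\mu)$, and reading off the density from \eqref{derivative-phi-lambda}; your identification of the bookkeeping across critical energy levels (where $L(\mu)$ jumps) as the main technical nuisance is accurate, and is a point the paper itself leaves implicit.
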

\begin{proof}
The proof of the theorem follows directly from the results in section 
\S\ref{multi-hump-quanta}.
\end{proof}

It is thus clear that $\rho$ is a continous function on $[-iA_{max}, iA_{max}]$ 
(our discussion in \cite{kmm} shows that it is even piecewise analytic).  
Analyticity of $\rho$ was crucial in the proofs of \cite{kmm} and \cite{kr}.  But
as we have shown in \cite{fujii+kamvi} continuity will suffice; indeed the proofs of 
\cite{kmm} actually become more ``natural" by solving an auxiliary scalar Riemann-Hilbert 
problem with jump across $[-iA_{max}, iA_{max}]$, so continuity is more than enough.

We can finally conclude that, at least under the extra hypothesis that the eigenvalues 
of the Dirac operator are imaginary, and of course the  Assumption 
\ref{multi-hump-potential-assumption} and Assumption \ref{near-0-evs-potential-assume},
the finite gap property is generically valid in the sense described above.

\appendix

\section{Airy Functions}
\label{airy_functions}

In this section, some basic properties of \textit{Airy functions} are 
presented. For further reading one may consult \cite{olver1997}.

\begin{figure}[H]
\includegraphics[scale=0.4]{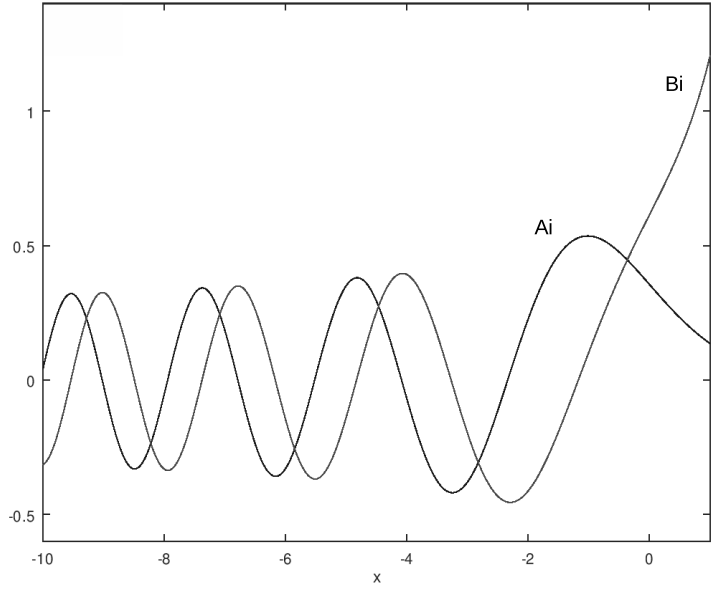}
\caption{The Airy functions $Ai$, $Bi$ on the real line.}
\end{figure}
\noindent
Consider the \textit{Airy equation}
\begin{equation*}
-\dfrac{d^{2}w}{dt^{2}}+tw=0,\quad t\in\mathbb{R}
\end{equation*}
We denote by $Ai$ and $Bi$ its two linearly independent solutions having 
the asymptotics
\begin{equation}\label{airy_1}
Ai(t)=
\frac{1}{2\sqrt{\pi}}
t^{-\frac{1}{4}}\exp\{-\tfrac{2}{3}t^\frac{3}{2}\}
\big[1+O\big(t^{-\frac{3}{2}}\big)\big]
\quad\text{as}\quad t\rightarrow+\infty
\end{equation}
and
\begin{equation}\label{airy_2}
Bi(t)=
-\frac{1}{\sqrt{\pi}}
|t|^{-\frac{1}{4}}\sin\Big(\tfrac{2}{3}|t|^\frac{3}{2}-
\tfrac{\pi}{4}\Big)+O\big(|t|^{-\frac{7}{4}}\big)
\quad\text{as}\quad t\rightarrow-\infty
\end{equation}
\noindent
Their behavior on the opposite side of the real line is known to be
\begin{equation}\label{airy_3}
Ai(t)=
\frac{1}{\sqrt{\pi}}
|t|^{-\frac{1}{4}}\sin\Big(\tfrac{2}{3}|t|^{\frac{3}{2}}+
\tfrac{\pi}{4}\Big)+O\big(|t|^{-\frac{7}{4}}\big)
\quad\text{as}\quad t\rightarrow-\infty
\end{equation}
and
\begin{equation*}
Bi(t)\leq C(1+t)^{-\frac{1}{4}}\exp\{\tfrac{2}{3}t^\frac{3}{2}\},
\quad t\geq0
\end{equation*}
where $C$ is a positive constant.
\noindent
Observe that as $t\to-\infty$, $Ai$ and $Bi$ only differ by a phase shift. 
Also $Ai(t),Bi(t)>0$ for all $t\geq0$. Note that all asymptotic relations 
(\ref{airy_1}), (\ref{airy_2}) and (\ref{airy_3}) can be differentiated 
in $t$; for example
\begin{equation*}
Ai'(t)=
-\frac{1}{\sqrt{\pi}}
|t|^\frac{1}{4}\cos\Big(\tfrac{2}{3}|t|^{\frac{3}{2}}+
\tfrac{\pi}{4}\Big)+
O\big(|t|^{-\frac{5}{4}}\big)
\quad\text{as}\quad t\rightarrow-\infty
\end{equation*}
and
\begin{equation*}
Ai'(t)=
-\frac{1}{2\sqrt{\pi}}
t^\frac{1}{4}\exp\{-\tfrac{2}{3}t^\frac{3}{2}\}
\big[1+O\big(t^{-\frac{3}{2}}\big)\big]
\quad\text{as}\quad t\rightarrow+\infty.
\end{equation*}

Another property says that
\begin{equation}\nn
|Ai(t)|\leq C(1+|t|)^{-\frac{1}{4}},\qquad t\in\mathbb{R}
\end{equation} 
where $C$ is a positive constant. The wronskian of $Ai$, $Bi$ satisfies
\begin{equation*}
\mathcal{W}\big[Ai,Bi\big](t):=Ai(t)Bi'(t)-Ai'(t)Bi(t)=
\frac{1}{\pi},
\quad t\in\mathbb{R}.
\end{equation*}

In order to have a convenient way of assessing the magnitudes of $Ai$ and 
$Bi$ we introduce a \textit{modulus function} $M$, a \textit{phase function} 
$\thv$ and a \textit{weight function} $E$ related by
\begin{equation*}
E(x)Ai(x)=M(x)\sin\thv(x),\quad \frac{1}{E(x)}Bi(x)=M(x)\cos\thv(x),
\quad x\in\mathbb{R}.
\end{equation*}
Actually, we choose $E$ as follows. Denote by $c_{*}$ the biggest 
negative root of the equation $Ai(x)=Bi(x)$ 
(numerical calculations show that $c_{*}=-0.36605$ correct up to five 
decimal places); then define
\begin{equation}\nn
E(x)=
\begin{cases}
1,\quad x\leq c_{*}\\
\sqrt{\frac{Bi(x)}{Ai(x)}},\quad x>c_{*}
\end{cases}
\end{equation}
With this choice in mind, $M$, $\theta$ become
\begin{equation}\nn
M(x)=
\begin{cases}
\sqrt{Ai^{2}(x)+Bi^{2}(x)},\quad x\leq c_{*}\\
\sqrt{2Ai(x)Bi(x)},\quad x>c_{*}
\end{cases}\text{and}\quad
\thv(x)=
\begin{cases}
\arctan\Big[\frac{Ai(x)}{Bi(x)}\Big],\quad x\leq c_{*}\\
\frac{\pi}{4},\quad x>c_{*}
\end{cases}
\end{equation}
where the branch of the inverse tangent is continuous and equal to
$\frac{\pi}{4}$ at $x=c_{*}$. For these functions the asymptotics for 
large $|x|$ read
\begin{align*}
E(x)\sim & 
\begin{cases}
1,\quad x\rightarrow-\infty\\
\sqrt{2}\exp\{\tfrac{2}{3}t^\frac{3}{2}\},
\quad x\rightarrow+\infty
\end{cases}\\
M(x)\sim & \frac{1}{\sqrt{\pi}}
|x|^{-\frac{1}{4}},
\quad |x|\rightarrow+\infty\\
\thv(x)= & 
\begin{cases}
\frac{2}{3}
|x|^\frac{3}{2}+
\frac{\pi}{4}+
\mathcal{O}\big(\frac{3}{2}|x|^{-\frac{3}{2}}\big),
\quad x\rightarrow-\infty\\
\frac{\pi}{4},\quad x\rightarrow+\infty
\end{cases}
\end{align*}

\section{Parabolic Cylinder Functions \& Modified Parabolic Cylinder Functions}
\label{parabolic-cylinder-functions}

The results of the main theorems about existence of approximate 
solutions of the differential equations treated in the main text 
involve PCFs and modified PCFs (cf. \cite{abramo+stegun}).  
So in this section we state a few properties which will be in heavy use, 
especially about their asymptotic character, wronskians and zeros. We
prove none of them. For a rigorous exposition on PCFs and mPCFs one may 
consult \S 5 of \cite{olver1975} or \S 12 of
\cite{olver-et-al} and the references therein.

\subsection{PCFs}
\label{pcf}

Consider \textit{Weber's equation}
\be\label{pcfs}
\frac{d^2 w}{dx^2}=(\tfrac{1}{4}x^2+b)w.
\ee
The behavior of the solutions depends on the sign of $b$. When $b$ is 
negative then there exist two turning points $\pm 2\sqrt{-b}$. The
solutions are of oscillatory type in the interval between these points
but not in the exterior intervals. When $b>0$ there are no
real turning points and there are no oscillations at all. Since only
the case $b\leq0$ will be of interest to us, from now on we seldom mention 
properties having to do with the other case.

Standard solutions of (\ref{pcfs}) are $U(\pm x,b)$ and $\ol U(\pm x,b)$ 
defined by
\begin{multline*}
U(\pm x,b)=
\frac{\pi^{\frac{1}{2}}2^{-\frac{1}{4}(2b+1)}}{\G(\frac{3}{4}+\frac{1}{2}b)}
e^{-\frac{1}{4}x^2}
\Hypergeometric{1}{1}{\tfrac{1}{4}+\tfrac{1}{2}b}{\tfrac{1}{2}}{\tfrac{1}{2}x^2}\\
\mp
\frac{\pi^{\frac{1}{2}}2^{-\frac{1}{4}(2b-1)}}{\G(\frac{1}{4}+\frac{1}{2}b)}
x
e^{-\frac{1}{4}x^2}
\Hypergeometric{1}{1}{\tfrac{3}{4}+\tfrac{1}{2}b}{\tfrac{3}{2}}{\tfrac{1}{2}x^2}
\end{multline*}
\begin{multline*}
\ol U(\pm x,b)=
\pi^{-\frac{1}{2}}2^{-\frac{1}{4}(2b+1)}\G(\tfrac{1}{4}-\tfrac{1}{2}b)
\sin(\tfrac{3}{4}\pi-\tfrac{1}{2}b\pi)
e^{-\frac{1}{4}x^2}
\Hypergeometric{1}{1}{\tfrac{1}{4}+\tfrac{1}{2}b}{\tfrac{1}{2}}{\tfrac{1}{2}x^2}\\
\mp
\pi^{-\frac{1}{2}}2^{-\frac{1}{4}(2b-1)}\G(\tfrac{3}{4}-\tfrac{1}{2}b)
\sin(\tfrac{5}{4}\pi-\tfrac{1}{2}b\pi)
x
e^{-\frac{1}{4}x^2}
\Hypergeometric{1}{1}{\tfrac{3}{4}+\tfrac{1}{2}b}{\tfrac{3}{2}}{\tfrac{1}{2}x^2}
\end{multline*}
where $_1 F_1$ denotes the confluent hypergeometric function 
(again cf. \cite{abramo+stegun}). The pair
$U(x,b), \ol U(x,b)$ is a numerically satisfactory set of solutions (in the 
sense of \cite{miller1950}) when $x\geq0$ and $b\leq0$; both are continuous 
in $x$ and $b$ in this region.

For $b\in\R$, their values at $x=0$ obey
\begin{align*}
U(0,b) &=
\pi^{-\frac{1}{2}}2^{-\frac{1}{4}(2b+1)}
\G(\tfrac{1}{4}-\tfrac{1}{2}b)\sin(\tfrac{\pi}{4}-\tfrac{1}{2}b\pi)\\
U'(0,b) &= -
\pi^{-\frac{1}{2}}2^{-\frac{1}{4}(2b-1)}
\G(\tfrac{3}{4}-\tfrac{1}{2}b)\sin(\tfrac{3\pi}{4}-\tfrac{1}{2}b\pi)\\
\ol U(0,b) &=
\pi^{-\frac{1}{2}}2^{-\frac{1}{4}(2b+1)}
\G(\tfrac{1}{4}-\tfrac{1}{2}b)\sin(\tfrac{3\pi}{4}-\tfrac{1}{2}b\pi)\\
\ol U'(0,b) &= -
\pi^{-\frac{1}{2}}2^{-\frac{1}{4}(2b-1)}
\G(\tfrac{3}{4}-\tfrac{1}{2}b)\sin(\tfrac{5\pi}{4}-\tfrac{1}{2}b\pi).
\end{align*} 

Those values of $b$ that make the Gamma functions in the definitions of
$U$ and $\ol U$ infinite (the Gamma function has simple poles at the 
non-positive integers), are called \textit{exceptional values}. 
For a fixed $b\in\R$ other than an exceptional value,
the behaviors of $U$ and $\ol U$ as $x\to+\infty$ satisfy
\be
\label{U-asympt}
\begin{split}
U(x,b) \sim
x^{-b-\frac{1}{2}}e^{-\frac{1}{4}x^2}\\
U'(x,b) \sim
-\frac{1}{2}x^{-b+\frac{1}{2}}e^{-\frac{1}{4}x^2}\\
\ol U(x,b) \sim
\sqrt{\frac{2}{\pi}}
\G(\tfrac{1}{2}-b)
x^{b-\frac{1}{2}}e^{\frac{1}{4}x^2}\\
\ol U'(x,b) \sim
(2\pi)^{-\frac{1}{2}}
\G(\tfrac{1}{2}-b)
x^{b+\frac{1}{2}}e^{\frac{1}{4}x^2}.
\end{split}
\ee
These estimates are uniform in $b$ when $b$ takes values over a fixed
compact interval not containing exceptional values.

For the wronskian of $U(\cdot,b)$, $\ol U(\cdot,b)$ we have
\be\label{wronskian-pcf}
\W[U(\cdot,b),\ol U(\cdot,b)](x)=2^{\frac{1}{2}}\pi^{-\frac{1}{2}}
\G\Big(\frac{1}{2}-b\Big),
\quad x\in\R.
\ee

When $b=0$ the standard solutions of equation (\ref{pcf}) are related to
the \textit{modified Bessel functions} $K_{\frac{1}{4}}$ and 
$I_{\frac{1}{4}}$ in the following way. For $x\geq0$ we have
\begin{align*}
U(x,0) & =
(2\pi)^{-\frac{1}{2}}x^{\frac{1}{2}}K_{\frac{1}{4}}(\tfrac{1}{4}x^2)\\
\ol U(x,0) & =
(\pi x)^{\frac{1}{2}}I_{\frac{1}{4}}(\tfrac{1}{4}x^2)+
(2\pi x)^{-\frac{1}{2}}x^{\frac{1}{2}}K_{\frac{1}{4}}(\tfrac{1}{4}x^2).
\end{align*}

In order to express the character of these standard solutions
for large  negative $b$, we need some preparations 
first. Take $\nu\gg1$ to be a large positive number and set
$b=-\frac{1}{2}\nu^2$ and $x=\nu y\sqrt{2}$ where $y\geq0$. If we
consider the fuction $\eta$ to be
\be\label{eta-definition}
\eta(y)=
\begin{cases}
-[\frac{3}{2}\int_{y}^{1}(1-s^2)^{\frac{1}{2}}ds]^{\frac{2}{3}}
,\quad 0\leq y\leq1\\
[\frac{3}{2}\int_{1}^{y}(s^2-1)^{\frac{1}{2}}ds]^{\frac{2}{3}}
,\quad y\geq1
\end{cases}
\ee
then as $\nu\to+\infty$ we have
\begin{align}\label{u-asymptotics}
U(\nu y\sqrt{2},-\tfrac{1}{2}\nu^2) & =
\frac{2^{\frac{1}{2}}\pi^{\frac{1}{4}}
\G(\tfrac{1}{2}+\frac{1}{2}\nu^2)^{\frac{1}{2}}}
{\nu^{\frac{1}{6}}} 
\Big(\frac{\eta}{y^2-1}\Big)^{\frac{1}{4}}
\Big[
\Ai(\nu^{\frac{4}{3}}\eta)+
\frac{M(\nu^{\frac{4}{3}}\eta)}{E(\nu^{\frac{4}{3}}\eta)}
\asympt(\nu^{-2})
\Big]
\\
\label{ubar-asymptotics}
\ol U(\nu y\sqrt{2},-\tfrac{1}{2}\nu^2) & =
\frac{2^{\frac{1}{2}}\pi^{\frac{1}{4}}
\G(\tfrac{1}{2}+\frac{1}{2}\nu^2)^{\frac{1}{2}}\eta^{\frac{1}{4}}}
{\nu^{\frac{1}{6}}(y^2-1)^{\frac{1}{4}}} 
\Big[
\Bi(\nu^{\frac{4}{3}}\eta)+
M(\nu^{\frac{4}{3}}\eta)E(\nu^{\frac{4}{3}}\eta)
\asympt(\nu^{-2})
\Big]
\end{align}
where $\Ai$, $\Bi$, $E$ and $M$ are the standard Airy functions'
terminology (cf. section \ref{airy_functions} in the appendix).

For $b\leq0$, the number of zeros of $U(\cdot,b)$ in the interval 
$[0,+\infty)$ is $\floor{\tfrac{1}{4}-\frac{1}{2}b}$ while
$\ol U(\cdot,b)$ has $\floor{\tfrac{3}{4}-\frac{1}{2}b}$ zeros
in $[0,+\infty)$. Actually, the zeros of $U(\cdot,b)$ and 
$\ol U(\cdot,b)$ do not cross each other. They interlace, with the 
largest one belonging to $\ol U(\cdot,b)$. For sufficiently large 
$|b|$, all the real zeros of these two functions lie to the left 
of $2\sqrt{-b}$, the positive turning point of Weber's equation
\footnote{
For $U(\cdot,b)$, this result holds for all $b\leq0$.
}
. 


To express the errors for the approximations of our problem, we need
to define
some auxiliary functions having to do with the nature of $U(\cdot,b)$ 
and $\ol U(\cdot,b)$ for negative $b$.
In this case the character of each is partly oscillatory and partly
exponential, so we introduce one weight function $\esf$, two modulus functions
$\msf$ and $\nsf$, and finally two phase functions $\8$ and $\om$.

We denote by $\ro(b)$ the largest real root of the equation
\be\nn
U(x,b)=\ol U(x,b).
\ee
We know (cf. \S 13 of \cite{olver-et-al} and the references therein) that 
$\ro(0)=0$ and $\ro(b)>0$ for $b<0$.
Also, $\ro$ is continuous when $b\in(-\infty,0]$. An asymptotic estimate 
for large negative $b$ is
\be\label{largest-root-pcf-asympt}
\ro(b)=2(-b)^{\frac{1}{2}}+c_*(-b)^{-\frac{1}{6}}+
\asympt\big(b^{-\frac{5}{6}}\big)
\quad\text{as}\quad
b\to-\infty
\ee
where $c_*$ ($\approx-0.36605$) is the smallest in absolute value 
root of the equation $\Ai(x)=\Bi(x)$. 

For $b\leq0$ we define 
\begin{equation}\nn
\esf(x,b)=
\begin{cases}
1,\quad 0\leq x\leq\ro(b)\\
\Big[\frac{\ol U(x,b)}{U(x,b)}\Big]^{1/2},\quad x>\ro(b).
\end{cases}
\end{equation}
It is seen that $\esf$ is continuous in the region 
$[0,+\infty)\times(-\infty,0]$ of the $(x,b)$-plane and 
for $b\leq0$ fixed, $\esf(\cdot,b)$ is non-decreasing in 
the interval $[0,+\infty)$. Again for $b\leq0$ and $x\geq0$ we set
\be\nn
U(x,b)=\frac{1}{\esf(x,b)}\msf(x,b)\sin\8(x,b),\quad 
\ol U(x,b)=\esf(x,b)\msf(x,b)\cos\8(x,b)
\ee
and
\be\nn
U'(x,b)=\frac{1}{\esf(x,b)}\nsf(x,b)\sin\om(x,b),\quad 
\ol U'(x,b)=\esf(x,b)\nsf(x,b)\cos\om(x,b).
\ee
Thus
\be\label{m-definition}
\msf(x,b)=
\begin{cases}
\big[U(x,b)^{2}+\ol{U}(x,b)^{2}\big]^{1/2},\quad 0\leq x\leq\ro(b)\\
\big[2U(x,b)\ol U(x,b)\big]^{1/2},\quad x>\ro(b)
\end{cases}
\ee
and
\be\nn
\theta(x,b)=
\begin{cases}
\arctan\Big[\frac{U(x,b)}{\ol U(x,b)}\Big],\quad 0\leq x\leq\ro(b)\\
\frac{\pi}{4},\quad x>\ro(b)
\end{cases}
\ee
where the branch of the inverse tangent is continuous and equal to
$\frac{\pi}{4}$ at $x=\ro(b)$. 

Similarly
\be\nn
\nsf(x,b)=
\begin{cases}
\Big[U'(x,b)^{2}+\ol{U}'(x,b)^{2}\Big]^{1/2},\quad 0\leq x\leq\ro(b)\\
 \\
\bigg[\frac{U'(x,b)^{2}\ol U(x,b)^2+
           \ol U'(x,b)^{2}U(x,b)^{2}}{U(x,b)\ol U(x,b)}\bigg]^{1/2},
           \quad x>\ro(b)
\end{cases}
\ee
and
\be\nn
\omega(x,b)=
\begin{cases}
\arctan\Big[\frac{U'(x,b)}{\ol{U}'(x,b)}\Big],\quad 0\leq x\leq\ro(b)\\
 \\
\arctan\Big[\frac{U'(x,b)\ol{U}(x,b)}{\ol{U}'(x,b)U(x,b)}\Big],
\quad \quad x>\ro(b)
\end{cases}
\ee
where the branches of the inverse tangents are chosen to be continuous and 
fixed by $\omega(x,b)\rightarrow-\frac{\pi}{4}$ as $x\rightarrow+\infty$.

For large $x$ we have
\be\nn
\esf(x,b)\sim\Big(\frac{2}{\pi}\Big)^{\frac{1}{4}}
\G(\tfrac{1}{2}-b)^{\frac{1}{2}}x^{b}e^{\frac{1}{4}x^2}
\ee
and
\be\label{M,N-asymptotics}
\msf(x,b)\sim\Big(\frac{8}{\pi}\Big)^{\frac{1}{4}}
\frac{\G(\tfrac{1}{2}-b)^{\frac{1}{2}}}{x^{\frac{1}{2}}},
\quad
\nsf(x,b)\sim
\frac{\G(\tfrac{1}{2}-b)^{\frac{1}{2}}}{(2\pi)^{\frac{1}{4}}}
x^{\frac{1}{2}}.
\ee
Both of these hold for fixed $b$ and are also uniform for $b$ 
ranging over any compact interval in $(-\infty,0]$.

\subsection{mPCFs}
\label{modified-pcfs}

Consider the equation
\be\label{mpcfs}
\frac{d^2 w}{dx^2}=(b-\tfrac{1}{4}x^2)w.
\ee
When $b>0$ there exist two turning points $\pm 2\sqrt{b}$. The
solutions are monotonic in the interval between these points
and oscillate in the two exterior intervals. When $b\leq0$ there are no
real turning points and the entire real axis is an interval of
oscillation. Only the case $b\geq0$ will be of interest to us.

Standard solutions of (\ref{mpcfs}) are $W(\pm x,b)$ defined by
\begin{multline*}
W(\pm x,b)=
2^{-\frac{3}{4}}
\Bigg|
\frac{\G(\frac{1}{4}+\frac{1}{2}ib)}{\G(\frac{3}{4}+\frac{1}{2}ib)}
\Bigg|^{\frac{1}{2}}
e^{\frac{1}{4}ix^2}
\Hypergeometric{1}{1}
{\tfrac{1}{4}+\tfrac{1}{2}ib}
{\tfrac{1}{2}}
{-\tfrac{1}{2}ix^2}\\
\mp
2^{-\frac{1}{4}}
\Bigg|
\frac{\G(\frac{3}{4}+\frac{1}{2}ib)}{\G(\frac{1}{4}+\frac{1}{2}ib)}
\Bigg|^{\frac{1}{2}}
x
e^{\frac{1}{4}ix^2}
\Hypergeometric{1}{1}
{\tfrac{3}{4}+\tfrac{1}{2}ib}
{\tfrac{3}{2}}
{-\tfrac{1}{2}ix^2}
\end{multline*}
where as in \S \ref{pcf}, $_1 F_1$ denotes the confluent hypergeometric 
function (cf. \cite{abramo+stegun}). A numerically satisfactory set of 
solutions is obtained by taking appropriate multiples of   
$W(\pm x,b)$. Both of them are real and continuous for all real values
of $x$ and $b$. 

Before presenting their basic properties that are useful to us, 
we fix some notation first. We set
\be\label{kappa}
k(b)=(1+e^{2\pi b})^{\frac{1}{2}}-e^{\pi b}
\ee
and
\be\label{phi-mpcf}
\phi(b)=\frac{\pi}{4}+\frac{1}{2}{\rm ph}
\Big\{\G\Big(\frac{1}{2}+ib\Big)\Big\}
\ee
where it is being understood that the phase of 
$\G\Big(\frac{1}{2}+ib\Big)$ in (\ref{phi-mpcf}) is continuous 
and vanishes for $b=0$. Also we know that as $b$ increases from 
$-\infty$ to $+\infty$, $k(b)$ decreases monotonically from $1$ 
to $0$.

For $b\in\R$ and $x=0$ we have
\begin{align*}
W(0,b) &=
2^{-\frac{3}{4}}
\Bigg|
\frac{\G(\frac{1}{4}+\frac{1}{2}ib)}{\G(\frac{3}{4}+\frac{1}{2}ib)}
\Bigg|^{\frac{1}{2}}\\
W'(0,b) &= -
2^{-\frac{1}{4}}
\Bigg|
\frac{\G(\frac{3}{4}+\frac{1}{2}ib)}{\G(\frac{1}{4}+\frac{1}{2}ib)}
\Bigg|^{\frac{1}{2}}.
\end{align*} 

For a fixed $b\in\R$ the behavior of $W(\pm\cdot,b)$ and 
$W'(\pm\cdot,b)$ as $x\to+\infty$ satisfy
\begin{align}
\label{W-asympt}
W(x,b) &=
\sqrt{\frac{2k(b)}{x}}
\cos\bigg[\frac{1}{4}x^2-b\ln x+\phi(b)\bigg]+
\asympt(x^{-\frac{5}{2}})\\
\nn
W'(x,b) &=
-\sqrt{\frac{k(b)x}{2}}
\sin\bigg[\frac{1}{4}x^2-b\ln x+\phi(b)\bigg]+
\asympt(x^{-\frac{3}{2}})\\
\nn
W(-x,b) &=
\sqrt{\frac{2}{k(b)x}}
\sin\bigg[\frac{1}{4}x^2-b\ln x+\phi(b)\bigg]+
\asympt(x^{-\frac{5}{2}})\\
\nn
W'(-x,b) &=
-\sqrt{\frac{x}{2k(b)}}
\cos\bigg[\frac{1}{4}x^2-b\ln x+\phi(b)\bigg]+
\asympt(x^{-\frac{3}{2}}).
\end{align}
These estimates are uniform in $b$ lying in any fixed
compact interval.

For the wronskian of $W(\cdot,b)$, $W(-\cdot,b)$ we have
\be\label{wronskian-mpcf}
\W[W(\cdot,b),W(-\cdot,b)](x)=1,
\quad x\in\R.
\ee

When $b=0$ the standard solutions of equation (\ref{mpcfs}) are related 
to the \textit{Bessel functions} $J_{\pm\frac{1}{4}}$ and 
$J_{\pm\frac{3}{4}}$ in the following way. Since $k(0)=\sqrt{2}-1$
by (\ref{kappa}) and $\phi(0)=\frac{\pi}{4}$ by (\ref{phi-mpcf}), 
for $x\geq0$ we have
\begin{align*}
W(\pm x,0) & =
2^{-\frac{5}{4}}(\pi x)^{\frac{1}{2}}
[J_{-\frac{1}{4}}(\tfrac{1}{4}x^2)\mp J_{\frac{1}{4}}(\tfrac{1}{4}x^2)]\\
W'(\pm x,0) & =
2^{-\frac{9}{4}}\pi^{\frac{1}{2}}x^{\frac{3}{2}}
[\mp J_{\frac{3}{4}}(\tfrac{1}{4}x^2)-J_{-\frac{3}{4}}(\tfrac{1}{4}x^2)].
\end{align*}

The behavior of these standard solutions
for large  positive $b$ can be seen by setting 
$b=\frac{1}{2}\nu^2$ and $x=\nu y\sqrt{2}$ where $\nu\gg1$ is a 
large positive number and $y\geq0$. Then as $\nu\uparrow+\infty$ we have
\begin{align}
\label{kappa-asymptotics}
k\Big(\frac{1}{2}\nu^2\Big) &=
\frac{1}{2}e^{-\frac{1}{2}\pi\nu^2}+
\asympt(e^{-\frac{3}{2}\pi\nu^2})
\\
\label{phi-asymptotics}
\phi\Big(\frac{1}{2}\nu^2\Big) &=
\frac{1}{4}\nu^2\ln\Big(\frac{1}{2}\nu^2\Big)-\frac{1}{4}\nu^2+\frac{\pi}{4}+
\asympt(\nu^{-2})
\\
\label{w-plus-asymptotics}
k\Big(\frac{1}{2}\nu^2\Big)^{-\frac{1}{2}}
W(\nu y\sqrt{2},\tfrac{1}{2}\nu^2) & =
\frac{2^{\frac{1}{4}}\pi^{\frac{1}{2}}}{\nu^{\frac{1}{6}}} 
\Big(\frac{\eta}{y^2-1}\Big)^{\frac{1}{4}}
\Big[
\Bi(-\nu^{\frac{4}{3}}\eta)+
M(-\nu^{\frac{4}{3}}\eta)E(-\nu^{\frac{4}{3}}\eta)
\asympt(\nu^{-2})
\Big]
\\
\label{w-minus-asymptotics}
k\Big(\frac{1}{2}\nu^2\Big)^{\frac{1}{2}}
W(-\nu y\sqrt{2},\tfrac{1}{2}\nu^2) & =
\frac{2^{\frac{1}{4}}\pi^{\frac{1}{2}}}{\nu^{\frac{1}{6}}} 
\Big(\frac{\eta}{y^2-1}\Big)^{\frac{1}{4}}
\Big[
\Ai(-\nu^{\frac{4}{3}}\eta)+
\frac{M(-\nu^{\frac{4}{3}}\eta)}{E(-\nu^{\frac{4}{3}}\eta)}
\asympt(\nu^{-2})
\Big]
\end{align}
where $\Ai$, $\Bi$, $E$ and $M$ are the standard Airy functions'
terminology (cf. section \ref{airy_functions} in the appendix)
and $\eta$ is as in (\ref{eta-definition}).
In the last two relations, the $\asympt$-terms are uniformly
valid in any $y$-interval that includes $[0,+\infty)$.


To express the errors for the approximations in Theorem 
\ref{main-thm-well}, we need to define some auxiliary functions 
having to do with the nature of $k(b)^{-\frac{1}{2}}W(\cdot,b)$ 
and $k(b)^{\frac{1}{2}}W(-\cdot,b)$ for positive $b$.
As in the case of the PCFs in \S \ref{pcf}, we introduce one weight 
function $\overline{\esf}$, two modulus functions
$\overline{\msf}$ and $\overline{\nsf}$, and finally two phase 
functions $\overline{\8}$ and $\overline{\om}$.

Take $b\geq0$ and denote by $\overline{\ro}(b)$ the smallest real root 
in $x\in[0,+\infty)$ of the equation
\be\nn
k(b)^{-\frac{1}{2}}W(x,b)=
k(b)^{\frac{1}{2}}W(-x,b).
\ee
We know (cf. \S 13 of \cite{olver-et-al} and the references therein) that 
\be\nn
k(b)^{-\frac{1}{2}}W(x,b)>
k(b)^{\frac{1}{2}}W(-x,b)>0,
\quad
0\leq x<\overline{\ro}(b).
\ee
Also, $\overline{\ro}$ is continuous when $b\in[0,+\infty)$. 
An asymptotic estimate for large positive $b$ is
\be\label{smallest-root-mpcf-asympt}
\overline{\ro}(b)=2b^{\frac{1}{2}}-c_*b^{-\frac{1}{6}}+
\asympt\big(b^{-\frac{5}{6}}\big)
\quad\text{as}\quad
b\to+\infty
\ee
where as in \S\ref{pcf}, $c_*$ ($\approx-0.36605$) is the smallest 
in absolute value root of the equation $\Ai(x)=\Bi(x)$. 

So for $b\geq0$ we define 
\begin{equation}\nn
\overline{\esf}(x,b)=
\begin{cases}
\overline{\esf}(-x,b),
\quad
x<0\\
\Big[\frac{k(b)W(-x,b)}{W(x,b)}\Big]^{1/2},
\quad 
0\leq x\leq\overline{\ro}(b)\\
1,
\quad 
x>\overline{\ro}(b).
\end{cases}
\end{equation}
It is seen that $\overline{\esf}$ is continuous in the region 
$(-\infty,+\infty)\times[0,+\infty)$ of the $(x,b)$-plane and 
for $b\geq0$ fixed, $\overline{\esf}(\cdot,b)$ is non-decreasing in 
the interval $[0,+\infty)$. Again for $b\leq0$ and $x\geq0$ we have
\be\nn
k(b)^{\frac{1}{2}}\leq\overline{\esf}(x,b)\leq1
\ee

For $b\geq0$ and $x\geq0$, modulus and phase functions are defined by
\be\nn
k(b)^{-\frac{1}{2}}W(x,b)=
\frac{\overline{\msf}(x,b)}{\overline{\esf}(x,b)}\sin\overline{\8}(x,b),
\quad 
k(b)^{\frac{1}{2}}W(-x,b)=
\overline{\msf}(x,b)\overline{\esf}(x,b)\cos\overline{\8}(x,b)
\ee
and
\be\nn
k(b)^{-\frac{1}{2}}W'(x,b)=
\frac{\overline{\nsf}(x,b)}{\overline{\esf}(x,b)}\sin\overline{\om}(x,b),
\quad 
k(b)^{\frac{1}{2}}W'(-x,b)=-
\overline{\nsf}(x,b)\overline{\esf}(x,b)\cos\overline{\om}(x,b).
\ee

Thus
\be\label{m-bar-definition}
\overline{\msf}(x,b)=
\begin{cases}
\big[2W(x,b)W(-x,b)\big]^{1/2},
\quad 
0\leq x\leq\overline{\ro}(b)\\
\big[k(b)^{-1}W(x,b)^{2}+k(b)W(-x,b)^{2}\big]^{1/2},
\quad 
x>\overline{\ro}(b)
\end{cases}
\ee
and
\be\nn
\overline{\theta}(x,b)=
\begin{cases}
\frac{\pi}{4},
\quad 
0\leq x\leq\overline{\ro}(b)\\
\arctan\Big[k(b)^{-1}\frac{W(x,b)}{W(-x,b)}\Big],
\quad 
x>\overline{\ro}(b)
\end{cases}
\ee
where the branch of the inverse tangent is continuous and equal to
$\frac{\pi}{4}$ at $x=\overline{\ro}(b)$. 
Similarly
\be\nn
\overline{\nsf}(x,b)=
\begin{cases}
\bigg[\frac{W'(x,b)^{2}W(-x,b)^2+
           W'(-x,b)^{2}W(x,b)^{2}}{W(x,b)W(-x,b)}\bigg]^{1/2},
\quad 
0\leq x\leq\overline{\ro}(b)\\
 \\
\Big[k(b)^{-1}W'(x,b)^{2}+k(b)W'(-x,b)^{2}\Big]^{1/2},
\quad 
x>\overline{\ro}(b)
\end{cases}
\ee
and
\be\nn
\overline{\omega}(x,b)=
\begin{cases}
-\arctan\Big[\frac{W'(x,b)W(-x,b)}{W'(-x,b)W(x,b)}\Big],
\quad 
0\leq x\leq\overline{\ro}(b)\\
 \\
-\arctan\Big[k(b)^{-1}\frac{W'(x,b)}{W'(-x,b)}\Big],
\quad 
x>\overline{\ro}(b)
\end{cases}
\ee
where the branches of the inverse tangents are chosen to be continuous and 
fixed by $\overline{\omega}(x,b)\rightarrow-\frac{\pi}{4}$ as 
$x\rightarrow+\infty$.

For large $|x|$ we have
\be\label{M,N-bar-asymptotics}
\overline{\msf}(x,b)\sim
\Big|\frac{2}{x}\Big|^{\frac{1}{2}},
\quad
\overline{\nsf}(x,b)\sim
\Big|\frac{x}{2}\Big|^{\frac{1}{2}}.
\ee
Both of these hold for fixed $b$ and are also uniform for $b$ ranging over
any compact interval in $[0,+\infty)$.

\section{A Theorem on Integral Equations}
\label{exist-proof}

The proofs of theorems about WKB approximation when there is an absence 
of turning points (like Theorems 2.1 and 2.2 in chapter 6 of
\cite{olver1997}), may be adapted to other types of 
approximate solutions of linear differential equations 
where turning points may be present. 
For second-order equations the basic steps consist of
\begin{itemize}
\item[(i)]
construction of a \textit{Volterra integral equation} for the error 
term -say $h$- of the solution, by the method of 
\textit{variation of parameters}
\item[(ii)]
construction of  the 
\textit{Liouville-Neumann expansion} (a uniformly convergent series) 
for the solution $h$
of the integral equation in (i) by \textit{Picard's method of successive 
approximations}
\item[(iii)]
confirmation that $h$ is twice differentiable by construction of 
similar series for $h'$ and $h''$
\item[(iv)]
production of bounds for $h$ and $h'$ by majoring the 
Liouville-Neumann expansion.
\end{itemize}

It would be tedious to carry out all these steps in every case. But
we have the following general theorem which automatically provides (ii), 
(iii) and (iv) in problems relevant to us.

\begin{theorem}\label{sing-int-eq}
\footnote{
This is Theorem 10.2 found in chapter 6 of \cite{olver1997}. It is a 
variant of Theorem 10.1 from the same reference.
}
Consider the equation
\begin{equation}\label{int_eq1}
h(\z)=\int_{\beta}^{\z}\mathsf{K}(\z,t)\phi(t)\{J(t)+h(t)\}dt
\end{equation}
for the function $h$ accompanied by the following assumptions
\begin{itemize}
\item
the \say{path} of integration consists of a segment $[\beta,\zeta]$ 
of the real axis, finite or infinite where $\beta\leq t\leq\z\leq\g$
\item
the real functions $J$ and $\phi$ are continuous in $(\beta,\g)$ 
except for a finite number of discontinuities and infinities
\item
the real kernel $\mathsf{K}$ and its first two partial derivatives 
with respect to $\z$ are continuous functions of both variables when 
$\z,t\in(\beta,\g)$
\item
$\mathsf{K}(\z,\z)=0,\quad\z\in(\beta,\g)$
\item
when $\z\in(\beta,\g)$ and $t\in(\beta,\z]$ we have
\begin{equation*}
|\mathsf{K}(\z,t)|\leq P_{0}(\z)Q(t),\qquad 
\Big|\frac{\partial\mathsf{K}(\z,t)}{\partial\z}\Big|\leq P_{1}(\z)Q(t),
\qquad\Big|\frac{\partial^{2}\mathsf{K}(\z,t)}{\partial\z^{2}}\Big|
\leq P_{2}(\z)Q(t)
\end{equation*}
where the $P_{j},j=0,1,2$ and $Q$ are continuous real functions, the 
$P_{j},j=0,1,2$ being positive.
\item
when $\z\in(\beta,\g)$, the integral
\begin{equation*}
\Phi(\z)=\int_{\beta}^{\z}|\phi(t)|dt
\end{equation*}
converges and the following suprema
\begin{equation*}
\kappa=\sup_{\z\in(\beta,\g)}\{Q(\z)|J(\z)|\},\qquad
\kappa_{0}=\sup_{\z\in(\beta,\g)}\{P_{0}(\z)Q(\z)\}
\end{equation*}
are finite.
\end{itemize}
Under these assumptions, equation (\ref{int_eq1}) has a unique solution 
$h$ which is continuously differentiable in $(\beta,\g)$
and satisfies
\begin{equation*}
\frac{h(\z)}{P_{0}(\z)}\to0\qquad\frac{h'(\z)}{P_{1}(\z)}
\to0\qquad\text{as}\quad\z\downarrow\beta.
\end{equation*}
Furthermore,
\begin{equation*}
\frac{|h(\z)|}{P_{0}(\z)},\frac{|h'(\z)|}{P_{1}(\z)}\leq
\frac{\kappa}{\kappa_{0}}[\exp\{\kappa_{0}\Phi(\z)\}-1]
\end{equation*}
and $h''$ is continuous except at the discontinuities -if any- of 
$\phi,J$.
\end{theorem}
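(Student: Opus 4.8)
The plan is to establish this by Picard's method of successive approximations, reproducing Olver's argument (this is Theorem~10.2 of Chapter~6 in \cite{olver1997}). First I would set $h_{0}(\z)=\int_{\beta}^{\z}\mathsf{K}(\z,t)\phi(t)J(t)\,dt$ and, recursively, $h_{s+1}(\z)=\int_{\beta}^{\z}\mathsf{K}(\z,t)\phi(t)h_{s}(t)\,dt$ for $s\geq0$, and then show that the Liouville--Neumann series $\sum_{s\geq0}h_{s}$ converges uniformly on compact subintervals of $(\beta,\g)$ to a function $h$ solving (\ref{int_eq1}). The convergence of $\Phi$ together with the finiteness of $\kappa$ and $\kappa_{0}$ is exactly what is needed to absorb the finitely many discontinuities and infinities of $\phi$ and $J$.

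The key step is the inductive estimate. Using $|\mathsf{K}(\z,t)|\leq P_{0}(\z)Q(t)$, $Q(t)|J(t)|\leq\kappa$ and $\Phi'(t)=|\phi(t)|$ one gets $|h_{0}(\z)|\leq\kappa P_{0}(\z)\Phi(\z)$; assuming $|h_{s}(\z)|\leq\kappa P_{0}(\z)\kappa_{0}^{s}\Phi(\z)^{s+1}/(s+1)!$ and using $P_{0}(t)Q(t)\leq\kappa_{0}$, the same bound with $s$ replaced by $s+1$ follows by integrating $|\phi(t)|\Phi(t)^{s+1}$. Summing the resulting series gives
\be\nn
\sum_{s\geq0}|h_{s}(\z)|\leq\frac{\kappa}{\kappa_{0}}P_{0}(\z)\big(\exp\{\kappa_{0}\Phi(\z)\}-1\big),
\ee
which yields uniform convergence on compacta (hence continuity of $h$, and that $h$ solves (\ref{int_eq1}) by passing the limit inside the integral), as well as the asserted bound on $|h(\z)|/P_{0}(\z)$.

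Next I would deal with differentiability. Differentiating $h_{s+1}$ under the integral sign produces a boundary term $\mathsf{K}(\z,\z)\phi(\z)\{\cdots\}$ which vanishes since $\mathsf{K}(\z,\z)=0$, leaving $h_{s+1}'(\z)=\int_{\beta}^{\z}\partial_{\z}\mathsf{K}(\z,t)\phi(t)h_{s}(t)\,dt$; the same induction with $P_{1}$ in place of $P_{0}$ gives $|h_{s}'(\z)|\leq\kappa P_{1}(\z)\kappa_{0}^{s}\Phi(\z)^{s+1}/(s+1)!$, so $\sum h_{s}'$ converges uniformly on compacta, $h'=\sum h_{s}'$, and $|h'(\z)|/P_{1}(\z)\leq(\kappa/\kappa_{0})(\exp\{\kappa_{0}\Phi(\z)\}-1)$. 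Differentiating once more brings in the genuinely nonzero term $\partial_{\z}\mathsf{K}(\z,\z)\phi(\z)\{\cdots\}$ plus an integral controlled by $P_{2}$, showing $h''$ is continuous away from the discontinuities of $\phi$ and $J$. Since $\Phi(\z)\downarrow0$ as $\z\downarrow\beta$, the two displayed bounds immediately give $h(\z)/P_{0}(\z)\to0$ and $h'(\z)/P_{1}(\z)\to0$. For uniqueness, the difference $w$ of two admissible solutions satisfies the homogeneous equation $w(\z)=\int_{\beta}^{\z}\mathsf{K}(\z,t)\phi(t)w(t)\,dt$; setting $g(\z_{0})=\sup_{\beta<t\leq\z_{0}}\{|w(t)|/P_{0}(t)\}$ and iterating $|w(\z)|/P_{0}(\z)\leq\kappa_{0}\int_{\beta}^{\z}|\phi(t)|\,|w(t)|/P_{0}(t)\,dt$ gives $|w(\z)|/P_{0}(\z)\leq g(\z_{0})(\kappa_{0}\Phi(\z))^{s}/s!$ for every $s$, hence $w\equiv0$.

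The main obstacle is not any single clever estimate but the bookkeeping near the endpoint $\beta$, where $\phi$ and $J$ may be singular and $P_{0},P_{1},P_{2},Q$ may be unbounded: one must check that each integral written down actually converges, that differentiation under the integral sign and term-by-term differentiation of the series are legitimate on compact subintervals of $(\beta,\g)$, and that the class of admissible solutions (those whose ratio to $P_{0}$ remains bounded near $\beta$) is the correct one for the uniqueness statement. All of this is carried out in full detail in \S~6 of \cite{olver1997}, and we simply invoke it here.
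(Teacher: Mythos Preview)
Your proposal is correct and follows exactly the approach the paper invokes: the paper's own proof consists solely of the sentence ``The proof is a slight variation of that for Theorem 10.1 of chapter 6 in \cite{olver1997},'' and you have reproduced precisely that Picard--iteration argument (successive approximations, inductive bound $|h_s(\z)|\leq \kappa P_0(\z)\kappa_0^s\Phi(\z)^{s+1}/(s+1)!$, termwise differentiation using $\mathsf{K}(\z,\z)=0$, and the Gr\"onwall-type uniqueness). If anything, your write-up is more detailed than what the paper provides.
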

\begin{proof}
The proof is a slight variation of that for Theorem 10.1 of chapter 6 
in  \cite{olver1997}.
\end{proof}

We are going to use this theorem to prove the existence and behavior of
approximate solutions of the equation
\be\label{eq-append}
\frac{d^2\mathcal{Y}}{d\z^2}=
\big[\hb^{-2}(\z^2-\alpha^2)+\ps(\z,\hb,\al)\big]\mathcal{Y}.
\ee
We have the following
\btheo\label{thm-on-exist-int-eq}
For each value of $\hb$, assume that the function $\ps(\z,\hb,\al)$ is 
continuous in the region $[0,Z)\times[0,\delta]$ of the $(\z,\al)$-plane
\footnote{
Here $Z$ is always positive and may depend continuously on $\al$, or
be infinite. Also, $\delta$ is a positive finite constant.},  
take $\Omega$ as in (\ref{omega-barrier})
and consider that 
\be\nn
\var_{0,Z}[H](\al,\hb)=
\int_{0}^{Z}\frac{|\ps(t,\al)|}{\Om(t\sqrt{2\hb^{-1}})}dt
\ee
converges uniformly with 
respect to $\al$. Then in this region, equation 
(\ref{eq-append}) has 
solutions $\mathcal{Y}_1$ and $\mathcal{Y}_2$ which are continuous, have 
continuous first and 
second partial $\z$-derivatives and are given by
\bea\nn
\mathcal{Y}_1(\z,\al,\hb)=
U(\z\sqrt{2\hb^{-1}},-\tfrac{1}{2}\hb^{-1}\al^2)+
\epsilon_1 (\z,\al,\hb)\\
\nn
\mathcal{Y}_2(\z,\al,\hb)=
\ol U(\z\sqrt{2\hb^{-1}},-\tfrac{1}{2}\hb^{-1}\al^2)+
\epsilon_2 (\z,\al,\hb)
\eea
where
\begin{multline}\label{bound1}
\frac{|\epsilon_1 (\z,\al,\hb)|}
{\msf(\z\sqrt{2\hb^{-1}},-\tfrac{1}{2}\hb^{-1}\al^2)},
\frac{\Big|\frac{\partial \epsilon_1 }
{\partial\z}(\z,\al,\hb)\Big|}{\sqrt{2\hb^{-1}}
\nsf(\z\sqrt{2\hb^{-1}},-\tfrac{1}{2}\hb^{-1}\al^2)}\\
\leq
\frac{1}{\esf(\z\sqrt{2\hb^{-1}},-\tfrac{1}{2}\hb^{-1}\al^2)}
\Big(\exp\big\{\tfrac{1}{2}(\pi\hb)^{\frac{1}{2}}l(-\tfrac{1}{2}\hb^{-1}\al^2)
\mathcal{V}_{\z,Z}[H](\al,\hb)\big\}-1\Big)
\end{multline}
and
\begin{multline}\label{bound2}
\frac{|\epsilon_2 (\z,\al,\hb)|}
{\msf(\z\sqrt{2\hb^{-1}},-\tfrac{1}{2}\hb^{-1}\al^2)},
\frac{\Big|\frac{\partial \epsilon_2 }
{\partial\z}(\z,\al,\hb)\Big|}{\sqrt{2\hb^{-1}}
\nsf(\z\sqrt{2\hb^{-1}},-\tfrac{1}{2}\hb^{-1}\al^2)}\\
\leq
\esf(\z\sqrt{2\hb^{-1}},-\tfrac{1}{2}\hb^{-1}\al^2)
\Big(\exp\big\{\tfrac{1}{2}(\pi\hb)^{\frac{1}{2}}l(-\tfrac{1}{2}\hb^{-1}\al^2)
\mathcal{V}_{0,\z}[H](\al,\hb)\big\}-1\Big).
\end{multline}
\etheo
\begin{proof}
We will prove the theorem only for the first solution since the proof for
the second follows mutatis mutandis. Observe that the approximating function 
$U(\z\sqrt{2\hb^{-1}},-\tfrac{1}{2}\hb^{-1}\al^2)$ satisfies
$\frac{d^2U}{d\z^2}=\hb^{-2}(\z^2-\alpha^2)U$.  If we subtract this from
(\ref{eq-append}) we obtain the following differential equation for the 
error term
\be\nn
\frac{d^2\epsilon_1}{d\z^2}-\hb^{-2}(\z^2-\alpha^2)\epsilon_1=
\ps(\z,\al,\hb)\big[\epsilon_1+
U(\z\sqrt{2\hb^{-1}},-\tfrac{1}{2}\hb^{-1}\al^2)\big].
\ee
By use of the method of variation of parameters
and also  (\ref{wronskian-pcf}) one arrives at the integral equation
\be\nn
\epsilon_1 (\z,\al,\hb)=
\frac{1}{2}
\frac{(\pi\hb)^{\frac{1}{2}}}{\G\big(\frac{1}{2}+\frac{1}{2}\hb^{-1}\al^2\big)}
\int_{\z}^{Z}
\mathcal{K}(\z,t)
\ps(t,\al,\hb)\big[\epsilon_1(t,\al,\hb)+
U(t\sqrt{2\hb^{-1}},-\tfrac{1}{2}\hb^{-1}\al^2)\big]
dt
\ee
in which
\begin{multline*}
\mathcal{K}(\z,t)=
U(\z\sqrt{2\hb^{-1}},-\tfrac{1}{2}\hb^{-1}\al^2)
\ol U(t\sqrt{2\hb^{-1}},-\tfrac{1}{2}\hb^{-1}\al^2)\\
-U(t\sqrt{2\hb^{-1}},-\tfrac{1}{2}\hb^{-1}\al^2)
\ol U(\z\sqrt{2\hb^{-1}},-\tfrac{1}{2}\hb^{-1}\al^2).
\end{multline*}

Bounds for the \textit{kernel} $\mathcal{K}$ and its first two partial
derivatives (with respect to $\z$) are expressible in terms of the 
auxiliary functions $\esf, \msf$ and $\nsf$. We have
\begin{align*}
|\mathcal{K}(\z,t)| & \leq
\frac{\esf(t\sqrt{2\hb^{-1}},-\tfrac{1}{2}\hb^{-1}\al^2)}
{\esf(\z\sqrt{2\hb^{-1}},-\tfrac{1}{2}\hb^{-1}\al^2)}
\msf(\z\sqrt{2\hb^{-1}},-\tfrac{1}{2}\hb^{-1}\al^2)
\msf(t\sqrt{2\hb^{-1}},-\tfrac{1}{2}\hb^{-1}\al^2)\\
\bigg|\frac{\partial\mathcal{K}}{\partial\z}(\z,t)\bigg| & \leq
\sqrt{2\hb^{-1}}
\frac{\esf(t\sqrt{2\hb^{-1}},-\tfrac{1}{2}\hb^{-1}\al^2)}
{\esf(\z\sqrt{2\hb^{-1}},-\tfrac{1}{2}\hb^{-1}\al^2)}
\nsf(\z\sqrt{2\hb^{-1}},-\tfrac{1}{2}\hb^{-1}\al^2)
\msf(t\sqrt{2\hb^{-1}},-\tfrac{1}{2}\hb^{-1}\al^2)
\end{align*}
and similarly
\be\nn
\frac{\partial^{2}\mathcal{K}}{\partial\z^{2}}(\z,t)=
(2\hb^{-1})^{\frac{3}{2}}\z\mathsf{K}(\z,t).
\ee
All these estimates allow us to solve the equation (\ref{eq-append}) 
by applying Theorem \ref{int_eq1}. Using the notation of that theorem
we have
\begin{align*}
\phi(t) & =\frac{\ps(\z,\al,\hb)}{\Om(\z\sqrt{2\hb^{-1}})}\\
\psi_{1}(t) & =0\\
J(t) & =U(t\sqrt{2\hb^{-1}},-\tfrac{1}{2}\hb^{-1}\al^2)\\
\mathsf{K}(\z,t) & =-\frac{1}{2}
\frac{(\pi\hb)^{\frac{1}{2}}}{\G\big(\frac{1}{2}+\frac{1}{2}\hb^{-1}\al^2\big)}
\Omega(t\sqrt{2\hb^{-1}})\mathcal{K}(\z,t)\\
Q(t) & =
\frac{1}{2}
\frac{(\pi\hb)^{\frac{1}{2}}}{\G\big(\frac{1}{2}+\frac{1}{2}\hb^{-1}\al^2\big)}
\Omega(t\sqrt{2\hb^{-1}})
\esf(t\sqrt{2\hb^{-1}},-\tfrac{1}{2}\hb^{-1}\al^2)
\msf(t\sqrt{2\hb^{-1}},-\tfrac{1}{2}\hb^{-1}\al^2)\\
P_0(\z) & =
\frac
{\msf(\z\sqrt{2\hb^{-1}},-\tfrac{1}{2}\hb^{-1}\al^2)}
{\esf(\z\sqrt{2\hb^{-1}},-\tfrac{1}{2}\hb^{-1}\al^2)}\\
P_1(\z) & =\sqrt{2\hb^{-1}}
\frac
{\nsf(\z\sqrt{2\hb^{-1}},-\tfrac{1}{2}\hb^{-1}\al^2)}
{\esf(\z\sqrt{2\hb^{-1}},-\tfrac{1}{2}\hb^{-1}\al^2)}\\
\Phi(\z) & =\var_{0,\z}[H](\al,\hb)\\
\kappa_0 &\leq 
\frac{1}{2}(\pi\hb)^{\frac{1}{2}}l(-\tfrac{1}{2}\hb^{-1}\al^2)
\end{align*}
where the role of $\beta$ is played here by $Z$ and $\kappa$ is replaced 
for simplicity by the upper bound $\kappa_0$. Then the bounds 
(\ref{bound1}) and (\ref{bound2}) follow from Theorem \ref{int_eq1}. 

Finally, observe that all the integrals which
occur in the analysis above, converge uniformly when $\al\in[0,\delta]$
and $\z$ lies in any compact interval of $[0,Z)$; allowing us to state 
that $\epsilon_1$ and its first two partial $\z$-derivatives are 
continuous in $\al$ and $\z$. Consequently, the same stands 
for $\mathcal{Y}_1$ which signifies the end of the proof.
\end{proof}

\section*{Data Availability}

Data sharing is not applicable to this article as no new data were
created or analyzed in this study.

\section*{Acknowledgements} 

The first author 
acknowledges the support of the Institute of Applied and Computational 
Mathematics of the Foundation of Research and Technology - Hellas 
(\href{https://www.forth.gr/}{FORTH}), via grant 
MIS 5002358 and the support of the University of Crete via grant
10753. Also, the first author expresses his sincere gratitude to 
the Independent Power Transmission Operator 
(\href{https://www.admie.gr/en}{IPTO}) for a scholarship through the
\href{http://www.sse.uoc.gr/en/}{School of Sciences and Engineering} 
of the University of Crete.

\end{document}